\DeclareMathAlphabet\mathbb{U}{msb}{m}{n}
\newtheorem{fact}{Fact}
\DeclareMathAlphabet{\mathrm}    {OT1}{cmr}{m}{n}
\DeclareMathAlphabet{\mathrmbf}  {OT1}{cmr}{bx}{n}
\DeclareMathAlphabet{\mathrmit}  {OT1}{cmr}{m}{it}
\DeclareMathAlphabet{\mathrmbfit}{OT1}{cmr}{bx}{it}
\DeclareMathAlphabet{\mathsf}    {OT1}{cmss}{m}{n}
\DeclareMathAlphabet{\mathsfbf}  {OT1}{cmss}{bx}{n}
\DeclareMathAlphabet{\mathsfit}  {OT1}{cmss}{m}{sl}
\DeclareMathAlphabet{\mathtt}    {OT1}{cmtt}{m}{n}
\DeclareMathAlphabet{\mathttbf}  {OT1}{cmtt}{bx}{n}
\DeclareMathAlphabet{\mathttit}  {OT1}{cmtt}{m}{it}
\DeclareMathAlphabet{\mathpzc}   {OT1}{pzc}{m}{it}
\newcommand{\keywords}[1]{\par\addvspace\baselineskip\noindent\enspace\ignorespaces{\bfseries Keywords:\,}#1}
\newcommand{\comment}[1]{}
\begin{document}

\pagestyle{headings}
\title{The {\ttfamily FOLE} Table} 
\titlerunning{The {\ttfamily FOLE} Table}  
\author{Robert E. Kent}
\institute{Ontologos}
\maketitle

\begin{abstract}
This paper discusses the representation of ontologies
in the first-order logical environment {\ttfamily FOLE}
(Kent~\cite{kent:iccs2013}).
An ontology defines the primitives with which to model the knowledge resources for a community of discourse
(Gruber~\cite{gruber:eds2009}).
These primitives, consisting of classes, relationships and properties,
are represented by the entity-relationship-attribute {\ttfamily ERA} data model
(Chen~\cite{chen:76}). 
An ontology uses formal axioms to constrain the interpretation of these primitives. 
In short, an ontology specifies a logical theory.
A series of three papers
provide a rigorous mathematical representation
for the {\ttfamily ERA} data model in particular,
and ontologies in general,
within the first-order logical environment {\ttfamily FOLE}.
The first two papers, 
which provide a \emph{foundation} and \emph{superstructure} for {\ttfamily FOLE},
represent the formalism and semantics of (many-sorted) first-order logic in a classification form
corresponding to ideas discussed in the Information Flow Framework 
(IFF~\cite{iff}).
The third paper (Kent~\cite{kent:fole:era:interp}) will define an \emph{interpretation} 
of {\ttfamily FOLE} in terms of the transformational passage,
first described in Kent~\cite{kent:iccs2013},
from the classification form of first-order logic
to an equivalent interpretation form,
thereby defining the formalism and semantics of first-order logical/relational database systems. 
Two papers will provide a precise mathematical basis for \texttt{FOLE} interpretation:
the current paper develops the notion of a \texttt{FOLE} relational table
following the relational model (Codd~\cite{codd:90}), and
a follow-up paper will develop the notion of a \texttt{FOLE} relational database.
Both of these papers expand on material found in the paper (Kent~\cite{kent:db:sem}).
%
Although 
the classification form (and \texttt{FOLE} itself)
follow the entity-relationship-attribute data model of Chen,
the interpretation form incorporates the relational data model of Codd.
In general,
the {\ttfamily FOLE} representation uses a conceptual structures approach,
that is completely compatible 
with formal concept analysis (Ganter and Wille~\cite{ganter:wille:99}) 
and information flow (Barwise and Seligman~\cite{barwise:seligman:97}).
\keywords{signature, type domain, signed domain, table.}
\end{abstract}

\tableofcontents


\section{Introduction}\label{sec:intro}

\subsection{Philosophy.}\label{sub:sec:phil}

The relational model is an approach to information management 
using the semantics and formalism of first-order predicate logic. 
%
\footnote{``The relational model for database management : version 2'' by E.F. Codd \cite{codd:90}.} 
%
The first-order logical environment \texttt{FOLE} is a framework for defining the semantics and formalism of 
logic and databases in an integrated and coherent fashion. 
Hence,
the relational model for information management can be framed in terms of 
the first-order logical environment \texttt{FOLE}.


\subsection{Background}\label{sub:sec:background}

The author's ``Systems Consequence'' paper (Kent~\cite{kent:iccs2009}) 
is a very general theory and methodology for specification and inter-operation of systems of information resources. 
The generality comes from the fact that it is independent of the logical/semantic system (institution) being used. 
This is a wide-ranging theory, 
based upon ideas from 
information flow (Barwise and Seligman~\cite{barwise:seligman:97}), 
formal concept analysis (Wille and Ganter et al~\cite{ganter:wille:99}), 
the theory of institutions (Goguen et al~\cite{goguen:burstall:92}), and 
the lattice of theories notion (Sowa~\cite{sowa:kr}), 
for the integration of both formal and semantic systems independent of logical environment. 
In order to better understand the motivations of that paper and to be able more readily to apply its concepts, 
in the future it will be important to study system consequence in various particular logical/semantic systems. 
This paper aims to do just that for the logical/semantic system of relational databases. 
The paper, 
which was inspired by and which extends a recent set of papers on the theory of relational database systems 
(Spivak~\cite{spivak:sd},\cite{spivak:fdm}), 
is linked with work on the Information Flow Framework (IFF~\cite{iff}) connected with the ontology standards effort (SUO), 
since relational databases naturally embed into first order logic. 
We offer both an intuitive and a technical discussion. 
Corresponding to the notions of primary and foreign keys, 
relational database semantics takes two forms: 
a distinguished form where entities are distinguished from relations, and 
a unified form where relations and entities coincide. 
The distinguished form corresponds to the theory presented in the paper (Spivak~\cite{spivak:sd}). 
We extend Spivak's treatment of tables 
from the static case of a single entity classification (type specification) 
to the dynamic case of classifications varying along infomorphisms. 
Our treatment of relational databases as diagrams of tables differs from 
Spivak's sheaf theory of databases. 
The unified form, 
a special case of the distinguished form, 
corresponds to the theory presented in the paper (Spivak~\cite{spivak:fdm})). 
The unified form has a graphical presentation, 
which corresponds to the sketch theory of databases (Johnson and Rosebrugh~\cite{johnson:rosebrugh:07}) 
and the resource description framework (RDF). 
This paper, 
which is the first step to connect relational databases with system consequence, 
is concerned with the semantics of relational databases. 
A later paper will discuss various formalisms of relational databases, 
such as relational algebra and first order logic.

\subsection{Architecture.}\label{sub:sec:arch}

\begin{wrapfigure}{r}{3cm}
\begin{center}
\setlength{\unitlength}{0.36pt}
\begin{picture}(120,100)(3,30)
\put(60.5,80.5){\makebox(0,0){\tiny{$\equiv$}}}
\qbezier(100,87)(60,97)(20,87)
\put(20,87){\vector(-4,-1){0}}
\qbezier(20,75)(60,65)(100,75)
\put(100,75){\vector(4,1){0}}
\put(0.3,80){\makebox(0,0){\huge{$\circ$}}}
\put(120.3,80){\makebox(0,0){\huge{$\circ$}}}
\put(2,10){\makebox(0,0){\huge{$\circ$}}}
\put(122,10){\makebox(0,0){\huge{$\bullet$}}}
%
\put(0,68){\line(0,-1){40}}
\put(120,68){\line(0,-1){40}}
\put(12,-25){\scriptsize{\textsf{architecture}}}
\end{picture}
\end{center}
\caption{\texttt{FOLE}}
\label{hierarchy}
\end{wrapfigure}
The \texttt{FOLE} architecture,
as briefly pictured in Fig.\ref{hierarchy} 
and more completely in Fig.\,1
in the preface of \cite{kent:fole:era:equiv},
consists of four nodes
divided into two branches.
%
The classification form of \texttt{FOLE}
(left hand side of Fig.\,\ref{hierarchy})
consists of 
``The {\ttfamily FOLE} Foundation''
at the bottom
and
``The {\ttfamily FOLE} Superstructure''
at the top.
%
The interpretation form of \texttt{FOLE} 
(right hand side of Fig.\ref{hierarchy})
consists of 
``The {\ttfamily FOLE} Table''
at the bottom
and
``The {\ttfamily FOLE} Database''
at the top.
%
%
The equivalence 
between
the classification form
and
the interpretation form 
is define in the paper 
``{\ttfamily FOLE} Equivalence''
\cite{kent:fole:era:equiv}.
%
%
The current paper is concerned with the \texttt{FOLE} table concept.
%

\subsection{Overview}
\label{sub:sec:overvu}

Section~\ref{sec:tbl:basics} provides material on the basic structures underpinning the \texttt{FOLE} table concept: 
signatures, type domains and signed domains.
Section~\ref{sec:tbl:hier} describes our representation for the table concept 
by defining the multi-path fibered context of tables 
\footnote{The original discussion of {\ttfamily FOLE} (Kent~\cite{kent:iccs2013}) took place within the knowledge representation community,
where the term \emph{category} is defined to be a division within a system of classification or a mode of existence.
Hence following (Kent~\cite{kent:iccs2013}), we use 
``mathematical context'' (Goguen \cite{goguen:cm91}) for the mathematical term ``category'',
``passage'' for the term ``functor'', and
``bridge'' for the term ``natural transformation''.}
(illustrated in Tbl.~\ref{fig:tbl:gro:constr} of \S~\ref{sub:sec:fbr:cxt:tbl})
\begin{center}
{{\begin{tabular}{@{\hspace{-10pt}}c@{\hspace{40pt}}c}
{{\scriptsize{\begin{tabular}{r@{\hspace{5pt}}l}
\textit{signature} & $\mathcal{S}$ \\
\textit{type domain} & $\mathcal{A}$ \\ 
\textit{signed domain} & ${\langle{\mathcal{S},\mathcal{A}}\rangle}$
\end{tabular}}}}
&
{{\begin{tabular}{c}
{\setlength{\unitlength}{0.55pt}\begin{picture}(120,120)(0,-10)
\put(60,131){\makebox(0,0){\scriptsize{$\mathrmbf{Tbl}$}}}
\put(-10,60){\makebox(0,0)[r]{\scriptsize{$\mathrmbf{Tbl}(\mathcal{S})$}}}
\put(130,60){\makebox(0,0)[l]{\scriptsize{$\mathrmbf{Tbl}(\mathcal{A})$}}}
\put(60,-12){\makebox(0,0){\scriptsize{$\mathrmbf{Tbl}(\mathcal{S},\mathcal{A})$}}}
\put(60,10){\vector(0,1){100}}
\put(5,65){\vector(1,1){50}}
\put(115,65){\vector(-1,1){50}}
\dottedline[$\cdot$]{4}(52,8)(8,52)\put(5,55){\vector(-1,1){0}}
\dottedline[$\cdot$]{4}(68,8)(112,52)\put(115,55){\vector(1,1){0}}
\put(60,120){\circle*{5}}
\put(60,0){\circle*{5}}
\put(0,60){\circle*{5}}
\put(120,60){\circle*{5}}
\put(-125,5){\makebox(0,0)[l]{\scriptsize{\textbf{Tbl.~\ref{fig:tbl:gro:constr}} (simplified)}}}
\end{picture}}
\end{tabular}}}
\end{tabular}}}
\end{center}
%
---
one fibered path goes directly via signed domains (\S~\ref{sub:sec:tbl:sign:dom}),
while two other paths go indirectly via signatures (\S~\ref{sub:sec:tbl:sign})
and type domains (\S~\ref{sub:sec:tbl:typ:dom}).
Section~\ref{sec:tbl:lim:colim} 
uses properties of comma contexts and the Grothendieck construction
to prove that the various (sub)contexts of \texttt{FOLE} tables are complete (joins exist) and cocomplete (unions exist).
Table~\ref{tbl:figs:tbls} lists the figures and tables in this paper.

%
\begin{table}
\begin{center}
{{\scriptsize{\setlength{\extrarowheight}{1.6pt}
{\begin{tabular}{l@{\hspace{10pt}}l}
\\
{\fbox{
\begin{tabular}[t]
{|
l@{\hspace{8pt}}
l@{\hspace{2pt}:\hspace{5pt}}
l|}
\hline

\S\ref{sub:sec:arch}
& Fig.~\ref{hierarchy} 
& \texttt{FOLE} Architecture
\\\hline\hline

\S\ref{sub:sec:overview}
& Fig.~\ref{fig:fole:tbl:basics} 
& \texttt{FOLE} Table Basics
\\\cline{1-2}

\S\ref{sub:sec:fole:comps:sign}
& Fig.~\ref{fig:sign:mor} 
& List Morphism
\\\cline{1-2}

\S\ref{sub:sub:sec:tup:bridge:sign}
& Fig.~\ref{fig:tup:bridge:sign} 
& Tuple Bridge: Signature
\\\cline{1-1}

\S\ref{sub:sub:sec:tup:bridge:typ:dom}
& Fig.~\ref{fig:tup:bridge:typ:dom} 
& Tuple Bridge: Type Domain
\\\cline{1-1}

\S\ref{sub:sub:sec:tup:fn:fact}
& Fig.~\ref{fig:tup:fn:fact} 
& Tuple Function Factorization
\\\hline\hline

\S\ref{sub:sec:tbl}
& Fig.~\ref{fig:fole:tbl} 
& \texttt{FOLE} Table
\\

& Fig.~\ref{fig:tbl:mor} 
& \texttt{FOLE} Table Morphism
\\

& Fig.~\ref{fig:tbl:cxt} 
& \texttt{FOLE} Table Mathematical Context
\\\cline{1-2}

\S\ref{sub:sec:tbl:sign:dom}
& Fig.~\ref{fig:tbl:mor:large} 
& Table Morphism: Signed Domain
\\\cline{1-2}

\S\ref{sub:sub:sec:tbl:sign:low}
& Fig.~\ref{fig:tbl:mor:medium:fixed:S} 
& $\mathcal{S}$-Table Morphism
\\\cline{1-1}

\S\ref{sub:sub:sec:tbl:sign:up}
& Fig.~\ref{fig:tbl:fbr:fact:sign} 
& Table Fiber Passage Factorization
\\
& Fig.~\ref{fig:tbl:mor:large:sign} 
& Table Morphism: Signature
\\\cline{1-2}

\S\ref{sub:sub:sec:tbl:typ:dom:low}
& Fig.~\ref{fig:tbl:mor:medium} 
& $\mathcal{A}$-Table Morphism
\\\cline{1-1}
\S\ref{sub:sub:sec:tbl:typ:dom:up}
& Fig.~\ref{fig:tbl:fbr:fact:typ:dom} 
& Table Fiber Adjunction Factorization
\\

& Fig.~\ref{fig:tbl:mor:large:typ:dom} 
& Table Morphism: Type Domain
\\

& Fig.~\ref{fig:ind:fbr:tbl:cls} 
& Indexed Adjunction of Tables
\\\cline{1-2}

\S\ref{sub:sec:fbr:cxt:tbl}
& Fig.~\ref{fig:tbl:gro:constr} 
& The Fibered Hierarchy of \texttt{FOLE} Tables
\\\hline\hline

\S\ref{sub:sec:eg}
& Fig.~\ref{binary:join} 
& Binary Join
\\\hline\hline

\S\ref{sub:sec:present:FOLE}
& Fig.~\ref{fig:tbl:papers} 
& \texttt{FOLE} Papers: Sequence \& Dependency
\\\hline

\end{tabular}}}
&
\begin{tabular}[t]{l}
{\fbox{
\begin{tabular}[t]
{|
l@{\hspace{8pt}}
l@{\hspace{2pt}
:\hspace{5pt}}
l|}
\hline

\S\ref{sub:sec:overvu}
& Tbl.~\ref{tbl:figs:tbls} 
& Figures \& Tables
\\\hline\hline

\S\ref{sub:sec:fole:comps:sign}
& Tbl.~\ref{tbl:list-sub:refl} 
& Sort List/Subset Reflection
\\\cline{1-2}

\S\ref{sub:sub:sec:tup:fn:fact}
& Tbl.~\ref{tbl:tup:fns} 
& Tuple Functions
\\\hline\hline

\S\ref{sub:sec:tbl:sign:dom}
& Tbl.~\ref{tbl:tbl-rel:refl:sign:dom:mor} 
& Reflection: Signed Domain
\\\cline{1-2}

\S\ref{sub:sub:sec:tbl:sign:low}
& Tbl.~\ref{tbl:tbl-rel:refl:sign:mor} 
& Reflection: Signature
\\\cline{1-2}

\S\ref{sub:sub:sec:tbl:typ:dom:low}
& Tbl.~\ref{tbl:tbl-rel:refl:sml:shrt} 
& Reflection: Type Domain
\\\cline{1-2}

\S\ref{sub:sec:fbr:cxt:tbl}
& Tbl.~\ref{tbl:grothen:construct} 
& Grothendieck Constructions
\\\hline\hline

\S\ref{sub:sub:sec:props}
& Tbl.~\ref{tbl:co:complete:cxts} 
& Complete/Cocomplete Contexts
\\\hline\hline

\S\ref{sub:sec:paper:review}
& Tbl.~\ref{tbl:thms:props:lems:cors} 
& Lemmas, Propositions \& Theorems
\\\hline

\end{tabular}}}
\\\\
\end{tabular}
\\\\
\end{tabular}}}}}
\end{center}
\caption{Figures and Tables}
\label{tbl:figs:tbls}
\end{table}
%


\newpage
\section{Table Basics}\label{sec:tbl:basics}

\subsection{Overview}\label{sub:sec:overview}

A table in the relational model is represented as an array, organized into rows and columns. 
The rows are called the tuples (records) of the table, whereas
the columns are called the attributes of the table.
The rows are indexed by keys.
Both rows 
and columns 
are unordered;
instead of indexing headers and tuples as $n$-tuples,
the \texttt{FOLE} approach uses attribute names for tuples
(as advocated by Codd~\cite{codd:90}).
In the relational model, 
all components can be resolved into sets and functions.
%
\footnote{The relational data model is based upon the context $\mathrmbf{Set}$ of sets and functions.}
%
%
\begin{center}
{{\begin{tabular}{c}
\setlength{\unitlength}{0.5pt}
\begin{picture}(400,230)(0,-40)
\thicklines
\put(0,150){\line(1,0){74}}
\put(0,0){\line(1,0){74}}
\put(80,150){\line(1,0){320}}
\put(80,0){\line(1,0){320}}
\put(0,0){\line(0,1){150}}
\put(74,0){\line(0,1){150}}
\put(80,0){\line(0,1){150}}
\put(400,0){\line(0,1){150}}
\thinlines
\put(80,100){\line(1,0){320}}
\put(80,50){\line(1,0){320}}
\put(0,100){\line(1,0){74}}
\put(0,50){\line(1,0){74}}
\put(160,0){\line(0,1){150}}
\put(240,0){\line(0,1){150}}
\put(320,0){\line(0,1){150}}
\dottedline{5}(162.5,150)(162.5,0)
\dottedline{5}(167.5,150)(167.5,0)
\dottedline{5}(172.5,150)(172.5,0)
\dottedline{5}(177.5,150)(177.5,0)
\dottedline{5}(182.5,150)(182.5,0)
\dottedline{5}(187.5,150)(187.5,0)
\dottedline{5}(192.5,150)(192.5,0)
\dottedline{5}(197.5,150)(197.5,0)
\dottedline{5}(202.5,150)(202.5,0)
\dottedline{5}(207.5,150)(207.5,0)
\dottedline{5}(212.5,150)(212.5,0)
\dottedline{5}(217.5,150)(217.5,0)
\dottedline{5}(222.5,150)(222.5,0)
\dottedline{5}(227.5,150)(227.5,0)
\dottedline{5}(232.5,150)(232.5,0)
\dottedline{5}(237.5,150)(237.5,0)
\dottedline{5}(80,95)(400,95)
\dottedline{5}(80,90)(400,90)
\dottedline{5}(80,85)(400,85)
\dottedline{5}(80,80)(400,80)
\dottedline{5}(80,75)(400,75)
\dottedline{5}(80,70)(400,70)
\dottedline{5}(80,65)(400,65)
\dottedline{5}(80,60)(400,60)
\dottedline{5}(80,55)(400,55)
\put(200,177){\makebox(0,0){\footnotesize{$
\overset{\rule[-2pt]{0pt}{5pt}\textstyle{
{\mathsf{attribute}}}}
{\overbrace{\rule{40pt}{0pt}}}$}}}
\put(0,75){\makebox(0,0)[r]{\footnotesize{$
{\mathsf{key}}\left\{\rule{0pt}{15pt}\right.$}}}
\put(69,74){\makebox(0,0)[l]{\normalsize{$\mapsto$}}}
\put(402,75){\makebox(0,0)[l]{\footnotesize{$\left.\rule{0pt}{15pt}\right\}{\;\mathsf{tuple}}$}}}
\put(200,-37){\makebox(0,0){\footnotesize{$\underset{\rule[3pt]{0pt}{5pt}\textstyle{\underset{\rule[1pt]{0pt}{5pt}
\underset{\underset{}{}}{}
}{\mathsf{table}}}}{\rule[5pt]{0pt}{5pt}\underbrace{\rule{199pt}{0pt}}}$}}}
\end{picture}
\end{tabular}}}
\end{center}

\begin{figure}
\begin{center}
{{\begin{minipage}{200pt}
{{\begin{tabular}{@{\hspace{0pt}}c@{\hspace{0pt}}c}
\\&\\
{{\begin{tabular}{c}
\setlength{\unitlength}{0.5pt}
\begin{picture}(200,120)(-10,-10)
\put(0,80){\makebox(0,0){\footnotesize{$I$}}}
\put(80,80){\makebox(0,0){\footnotesize{$X$}}}
\put(80,0){\makebox(0,0){\footnotesize{$Y$}}}
\put(90,40){\makebox(0,0)[l]{\scriptsize{$\models_{\mathcal{A}}$}}}
\put(40,90){\makebox(0,0){\scriptsize{$s$}}}
\put(15,80){\vector(1,0){50}}
\put(80,65){\line(0,-1){50}}
\put(40,120){\makebox(0,0){\footnotesize{
$\overset{\textstyle{\stackrel{\text{\emph{signature}}}{\mathcal{S}\rule{0pt}{1pt}}}}{\overbrace{\rule{40pt}{0pt}}}$}}}
\put(120,40){\makebox(0,0)[l]{\footnotesize{$\left.\rule{0pt}{22pt}\right\}$}}}
\put(130,30){\makebox(0,0)[l]{\footnotesize{
$\underset{\underset{{\scriptstyle\text{\emph{domain}}}}{\scriptstyle{\text{\emph{type}}}}}{\textstyle{\mathcal{A}}\rule[-2pt]{0pt}{1pt}}$}}}
\put(-25,45){\makebox(0,0)[r]{\footnotesize{$\left\{\rule{0pt}{30pt}\right.$}}}
\put(-45,35){\makebox(0,0)[r]{\footnotesize{$\underset{\underset{{\scriptstyle\text{\emph{domain}}}}
{\scriptstyle{\text{\emph{signed}}}}}{\textstyle{\mathcal{D}}}
\!\!\footnote{A \texttt{FOLE} table 
$\mathcal{T}={\langle{K,t,\mathcal{D}}\rangle}$
(defined in \S\ref{sub:sec:tbl}) 
consists of 
a signed domain $\mathcal{D}$,
with a key set $K$ and tuple map 
$K\xrightarrow{t}\mathrmbfit{tup}_{\mathcal{A}}(I,s)=
\mathrmbfit{ext}_{\mathrmbf{List}(\mathcal{A})}(I,s)$.
}
$}}}
\end{picture}
\end{tabular}}}
&
{{\begin{tabular}{c}
\setlength{\unitlength}{0.5pt}
\begin{picture}(200,120)(-10,-10)
\put(10,80){\makebox(0,0){\footnotesize{$1$}}}
\put(87,80){\makebox(0,0)[l]{\footnotesize{$\mathrmbf{List}(X)$}}}
\put(87,0){\makebox(0,0)[l]{\footnotesize{$\mathrmbf{List}(Y)$}}}
\put(128,40){\makebox(0,0)[l]{\scriptsize{$\models_{\mathrmbf{List}(\mathcal{A})}$}}}
\put(50,90){\makebox(0,0){\scriptsize{${\langle{I,s}\rangle}$}}}
\put(20,80){\vector(1,0){60}}
\put(120,65){\line(0,-1){50}}
\put(60,40){\makebox(0,0){\footnotesize{$\mathcal{D}$}}}
\end{picture}
\end{tabular}}}
\\&\\
\end{tabular}}}
\end{minipage}}}
\end{center}
\caption{\texttt{FOLE} Table Basics}
\label{fig:fole:tbl:basics}
\end{figure}
%

\subsection{Signatures}\label{sub:sec:fole:comps:sign}

A signature, which represents the header of a relational table, provides typing for the tuples permitted in the table.
%
\footnote{The use of lists for signatures (and tuples) 
follows Codd's recommendation to use attribute names to index the tuples of a relation instead of a numerical ordering.}

\paragraph{Fibers.}

Let $X$ be a sort set.
The fiber mathematical context of $X$-signatures is the comma context
\[\mbox{\footnotesize{$
\mathrmbf{List}(X) 
= \mathrmbf{List}(X) 
= {\bigl(\mathrmbf{Set}{\,\downarrow\,}X\bigr)}
$}\normalsize}\]
associated with the sort set (constant passage)
$\mathrmbf{1}\xrightarrow{\;X\;}\mathrmbf{Set}$.
It has the index and trivial projection passages
$\mathrmbf{Set}\xleftarrow{\;\mathrmbfit{ind}_{X}\;}\mathrmbf{List}(X)\xrightarrow{\;\Delta\;}\mathrmbf{1}$
and the defining bridge
$\mathrmbfit{ind}_{X}\xRightarrow{\,\sigma_{X}\;}\Delta{\;\circ\;}X$. 
An $\mathrmbf{List}(X)$-object $\mathcal{S} = {\langle{I,s}\rangle}$, 
called an $X$-signature (header),
consists of an indexing set (arity) $I$ and a map $I\xrightarrow{\,s\,}X$ from $I$ to the set of sorts $X$.
An $\mathrmbf{List}(X)$-morphism 
$\mathcal{S}' = {\langle{I',s'}\rangle} \xrightarrow{h} {\langle{I,s}\rangle} = \mathcal{S}$
is an arity function $I'\xrightarrow{h}I$
that preserves signatures by
satisfying the naturality condition $h{\,\cdot\,}s = s'$.
%

\paragraph{Fibered Context.}

The fibered context of signatures is the comma context
%
\[\mbox{\footnotesize{$
\mathrmbf{List} = \bigl(\mathrmbf{Set}{\,\downarrow\,}\mathrmbf{Set}\bigr)
$.}\normalsize}\]
It has index and set projection passages
$\mathrmbf{Set}\xleftarrow{\mathrmbfit{arity}}\mathrmbf{List}\xrightarrow{\mathrmbfit{sort}} \mathrmbf{Set}$
and the defining bridge
$\mathrmbfit{arity}\xRightarrow{\,\sigma\;} \mathrmbfit{sort}$. 
A $\mathrmbf{List}$-object (signature, sort list) ${\langle{I,s,X}\rangle}$ 
consists of a sort set $X$ and an $X$-signature ${\langle{I,s}\rangle}$.
A $\mathrmbf{List}$-morphism (signature morphism) 
${\langle{I_{2},s_{2},X_{2}}\rangle}\xrightarrow{{\langle{h,f}\rangle}}{\langle{I_{1},s_{1},X_{1}}\rangle}$
consists of 
a sort function $X_{2}\xrightarrow{f}X_{1}$ and an arity function $I_{2}\xrightarrow{h}I_{1}$
satisfying the naturality condition $h{\,\cdot\,}s_{1} = s_{2}{\,\cdot\,}f$.
\begin{figure}
\begin{center}
{{\begin{tabular}{c}
\setlength{\unitlength}{0.5pt}
\begin{picture}(180,80)(0,10)
\put(0,80){\makebox(0,0){\footnotesize{$I_{2}$}}}
\put(180,80){\makebox(0,0){\footnotesize{$I_{1}$}}}
\put(0,0){\makebox(0,0){\footnotesize{$X_{2}$}}}
\put(180,0){\makebox(0,0){\footnotesize{$X_{1}$}}}
\put(45,45){\makebox(0,0){\footnotesize{$\hat{I}_{2}$}}}
\put(-6,40){\makebox(0,0)[r]{\scriptsize{$s_{2}$}}}
\put(22,16){\makebox(0,0)[l]{\scriptsize{$\hat{s}_{2}$}}}
\put(188,40){\makebox(0,0)[l]{\scriptsize{$s_{1}$}}}
\put(90,94){\makebox(0,0){\scriptsize{$h$}}}
\put(82,66){\makebox(0,0){\scriptsize{$\hat{f}$}}}
\put(25,68){\makebox(0,0)[l]{\scriptsize{$\hat{h}$}}}
\put(90,14){\makebox(0,0){\scriptsize{$f$}}}
\put(20,80){\vector(1,0){140}}
\put(20,0){\vector(1,0){140}}
\put(0,70){\vector(0,-1){60}}
\put(180,70){\vector(0,-1){60}}
\put(57,47){\vector(4,1){100}}
\qbezier(10,10)(20,20)(30,30)\put(10,10){\vector(-1,-1){0}}
\qbezier(10,70)(20,60)(30,50)\put(30,50){\vector(1,-1){0}}
%
\qbezier(160,20)(165,20)(170,20)
\qbezier(160,20)(160,15)(160,10)
\end{picture}
\end{tabular}}}
\end{center}
\caption{List Morphism}
\label{fig:sign:mor}
\end{figure}
%
This condition gives two alternate and adjoint definitions.
In terms of fibers,
a signature morphism
consists of
a sort function $X_{2}\xrightarrow{f}X_{1}$ 
\underline{and} 
either a morphism 
${\langle{I_{2},s_{2}}\rangle}\xrightarrow{\;\hat{h}\;}{f}^{\ast}(I,s)$
in the fiber context $\mathrmbf{List}(X_{2})$ 
or a morphism 
${\scriptstyle\sum}_{f}(I_{2},s_{2})\xrightarrow{\;h\;}{\langle{I,s}\rangle}$
in the fiber context $\mathrmbf{List}(X_{1})$.
\begin{equation}
{{\begin{picture}(120,10)(0,-4)
\put(60,0){\makebox(0,0){\footnotesize{$
\underset{\textstyle{\text{in}\;\mathrmbf{List}(X_{1})}}
{{\scriptstyle\sum}_{f}(I_{2},s_{2})\xrightarrow{\;h\;}{\langle{I_{1},s_{1}}\rangle}}
{\;\;\;\;\;\;\;\;\rightleftarrows\;\;\;\;\;\;\;\;}
\underset{\textstyle{\text{in}\;\mathrmbf{List}(X_{2})}}
{{\langle{I_{2},s_{2}}\rangle}\xrightarrow{\;\hat{h}\;}f^{\ast}(I_{1},s_{1})}
$}}}
\end{picture}}}
\end{equation}
The 
$X_{1}$-signature morphism 
${\scriptstyle\sum}_{f}(I_{2},s_{2})\xrightarrow{\;h\;}{\langle{I_{1},s_{1}}\rangle}$
is the composition (Fig.~\ref{fig:sign:mor}) 
of the fiber morphism
${\scriptstyle\sum}_{f}\Bigl({\langle{I_{2},s_{2}}\rangle}\xrightarrow{\;\hat{h}\;}f^{\ast}(I_{1},s_{1})\Bigr)$
with the ${\langle{I_{1},s_{1}}\rangle}^{\,\text{th}}$ counit component 
${\scriptstyle\sum}_{f}\Bigl(f^{\ast}(I_{1},s_{1})\Bigr)
\xrightarrow{\hat{f}}
{\langle{I_{1},s_{1}}\rangle}$
for the fiber adjunction
$\mathrmbf{List}(X_{2})
{\;\xrightarrow[{\langle{{\scriptscriptstyle\sum}_{f}{\;\dashv\;}f^{\ast}}\rangle}]{\mathrmbfit{list}(f)}\;}
\mathrmbf{List}(X_{1})$.
This fiber adjunction 
(top part of Tbl.~\ref{tbl:list-sub:refl})
is a component of
the sort indexed adjunction of signatures
$\mathrmbf{Set}\xrightarrow{\;\mathrmbfit{list}\;}\mathrmbf{Adj}$.

\comment{
\footnote{The adjunction
$\mathrmbf{List}(X_{2})\xrightarrow{{\langle{{\scriptscriptstyle\sum}_{f}{\;\dashv\;}f^{\ast}}\rangle}}\mathrmbf{List}(X_{1})$
has unit $\mathrmbfit{1}_{\mathrmbf{List}(X_{2})}\xRightarrow{\eta_{f}}{\scriptstyle\sum}_{f}{\;\circ\;}f^{\ast}$
with the $X_{2}$-signature morphism
${\langle{I_{2},s_{2}}\rangle}\xrightarrow{\tilde{f}}f^{\ast}({\scriptstyle\sum}_{f}(I_{2},s_{2})) = {\langle{\tilde{I},\tilde{s}}\rangle}
: i_{2}\mapsto(i_{2},s_{2}(i_{2}))$
as its ${\langle{I_{2},s_{2}}\rangle}^{\mathrm{th}}$ component, and
has counit $f^{\ast}{\;\circ\;}{\scriptstyle\sum}_{f}\xRightarrow{\varepsilon_{f}}\mathrmbfit{1}_{\mathrmbf{List}(X_{1})}$
with the $X_{1}$-signature morphism
${\scriptstyle\sum}_{f}(f^{\ast}(I_{1},s_{1}))\xrightarrow{\hat{f}}{\langle{I_{1},s_{1}}\rangle}$
as its ${\langle{I_{1},s_{1}}\rangle}^{\mathrm{th}}$ component,
where $\tilde{I} = \{ (i_{2},x_{2}) \mid i_{2} \in I_{2}, x_{2} \in X_{2}, f(s_{2}(i_{2})) = f(x_{2}) \}$.}
}

\comment{
COMPRESS-PROOF-COMPRESS-PROOF-COMPRESS-PROOF-COMPRESS-PROOF-COMPRESS-PROOF-COMPRESS-PROOF-COMPRESS-PROOF-COMPRESS-PROOF-COMPRESS-PROOF
COMPRESS-PROOF-COMPRESS-PROOF-COMPRESS-PROOF-COMPRESS-PROOF-COMPRESS-PROOF-COMPRESS-PROOF-COMPRESS-PROOF-COMPRESS-PROOF-COMPRESS-PROOF
COMPRESS-PROOF-COMPRESS-PROOF-COMPRESS-PROOF-COMPRESS-PROOF-COMPRESS-PROOF-COMPRESS-PROOF-COMPRESS-PROOF-COMPRESS-PROOF-COMPRESS-PROOF
{{\begin{tabular}{c}
{\scriptsize\setlength{\extrarowheight}{3.5pt}
$\begin{array}{@{\hspace{30pt}}r@{\hspace{6pt}}l@{\hspace{3pt}}}
\textit{unit}
&
{\langle{I_{2},s_{2}}\rangle}\xrightarrow{\eta^{f}_{{\langle{I_{2},s_{2}}\rangle}}}f^{\ast}({\scriptstyle\sum}_{f}(I_{2},s_{2}))
\\
&
\eta^{f}_{{\langle{I_{2},s_{2}}\rangle}}{\;\cdot\;}\widehat{s_{2}{\cdot}f} = s_{2}
\\
\textit{counit}
&
{\scriptstyle\sum}_{f}(f^{\ast}(I_{1},s_{1}))\xrightarrow{\varepsilon^{f}_{{\langle{I_{1},s_{1}}\rangle}}}{\langle{I_{1},s_{1}}\rangle}
\\
&
\varepsilon^{f}_{{\langle{I_{1},s_{1}}\rangle}}{\;\cdot\;}s_{1} = \hat{s}_{1}{\;\cdot\;}f
\\
\textit{adjoints}
&
{\langle{I_{2},s_{2}}\rangle}\xrightarrow{h_{2}}f^{\ast}(I_{1},s_{1})
\;\;\textit{and}\;\;
{\scriptstyle\sum}_{f}(I_{2},s_{2})\xrightarrow{h_{1}}{\langle{I_{1},s_{1}}\rangle}
\\
&
h_{2}{\;\cdot\;}\hat{s}_{1} = s_{2}
\;\;\textit{and}\;\;
h_{1}{\;\cdot\;}s_{1} = s_{2}{\;\cdot\;}f
\;\;\textit{and}\;\;
f^{\ast}(h_{1}){\;\cdot\;}\hat{s}_{1} = \widehat{s_{2}{\cdot}f}
\\
\textit{adjunction}
&
h_{1} = \underset{h_{2}}{\underbrace{{\scriptstyle\sum}_{f}(h_{2})}}{\;\cdot\;}\varepsilon^{f}_{{\langle{I_{1},s_{1}}\rangle}}
\;\;\textit{and}\;\;
h_{2} = \eta^{f}_{{\langle{I_{2},s_{2}}\rangle}}{\;\cdot\;}f^{\ast}(h_{1})
\\
\textit{naturality}
&
\varepsilon^{f}_{{\langle{I_{2},s_{2}}\rangle}}
{\;\cdot\;}
h_{1}
=
f^{\ast}(h_{1})
{\;\cdot\;}
\varepsilon^{f}_{{\langle{I_{1},s_{1}}\rangle}}
\end{array}$}
\end{tabular}}}
COMPRESS-PROOF-COMPRESS-PROOF-COMPRESS-PROOF-COMPRESS-PROOF-COMPRESS-PROOF-COMPRESS-PROOF-COMPRESS-PROOF-COMPRESS-PROOF-COMPRESS-PROOF
COMPRESS-PROOF-COMPRESS-PROOF-COMPRESS-PROOF-COMPRESS-PROOF-COMPRESS-PROOF-COMPRESS-PROOF-COMPRESS-PROOF-COMPRESS-PROOF-COMPRESS-PROOF
COMPRESS-PROOF-COMPRESS-PROOF-COMPRESS-PROOF-COMPRESS-PROOF-COMPRESS-PROOF-COMPRESS-PROOF-COMPRESS-PROOF-COMPRESS-PROOF-COMPRESS-PROOF
}

\begin{table}
\begin{center}
\begin{tabular}{c@{\hspace{45pt}}c}
\begin{tabular}[b]{c}
\setlength{\unitlength}{0.65pt}
\begin{picture}(120,80)(0,-15)
\put(0,80){\makebox(0,0){\footnotesize{$\underset{=\,(\mathrmbf{Set}{\downarrow}X_{2})}{\mathrmbf{List}(X_{2})}$}}}
\put(120,80){\makebox(0,0){\footnotesize{$\underset{=\,(\mathrmbf{Set}{\downarrow}X_{1})}{\mathrmbf{List}(X_{1})}$}}}
\put(0,0){\makebox(0,0){\footnotesize{${\wp}X_{2}$}}}
\put(120,0){\makebox(0,0){\footnotesize{${\wp}X_{1}$}}}
\put(60,100){\makebox(0,0){\scriptsize{${\scriptstyle\sum}_{f}$}}}
\put(62,82){\makebox(0,0){\scriptsize{$f^{\ast}$}}}
\put(60,58){\makebox(0,0){\scriptsize{${\scriptstyle\prod}_{f}$}}}
\put(60,20){\makebox(0,0){\scriptsize{$\exists_{f}$}}}
\put(64,2){\makebox(0,0){\scriptsize{$f^{{\scriptscriptstyle-}1}$}}}
\put(60,-20){\makebox(0,0){\scriptsize{$\forall_{f}$}}}
\put(-6,40){\makebox(0,0)[r]{\scriptsize{$\mathrmbfit{im}_{X_{2}}$}}}
\put(10,40){\makebox(0,0)[l]{\scriptsize{$\mathrmbfit{inc}_{X_{2}}$}}}
\put(114,40){\makebox(0,0)[r]{\scriptsize{$\mathrmbfit{im}_{X_{1}}$}}}
\put(130,40){\makebox(0,0)[l]{\scriptsize{$\mathrmbfit{inc}_{X_{1}}$}}}
\put(35,92){\vector(1,0){50}}
\put(35,80){\vector(-1,0){0}}\qbezier(35,80)(45,80)(51,80)\qbezier(67,80)(75,80)(85,80)
\put(35,68){\vector(1,0){50}}
\put(25,12){\vector(1,0){70}}
\put(25,0){\vector(-1,0){0}}\qbezier(25,0)(40,0)(51,0)\qbezier(67,0)(80,0)(95,0)
\put(25,-12){\vector(1,0){70}}
\put(-6,65){\vector(0,-1){50}}
\put(6,15){\vector(0,1){50}}
\put(114,65){\vector(0,-1){50}}
\put(126,15){\vector(0,1){50}}
\end{picture}
\end{tabular}
&
{\scriptsize\setlength{\extrarowheight}{2pt}$\begin{array}[b]{|l|}
\multicolumn{1}{l}{\rule[-5pt]{0pt}{10pt}X_{2}\xrightarrow{\;f\;}X_{1}}
\\\hline
{\scriptstyle\sum}_{f} \dashv f^{\ast} \dashv {\scriptstyle\prod}_{f}																							\\
\exists_{f} \dashv f^{{\scriptscriptstyle-}1} \dashv \forall_{f}																									\\
\mathrmbfit{im}_{X_{2}} \dashv \mathrmbfit{inc}_{X_{2}},\;\mathrmbfit{im}_{X_{1}} \dashv \mathrmbfit{inc}_{X_{1}}	\\ \hline
\mathrmbfit{inc}_{X_{1}} \circ f^{\ast} = f^{{\scriptscriptstyle-}1} \circ \mathrmbfit{inc}_{X_{2}}								\\
\mathrmbfit{inc}_{X_{2}} \circ {\scriptstyle\prod}_{f} = \forall_{f} \circ \mathrmbfit{inc}_{X_{1}}								\\
{\scriptstyle\sum}_{f} \circ \mathrmbfit{im}_{X_{1}} \cong \mathrmbfit{im}_{X_{2}} \circ \exists_{f}							\\
f^{\ast} \circ \mathrmbfit{im}_{X_{2}} \cong \mathrmbfit{im}_{X_{1}} \circ f^{{\scriptscriptstyle-}1}							\\ \hline
\end{array}$}
\\
\end{tabular}
\end{center}
\caption{Sort List/Subset Reflection}
\label{tbl:list-sub:refl}
\end{table}
\begin{theorem}\label{thm:fib:cxt:sign:set}
The fibered context of signatures
$\mathrmbf{List}\xrightarrow{\mathrmbfit{sort}}\mathrmbf{Set}$
is the Grothendieck construction of 
the sort indexed adjunction of signatures
$\mathrmbf{Set}\xrightarrow{\;\mathrmbfit{list}\;}\mathrmbf{Adj}$.
\footnote{\label{grothendieck}A fibration (fibered context over $\mathrmbf{B}$) (nLab \cite{nlab:grothendieck}) is a passage 
$\mathrmbf{E}\xrightarrow{\;\mathrmbfit{P}\;}\mathrmbf{B}$ 
such that the fibers 
$\mathrmbf{E}_{B} = \mathrmbfit{P}^{{\scriptscriptstyle-}1}(B)$ 
depend (contravariantly) pseudofunctorially on $B{\,\in\,}\mathrmbf{B}$. 
Dually, in an opfibration the dependence is covariant.
There is an equivalence of 2-contexts
\[\mbox{\footnotesize{$\int : [\mathrmbf{B}^{\mathrm{op}},\mathrmbf{Cxt}]\stackrel{\cong}{\;\longleftrightarrow\;}\mathrmbf{Fib}(\mathrmbf{B})$}}\]
between the 2-context $\mathrmbf{Fib}(\mathrmbf{B})$ of fibrations over $\mathrmbf{B}$ 
and the 2-context $[\mathrmbf{B}^{\mathrm{op}},\mathrmbf{Cxt}]$ of contravariant pseudo-passages from $\mathrmbf{B}$ to $\mathrmbf{Cxt}$, 
also called $\mathrmbf{B}$-indexed contexts.
The construction 
$\int 
: [\mathrmbf{B}^{\mathrm{op}},\mathrmbf{Cxt}]{\;\rightarrow\;}\mathrmbf{Fib}(\mathrmbf{B})
: \mathrmbf{F}\mapsto\int\mathrmbf{F}$ 
of a fibered context from an indexed context is called the \emph{Grothendieck construction}.
We say that fibered context $\int\mathrmbf{F}$ is the oplax sum of indexed context $\mathrmbf{F}$.
\S~\ref{sub:sec:ind:fbr} has a more detailed discussion of fibered contexts.}
\end{theorem}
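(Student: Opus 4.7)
The plan is to exhibit a canonical isomorphism of contexts $\Phi: \int\mathrmbfit{list} \xrightarrow{\cong} \mathrmbf{List}$ that commutes with the projections to $\mathrmbf{Set}$, where $\int\mathrmbfit{list}$ carries the canonical projection $\pi$ to the base $\mathrmbf{Set}$ and $\mathrmbf{List}$ carries $\mathrmbfit{sort}$. To set up the indexed context, I would first record that $\mathrmbfit{list}$ sends a sort set $X$ to the fiber $\mathrmbf{List}(X) = (\mathrmbf{Set}{\downarrow}X)$, and a sort function $X_2 \xrightarrow{f} X_1$ to the adjunction ${\scriptstyle\sum}_f \dashv f^\ast$ introduced via pullback in $\mathrmbf{Set}$; contravariant pseudofunctoriality in $f$ is obtained by taking the right adjoint $f^\ast$ as the reindexing, which is strict up to the canonical isomorphisms of iterated pullbacks.

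Next I would unpack the Grothendieck construction. An object of $\int\mathrmbfit{list}$ is a pair $\langle X, \langle I, s \rangle \rangle$ with $\langle I, s\rangle \in \mathrmbf{List}(X)$, and a morphism $\langle X_2, \langle I_2, s_2\rangle \rangle \to \langle X_1, \langle I_1, s_1 \rangle \rangle$ is a pair $\langle f, \hat{h}\rangle$ consisting of a $\mathrmbf{Set}$-morphism $X_2 \xrightarrow{f} X_1$ together with a fiber morphism $\langle I_2, s_2\rangle \xrightarrow{\hat{h}} f^\ast\langle I_1, s_1\rangle$ in $\mathrmbf{List}(X_2)$. I would then define $\Phi$ on objects by $\langle X, \langle I, s\rangle \rangle \mapsto \langle I, s, X\rangle$, and on morphisms by sending $\langle f, \hat{h}\rangle$ to $\langle h, f\rangle$, where $h$ is the composite $I_2 \xrightarrow{\hat{h}} f^\ast I_1 \to I_1$ with the pullback projection (equivalently, $h$ is the transpose ${\scriptstyle\sum}_f(\hat{h}){\,\cdot\,}\hat{f}$ displayed in Fig.~\ref{fig:sign:mor}). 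The naturality square $h{\,\cdot\,}s_1 = s_2{\,\cdot\,}f$ is exactly the commutativity witnessed by $\hat{h}$ landing in the pullback.

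For the inverse, given a $\mathrmbf{List}$-morphism $\langle h, f\rangle$ satisfying $h{\,\cdot\,}s_1 = s_2{\,\cdot\,}f$, the universal property of the pullback $f^\ast\langle I_1, s_1\rangle$ in $\mathrmbf{Set}$ yields a unique map $\hat{h}: I_2 \to f^\ast I_1$ over $X_2$, i.e., a morphism in $\mathrmbf{List}(X_2)$; this is precisely the adjoint transposition recorded in the displayed equation just before Fig.~\ref{fig:sign:mor}. This gives a bijection on hom-sets, functorial in the evident way. Commutation with the projections is immediate: $\pi\langle X, \langle I, s\rangle\rangle = X = \mathrmbfit{sort}\,\Phi\langle X, \langle I, s\rangle\rangle$.

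The main obstacle is verifying that $\Phi$ preserves composition, since composition in $\int\mathrmbfit{list}$ of $\langle f, \hat{h}\rangle$ followed by $\langle g, \hat{k}\rangle$ involves reindexing $\hat{k}$ along $f$ and then pasting with $\hat{h}$, while composition in $\mathrmbf{List}$ is simply pointwise composition of underlying functions. The reconciliation comes from the fact that $\mathrmbfit{list}$ is implemented by honest set-theoretic pullbacks, so the comparison $(g{\cdot}f)^\ast \cong f^\ast \circ g^\ast$ is the canonical pullback-pasting isomorphism; under this, the composite arity $f^\ast(\hat{k}){\,\cdot\,}\hat{h}$ is sent by the underlying-function map to $\hat{h}{\,\cdot\,}\mathrm{pr}{\,\cdot\,}\hat{k}{\,\cdot\,}\mathrm{pr}$, which equals the composite $h{\,\cdot\,}k$ of the transposed arities. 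Once this coherence is discharged, $\Phi$ is a strict isomorphism of fibered contexts over $\mathrmbf{Set}$, completing the proof.
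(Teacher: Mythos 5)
Your proposal is correct and follows essentially the same route the paper takes: the paper leaves the proof implicit in the preceding discussion, where the naturality condition $h\cdot s_{1}=s_{2}\cdot f$ is shown to correspond adjointly to a fiber morphism $\langle I_{2},s_{2}\rangle\xrightarrow{\hat{h}}f^{\ast}(I_{1},s_{1})$ with $h=\hat{h}\cdot\hat{f}$ (Fig.~\ref{fig:sign:mor}), which is exactly the object/morphism identification your $\Phi$ makes explicit. Your added verification of composition via pullback pasting is the only coherence the paper glosses over, and it is handled correctly.
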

%

\subsection{Type Domains}\label{sub:sec:fole:comps:typ:dom}

A type domain, which constrains the body of a relational table, is an indexed collection of data types from which a table's tuples are chosen.

\paragraph{Fiber.}

Let $X$ be a sort set.
The fiber mathematical context of $X$-sorted type domains
%
\footnote{In the {\ttfamily ERA} data model
(Kent \cite{kent:fole:era:found}),
attributes are represented by a typed domain
consisting of a collection of data types.
In {\ttfamily FOLE},
a typed domain is represented by an attribute classification 
$\mathcal{A} = {\langle{X,Y,\models_{\mathcal{A}}}\rangle}$
consisting of a set of attribute types (sorts) $X$,
a set of attribute instances (data values) $Y$ and
an attribute classification relation $\models_{\mathcal{A}}{\,\subseteq\,}X{\times}Y$. 
For each sort (attribute type) $x \in X$,
the data domain of that type is the $\mathcal{A}$-extent
$\mathcal{A}_{x}=\mathrmbfit{ext}_{\mathcal{A}}(x) = \{ y \in Y \mid y \models_{\mathcal{A}} x \}$.
The passage
$X \xrightarrow{\mathrmbfit{ext}_{\mathcal{A}}} {\wp}Y$
maps a sort $x{\,\in\,}X$ to its data domain ($\mathcal{A}$-extent) $\mathcal{A}_{x}{\;\subseteq\;}Y$.
The attribute list classification 
$\mathrmbf{List}(\mathcal{A}) = {\langle{\mathrmbf{List}(X),\mathrmbf{List}(Y),\models_{\mathrmbf{List}(\mathcal{A})}}\rangle}$
has $X$-signatures as types and
$Y$-tuples as instances,
with classification by common arity and universal $\mathcal{A}$-classification:
a $Y$-tuple ${\langle{J,t}\rangle}$ 
is classified by 
an $X$-signature ${\langle{I,s}\rangle}$ 
when
$J = I$ and
$t_{k} \models_{\mathcal{A}} s_{k}$
for all $k \in J = I$.}
%
is the context $\mathrmbf{Cls}(X)$ described as follows.
An $X$-sorted type domain $\mathcal{A} = {\langle{X,Y,\models_{\mathcal{A}}}\rangle}$
consists of a data value set $Y$ and a classification relation $\models_{\mathcal{A}}{\;\subseteq\;}X{\times}Y$;
hence,
a data-type collection $\{ \mathcal{A}_{x}{\;\subseteq\;}Y \mid x \in X \}$,
with each sort $x \in X$ indexing the data-type $\mathrmbfit{ext}_{\mathcal{A}}(x) = A_{x}$.
An $X$-sorted type domain morphism is an infomorphism
$\mathcal{A}_{2}\xrightleftharpoons{{\langle{\mathrmit{1}_{X},g}\rangle}}\mathcal{A}_{1}$
satisfying
$g(y_{1})\models_{\mathcal{A}_{2}}x$ \underline{iff} $y_{1}\models_{\mathcal{A}_{1}}x$
for each sort $x\in{X}$ and data value $y_{1}\in{Y_{1}}$;
hence,
consisting of a data value function $Y_{2}\xleftarrow{g}Y_{1}$
satisfying
$g(y_{1})\in\mathcal{A}_{2,x}$
for each sort $x \in X$ and each value $y_{1}\in\mathcal{A}_{1,x}$;
thus, 
defining the restrictions
$\{ \mathcal{A}_{2,x}\xleftarrow{\;g_{x}}\mathcal{A}_{1,x} \mid x \in X \}$.
%
\footnote{More generally,
let $\mathcal{A}_{2}\xrightleftharpoons{{\langle{f,g}\rangle}}\mathcal{A}_{1}$ be any infomorphism.
The condition
$g(y_{1})\models_{\mathcal{A}_{2}}x_{2}\;\text{\underline{iff}}\;y_{1}\models_{\mathcal{A}_{2}}f(x_{2})$
is equivalent to the abstraction
$g^{-1}(\mathrmbfit{ext}_{\mathcal{A}_{2}}(x_{2})) = \mathrmbfit{ext}_{\mathcal{A}_{1}}(f(x_{2}))$.
Hence,
there is a function
$\mathrmbfit{ext}_{\mathcal{A}_{2}}(x_{2})\xleftarrow{\;g_{x_{2}}\;}\mathrmbfit{ext}_{\mathcal{A}_{1}}(f(x_{2}))$
that is a restriction of the instance function
$Y_{2}\xleftarrow{\;g\;}Y_{1}$.}
%

\paragraph{Fibered Context.}

The fibered context of type domains $\mathrmbf{Cls}\xrightarrow{\mathrmbfit{sort}}\mathrmbf{Set}$ is described as follows.
A type domain $\mathcal{A}={\langle{X,Y,\models_{\mathcal{A}}}\rangle}$ is a classification;
and hence consists of 
a sort set $\mathrmbfit{sort}(\mathcal{A}) = X$ and
an $X$-sorted type domain $\mathcal{A} = {\langle{X,Y,\models_{\mathcal{A}}}\rangle}$.
A type domain morphism $\mathcal{A}_{2}\xrightleftharpoons{{\langle{f,g}\rangle}}\mathcal{A}_{1}$ is an infomophism
consisting of
a sort function $X_{2}\xrightarrow[\mathrmbfit{sort}(f,g)]{f}X_{1}$
and
a data value function $Y_{2}\xleftarrow{g}Y_{1}$
that satisfy the infomorphism condition
$g(y_{1}){\;\models_{\mathcal{A}_{2}}\;}x_{2}$
\underline{iff}
$y_{1}{\;\models_{\mathcal{A}_{1}}\;}f(x_{2})$
for any source sort $x_{2}{\,\in\,}X_{2}$ and target data value $y_{1}{\,\in\,}Y_{1}$.
%
This condition gives an alternate definition.
In terms of fibers,
a type domain morphism
consists of
a sort function $X_{2}\xrightarrow{f}X_{1}$ 
\underline{and} 
a morphism 
$\mathcal{A}_{2}\xrightleftharpoons{{\langle{\mathrmit{1}_{X},g}\rangle}}{f}^{-1}(\mathcal{A}_{1})$
in the fiber context $\mathrmbf{Cls}(X_{2})$.
%
\begin{center}
{{\begin{tabular}{c}
\setlength{\unitlength}{0.45pt}
\begin{picture}(200,110)(0,0)
\put(0,100){\makebox(0,0){\footnotesize{$X_{2}$}}}
\put(188,100){\makebox(0,0){\footnotesize{$X_{1}$}}}
\put(0,0){\makebox(0,0){\footnotesize{$Y_{2}$}}}
\put(188,0){\makebox(0,0){\footnotesize{$Y_{1}$}}}
\put(8,50){\makebox(0,0)[l]{\scriptsize{$\models_{\mathcal{A}_{2}}$}}}
\put(196,50){\makebox(0,0)[l]{\scriptsize{$\models_{\mathcal{A}_{1}}$}}}
\put(95,55){\makebox(0,0)[l]{\scriptsize{$\models_{f^{-1}(\mathcal{A}_{1})}$}}}
\put(90,114){\makebox(0,0){\scriptsize{$f$}}}
\put(90,14){\makebox(0,0){\scriptsize{$g$}}}
%
\put(23,100){\vector(1,0){140}}
\put(163,0){\vector(-1,0){140}}
\put(0,85){\line(0,-1){70}}
\put(186,85){\line(0,-1){70}}

\put(15,85){\line(2,-1){150}}

%
%
\end{picture}
\end{tabular}}}
\end{center}
For any sort function $X_{2}\xrightarrow{f}X_{1}$, 
there is a type domain fiber passage
\footnote{For any sort function $X_{2}\xrightarrow{f}X_{1}$,
there is an inverse image fiber passage
$\mathrmbf{Cls}(X_{2})\xleftarrow{{f}^{-1}}\mathrmbf{Cls}(X_{1})
:{f}^{-1}(\mathcal{A}_{1})\mapsfrom\mathcal{A}_{1}$,
where
$y_{1}{\;\models_{{f}^{-1}(\mathcal{A}_{1})}\;}x_{2}$
\underline{iff}
$y_{1}{\;\models_{\mathcal{A}_{1}}\;}f(x_{2})$
for any source sort $x_{2}{\,\in\,}X_{2}$ and target data value $y_{1}{\,\in\,}Y_{1}$;
or in terms of data types,
${f}^{-1}(\mathcal{A}_{1}) = \bigl\{ {f}^{-1}(\mathcal{A}_{1})_{x_{2}} \mid x_{2}{\,\in\,}X_{2} \bigr\}
= \bigl\{ {\mathcal{A}_{1}}_{f(x_{2})} \mid x_{2}{\,\in\,}X_{2} \bigr\}$.}
\newline
$\mathrmbf{Cls}(X_{2})\xleftarrow{\mathrmbfit{cls}(f)}\mathrmbf{Cls}(X_{1}):\mathcal{A}_{1}\mapsto{f}^{-1}(\mathcal{A}_{1})$.
This fiber passage is a component of
the sort indexed context of type domains
$\mathrmbf{Set}^{\mathrm{op}}\xrightarrow{\;\mathrmbfit{cls}\;}\mathrmbf{Cxt}
:X\mapsto{\mathrmbf{Cls}(X)}$.
\begin{theorem}\label{thm:fib:cxt:cls:set}
The fibered context of type domains
(a fibration)
$\mathrmbf{Cls}\xrightarrow{\mathrmbfit{sort}}\mathrmbf{Set}$
is the Grothendieck construction of 
the sort indexed context of type domains
$\mathrmbf{Set}^{\mathrm{op}}\xrightarrow{\;\mathrmbfit{cls}\;}\mathrmbf{Cxt}$.$^{\ref{grothendieck}}$
\end{theorem}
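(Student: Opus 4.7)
The plan is to exhibit the equivalence between $\mathrmbf{Cls}$ (with its sort projection) and the Grothendieck construction $\int\mathrmbfit{cls}$ by matching objects, morphisms, and composition along the standard template. The heavy lifting has already been done in the preceding discussion: the fiber contexts $\mathrmbf{Cls}(X)$ are defined, the inverse image passages $\mathrmbf{Cls}(X_{1})\xrightarrow{f^{-1}}\mathrmbf{Cls}(X_{2})$ are defined for each $X_{2}\xrightarrow{f}X_{1}$, and the indexed context $\mathrmbf{Set}^{\mathrm{op}}\xrightarrow{\mathrmbfit{cls}}\mathrmbf{Cxt}$ is assembled from these. What remains is to unfold both sides of the asserted equality and identify them.

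First I would compare objects. By definition of the Grothendieck construction (footnote~\ref{grothendieck}), an object of $\int\mathrmbfit{cls}$ is a pair ${\langle{X,\mathcal{A}}\rangle}$ with $X{\,\in\,}\mathrmbf{Set}$ and $\mathcal{A}{\,\in\,}\mathrmbf{Cls}(X)$, which is precisely the data of a classification $\mathcal{A}={\langle{X,Y,\models_{\mathcal{A}}}\rangle}$, that is, an object of $\mathrmbf{Cls}$. The projection sending ${\langle{X,\mathcal{A}}\rangle}\mapsto X$ matches $\mathrmbfit{sort}$ on the nose.

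Next I would compare morphisms using the adjoint (fibered) reformulation already proved in \S\ref{sub:sec:fole:comps:typ:dom}. By definition, a morphism ${\langle{X_{2},\mathcal{A}_{2}}\rangle}\rightarrow{\langle{X_{1},\mathcal{A}_{1}}\rangle}$ in $\int\mathrmbfit{cls}$ is a pair ${\langle{f,\hat{g}}\rangle}$ with $X_{2}\xrightarrow{f}X_{1}$ in $\mathrmbf{Set}$ and $\mathcal{A}_{2}\xrightleftharpoons{{\langle{\mathrmit{1}_{X_{2}},\hat{g}}\rangle}}f^{-1}(\mathcal{A}_{1})$ in the fiber $\mathrmbf{Cls}(X_{2})$. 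The restatement already given, together with the defining characterization $y_{1}\models_{f^{-1}(\mathcal{A}_{1})}x_{2}$ iff $y_{1}\models_{\mathcal{A}_{1}}f(x_{2})$, establishes a bijection between such pairs and infomorphisms $\mathcal{A}_{2}\xrightleftharpoons{{\langle{f,g}\rangle}}\mathcal{A}_{1}$ satisfying the full infomorphism condition; concretely $g=\hat{g}$ as set functions $Y_{2}\leftarrow Y_{1}$, so the correspondence is a literal identification.

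Finally I would check that composition is preserved. In $\int\mathrmbfit{cls}$, the composite of ${\langle{f,\hat{g}}\rangle}$ and ${\langle{f',\hat{g}'}\rangle}$ is ${\langle{f{\,\cdot\,}f',\hat{g}{\,\cdot\,}f^{-1}(\hat{g}')}\rangle}$, and since $f^{-1}$ acts as the identity on the underlying instance function (it only relabels sorts), this reduces to the pair ${\langle{f{\,\cdot\,}f',g'{\,\cdot\,}g}\rangle}$, matching composition of infomorphisms in $\mathrmbf{Cls}$. Identities are trivially preserved. The main obstacle here, and the only place where one must be careful, is checking that $\mathrmbfit{cls}$ is a strict (not merely pseudo-) 1-functor $\mathrmbf{Set}^{\mathrm{op}}\rightarrow\mathrmbf{Cxt}$, so that the Grothendieck construction yields an honest context with the displayed composition: this follows because inverse image of classifications is defined pointwise by substitution, giving $(f{\,\cdot\,}f')^{-1}=f'^{-1}\circ f^{-1}$ on the nose. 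Once strict functoriality is in hand, the identification of $\mathrmbf{Cls}\xrightarrow{\mathrmbfit{sort}}\mathrmbf{Set}$ with $\int\mathrmbfit{cls}\rightarrow\mathrmbf{Set}$ is an isomorphism of contexts over $\mathrmbf{Set}$, completing the proof.
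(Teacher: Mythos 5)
Your proposal is correct and follows essentially the same route the paper takes: the paper offers no explicit proof of this theorem, relying instead on the immediately preceding discussion, which already records the fiberwise reformulation of a type domain morphism (a sort function $X_{2}\xrightarrow{f}X_{1}$ together with a fiber morphism $\mathcal{A}_{2}\xrightleftharpoons{{\langle{\mathrmit{1}_{X},g}\rangle}}f^{-1}(\mathcal{A}_{1})$ in $\mathrmbf{Cls}(X_{2})$) and the inverse image passages assembling the indexed context. Your unfolding of objects, morphisms, and composition — including the observation that $(f{\,\cdot\,}f')^{-1}=f'^{-1}\circ f^{-1}$ holds strictly because inverse image is defined pointwise — simply makes explicit what the paper leaves implicit.
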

%

\subsection{Signed Domains}\label{sub:sec:fole:comps:sign:dom}

A signed domain represents both the header and the body of a relational table.

\paragraph{Signed Domains.}

Signed domains are a fundamental component used 
in the definition of database tables and 
in the database interpretation of \texttt{FOLE}.
Signed domains are used to denote the valid tuples for a database header (signature).

A signed (headed/typed) domain
$\mathcal{D}={\langle{I,s,\mathcal{A}}\rangle}$
consists of
a type domain $\mathcal{A}={\langle{X,Y,\models_{\mathcal{A}}}\rangle}$ 
with sort set $X$
and a signature (database header) ${\langle{I,s,X}\rangle}$.
\footnote{Signed domains were called semidesignations in ``Database Semantics'' \cite{kent:db:sem}. Indeed,
a signed domain ${\langle{I,s,\mathcal{A}}\rangle}$
is a list designation ${\langle{{\langle{I,s}\rangle},0}\rangle} : \textbf{10} \rightrightarrows \mathrmbf{List}(\mathcal{A})$
from the trivial entity classification $\textbf{10}={\langle{\textbf{1},\emptyset,\models_{\textbf{10}}}\rangle}$
with element signature map $\textbf{1}\xrightarrow{{\langle{I,s}\rangle}}\mathrmbf{List}(X)$
and empty tuple map $\emptyset\xrightarrow{0} \mathrmbf{List}(Y)$.}
A signed domain morphism
$\mathcal{D}_{2}={\langle{I_{2},s_{2},\mathcal{A}_{2}}\rangle}
\xrightarrow{{\langle{h,f,g}\rangle}}
{\langle{I_{1},s_{1},\mathcal{A}_{1}}\rangle}=\mathcal{D}_{1}$
consists of 
a signature morphism ${\langle{I_{2},s_{2},X_{2}}\rangle}\xrightarrow{{\langle{h,f}\rangle}}{\langle{I_{1},s_{1},X_{1}}\rangle}$ and 
a type domain morphism $\mathcal{A}_{2}\xrightleftharpoons{{\langle{f,g}\rangle}}\mathcal{A}_{1}$
with a common sort function $X_{2}\xrightarrow{f}X_{1}$.
Hence,
the mathematical context of signed domains $\mathrmbf{Dom}$ is the comma context
\[\mbox{\footnotesize{$
\mathrmbf{Set}\xleftarrow{\mathrmbfit{arity}}
\mathrmbf{Dom} = {\bigl(\mathrmbf{Set}{\,\downarrow\,}\mathrmbfit{sort}\bigr)}
\xrightarrow{\mathrmbfit{data}}\mathrmbf{Cls}
$}\normalsize}\]
associated with the sort passage
$\mathrmbf{Cls}\xrightarrow{\mathrmbfit{sort}}\mathrmbf{Set}$.
There is a sign mediating passage
$\mathrmbf{Dom}\xrightarrow{\mathrmbfit{sign}}\mathrmbf{List}
:{\langle{I,s,\mathcal{A}}\rangle}\mapsto{\langle{I,s,X}\rangle}$.

From a different point-of-view,
a signed domain
$\mathcal{D} = {\langle{\mathcal{S},\mathcal{A}}\rangle}$
consists of 
a signature $\mathcal{S}={\langle{I,s,X}\rangle}$ and 
a type domain $\mathcal{A}={\langle{X,Y,\models_{\mathcal{A}}}\rangle}$ 
with common sort set $X$,
and
a signed domain morphism
${\langle{\mathcal{S}_{2},\mathcal{A}_{2}}\rangle}\xrightarrow{{\langle{h,f,g}\rangle}}{\langle{\mathcal{S}_{1},\mathcal{A}_{1}}\rangle}$
consists of a signature morphism $\mathcal{S}_{2}\xrightarrow{{\langle{h,f}\rangle}}\mathcal{S}_{1}$
and a type domain morphism $\mathcal{A}_{2}\xrightleftharpoons{{\langle{f,g}\rangle}}\mathcal{A}_{1}$
with common sort function $X_{2}\xrightarrow{f}X_{1}$.
Hence,
$\mathrmbf{Dom}$ can also be defined as the fibered product
\[\mbox{\footnotesize{$
\mathrmbf{List}\xleftarrow{\mathrmbfit{sign}}
\mathrmbf{Dom}=\mathrmbf{List}{\times_{\mathrmbf{Set}}}\mathrmbf{Cls}
\xrightarrow{\mathrmbfit{data}}\mathrmbf{Cls}
$,}\normalsize}\]
for the opspan of passages
$\mathrmbf{List}\xrightarrow{\mathrmbfit{sort}}\mathrmbf{Set}\xleftarrow{\mathrmbfit{sort}}\mathrmbf{Cls}$.
%
\begin{center}
{{\begin{tabular}{c}
\setlength{\unitlength}{0.48pt}
\begin{picture}(320,100)(0,-5)
\put(0,80){\makebox(0,0){\footnotesize{$\mathrmbf{Set}$}}}
\put(163,80){\makebox(0,0){\footnotesize{$\mathrmbf{Dom}=\bigl(\mathrmbf{Set}{\,\downarrow\,}\mathrmbfit{sort}\bigr)$}}}
\put(320,80){\makebox(0,0){\footnotesize{$\mathrmbf{Cls}$}}}
\put(0,0){\makebox(0,0){\footnotesize{$\mathrmbf{Set}$}}}
\put(163,0){\makebox(0,0){\footnotesize{$\mathrmbf{List}=\bigl(\mathrmbf{Set}{\,\downarrow\,}\mathrmbf{Set}\bigr)$}}}
\put(320,0){\makebox(0,0){\footnotesize{$\mathrmbf{Set}$}}}
\put(45,92){\makebox(0,0){\scriptsize{$\mathrmbfit{arity}$}}}
\put(275,92){\makebox(0,0){\scriptsize{$\mathrmbfit{data}$}}}
\put(55,-12){\makebox(0,0){\scriptsize{$\mathrmbfit{arity}$}}}
\put(265,-12){\makebox(0,0){\scriptsize{$\mathrmbfit{sort}$}}}
\put(-8,40){\makebox(0,0)[r]{\scriptsize{$\mathrmbfit{1}$}}}
\put(152,40){\makebox(0,0)[r]{\scriptsize{$\mathrmbfit{sign}$}}}
\put(328,40){\makebox(0,0)[l]{\scriptsize{$\mathrmbfit{sort}$}}}
\put(68,80){\vector(-1,0){43}}
\put(252,80){\vector(1,0){43}}
\put(78,0){\vector(-1,0){55}}
\put(242,0){\vector(1,0){55}}
\put(0,65){\vector(0,-1){50}}
\put(160,65){\vector(0,-1){50}}
\put(320,65){\vector(0,-1){50}}
\qbezier(280,30)(290,30)(300,30)\qbezier(280,30)(280,20)(280,10)
\end{picture}
\end{tabular}}}
\end{center}
\begin{definition}\label{def:sign:dom:tup}
There is a tuple passage
$\mathrmbfit{tup} : \mathrmbf{Dom}^{\mathrm{op}} \rightarrow \mathrmbf{Set}$.
\end{definition}
\begin{proof}
%
The tuple passage
$\mathrmbf{Dom}^{\mathrm{op}}\xrightarrow{\mathrmbfit{tup}}\mathrmbf{Set}$
maps a signed domain
${\langle{I,s,\mathcal{A}}\rangle}$
to its set of tuples
$\mathrmbfit{tup}(I,s,\mathcal{A})$,
\footnote{This important concept
can intuitively be regarded as the set of legal tuples under the database header
$\mathcal{S}={\langle{I,s,X}\rangle}$.
It is define to be the extent in the list type domain $\mathrmbf{List}(\mathcal{A})$:
$\mathrmbfit{tup}(I,s,\mathcal{A})
=\mathrmbfit{ext}_{\mathrmbf{List}(\mathcal{A})}(I,s) 
= \{ {\langle{J,t}\rangle} \in \mathrmbf{List}(Y) 
\mid {\langle{J,t}\rangle} \models_{\mathrmbf{List}(\mathcal{A})} {\langle{I,s}\rangle} \}$.
Various notations are used for this concept depending upon circumstance:
$\mathrmbfit{tup}(\mathcal{S},\mathcal{A})
= \mathrmbfit{tup}(I,s,\mathcal{A})\;\text{in \S~\ref{sub:sec:fole:comps:sign:dom},\,\ref{sub:sec:tbl:sign:dom}};
= \mathrmbfit{tup}_{\mathcal{A}}(\mathcal{S})
= \mathrmbfit{tup}_{\mathcal{A}}(I,s)\;\text{in \S~\ref{sub:sub:sec:tup:bridge:typ:dom},\,\ref{sub:sec:tbl:typ:dom}};
= \mathrmbfit{tup}_{\mathcal{S}}(\mathcal{A})
= \mathrmbfit{tup}_{\mathcal{S}}(Y,\models_{\mathcal{A}})\;\text{in \S~\ref{sub:sub:sec:tup:bridge:sign},\,\ref{sub:sec:tbl:sign}}$.}
and maps a signed domain morphism
${\langle{I_{2},s_{2},\mathcal{A}_{2}}\rangle}\xrightarrow{{\langle{h,f,g}\rangle}}{\langle{I_{1},s_{1},\mathcal{A}_{1}}\rangle}$
to its tuple function
$\mathrmbfit{tup}(I_{2},s_{2},\mathcal{A}_{2})
\xleftarrow[(h{\cdot}{(\mbox{-})})\cdot({(\mbox{-})}{\cdot}g)]{\mathrmbfit{tup}(h,f,g)}
\mathrmbfit{tup}(I_{1},s_{1},\mathcal{A}_{1})$;
\newline
or visually,
$({\cdots\,}g(t_{h(i_{2})}){\,\cdots}{\,\mid\,}i_{2}{\,\in\,}I_{2})
\mapsfrom
({\cdots\,}t_{i_{1}}{\,\cdots}{\,\mid\,}i_{1}{\,\in\,}I_{1})$.
\mbox{}\hfill\rule{5pt}{5pt}
\end{proof}

\comment{
{\footnotesize{
${\langle{I_{1},t_{1}}\rangle}{\;\models_{\mathrmbf{List}(\mathcal{A}_{1})}\;}{\langle{I_{1},s_{1}}\rangle}$
\underline{implies}
${\langle{I_{2},t_{2}}\rangle}
{\;\doteq\;}
{\langle{I_{2},h{\,\cdot\,}t_{1}{\cdot\,}g}\rangle}{\;\models_{\mathrmbf{List}(\mathcal{A}_{2})}\;}{\langle{I_{2},s_{2}}\rangle}$
}},
\newline
since {\footnotesize{
$t_{2}(i_{2})
{\;\doteq\;}
g(t_{1}(h(i_{2})))
{\;\models_{\mathcal{A}_{2}}\;}
s_{2}(i_{2})$
\underline{iff}
$t_{1}(h(i_{2}))
{\;\models_{\mathcal{A}_{2}}\;}
f(s_{2}(i_{2}))
{\;=\;}
s_{1}(h(i_{2}))
$
}}
}


\subsection{Inclusion/Tuple Bridges}\label{sub:sec:tup:bridge}

\subsubsection{Signatures.}\label{sub:sub:sec:tup:bridge:sign}


Let
$\mathcal{S} = {\langle{I,s,X}\rangle}$
be a signature.
There is an inclusion passage
$\mathrmbf{Cls}(X)\xrightarrow{\mathrmbfit{inc}_{\mathcal{S}}}\mathrmbf{Dom}$
that maps an $X$-sorted type domain
$\mathcal{A} = {\langle{X,Y,\models_{\mathcal{A}}}\rangle}$
to the signed domain
${\langle{I,s,\mathcal{A}}\rangle}$
and maps an 
$X$-sorted type domain morphism 
$\mathcal{A}_{2} = {\langle{X,Y_{2},\models_{\mathcal{A}_{2}}}\rangle}
\xrightleftharpoons{{\langle{1_{X},g}\rangle}}
{\langle{X,Y_{1},\models_{\mathcal{A}_{1}}}\rangle} = \mathcal{A}_{1}$
to the 
signed domain morphism 
${\langle{I,s,\mathcal{A}_{2}}\rangle}\xrightarrow{{\langle{1_{I},1_{X},g}\rangle}}{\langle{I,s,\mathcal{A}_{1}}\rangle}$. 
Composition of the inclusion passage with the signed domain tuple passage 
(Def.~\ref{def:sign:dom:tup})
gives a signature tuple passage 
\[\mbox{\footnotesize{$
\mathrmbf{Cls}(X)^{\mathrm{op}}
\xrightarrow[\mathrmbfit{inc}_{\mathcal{S}}^{\mathrm{op}}{\,\circ\,}\mathrmbfit{tup}]{\mathrmbfit{tup}_{\mathcal{S}}}
\mathrmbf{Set}
$.}\normalsize}\]
which maps an $X$-sorted type domain $\mathcal{A} = {\langle{X,Y,\models_{\mathcal{A}}}\rangle}$ 
to the tuple set
$\mathrmbfit{tup}(\mathcal{S},\mathcal{A}) = \mathrmbfit{tup}(I,s,\mathcal{A}) = \mathrmbfit{tup}_{\mathcal{S}}(Y,\models_{\mathcal{A}})$
and maps an
$X$-sorted type domain morphism (infomorphism)
$\mathcal{A}_{2} = {\langle{X,Y_{2},\models_{\mathcal{A}_{2}}}\rangle}
\xrightleftharpoons{{\langle{1_{X},g}\rangle}}
{\langle{X,Y_{1},\models_{\mathcal{A}_{1}}}\rangle} = \mathcal{A}_{1}$
to the tuple function associated with $g$:
$\mathrmbfit{tup}_{\mathcal{S}}(Y_{2},\models_{\mathcal{A}_{2}})
\xleftarrow[(\text{-}){\,\cdot\,}g]{\mathrmbfit{tup}_{\mathcal{S}}(g)}
\mathrmbfit{tup}_{\mathcal{S}}(Y_{1},\models_{\mathcal{A}_{1}})$;
or visually,
\newline
$({\cdots\,}g(t_{i}){\,\cdots}{\,\mid\,}i{\,\in\,}I)
\mapsfrom
({\cdots\,}t_{i}{\,\cdots}{\,\mid\,}i{\,\in\,}I)$.
%
\footnote{The tuple passage
$\mathrmbfit{tup}_{\mathcal{S}} : \mathrmbf{Cls}(X)^{\mathrm{op}} \rightarrow \mathrmbf{Set}$
maps an $X$-sorted type domain $\mathcal{A} = {\langle{X,Y,\models_{\mathcal{A}}}\rangle}$
to the tuple set 
$\mathrmbfit{tup}_{\mathcal{S}}(\mathcal{A}) 
= \mathrmbfit{tup}_{\mathcal{A}}(I,s)
= \prod_{i\in{I}}\mathcal{A}_{s(i)}$
and maps an $X$-sorted type domain morphism
$\mathcal{A}
\xrightleftharpoons{{\langle{\mathrmit{1}_{X},g}\rangle}} 
\widetilde{\mathcal{A}}$
to the tuple function
\newline
\mbox{}\hfill
$\mathrmbfit{tup}_{\mathcal{A}}(I,s) = \prod_{i\in{I}}\mathcal{A}_{s(i)}
\xleftarrow[\prod_{i\in{I}}g_{s(i)}]{\mathrmbfit{tup}_{\mathcal{S}}(g)={(\mbox{-})}{\cdot}{g}}
\prod_{i\in{I}}\widetilde{\mathcal{A}}_{s(i)} = \mathrmbfit{tup}_{\widetilde{\mathcal{A}}}(I,s)$,
\hfill\mbox{}\newline
a restriction of the tuple function
$\mathrmbf{List}(Y)\xleftarrow[{\scriptstyle\sum}_{g}]{\;\mathrmbfit{list}(g)}\mathrmbf{List}(\widetilde{Y})$.}

Let
$\mathcal{S}_{2}\xrightarrow{{\langle{h,f}\rangle}}\mathcal{S}_{2}$
be a signature morphism.
There is an inclusion bridge
$f^{-1}{\circ\;}\mathrmbfit{inc}_{\mathcal{S}_{2}}
\xRightarrow{\;\iota_{{\langle{h,f}\rangle}}\;} 
\mathrmbfit{inc}_{\mathcal{A}_{1}}$
(illustrated below right).
\begin{center}
{{\begin{tabular}{c@{\hspace{55pt}}c}
{{\begin{tabular}{c}
\setlength{\unitlength}{0.40pt}
\begin{picture}(200,190)(0,0)
\put(0,180){\makebox(0,0){\footnotesize{$I_{2}$}}}
\put(180,180){\makebox(0,0){\footnotesize{$I_{1}$}}}
\put(0,100){\makebox(0,0){\footnotesize{$X_{2}$}}}
\put(180,100){\makebox(0,0){\footnotesize{$X_{1}$}}}
\put(0,0){\makebox(0,0){\footnotesize{$Y$}}}
\put(180,0){\makebox(0,0){\footnotesize{$Y$}}}
\put(8,140){\makebox(0,0)[l]{\scriptsize{$s_{2}$}}}
\put(188,140){\makebox(0,0)[l]{\scriptsize{$s_{1}$}}}
\put(8,50){\makebox(0,0)[l]{\scriptsize{$\models_{\mathcal{A}_{2}}$}}}
\put(188,50){\makebox(0,0)[l]{\scriptsize{$\models_{\mathcal{A}_{1}}$}}}
\put(90,194){\makebox(0,0){\scriptsize{$h$}}}
\put(90,114){\makebox(0,0){\scriptsize{$f$}}}
\put(90,14){\makebox(0,0){\scriptsize{$=$}}}
\put(-40,55){\makebox(0,0)[r]{\footnotesize{$\left\{\rule{0pt}{23pt}\right.$}}}
\put(-65,55){\makebox(0,0)[r]{\scriptsize{$f^{-1}(\mathcal{A}_{1})$}}}
\put(20,180){\vector(1,0){140}}
\put(20,100){\vector(1,0){140}}
\put(160,0){\line(-1,0){140}}
\put(0,170){\vector(0,-1){60}}
\put(0,85){\line(0,-1){70}}
\put(180,170){\vector(0,-1){60}}
\put(180,85){\line(0,-1){70}}
\end{picture}
\end{tabular}}}
&
{{\begin{tabular}{c}
\setlength{\unitlength}{0.6pt}
\begin{picture}(120,90)(0,10)
\put(0,80){\makebox(0,0){\footnotesize{$\mathrmbf{Cls}(X_{2})$}}}
\put(124,80){\makebox(0,0){\footnotesize{$\mathrmbf{Cls}(X_{1})$}}}
\put(60,5){\makebox(0,0){\footnotesize{$\mathrmbf{Dom}$}}}
\put(62,92){\makebox(0,0){\scriptsize{$f^{-1}$}}}
\put(24,38){\makebox(0,0)[r]{\scriptsize{$\mathrmbfit{inc}_{\mathcal{S}_{2}}$}}}
\put(97,38){\makebox(0,0)[l]{\scriptsize{$\mathrmbfit{inc}_{\mathcal{S}_{1}}$}}}
\put(60,54){\makebox(0,0){\shortstack{\scriptsize{$\iota_{{\langle{h,f}\rangle}}$}\\\large{$\Longrightarrow$}}}}
\put(80,80){\vector(-1,0){40}}
\put(9,68){\vector(3,-4){40}}
\put(111,68){\vector(-3,-4){40}}
\end{picture}
\end{tabular}}}
\end{tabular}}}
\end{center}
For any target type domain $\mathcal{A}_{1} = {\langle{X,Y_{1},\models_{\mathcal{A}_{1}}}\rangle} \in \mathrmbf{Cls}(X_{1})$,
the signed domain morphism
\newline\mbox{}\hfill
$\mathrmbfit{inc}_{\mathcal{S}_{2}}(f^{-1}(\mathcal{A}_{1}))
\xrightarrow[{\langle{h,f,1_{Y}}\rangle}]{\iota_{{\langle{h,f}\rangle}}(\mathcal{A}_{1})} 
\mathrmbfit{inc}_{\mathcal{S}_{1}}(\mathcal{A}_{1})$
\hfill\mbox{}\newline
is illustrated above left.
This is natural in type domain.
Hence, there is an inclusion passage
$\mathrmbf{List}\xrightarrow{\mathrmbfit{inc}}\bigl(\mathrmbf{Cxt}{\,\Downarrow\,}\mathrmbf{Dom}\bigr)^{\mathrm{op}}$.
%
Composition of the inclusion bridge
$f^{-1}{\;\circ\;}\mathrmbfit{inc}_{\mathcal{S}_{2}}
\xRightarrow{\iota_{{\langle{h,f}\rangle}}}
\mathrmbfit{inc}_{\mathcal{S}_{1}}$
with the signeated above leftd domain tuple passage 
(Def.~\ref{def:sign:dom:tup})
gives a signature tuple bridge 
(Fig.~\ref{fig:tup:bridge:sign})
\begin{equation}\label{sign:tup:bridge}
\text{\footnotesize{$
{f^{-1}}^{\mathrm{op}}{\!\circ\!\!\!\!}
\underset{\overbrace{\mathrmbfit{inc}_{\mathcal{S}_{2}}^{\mathrm{op}}{\;\circ\;}\mathrmbfit{tup}}}
{\!\!\!\!\mathrmbfit{tup}_{\mathcal{S}_{2}}}
\xLeftarrow[\iota_{{\langle{h,f}\rangle}}^{\mathrm{op}}{\!\circ\;}\mathrmbfit{tup}]{\tau_{{\langle{h,f}\rangle}}}
\underset{\overbrace{\mathrmbfit{inc}_{\mathcal{S}_{1}}^{\mathrm{op}}{\;\circ\;}\mathrmbfit{tup}}}
{\mathrmbfit{tup}_{\mathcal{S}_{1}}}
\doteq\;\;
\bigl(f^{-1}{\circ\;}\mathrmbfit{inc}_{\mathcal{S}_{2}}
\xRightarrow{\iota_{{\langle{h,f}\rangle}}}
\mathrmbfit{inc}_{\mathcal{S}_{1}}\bigr)^{\mathrm{op}}
{\;\circ\;}\mathrmbfit{tup}.
$}}
\end{equation}
%
For any target type domain $\mathcal{A}_{1} = {\langle{X,Y_{1},\models_{\mathcal{A}_{1}}}\rangle} \in \mathrmbf{Cls}(X_{1})$,
the $\mathcal{A}_{1}^{\text{th}}$-component of the signature tuple bridge is
the tuple function
$\tau_{{\langle{h,f}\rangle}}(\mathcal{A}_{1}) = h{\,\cdot\,}{(\mbox{-})} : 
\mathrmbfit{tup}_{\mathcal{S}_{1}}(\mathcal{A}_{1})
= \mathrmbfit{tup}_{{\langle{I_{1},s_{1},\mathcal{A}_{1}}\rangle}} 
\rightarrow 
\mathrmbfit{tup}_{{\langle{I_{2},s_{2},f^{-1}(\mathcal{A}_{1})}\rangle}} =
\mathrmbfit{tup}_{\mathcal{S}_{2}}(f^{-1}(\mathcal{A}_{1}))$.
This is natural in signature.
Hence, there is a tuple passage
$\mathrmbf{List}\xrightarrow{\mathrmbfit{tup}}\bigl(\mathrmbf{Cxt}{\,\Uparrow\,}\mathrmbf{Set}\bigr)^{\mathrm{op}}$.
%
%
\begin{figure}
\begin{center}
{{\begin{tabular}{c}
\setlength{\unitlength}{0.6pt}
\begin{picture}(120,80)(0,20)
\put(0,80){\makebox(0,0){\footnotesize{$\mathrmbf{Cls}(X_{2})^{\mathrm{op}}$}}}
\put(124,80){\makebox(0,0){\footnotesize{$\mathrmbf{Cls}(X_{1})^{\mathrm{op}}$}}}
\put(60,5){\makebox(0,0){\footnotesize{$\mathrmbf{Set}$}}}
\put(60,92){\makebox(0,0){\scriptsize{$(f^{-1})^{\mathrm{op}}$}}}
\put(24,38){\makebox(0,0)[r]{\scriptsize{$\mathrmbfit{tup}_{\mathcal{S}_{2}}$}}}
\put(97,38){\makebox(0,0)[l]{\scriptsize{$\mathrmbfit{tup}_{\mathcal{S}_{1}}$}}}
\put(60,54){\makebox(0,0){\shortstack{\scriptsize{$\tau_{{\langle{h,f}\rangle}}$}\\\large{$\Longleftarrow$}}}}
\put(80,80){\vector(-1,0){40}}
\put(9,68){\vector(3,-4){40}}
\put(111,68){\vector(-3,-4){40}}
\end{picture}
\end{tabular}}}
\end{center}
\caption{Tuple Bridge: Signature}
\label{fig:tup:bridge:sign}
\end{figure}
%

\subsubsection{Type Domains.}\label{sub:sub:sec:tup:bridge:typ:dom}

Let
$\mathcal{A} = {\langle{X,Y,\models_{\mathcal{A}}}\rangle}$
be a type domain.
There is an inclusion passage
$\mathrmbf{List}(X)\xrightarrow{\mathrmbfit{inc}_{\mathcal{A}}}\mathrmbf{Dom}$
that maps an $X$-signature
${\langle{I,s}\rangle}$
to the signed domain
${\langle{I,s,\mathcal{A}}\rangle}$
and maps an 
$X$-signature morphism 
${\langle{I_{2},s_{2}}\rangle}\xrightarrow{h}{\langle{I_{1},s_{1}}\rangle}$
to the 
signed domain morphism 
${\langle{I_{2},s_{2},\mathcal{A}}\rangle}\xrightarrow{{\langle{h,1_{X},1_{Y}}\rangle}}{\langle{I_{1},s_{1},\mathcal{A}}\rangle}$. 
Composition of the inclusion passage with the signed domain tuple passage 
(Def.~\ref{def:sign:dom:tup})
gives a type domain tuple passage 
\[\mbox{\footnotesize{$
\mathrmbf{List}(X)^{\mathrm{op}}
\xrightarrow[\mathrmbfit{inc}_{\mathcal{A}}^{\mathrm{op}}{\,\circ\,}\mathrmbfit{tup}]
{\mathrmbfit{tup}_{\mathcal{A}}=\mathrmbfit{ext}_{\mathrmbf{List}(\mathcal{A})}}
\mathrmbf{Set}
$,}\normalsize}\]
which maps an $X$-signature
${\langle{I,s}\rangle}$
to the tuple set 
(its $\mathrmbf{List}(\mathcal{A})$-extent)
$\mathrmbfit{tup}_{{\langle{I,s,\mathcal{A}}\rangle}}=\mathrmbfit{tup}_{\mathcal{A}}(I,s)$
and maps an 
$X$-signature morphism 
${\langle{I_{2},s_{2}}\rangle}\xrightarrow{h}{\langle{I_{1},s_{1}}\rangle}$
to the tuple function associated with $h$:
$\mathrmbfit{tup}_{\mathcal{A}}(I_{2},s_{2})
\xleftarrow[h{\,\cdot\,}{(\mbox{-})}]{\mathrmbfit{tup}_{\mathcal{A}}(h)} 
\mathrmbfit{tup}_{\mathcal{A}}(I_{1},s_{1})$;
or visually,
\newline
$({\cdots\,}t_{h(i_{2})}{\,\cdots}{\,\mid\,}i_{2}{\,\in\,}I_{2})
\mapsfrom
({\cdots\,}t_{i_{1}}{\,\cdots}{\,\mid\,}i_{1}{\,\in\,}I_{1})$.
%
\newline
\begin{description}
\item[levo:] 
Let
$\mathcal{A}_{2}\xrightleftharpoons{{\langle{f,g}\rangle}}\mathcal{A}_{1}$
be a type domain morphism.
\begin{center}
{{\begin{tabular}{c@{\hspace{55pt}}c}
{{\begin{tabular}{c}
\setlength{\unitlength}{0.40pt}
\begin{picture}(200,190)(0,0)
\put(0,180){\makebox(0,0){\footnotesize{$\hat{I}_{1}$}}}
\put(180,180){\makebox(0,0){\footnotesize{$I_{1}$}}}
\put(0,100){\makebox(0,0){\footnotesize{$X_{2}$}}}
\put(180,100){\makebox(0,0){\footnotesize{$X_{1}$}}}
\put(0,0){\makebox(0,0){\footnotesize{$Y_{2}$}}}
\put(180,0){\makebox(0,0){\footnotesize{$Y_{1}$}}}
\put(8,140){\makebox(0,0)[l]{\scriptsize{$\hat{s}_{1}$}}}
\put(188,140){\makebox(0,0)[l]{\scriptsize{$s_{1}$}}}
\put(8,50){\makebox(0,0)[l]{\scriptsize{$\models_{\mathcal{A}_{2}}$}}}
\put(188,50){\makebox(0,0)[l]{\scriptsize{$\models_{\mathcal{A}_{1}}$}}}
\put(90,194){\makebox(0,0){\scriptsize{$\hat{f}$}}}
\put(90,114){\makebox(0,0){\scriptsize{$f$}}}
\put(90,14){\makebox(0,0){\scriptsize{$g$}}}
\put(-40,142){\makebox(0,0)[r]{\footnotesize{$\left\{\rule{0pt}{22pt}\right.$}}}
\put(-65,142){\makebox(0,0)[r]{\scriptsize{$f^{\ast}(I_{1},s_{1})$}}}
%
\put(20,180){\vector(1,0){140}}
\put(20,100){\vector(1,0){140}}
\put(160,0){\vector(-1,0){140}}
\put(0,170){\vector(0,-1){60}}
\put(0,85){\line(0,-1){70}}
\put(180,170){\vector(0,-1){60}}
\put(180,85){\line(0,-1){70}}
%
\qbezier(140,130)(140,125)(140,120)
\qbezier(140,130)(145,130)(150,130)
\end{picture}
\end{tabular}}}
&
{{\begin{tabular}{c}
\setlength{\unitlength}{0.6pt}
\begin{picture}(120,90)(0,10)
\put(0,80){\makebox(0,0){\footnotesize{$\mathrmbf{List}(X_{2})$}}}
\put(124,80){\makebox(0,0){\footnotesize{$\mathrmbf{List}(X_{1})$}}}
\put(60,5){\makebox(0,0){\footnotesize{$\mathrmbf{Dom}$}}}
\put(62,92){\makebox(0,0){\scriptsize{$f^{\ast}$}}}
\put(24,38){\makebox(0,0)[r]{\scriptsize{$\mathrmbfit{inc}_{\mathcal{A}_{2}}$}}}
\put(97,38){\makebox(0,0)[l]{\scriptsize{$\mathrmbfit{inc}_{\mathcal{A}_{1}}$}}}
\put(60,54){\makebox(0,0){\shortstack{\scriptsize{$\grave{\iota}_{{\langle{f,g}\rangle}}$}\\\large{$\Longrightarrow$}}}}
\put(80,80){\vector(-1,0){40}}
\put(9,68){\vector(3,-4){40}}
\put(111,68){\vector(-3,-4){40}}
\end{picture}
\end{tabular}}}
\end{tabular}}}
\end{center}
There is an inclusion bridge
$f^{\ast}{\circ\;}\mathrmbfit{inc}_{\mathcal{A}_{2}}
\xRightarrow{\;\grave{\iota}_{{\langle{f,g}\rangle}}\;} 
\mathrmbfit{inc}_{\mathcal{A}_{1}}$
(illustrated above right).
For any target signature ${\langle{I_{1},s_{1}}\rangle} \in \mathrmbf{List}(X_{1})$,
the signed domain morphism
\newline\mbox{}\hfill
$\mathrmbfit{inc}_{\mathcal{A}_{2}}(f^{\ast}(I_{1},s_{1}))
\xrightarrow[{\langle{\hat{f},f,g}\rangle}]{\grave{\iota}_{{\langle{f,g}\rangle}}(I_{1},s_{1})} 
\mathrmbfit{inc}_{\mathcal{A}_{1}}(I_{1},s_{1})$
\hfill\mbox{}\newline
is define by pullback (illustrated above left).
This is natural in signature.
%
\footnote{For any $X_{1}$-signature morphism ${\langle{I_{1}',s_{1}'}\rangle} \xrightarrow{h} {\langle{I_{1},s_{1}}\rangle}$
with inverse image $X_{2}$-signature morphism 
${\langle{\hat{I}_{1}',\hat{s}_{1}'}\rangle}\xrightarrow{f^{\ast}(h)}{\langle{\hat{I}_{1},\hat{s}_{1}}\rangle}$
we have the commutative diagram
$\grave{\iota}_{{\langle{f,g}\rangle}}(I'_{1},s'_{1}){\,\cdot\,}\mathrmbfit{inc}_{\mathcal{A}_{1}}(h)
= \mathrmbfit{inc}_{\mathcal{A}_{2}}(f^{\ast}(h)){\,\cdot\,}\grave{\iota}_{{\langle{f,g}\rangle}}(I_{1},s_{1})$.}
%
Hence, there is an inclusion passage
$\mathrmbf{Cls}\xrightarrow{\grave{\mathrmbfit{inc}}}\bigl(\mathrmbf{Cxt}{\,\Downarrow\,}\mathrmbf{Dom}\bigr)^{\mathrm{op}}$.
\footnote{For any context $\mathrmbf{C}$,
the ``super-comma'' context 
$\bigl(\mathrmbf{Cxt}{\,\Downarrow\,}\mathrmbf{C}\bigr)$
is defined \cite{maclane:71} as follows:
\newline
(1) an object is a $\mathrmbf{C}$-diagram ${\langle{\mathrmbf{I},\mathrmbfit{D}}\rangle}$
with indexing context $\mathrmbf{I}$ and passage $\mathrmbf{I}\xrightarrow{\mathrmbfit{D}}\mathrmbf{C}$;
\newline
(2) a morphism is a $\mathrmbf{C}$-diagram morphism 
${\langle{\mathrmbf{I}_{2},\mathrmbfit{D}_{2}}\rangle}
\xrightarrow{\langle{\mathrmbfit{F},\alpha}\rangle}
{\langle{\mathrmbf{I}_{1},\mathrmbfit{D}_{1}}\rangle}$
with indexing passage $\mathrmbf{I}_{2}\xrightarrow{\mathrmbfit{F}}\mathrmbf{I}_{1}$
and bridge $\mathrmbfit{D}_{2}\xLeftarrow{\;\,\alpha}\mathrmbfit{F}{\;\circ\;}\mathrmbfit{D}_{1}$.}
%
Composition of the inclusion bridge
$f^{\ast}{\;\circ\;}\mathrmbfit{inc}_{\mathcal{A}_{2}}
\xRightarrow{\grave{\iota}_{{\langle{f,g}\rangle}}}
\mathrmbfit{inc}_{\mathcal{A}_{1}}$
with the signed domain tuple passage 
(Def.\ref{def:sign:dom:tup})
gives a type domain tuple bridge 
(left-side Fig.~\ref{fig:tup:bridge:sign})
\begin{equation}\label{typ:dom:tup:bridge}
\text{\footnotesize{$
{f^{\ast}}^{\mathrm{op}}{\!\circ\!\!\!\!}
\underset{\overbrace{\mathrmbfit{inc}_{\mathcal{A}_{2}}^{\mathrm{op}}{\;\circ\;}\mathrmbfit{tup}}}
{\!\!\!\!\mathrmbfit{tup}_{\mathcal{A}_{2}}}
\xLeftarrow[\grave{\iota}_{{\langle{f,g}\rangle}}^{\mathrm{op}}{\!\circ\;}\mathrmbfit{tup}]{\acute{\tau}_{{\langle{f,g}\rangle}}}
\underset{\overbrace{\mathrmbfit{inc}_{\mathcal{A}_{1}}^{\mathrm{op}}{\;\circ\;}\mathrmbfit{tup}}}
{\mathrmbfit{tup}_{\mathcal{A}_{1}}}
\!\!\!\!\doteq\;\;
\bigl(f^{\ast}{\circ\;}\mathrmbfit{inc}_{\mathcal{A}_{2}}
\xRightarrow{\grave{\iota}_{{\langle{f,g}\rangle}}}
\mathrmbfit{inc}_{\mathcal{A}_{1}}\bigr)^{\mathrm{op}}
{\;\circ\;}\mathrmbfit{tup}.
$}}
\end{equation}
For any target signature ${\langle{I_{1},s_{1}}\rangle} \in \mathrmbf{List}(X_{1})$,
the tuple function
$\acute{\tau}_{{\langle{f,g}\rangle}}(I_{1},s_{1}) = \hat{f} \cdot {(\mbox{-})} \cdot g : 
\mathrmbfit{tup}_{\mathcal{A}_{1}}(I_{1},s_{1}) \rightarrow \mathrmbfit{tup}_{\mathcal{A}_{2}}(f^{\ast}(I_{1},s_{1}))$
is define by pullback (illustrated above left).
This is natural in signature.
Hence, there is a tuple passage
$\mathrmbf{Cls}\xrightarrow{\acute{\mathrmbfit{tup}}}\bigl(\mathrmbf{Cxt}{\,\Uparrow\,}\mathrmbf{Set}\bigr)^{\mathrm{op}}$.
\newline
\item[dextro:] 
Let
$\mathcal{A}_{2}\xrightleftharpoons{{\langle{f,g}\rangle}}\mathcal{A}_{1}$
be a type domain morphism.
%
\begin{center}
\begin{tabular}{c@{\hspace{60pt}}c}
\\
{{\begin{tabular}{c}
\setlength{\unitlength}{0.40pt}
\begin{picture}(200,190)(0,0)
\put(0,180){\makebox(0,0){\footnotesize{$I_{2}$}}}
\put(180,180){\makebox(0,0){\footnotesize{$I_{2}$}}}
\put(0,100){\makebox(0,0){\footnotesize{$X_{2}$}}}
\put(180,100){\makebox(0,0){\footnotesize{$X_{1}$}}}
\put(0,0){\makebox(0,0){\footnotesize{$Y_{2}$}}}
\put(180,0){\makebox(0,0){\footnotesize{$Y_{1}$}}}
\put(8,140){\makebox(0,0)[l]{\scriptsize{$s_{2}$}}}
\put(188,140){\makebox(0,0)[l]{\scriptsize{$
{s_{2}{\,\cdot\,}f}$}}}
\put(8,50){\makebox(0,0)[l]{\scriptsize{$\models_{\mathcal{A}_{2}}$}}}
\put(188,50){\makebox(0,0)[l]{\scriptsize{$\models_{\mathcal{A}_{1}}$}}}
\put(90,194){\makebox(0,0){\scriptsize{$\mathrmit{1}$}}}
\put(90,114){\makebox(0,0){\scriptsize{$f$}}}
\put(90,14){\makebox(0,0){\scriptsize{$g$}}}
%
\put(20,180){\vector(1,0){140}}
\put(20,100){\vector(1,0){140}}
\put(160,0){\vector(-1,0){140}}
\put(0,170){\vector(0,-1){60}}
\put(0,85){\line(0,-1){70}}
\put(180,170){\vector(0,-1){60}}
\put(180,85){\line(0,-1){70}}
%
\end{picture}
\end{tabular}}}
&
{{\begin{tabular}{c}
\setlength{\unitlength}{0.6pt}
\begin{picture}(120,90)(0,10)
\put(0,80){\makebox(0,0){\footnotesize{$\mathrmbf{List}(X_{2})$}}}
\put(124,80){\makebox(0,0){\footnotesize{$\mathrmbf{List}(X_{1})$}}}
\put(60,5){\makebox(0,0){\footnotesize{$\mathrmbf{Dom}$}}}
\put(60,92){\makebox(0,0){\scriptsize{${\scriptstyle\sum}_{f}$}}}
\put(24,38){\makebox(0,0)[r]{\scriptsize{$\mathrmbfit{inc}_{\mathcal{A}_{2}}$}}}
\put(97,38){\makebox(0,0)[l]{\scriptsize{$\mathrmbfit{inc}_{\mathcal{A}_{1}}$}}}
\put(60,54){\makebox(0,0){\shortstack{\scriptsize{$\acute{\iota}_{{\langle{f,g}\rangle}}$}\\\large{$\Longrightarrow$}}}}
\put(40,80){\vector(1,0){40}}
\put(9,68){\vector(3,-4){40}}
\put(111,68){\vector(-3,-4){40}}
\end{picture}
\end{tabular}}}
\\\\
\end{tabular}
\end{center}
%
There is an inclusion bridge
$\mathrmbfit{inc}_{\mathcal{A}_{2}}
\xRightarrow{\acute{\iota}_{{\langle{f,g}\rangle}}} 
{\scriptstyle\sum}_{f}{\;\circ\;}\mathrmbfit{inc}_{\mathcal{A}_{1}}$ 
(illustrated above right).
For any source signature ${\langle{I_{2},s_{2}}\rangle} \in \mathrmbf{List}(X_{2})$,
the signed domain morphism
\newline\mbox{}\hfill
$\mathrmbfit{inc}_{\mathcal{A}_{2}}(I_{2},s_{2})
\xrightarrow[{\langle{1_{I_{2}},f,g}\rangle}]{\acute{\iota}_{{\langle{f,g}\rangle}}(I_{2},s_{2})} 
\mathrmbfit{inc}_{\mathcal{A}_{1}}({\scriptstyle\sum}_{f}(I_{2},s_{2}))$
\hfill\mbox{}\newline
is define by composition (illustrated above left).
This is natural in signature.
Hence, there is an inclusion passage
$\mathrmbf{Cls}\xrightarrow{\acute{\mathrmbfit{inc}}}\bigl(\mathrmbf{Cxt}{\,\Uparrow\,}\mathrmbf{Dom}\bigr)$.
%
Composition of the inclusion bridge
$\mathrmbfit{inc}_{\mathcal{A}_{2}}
\xRightarrow{\acute{\iota}_{{\langle{f,g}\rangle}}} 
{\scriptstyle\sum}_{f}{\;\circ\;}\mathrmbfit{inc}_{\mathcal{A}_{1}}$ 
with the signed domain tuple passage 
(Def.\ref{def:sign:dom:tup})
gives a type domain tuple bridge 
(right-side Fig.~\ref{fig:tup:bridge:typ:dom})
\[\mbox{\footnotesize{$
\underset{\overbrace{\mathrmbfit{inc}_{\mathcal{A}_{2}}^{\mathrm{op}}{\;\circ\;}\mathrmbfit{tup}}}
{\!\!\!\!\mathrmbfit{tup}_{\mathcal{A}_{2}}}
\xLeftarrow[\acute{\iota}_{{\langle{f,g}\rangle}}^{\,\mathrm{op}}{\!\circ\;}\mathrmbfit{tup}]{\grave{\tau}_{{\langle{f,g}\rangle}}\;}
{\scriptstyle\sum}_{f}^{\mathrm{op}}{\;\circ\!\!\!\!\!}
\underset{\overbrace{\mathrmbfit{inc}_{\mathcal{A}_{1}}^{\mathrm{op}}{\;\circ\;}\mathrmbfit{tup}}}
{\mathrmbfit{tup}_{\mathcal{A}_{1}}}
\doteq\;\;
\bigl(
\mathrmbfit{inc}_{\mathcal{A}_{2}}
\xRightarrow{\acute{\iota}_{{\langle{f,g}\rangle}}} 
{\scriptstyle\sum}_{f}{\,\circ\;}\mathrmbfit{inc}_{\mathcal{A}_{1}}
\bigr)^{\mathrm{op}}
{\!\circ\;\,}\mathrmbfit{tup}.
$}}\]
For any source signature ${\langle{I_{2},s_{2}}\rangle} \in \mathrmbf{List}(X_{2})$,
the tuple function
$\grave{\tau}_{{\langle{f,g}\rangle}}(I_{2},s_{2}) = {(\mbox{-})} \cdot g : 
\mathrmbfit{tup}_{\mathcal{A}_{1}}({\scriptstyle\sum}_{f}(I_{2},s_{2})) \rightarrow \mathrmbfit{tup}_{\mathcal{A}_{2}}(I_{2},s_{2})$
is define by composition (illustrated above left).
This is natural in signature.
Hence, there is a tuple passage
$\mathrmbf{Cls}\xrightarrow{\grave{\mathrmbfit{tup}}}\bigl(\mathrmbf{Cxt}{\,\Downarrow\,}\mathrmbf{Set}\bigr)$.

%
\end{description}

\newpage

\begin{lemma}\label{lem:nat:iso}
There are natural isomorphisms
%
\footnote{For adjunction 
$\mathrmbf{A}_{2} \xrightarrow{{\langle{\mathrmbfit{F},\mathrmbfit{G},\eta,\varepsilon}\rangle}} \mathrmbf{A}_{1}$
with 
left adjoint $\mathrmbf{A}_{2}\xrightarrow{\mathrmbfit{F}}\mathrmbf{A}_{1}$,
right adjoint $\mathrmbf{A}_{2}\xleftarrow{\mathrmbfit{G}}\mathrmbf{A}_{1}$,
unit $\mathrmbfit{1}_{\mathrmbf{A}_{2}}\xRightarrow{\;\eta\;}\mathrmbfit{F}{\;\circ\;}\mathrmbfit{G}$ and 
counit $\mathrmbfit{G}{\;\circ\;}\mathrmbfit{F}\xRightarrow{\;\varepsilon\;}\mathrmbfit{1}_{\mathrmbf{A}_{1}}$,
there is an natural isomorphism
\[\mbox{\footnotesize$
\mathrmbfit{F}{\,\circ\,}\mathrmbfit{A}_{1}\xLeftarrow{\;\,\acute{\alpha}}\mathrmbfit{A}_{2}
{\;\;\cong\;\;}
\mathrmbfit{A}_{1}\xLeftarrow{\;\,\grave{\alpha}}\mathrmbfit{G}{\,\circ\,}\mathrmbfit{A}_{2}
$\normalsize}\]
with 
$\acute{\alpha} = (\eta{\,\circ\,}\mathrmbfit{A}_{2}){\;\bullet\;}(\mathrmbfit{F}{\,\circ\,}\grave{\alpha})$
and  
$\grave{\alpha} = (\mathrmbfit{G}{\,\circ\,}\acute{\alpha}){\;\bullet\;}(\varepsilon{\,\circ\,}\mathrmbfit{A}_{1})$.}
%
\[\mbox{\footnotesize$
{\scriptstyle\sum}_{f}{\;\circ\;}\mathrmbfit{inc}_{\mathcal{A}_{1}} 
\xLeftarrow{\;\acute{\iota}_{{\langle{f,g}\rangle}}} 
\mathrmbfit{inc}_{\mathcal{A}_{2}}
{\;\;\cong\;\;}
\mathrmbfit{inc}_{\mathcal{A}_{1}} 
\xLeftarrow{\;\grave{\iota}_{{\langle{f,g}\rangle}}} 
f^{\ast}{\;\circ\;}\mathrmbfit{inc}_{\mathcal{A}_{2}}
$\normalsize}\]
with
$\acute{\iota}_{{\langle{f,g}\rangle}} = 
\bigl(\eta_{f}{\;\circ\;}\mathrmbfit{inc}_{\mathcal{A}_{2}}\bigr)
{\;\bullet\;} 
\bigl({\scriptstyle\sum}_{f}{\;\circ\;}\grave{\iota}_{{\langle{f,g}\rangle}}\bigr)$
and
$\grave{\iota}_{{\langle{f,g}\rangle}} = 
\bigl(f^{\ast}{\;\circ\;}\acute{\iota}_{{\langle{f,g}\rangle}}\bigr)
{\;\bullet\;}
\bigl(\varepsilon_{f}{\;\circ\;}\mathrmbfit{inc}_{\mathcal{A}_{1}}\bigr)$;
and
\[\mbox{\footnotesize$
{f^{\ast}}^{\mathrm{op}}{\;\circ\;}\mathrmbfit{tup}_{\mathcal{A}_{2}}
\xLeftarrow[\;\grave{\iota}_{{\langle{f,g}\rangle}}^{\mathrm{op}}{\!\circ\;}\mathrmbfit{tup}]{\;\acute{\tau}_{{\langle{f,g}\rangle}}}
\mathrmbfit{tup}_{\mathcal{A}_{1}} 
{\;\;\cong\;\;}
\mathrmbfit{tup}_{\mathcal{A}_{2}}
\xLeftarrow[\;\acute{\iota}_{{\langle{f,g}\rangle}}^{\,\mathrm{op}}{\!\circ\;}\mathrmbfit{tup}]{\;\grave{\tau}_{{\langle{f,g}\rangle}}}
{\scriptstyle\sum}_{f}^{\mathrm{op}}{\;\circ\;}\mathrmbfit{tup}_{\mathcal{A}_{1}} 
$\normalsize}\]
with
$\acute{\tau}_{{\langle{f,g}\rangle}} = 
(\varepsilon_{f}^{\mathrm{op}}{\;\circ\;}\mathrmbfit{tup}_{\mathcal{A}_{1}})
{\;\bullet\;}
({f^{\ast}}^{\mathrm{op}}{\;\circ\;}\grave{\tau}_{{\langle{f,g}\rangle}})$
and
$\grave{\tau}_{{\langle{f,g}\rangle}} = 
({\scriptstyle\sum}_{f}^{\mathrm{op}}{\;\circ\;}\acute{\tau}_{{\langle{f,g}\rangle}})
{\;\bullet\;}
(\eta_{f}^{\mathrm{op}}{\;\circ\;}\mathrmbfit{tup}_{\mathcal{A}_{2}})$.
\end{lemma}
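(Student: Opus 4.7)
The statement is essentially the standard adjoint-transpose bijection applied in two related settings, so the plan is to reduce everything to the adjunction $\mathrmbf{List}(X_{2}){\;\xrightarrow[{\langle{{\scriptscriptstyle\sum}_{f}\,\dashv\,f^{\ast}}\rangle}]{}\;}\mathrmbf{List}(X_{1})$ with unit $\eta_{f}$ and counit $\varepsilon_{f}$, and then transport the result along the contravariant passage $\mathrmbfit{tup}$. First I would set up the two candidate bridges
\[
\mathrmbfit{inc}_{\mathcal{A}_{2}}\xRightarrow{\;\acute{\iota}_{{\langle{f,g}\rangle}}\;}{\scriptstyle\sum}_{f}{\circ\,}\mathrmbfit{inc}_{\mathcal{A}_{1}},
\qquad
f^{\ast}{\circ\,}\mathrmbfit{inc}_{\mathcal{A}_{2}}\xRightarrow{\;\grave{\iota}_{{\langle{f,g}\rangle}}\;}\mathrmbfit{inc}_{\mathcal{A}_{1}},
\]
defined respectively by composition with ${\scriptstyle\sum}_{f}$ (the ``acute'' case, component ${\langle{1_{I_{2}},f,g}\rangle}$) and by pullback ${\langle{\hat{f},f,g}\rangle}$ (the ``grave'' case). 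I would then invoke the general fact, stated in the footnote, that for any adjunction $\mathrmbfit{F}\dashv\mathrmbfit{G}$ and any pair of passages $\mathrmbfit{A}_{i}$ into the common codomain, natural transformations $\mathrmbfit{A}_{2}\Rightarrow\mathrmbfit{F}{\,\circ\,}\mathrmbfit{A}_{1}$ correspond bijectively to natural transformations $\mathrmbfit{G}{\,\circ\,}\mathrmbfit{A}_{2}\Rightarrow\mathrmbfit{A}_{1}$ via whiskering with $\eta$ and $\varepsilon$.

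The heart of the argument is then to verify the two mate formulas
\[
\acute{\iota}_{{\langle{f,g}\rangle}}=\bigl(\eta_{f}{\,\circ\,}\mathrmbfit{inc}_{\mathcal{A}_{2}}\bigr)\bullet\bigl({\scriptstyle\sum}_{f}{\,\circ\,}\grave{\iota}_{{\langle{f,g}\rangle}}\bigr),
\qquad
\grave{\iota}_{{\langle{f,g}\rangle}}=\bigl(f^{\ast}{\,\circ\,}\acute{\iota}_{{\langle{f,g}\rangle}}\bigr)\bullet\bigl(\varepsilon_{f}{\,\circ\,}\mathrmbfit{inc}_{\mathcal{A}_{1}}\bigr).
\]
Because both sides take values in $\mathrmbf{Dom}$, which is the comma context ${\bigl(\mathrmbf{Set}{\,\downarrow\,}\mathrmbfit{sort}\bigr)}$, it suffices to compare the signature and data-value components. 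On the data-value side everything is the identity-or-$g$ map, so naturality is immediate. On the signature side the claim reduces to the familiar triangle identity: the counit $\varepsilon_{f}$ realized as $\hat{f}$ composed with a lifted morphism decomposes as the pullback counit after the unit $\eta_{f}$, which is exactly the relation between ${\langle{1_{I_{2}},f}\rangle}$ (composition) and ${\langle{\hat{f},f}\rangle}{\,\circ\,}\tilde{f}$ (pullback then counit) guaranteed by the adjunction $\mathrmbfit{list}(f)$ exhibited in the preamble.

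Once the inclusion-bridge isomorphism is established, the second (tuple-bridge) isomorphism follows formally by composing the whole situation with the tuple passage $\mathrmbf{Dom}^{\mathrm{op}}\xrightarrow{\mathrmbfit{tup}}\mathrmbf{Set}$: the definitions \eqref{typ:dom:tup:bridge} give $\acute{\tau}_{{\langle{f,g}\rangle}}=\grave{\iota}_{{\langle{f,g}\rangle}}^{\mathrm{op}}{\,\circ\,}\mathrmbfit{tup}$ and (by the analogous equation on the dextro side) $\grave{\tau}_{{\langle{f,g}\rangle}}=\acute{\iota}_{{\langle{f,g}\rangle}}^{\mathrm{op}}{\,\circ\,}\mathrmbfit{tup}$. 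Applying the contravariant $(\mbox{-})^{\mathrm{op}}{\!\circ\,}\mathrmbfit{tup}$ to the two mate formulas displayed above yields the stated tuple-bridge formulas, the $\bullet$'s being reversed to reflect the contravariance and the roles of $\eta_{f}^{\mathrm{op}}$ and $\varepsilon_{f}^{\mathrm{op}}$ being swapped accordingly.

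\textbf{Main obstacle.} The only non-bookkeeping step is verifying componentwise that the pullback construction defining $\grave{\iota}_{{\langle{f,g}\rangle}}$ really is the mate, under $\eta_{f},\varepsilon_{f}$, of the composition construction defining $\acute{\iota}_{{\langle{f,g}\rangle}}$. Concretely, given a target $X_{1}$-signature ${\langle{I_{1},s_{1}}\rangle}$, one must check that factoring ${\langle{\hat{f},f,g}\rangle}:{\langle{\hat{I}_{1},\hat{s}_{1},\mathcal{A}_{2}}\rangle}\to{\langle{I_{1},s_{1},\mathcal{A}_{1}}\rangle}$ through $\eta_{f}$ recovers ${\langle{1,f,g}\rangle}$ on the source ${\langle{I_{2},s_{2}}\rangle}=f^{\ast}(I_{1},s_{1})$ and symmetrically. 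This is a direct diagram chase against Fig.~\ref{fig:sign:mor} and the triangle identities of the fiber adjunction $\mathrmbfit{list}(f)$, but it is where all the work sits; everything else is formal.
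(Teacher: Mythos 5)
Your proposal is correct and follows the same route the paper takes: the lemma is an instance of the general mate correspondence recorded in the footnote, instantiated at the inclusion bridges $\acute{\iota}_{{\langle{f,g}\rangle}}$ (composition) and $\grave{\iota}_{{\langle{f,g}\rangle}}$ (pullback) from \S~\ref{sub:sub:sec:tup:bridge:typ:dom}, with the tuple-bridge version obtained by whiskering with $\mathrmbfit{tup}$ exactly as you describe. The paper in fact gives no explicit proof beyond this, so your componentwise check that the pullback construction is the mate of the composition construction (via the triangle identities of $\mathrmbfit{list}(f)$) supplies the one piece of work the paper leaves implicit.
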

\begin{figure}
\begin{center}
\begin{tabular}{c@{\hspace{65pt}}c}
\textbf{levo}
&
\textbf{dextro}
\\&\\
{{\begin{tabular}{c}
\setlength{\unitlength}{0.6pt}
\begin{picture}(120,90)(0,10)
\put(0,80){\makebox(0,0){\footnotesize{$\mathrmbf{List}(X_{2})$}}}
\put(124,80){\makebox(0,0){\footnotesize{$\mathrmbf{List}(X_{1})$}}}
\put(60,5){\makebox(0,0){\footnotesize{$\mathrmbf{Dom}$}}}
\put(62,92){\makebox(0,0){\scriptsize{$f^{\ast}$}}}
\put(24,38){\makebox(0,0)[r]{\scriptsize{$\mathrmbfit{inc}_{\mathcal{A}_{2}}$}}}
\put(97,38){\makebox(0,0)[l]{\scriptsize{$\mathrmbfit{inc}_{\mathcal{A}_{1}}$}}}
\put(60,54){\makebox(0,0){\shortstack{\scriptsize{$\grave{\iota}_{{\langle{f,g}\rangle}}$}\\\large{$\Longrightarrow$}}}}
\put(80,80){\vector(-1,0){40}}
\put(9,68){\vector(3,-4){40}}
\put(111,68){\vector(-3,-4){40}}
\end{picture}
\end{tabular}}}
&
{{\begin{tabular}{c}
\setlength{\unitlength}{0.6pt}
\begin{picture}(120,90)(0,10)
\put(0,80){\makebox(0,0){\footnotesize{$\mathrmbf{List}(X_{2})$}}}
\put(124,80){\makebox(0,0){\footnotesize{$\mathrmbf{List}(X_{1})$}}}
\put(60,5){\makebox(0,0){\footnotesize{$\mathrmbf{Dom}$}}}
\put(60,92){\makebox(0,0){\scriptsize{${\scriptstyle\sum}_{f}$}}}
\put(24,38){\makebox(0,0)[r]{\scriptsize{$\mathrmbfit{inc}_{\mathcal{A}_{2}}$}}}
\put(97,38){\makebox(0,0)[l]{\scriptsize{$\mathrmbfit{inc}_{\mathcal{A}_{1}}$}}}
\put(60,54){\makebox(0,0){\shortstack{\scriptsize{$\acute{\iota}_{{\langle{f,g}\rangle}}$}\\\large{$\Longrightarrow$}}}}
\put(40,80){\vector(1,0){40}}
\put(9,68){\vector(3,-4){40}}
\put(111,68){\vector(-3,-4){40}}
\end{picture}
\end{tabular}}}
\\&\\&\\
{{\begin{tabular}{c}
\setlength{\unitlength}{0.6pt}
\begin{picture}(120,90)(0,10)
\put(0,80){\makebox(0,0){\footnotesize{$\mathrmbf{List}(X_{2})^{\mathrm{op}}$}}}
\put(124,80){\makebox(0,0){\footnotesize{$\mathrmbf{List}(X_{1})^{\mathrm{op}}$}}}
\put(60,5){\makebox(0,0){\footnotesize{$\mathrmbf{Set}$}}}
\put(60,92){\makebox(0,0){\scriptsize{$(f^{\ast})^{\mathrm{op}}$}}}
\put(24,38){\makebox(0,0)[r]{\scriptsize{$\mathrmbfit{tup}_{\mathcal{A}_{2}}$}}}
\put(97,38){\makebox(0,0)[l]{\scriptsize{$\mathrmbfit{tup}_{\mathcal{A}_{1}}$}}}
\put(60,54){\makebox(0,0){\shortstack{\scriptsize{$\acute{\tau}_{{\langle{f,g}\rangle}}$}\\\large{$\Longleftarrow$}}}}
\put(80,80){\vector(-1,0){40}}
\put(9,68){\vector(3,-4){40}}
\put(111,68){\vector(-3,-4){40}}
\end{picture}
\end{tabular}}}
&
{{\begin{tabular}{c}
\setlength{\unitlength}{0.6pt}
\begin{picture}(120,90)(0,10)
\put(0,80){\makebox(0,0){\footnotesize{$\mathrmbf{List}(X_{2})^{\mathrm{op}}$}}}
\put(124,80){\makebox(0,0){\footnotesize{$\mathrmbf{List}(X_{1})^{\mathrm{op}}$}}}
\put(60,5){\makebox(0,0){\footnotesize{$\mathrmbf{Set}$}}}
\put(60,92){\makebox(0,0){\scriptsize{$({\scriptstyle\sum}_{f})^{\mathrm{op}}$}}}
\put(24,38){\makebox(0,0)[r]{\scriptsize{$\mathrmbfit{tup}_{\mathcal{A}_{2}}$}}}
\put(97,38){\makebox(0,0)[l]{\scriptsize{$\mathrmbfit{tup}_{\mathcal{A}_{1}}$}}}
\put(60,54){\makebox(0,0){\shortstack{\scriptsize{$\grave{\tau}_{{\langle{f,g}\rangle}}$}\\\large{$\Longleftarrow$}}}}
\put(40,80){\vector(1,0){40}}
\put(9,68){\vector(3,-4){40}}
\put(111,68){\vector(-3,-4){40}}
\end{picture}
\end{tabular}}}
\end{tabular}
\end{center}
\caption{Tuple Bridge: Type Domain}
\label{fig:tup:bridge:typ:dom}
\end{figure}
%

%
\begin{center}
{{\footnotesize\setlength{\extrarowheight}{4pt}$\begin{array}{|@{\hspace{5pt}}l@{\hspace{15pt}}l@{\hspace{5pt}}|}
\multicolumn{1}{l}{\text{\bfseries levo}} & \multicolumn{1}{l}{\text{\bfseries dextro}}
\\ \hline
\acute{\tau}_{{\langle{f,g}\rangle}}:{f^{\ast}}^{\mathrm{op}}{\;\circ\;}\mathrmbfit{tup}_{\mathcal{A}_{2}}\Leftarrow\mathrmbfit{tup}_{\mathcal{A}_{1}} 
&
\grave{\tau}_{{\langle{f,g}\rangle}}:\mathrmbfit{tup}_{\mathcal{A}_{2}}\Leftarrow{\scriptstyle\sum}_{f}^{\mathrm{op}}{\;\circ\;}\mathrmbfit{tup}_{\mathcal{A}_{1}} 
\\
\acute{\tau}_{{\langle{f,g}\rangle}}=(\varepsilon_{f}^{\mathrm{op}}{\;\circ\;}\mathrmbfit{tup}_{\mathcal{A}_{1}})\bullet({f^{\ast}}^{\mathrm{op}}{\;\circ\;}\grave{\tau}_{{\langle{f,g}\rangle}})
&
\grave{\tau}_{{\langle{f,g}\rangle}}=({\scriptstyle\sum}_{f}^{\mathrm{op}}{\;\circ\;}\acute{\tau}_{{\langle{f,g}\rangle}})\bullet(\eta_{f}^{\mathrm{op}}{\;\circ\;}\mathrmbfit{tup}_{\mathcal{A}_{2}})
\rule[-7pt]{0pt}{10pt}
\\\hline
\multicolumn{1}{l}{}
\end{array}$}}
\end{center}
%

\newpage
\begin{proposition}\label{prop:typ:dom:inc:tup}
There are inclusion/tuple passages
from the context of type domains
to the lax comma context of adjointly connected presheaves:
\begin{center}
{\footnotesize\setlength{\extrarowheight}{2pt}$\begin{array}{rcl}
\mathrmbf{Cls}
&\xrightarrow{\mathrmbfit{inc}}
&\bigl(\mathrmbf{Adj}{\,\Uparrow\,}\mathrmbf{Dom}\bigr)
\\
\mathrmbf{Cls}^{\mathrm{op}}\!\!
&\xrightarrow{\mathrmbfit{tup}}
&\bigl(\mathrmbf{Adj}{\,\Uparrow\,}\mathrmbf{Set}\bigr)
\end{array}$}
\end{center}
\end{proposition}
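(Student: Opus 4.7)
The plan is to exhibit each passage by giving its object and morphism assignments and then verify the passage axioms using the natural isomorphisms of Lemma~\ref{lem:nat:iso}.

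For $\mathrmbfit{inc}:\mathrmbf{Cls}\rightarrow\bigl(\mathrmbf{Adj}{\,\Uparrow\,}\mathrmbf{Dom}\bigr)$, send a type domain $\mathcal{A}$ with sort set $X=\mathrmbfit{sort}(\mathcal{A})$ to the $\mathrmbf{Dom}$-diagram $\mathrmbf{List}(X)\xrightarrow{\mathrmbfit{inc}_{\mathcal{A}}}\mathrmbf{Dom}$ from \S\ref{sub:sub:sec:tup:bridge:typ:dom}, and send a type domain morphism $\mathcal{A}_{2}\xrightleftharpoons{{\langle{f,g}\rangle}}\mathcal{A}_{1}$ to the pair consisting of the fiber adjunction $\mathrmbfit{list}(f)={\langle{{\scriptstyle\sum}_{f}{\,\dashv\,}f^{\ast}}\rangle}$ (from \S\ref{sub:sec:fole:comps:sign}) together with either of the paired inclusion bridges $\acute{\iota}_{{\langle{f,g}\rangle}}$ or $\grave{\iota}_{{\langle{f,g}\rangle}}$. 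Lemma~\ref{lem:nat:iso} guarantees that these two bridges correspond under the unit and counit of ${\scriptstyle\sum}_{f}{\,\dashv\,}f^{\ast}$, so the pair is a well-defined single morphism in the lax comma context.

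Functoriality is then checked in two steps. For the identity $\mathcal{A}\xrightleftharpoons{{\langle{1_{X},1_{Y}}\rangle}}\mathcal{A}$, the fiber adjunction reduces to the identity adjunction on $\mathrmbf{List}(X)$ and both $\acute{\iota}$ and $\grave{\iota}$ reduce to the identity bridge on $\mathrmbfit{inc}_{\mathcal{A}}$, which is immediate from the defining pullback (levo) and composition (dextro) descriptions. For the composite ${\langle{f',g'}\rangle}{\,\circ\,}{\langle{f,g}\rangle}$, I would invoke the passage property of $\mathrmbfit{list}:\mathrmbf{Set}\rightarrow\mathrmbf{Adj}$ on the sort components and then show that the bridges compose horizontally into the inclusion bridge of the composite morphism. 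The tuple passage $\mathrmbfit{tup}:\mathrmbf{Cls}^{\mathrm{op}}\rightarrow\bigl(\mathrmbf{Adj}{\,\Uparrow\,}\mathrmbf{Set}\bigr)$ is then obtained from the inclusion passage by post-composition with the signed-domain tuple passage $\mathrmbfit{tup}:\mathrmbf{Dom}^{\mathrm{op}}\rightarrow\mathrmbf{Set}$ of Def.~\ref{def:sign:dom:tup}; object-wise this yields $\mathrmbf{List}(X)^{\mathrm{op}}\xrightarrow{\mathrmbfit{tup}_{\mathcal{A}}}\mathrmbf{Set}$, and morphism-wise it yields the opposite fiber adjunction paired with either $\acute{\tau}_{{\langle{f,g}\rangle}}$ or $\grave{\tau}_{{\langle{f,g}\rangle}}$, again identified by Lemma~\ref{lem:nat:iso}.

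The main obstacle is the composition axiom for the morphism assignment, namely showing that the horizontal composite of two successive inclusion bridges equals the inclusion bridge built from the composite type domain morphism. This reduces to routine naturality arguments combined with the triangle identities for ${\scriptstyle\sum}_{f}{\,\dashv\,}f^{\ast}$, and I would carry it out in the dextro form, where $\acute{\iota}_{{\langle{f,g}\rangle}}$ is defined by straight composition rather than pullback and so composes more transparently; the levo form then follows automatically by transporting along the natural isomorphism of Lemma~\ref{lem:nat:iso}.
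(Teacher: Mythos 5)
Your proposal is correct and follows essentially the same route the paper intends: the paper states this proposition without an explicit proof, relying on the construction of the inclusion passages $\mathrmbfit{inc}_{\mathcal{A}}$, the levo/dextro inclusion and tuple bridges of \S\ref{sub:sub:sec:tup:bridge:typ:dom}, their naturality, and the adjoint correspondence of Lemma~\ref{lem:nat:iso} to package the fiber adjunction ${\langle{{\scriptstyle\sum}_{f}{\,\dashv\,}f^{\ast}}\rangle}$ together with either bridge into a single morphism of $\bigl(\mathrmbf{Adj}{\,\Uparrow\,}\mathrmbf{Dom}\bigr)$, with the tuple passage obtained by post-composition with $\mathrmbfit{tup}$ from Def.~\ref{def:sign:dom:tup}. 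Your explicit attention to the identity and composition axioms (carried out in the dextro form and transported to the levo form via Lemma~\ref{lem:nat:iso}) supplies detail the paper omits, but it is the same argument.
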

%

\newpage
\subsubsection{Tuple Function Factorization}\label{sub:sub:sec:tup:fn:fact}

In \S~\ref{sub:sec:tup:bridge}
we composed with the signed domain tuple passage
(Def.~\ref{def:sign:dom:tup})
to define the tuple passage and bridge for both signatures and type domains.
Here,
we factor components of the signed domain tuple passage in terms of components of these defined notions.
\begin{lemma}\label{lem:tup:fn:fact}
For any signed domain morphism
${\langle{I_{2},s_{2},\mathcal{A}_{2}}\rangle}\xrightarrow{{\langle{h,f,g}\rangle}}{\langle{I_{1},s_{1},\mathcal{A}_{1}}\rangle}$,
the tuple function 
$\mathrmbfit{tup}_{\mathcal{A}_{2}}(I_{2},s_{2})\xleftarrow{\mathrmbfit{tup}(h,f,g)}\mathrmbfit{tup}_{\mathcal{A}_{1}}(I_{1},s_{1})$
has two factorizations:
%
\begin{itemize}
\item 
(Fig.~\ref{fig:tup:fn:fact} left side)
in terms of the signature tuple bridge of \S\ref{sub:sub:sec:tup:bridge:sign} (Fig.~\ref{fig:tup:bridge:sign})
(used in the table fiber passage along a signature morphism).
\newline
\item 
(Fig.~\ref{fig:tup:fn:fact} right side)
in terms of the type domain tuple bridges of \S\ref{sub:sub:sec:tup:bridge:typ:dom} (Fig.~\ref{fig:tup:bridge:typ:dom})
(used in the table fiber adjoint passages along a type domain morphism).
\end{itemize}
\begin{figure}
\begin{center}
{{\begin{tabular}{@{\hspace{-20pt}}c@{\hspace{50pt}}c}
{{\begin{tabular}{c}
\setlength{\unitlength}{0.5pt}
\begin{picture}(320,210)(-40,-20)
\put(0,160){\makebox(0,0){\footnotesize{$
\underset{{=\;\mathrmbfit{ext}_{\mathrmbf{List}(\mathcal{A}_{2})}(I_{2},s_{2})}}
{\mathrmbfit{tup}_{\mathcal{S}_{2}}(\mathcal{A}_{2})}$}}}
%
\put(-5,0){\makebox(0,0){\footnotesize{$
\underset{{=\;\mathrmbfit{ext}_{\mathrmbf{List}(f^{-1}(\mathcal{A}_{1}))}(I_{2},s_{2})}}
{\mathrmbfit{tup}_{\mathcal{S}_{2}}(f^{-1}(\mathcal{A}_{1}))}$}}}
\put(240,0){\makebox(0,0){\footnotesize{$
\underset{{=\;\mathrmbfit{ext}_{\mathrmbf{List}(\mathcal{A}_{1})}(I_{1},s_{1})}}
{\mathrmbfit{tup}_{\mathcal{S}_{1}}(\mathcal{A}_{1})}$}}}
%
\put(130,10){\makebox(0,0){\scriptsize{$\tau_{{\langle{h,f}\rangle}}(\mathcal{A}_{1})$}}}
\put(130,-10){\makebox(0,0){\scriptsize{$=\;h{\,\cdot\,}{(\mbox{-})}$}}}
\put(-10,90){\makebox(0,0)[r]{\scriptsize{$\mathrmbfit{tup}_{\mathcal{S}_{2}}(g)$}}}
\put(-10,70){\makebox(0,0)[r]{\scriptsize{$=\;{(\mbox{-})}{\,\cdot\,}g$}}}
\put(60,80){\makebox(0,0){\scriptsize{$\mathrmbfit{tup}(h,f,g)$}}}
\put(20,60){\makebox(0,0)[l]{\scriptsize{$=\;(h{\cdot}{(\mbox{-})})\cdot({(\mbox{-})}{\cdot}g)$}}}
%
\put(180,0){\vector(-1,0){100}}
\put(0,35){\vector(0,1){90}}
\put(200,25){\vector(-3,2){160}}
\end{picture}
\end{tabular}}}
&
{{\begin{tabular}{c}
\setlength{\unitlength}{0.5pt}
\begin{picture}(320,210)(-40,-20)
\put(0,160){\makebox(0,0){\footnotesize{$
\underset{{=\;\mathrmbfit{ext}_{\mathrmbf{List}(\mathcal{A}_{2})}(I_{2},s_{2})}}
{\mathrmbfit{tup}_{\mathcal{A}_{2}}(I_{2},s_{2})}$}}}
\put(240,160){\makebox(0,0){\footnotesize{$
\underset{{=\;\mathrmbfit{ext}_{\mathrmbf{List}(\mathcal{A}_{1})}(I_{1},s_{1})}}
{\mathrmbfit{tup}_{\mathcal{A}_{1}}({\scriptstyle\sum}_{f}(I_{2},s_{2})})$}}}
\put(0,0){\makebox(0,0){\footnotesize{$
\underset{{=\;\mathrmbfit{ext}_{\mathrmbf{List}(\mathcal{A}_{2})}({f}^{\ast}(I_{1},s_{1}))}}
{\mathrmbfit{tup}_{\mathcal{A}_{2}}({f}^{\ast}(I_{1},s_{1}))}$}}}
\put(240,0){\makebox(0,0){\footnotesize{$
\underset{{=\;\mathrmbfit{ext}_{\mathrmbf{List}(\mathcal{A}_{1})}(I_{1},s_{1})}}
{\mathrmbfit{tup}_{\mathcal{A}_{1}}(I_{1},s_{1})}$}}}
\put(120,170){\makebox(0,0){\scriptsize{$\grave{\tau}_{{\langle{f,g}\rangle}}(I_{2},s_{2})$}}}
\put(120,150){\makebox(0,0){\scriptsize{$=\;{(\mbox{-})}{\,\cdot\,}g$}}}
\put(130,10){\makebox(0,0){\scriptsize{$\acute{\tau}_{{\langle{f,g}\rangle}}(I_{1},s_{1})$}}}
\put(130,-10){\makebox(0,0){\scriptsize{$=\;\hat{f}{\,\cdot\,}{(\mbox{-})}{\,\cdot\,}g$}}}
\put(-10,90){\makebox(0,0)[r]{\scriptsize{$\mathrmbfit{tup}_{\mathcal{A}_{2}}(\hat{h})$}}}
\put(-10,70){\makebox(0,0)[r]{\scriptsize{$=\;\hat{h}{\,\cdot\,}{(\mbox{-})}$}}}
\put(250,90){\makebox(0,0)[l]{\scriptsize{$\mathrmbfit{tup}_{\mathcal{A}_{1}}(h)$}}}
\put(250,70){\makebox(0,0)[l]{\scriptsize{$=\;h{\,\cdot\,}{(\mbox{-})}$}}}
\put(60,80){\makebox(0,0){\scriptsize{$\mathrmbfit{tup}(h,f,g)$}}}
\put(20,60){\makebox(0,0)[l]{\scriptsize{$=\;(h{\cdot}{(\mbox{-})})\cdot({(\mbox{-})}{\cdot}g)$}}}
\put(170,160){\vector(-1,0){100}}
\put(180,0){\vector(-1,0){100}}
\put(0,35){\vector(0,1){90}}
\put(240,35){\vector(0,1){90}}
\put(200,25){\vector(-3,2){160}}
\end{picture}
\end{tabular}}}
\\ & \\
\textit{signature}
& 
\textit{type domain} 
\end{tabular}}}
\end{center}
\caption{Tuple Function Factorization}
\label{fig:tup:fn:fact}
\end{figure}
\end{lemma}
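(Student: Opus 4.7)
The plan is to reduce both claimed factorizations to the explicit formula for $\mathrmbfit{tup}(h,f,g)$ from Def.~\ref{def:sign:dom:tup}. Recall that $\mathrmbfit{tup}(h,f,g)$ sends a tuple $t \in \mathrmbfit{tup}_{\mathcal{A}_{1}}(I_{1},s_{1})$, regarded as a function $I_{1}\xrightarrow{t}Y_{1}$, to the tuple $(g(t_{h(i_{2})})\mid i_{2}\in I_{2})$ in $\mathrmbfit{tup}_{\mathcal{A}_{2}}(I_{2},s_{2})$; equivalently, as a composite of functions $I_{2}\xrightarrow{h}I_{1}\xrightarrow{t}Y_{1}\xrightarrow{g}Y_{2}$, the image is $(h{\,\cdot\,}t){\,\cdot\,}g = h{\,\cdot\,}t{\,\cdot\,}g$.

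For the signature side of Fig.~\ref{fig:tup:fn:fact}, I would follow the path $\mathrmbfit{tup}_{\mathcal{S}_{1}}(\mathcal{A}_{1})\to\mathrmbfit{tup}_{\mathcal{S}_{2}}(f^{-1}(\mathcal{A}_{1}))\to\mathrmbfit{tup}_{\mathcal{S}_{2}}(\mathcal{A}_{2})$. Applying the $\mathcal{A}_{1}^{\text{th}}$ component of the signature tuple bridge (\ref{sign:tup:bridge}), $\tau_{\langle h,f\rangle}(\mathcal{A}_{1}) = h{\,\cdot\,}(\text{-})$, sends $t$ to the tuple $h{\,\cdot\,}t$ viewed over $f^{-1}(\mathcal{A}_{1})$; this remains a well-typed tuple because the signature morphism condition $h{\,\cdot\,}s_{1}=s_{2}{\,\cdot\,}f$ guarantees $(h{\,\cdot\,}t)_{i_{2}}=t_{h(i_{2})}\models_{\mathcal{A}_{1}}s_{1}(h(i_{2}))=f(s_{2}(i_{2}))$, which is exactly the relation $\models_{f^{-1}(\mathcal{A}_{1})}$ attached to $s_{2}(i_{2})$. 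Applying then the functorial action $\mathrmbfit{tup}_{\mathcal{S}_{2}}(g)=(\text{-}){\,\cdot\,}g$ yields $(h{\,\cdot\,}t){\,\cdot\,}g = h{\,\cdot\,}t{\,\cdot\,}g$, matching $\mathrmbfit{tup}(h,f,g)(t)$.

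For the type domain side of Fig.~\ref{fig:tup:fn:fact}, there are two routes to check. The upper route $\mathrmbfit{tup}_{\mathcal{A}_{1}}(I_{1},s_{1})\xrightarrow{\mathrmbfit{tup}_{\mathcal{A}_{1}}(h)}\mathrmbfit{tup}_{\mathcal{A}_{1}}({\scriptstyle\sum}_{f}(I_{2},s_{2}))\xrightarrow{\grave{\tau}_{\langle f,g\rangle}(I_{2},s_{2})}\mathrmbfit{tup}_{\mathcal{A}_{2}}(I_{2},s_{2})$ computes $t\mapsto h{\,\cdot\,}t\mapsto(h{\,\cdot\,}t){\,\cdot\,}g$, which is $\mathrmbfit{tup}(h,f,g)(t)$ immediately. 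The lower route $\mathrmbfit{tup}_{\mathcal{A}_{1}}(I_{1},s_{1})\xrightarrow{\acute{\tau}_{\langle f,g\rangle}(I_{1},s_{1})}\mathrmbfit{tup}_{\mathcal{A}_{2}}(f^{\ast}(I_{1},s_{1}))\xrightarrow{\mathrmbfit{tup}_{\mathcal{A}_{2}}(\hat{h})}\mathrmbfit{tup}_{\mathcal{A}_{2}}(I_{2},s_{2})$ computes $t\mapsto\hat{f}{\,\cdot\,}t{\,\cdot\,}g\mapsto\hat{h}{\,\cdot\,}\hat{f}{\,\cdot\,}t{\,\cdot\,}g$. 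Agreement of the two routes requires the adjoint-transpose identity $h = \hat{h}{\,\cdot\,}\hat{f}$, i.e.\ the factorization of the $X_{1}$-signature morphism $h$ as the fiber lift $\hat{h}$ followed by the counit component $\hat{f}$ of $\mathrmbf{List}(X_{2}){\,\xrightarrow{\langle{\scriptstyle\sum}_{f}\,\dashv\,f^{\ast}\rangle}\,}\mathrmbf{List}(X_{1})$; this is precisely the adjunction reformulation of signature morphisms recorded in \S\ref{sub:sec:fole:comps:sign} and depicted in Fig.~\ref{fig:sign:mor}. Equivalence of the two type domain routes may alternatively be obtained from the natural isomorphism $\acute{\tau}_{\langle f,g\rangle}\cong\grave{\tau}_{\langle f,g\rangle}$ of Lemma~\ref{lem:nat:iso}.

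The argument is essentially a diagram chase, so no step is genuinely hard; the main obstacle is only bookkeeping, namely keeping straight which component of which bridge pre- or post-composes with which function, and invoking the correct adjunction identity $h=\hat{h}{\,\cdot\,}\hat{f}$ at exactly the place where the two type domain routes are to be reconciled. Once the formulas $\tau_{\langle h,f\rangle}(\mathcal{A}_{1})=h{\,\cdot\,}(\text{-})$, $\mathrmbfit{tup}_{\mathcal{S}_{2}}(g)=(\text{-}){\,\cdot\,}g$, $\acute{\tau}_{\langle f,g\rangle}(I_{1},s_{1})=\hat{f}{\,\cdot\,}(\text{-}){\,\cdot\,}g$, and $\grave{\tau}_{\langle f,g\rangle}(I_{2},s_{2})=(\text{-}){\,\cdot\,}g$ from \S\ref{sub:sub:sec:tup:bridge:sign}\,--\,\S\ref{sub:sub:sec:tup:bridge:typ:dom} are pasted into the two triangles, both commute by associativity of function composition.
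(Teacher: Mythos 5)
Your proof is correct and follows essentially the same route as the paper's: unwind $\mathrmbfit{tup}(h,f,g)$ to $t\mapsto h{\,\cdot\,}t{\,\cdot\,}g$, paste in the explicit component formulas for the bridges, and reconcile the two type-domain routes via the pullback factorization $h=\hat{h}{\,\cdot\,}\hat{f}$ with $\hat{f}$ the counit component of the fiber adjunction. If anything, your write-up is more complete, since the paper only works out one of the composites in detail and leaves the remaining verifications implicit.
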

\begin{proof}
We prove the type domain case (Fig.~\ref{fig:tup:fn:fact} left side).
\mbox{ }
\vspace{-10pt}
\begin{center}
{\footnotesize{
\begin{tabular}[t]{p{220pt}c}
For any target signature ${\langle{I_{1},s_{1}}\rangle} \in (\mathrmbf{Set}{\downarrow}X_{1})$,
if ${\langle{\hat{I}_{2},\hat{s}_{2}}\rangle} = f^{\ast}(I_{1},s_{1})$ is its substitution signature (defined by pullback),
the tuple function
$\acute{\tau}_{{\langle{f,g}\rangle}}(I_{1},s_{1})  : 
\mathrmbfit{tup}_{\mathcal{A}_{1}}(I_{1},s_{1}) \rightarrow \mathrmbfit{tup}_{\mathcal{A}_{2}}(f^{\ast}(I_{1},s_{1}))$
(Fig.~\ref{fig:tup:bridge:typ:dom})
maps a target tuple 
$t_{1} \in \mathrmbfit{tup}(\mathcal{S}_{1}) = \mathrmbfit{tup}_{\mathcal{A}_{1}}(I_{1},s_{1})$
to the intermediate tuple 
$\hat{t}_{1} = \hat{f} \cdot t_{1} \cdot g \in \mathrmbfit{tup}_{\mathcal{A}_{2}}(\hat{I}_{1},\hat{s}_{1})$,
where
$\hat{f}=\varepsilon^{f}_{{\langle{I_{1},s_{1}}\rangle}}$
is the 
${\langle{I_{1},s_{1}}\rangle}^{\text{th}}$-component of 
the counit $\varepsilon^{f}$
&
\setlength{\unitlength}{0.40pt}
\begin{picture}(120,0)(10,170)
%
\put(0,180){\makebox(0,0){\footnotesize{$I_{2}$}}}
\put(180,180){\makebox(0,0){\footnotesize{$I_{1}$}}}
\put(0,100){\makebox(0,0){\footnotesize{$X_{2}$}}}
\put(180,100){\makebox(0,0){\footnotesize{$X_{1}$}}}
\put(0,0){\makebox(0,0){\footnotesize{$Y_{2}$}}}
\put(180,0){\makebox(0,0){\footnotesize{$Y_{1}$}}}
\put(45,145){\makebox(0,0){\footnotesize{$\hat{I}_{2}$}}}
\qbezier[40](45,125)(40,70)(15,10)\put(15,10){\vector(-1,-3){0}}
\put(40,65){\makebox(0,0)[l]{\scriptsize{$\hat{t}$}}}
\put(-6,140){\makebox(0,0)[r]{\scriptsize{$s_{2}$}}}
\put(22,116){\makebox(0,0)[l]{\scriptsize{$\hat{s}_{2}$}}}
\put(188,140){\makebox(0,0)[l]{\scriptsize{$s_{1}$}}}
\put(-8,50){\makebox(0,0)[r]{\scriptsize{$\models_{2}$}}}
\put(188,50){\makebox(0,0)[l]{\scriptsize{$\models_{1}$}}}
\put(90,194){\makebox(0,0){\scriptsize{$h$}}}
\put(78,162){\makebox(0,0)[l]{\scriptsize{$\hat{f}=\varepsilon^{f}_{{\langle{I_{1},s_{1}}\rangle}}$}}}
\put(25,168){\makebox(0,0)[l]{\scriptsize{$\hat{h}$}}}
\put(90,114){\makebox(0,0){\scriptsize{$f$}}}
\put(90,14){\makebox(0,0){\scriptsize{$g$}}}
\put(-52,90){\makebox(0,0)[r]{\scriptsize{$t_{2}$}}}
\put(235,90){\makebox(0,0)[l]{\scriptsize{$t_{1}$}}}
\put(20,180){\vector(1,0){140}}
\put(20,100){\vector(1,0){140}}
\put(160,0){\vector(-1,0){140}}
\put(0,170){\vector(0,-1){60}}
\put(0,85){\line(0,-1){70}}
\put(180,170){\vector(0,-1){60}}
\put(180,85){\line(0,-1){70}}
\put(210,90){\oval(30,180)[r]}\put(204,0){\vector(-1,0){0}}
\put(-30,90){\oval(30,180)[l]}\put(-24,0){\vector(1,0){0}}
\put(57,147){\vector(4,1){100}}
\qbezier(10,110)(20,120)(30,130)\put(10,110){\vector(-1,-1){0}}
\qbezier(10,170)(20,160)(30,150)\put(30,150){\vector(1,-1){0}}
\qbezier(55,130)(60,130)(65,130)
\qbezier(65,140)(65,135)(65,130)
\end{picture}
\\
\multicolumn{2}{p{350pt}}{
of the signature fiber adjunction
$\mathrmbf{List}(X_{2})\xrightarrow{{\langle{{\scriptscriptstyle\sum}_{f}{\;\dashv\;}f^{\ast}}\rangle}}\mathrmbf{List}(X_{1})$.
Signature preservation, 
$s_{2} \cdot f = h \cdot s_{1}$
means that
$X_{2} \xleftarrow{s_{2}} I_{2} \xrightarrow{h} I_{1}$
is a span of the opspan 
$X_{2} \xrightarrow{f} X_{1} \xleftarrow{s_{1}} I_{1}$.
Let $I_{2} \xrightarrow{\hat{h}} \hat{I}_{2}$ be the mediating function,
so that
${\langle{\hat{h},1_{X_{2}}}\rangle} : {\langle{I_{2},s_{2}}\rangle} \rightarrow {\langle{\hat{I}_{2},\hat{s}_{2}}\rangle}$ is a signature morphism
and $\hat{h} \cdot \hat{f} = h$.
Then the tuple function
$\mathrmbfit{tup}_{\mathcal{A}_{2}}(\hat{h}) : 
\mathrmbfit{tup}_{\mathcal{A}_{2}}(f^{\ast}(I_{1},s_{1})) \rightarrow \mathrmbfit{tup}_{\mathcal{A}_{2}}(I_{2},s_{2})$
maps the intermediate tuple
$\hat{t}_{1} \in \mathrmbfit{tup}_{\mathcal{A}_{2}}(\hat{I}_{1},\hat{s}_{1})$
to the source tuple
$t_{2} = \hat{h} \cdot \hat{t} \in \mathrmbfit{tup}_{\mathcal{A}_{2}}(I_{2},s_{2})$.
Since pullbacks compose,
this is functorial.
\mbox{}\hfill\rule{5pt}{5pt}
%
%
}
\end{tabular}}}
\end{center}
\end{proof}
\begin{table}
\begin{center}
\begin{tabular}{c}
{\scriptsize\setlength{\extrarowheight}{2pt}$\begin{array}[b]{c}
\text{signed domain morphism} 
\\
{\langle{I_{2},X_{2},\mathcal{A}_{2}}\rangle}\xrightarrow{{\langle{h,f,g}\rangle}}{\langle{I_{1},X_{1},\mathcal{A}_{1}}\rangle}
\\
\mathrmbfit{tup}_{\mathcal{A}_{2}}(I_{2},s_{2})\xleftarrow[h{\,\cdot\,}{(\mbox{-})}{\,\cdot\,}g]{\mathrmbfit{tup}(h,f,g)}\mathrmbfit{tup}_{\mathcal{A}_{1}}(I_{1},s_{1})
\rule[-10pt]{0pt}{10pt}
\end{array}$}
\\\hline
{\scriptsize\setlength{\extrarowheight}{2pt}$\begin{array}[b]{c}
\text{signature morphism} 
\\
\mathcal{S}_{2}\xrightarrow{{\langle{h,f}\rangle}}\mathcal{S}_{2}
\\
\text{--- tuple bridge ---} 
\\
\tau_{{\langle{h,f}\rangle}}:
(f^{-1})^{\mathrm{op}}{\!\circ\;}\mathrmbfit{tup}_{\mathcal{S}_{2}}\Leftarrow\mathrmbfit{tup}_{\mathcal{S}_{1}}
\rule[-5pt]{0pt}{10pt}
\\
\mathrmbfit{tup}_{\mathcal{S}_{2}}(f^{-1}(\mathcal{A}_{1}))
\xleftarrow[h{\,\cdot\,}{(\mbox{-})}]{\tau_{{\langle{h,f}\rangle}}(\mathcal{A}_{1})} 
\mathrmbfit{tup}_{\mathcal{S}_{1}}(\mathcal{A}_{1})
\rule[-10pt]{0pt}{10pt}
\end{array}$}
\\\hline
{\scriptsize\setlength{\extrarowheight}{2pt}$\begin{array}[b]{c}
\text{type domain morphism} 
\\
\mathcal{A}_{2}\xrightleftharpoons{{\langle{f,g}\rangle}}\mathcal{A}_{1}
\\
\text{--- levo tuple bridge ---} 
\\
\acute{\tau}_{{\langle{f,g}\rangle}}:(f^{\ast})^{\mathrm{op}}{\;\circ\;}\mathrmbfit{tup}_{\mathcal{A}_{2}}\Leftarrow\mathrmbfit{tup}_{\mathcal{A}_{1}} 
\rule[-5pt]{0pt}{10pt}
\\
\mathrmbfit{tup}_{\mathcal{A}_{2}}(f^{\ast}(I_{1},s_{1}))
\xleftarrow[\hat{f}{\,\cdot\,}{(\mbox{-})}{\,\cdot\,}g]{\acute{\tau}_{{\langle{f,g}\rangle}}(I_{1},s_{1})} 
\mathrmbfit{tup}_{\mathcal{A}_{1}}(I_{1},s_{1})
\rule[-10pt]{0pt}{10pt}
\end{array}$}
\\
{\scriptsize\setlength{\extrarowheight}{2pt}$\begin{array}[b]{c}
\text{--- dextro tuple bridge ---} 
\\
\grave{\tau}_{{\langle{f,g}\rangle}}:\mathrmbfit{tup}_{\mathcal{A}_{2}}\Leftarrow{\scriptstyle\sum}_{f}^{\mathrm{op}}{\;\circ\;}\mathrmbfit{tup}_{\mathcal{A}_{1}} 
\rule[-5pt]{0pt}{10pt}
\\
\mathrmbfit{tup}_{\mathcal{A}_{2}}(I_{2},s_{2})
\xleftarrow[{(\mbox{-})}{\,\cdot\,}g]{\grave{\tau}_{{\langle{f,g}\rangle}}(I_{2},s_{2})} 
\mathrmbfit{tup}_{\mathcal{A}_{1}}({\scriptstyle\sum}_{f}(I_{2},s_{2}))
\end{array}$}
\\
\end{tabular}
\end{center}
\caption{Tuple Functions}
\label{tbl:tup:fns}
\end{table}
%

\newpage
\section{Table Hierarchy}\label{sec:tbl:hier}

The relational table is the basic concept in the relational model for databases.

\subsection{\texttt{FOLE} Tables}\label{sub:sec:tbl}

\paragraph{Tables.}

A table (database relation) 
$\mathcal{T} = {\langle{\mathcal{D},K,t}\rangle}$
consists of 
a signed domain $\mathcal{D}$,
a set $K$ of (primary) keys and
a tuple function $K\xrightarrow{t}\mathrmbfit{tup}(\mathcal{D})$
mapping keys to $\mathcal{D}$-tuples.
%
\footnote{
\texttt{FOLE} tables correspond to improper relations (Codd~\cite{codd:90}),
since they strictly violate the property the ``all rows are distinct from one another in content''.
Proper relations correspond to \texttt{FOLE} relations (\S~\ref{append:sub:sec:rel:tbl}).
One method for converting to the proper relations of Codd,
and thus getting an injective tuple function, 
is to incorporate keys into their corresponding tuple by defining a key datatype.
This was done in Kent~\cite{kent:fole:era:supstruc}.}
%
Equivalently,
it is an object in the comma context 
$\bigl(\mathrmbf{Set}{\,\downarrow\,}\mathrmbfit{tup}\bigr)$
defined by the tuple passage
$\mathrmbf{Dom}^{\mathrm{op}}\xrightarrow{\mathrmbfit{tup}}\mathrmbf{Set}$
(Def.~\ref{def:sign:dom:tup} of \S~\ref{sub:sec:fole:comps:sign:dom}).
A precise description of the 
\texttt{FOLE} Table is given in Fig.~\ref{fig:fole:tbl}.
\begin{figure}
\begin{center}
{{\begin{tabular}{c@{\hspace{25pt}}c}
{\setlength{\extrarowheight}{2pt}{\scriptsize{$\begin{array}[c]{c@{\hspace{20pt}}c}
\mathcal{T}={\langle{K,t,\mathcal{D}}\rangle}
\\
\mathcal{D}={\langle{\mathcal{S},\mathcal{A}}\rangle}
\\
{K}\xrightarrow{t}\mathrmbfit{tup}_{\mathcal{A}}(I,s)
\\
t(k) = {\langle{I,t_{k}}\rangle},\;I\xrightarrow{t_{k}}Y
\\
t_{k,i}\in\mathcal{A}_{s_{i}}
\end{array}$}}}
&
{{\begin{tabular}[c]{c}
\setlength{\unitlength}{0.5pt}
\begin{picture}(180,135)(-60,-3)
\put(-10,90){\makebox(0,0){\scriptsize{$\mathcal{T}$}}}
\put(60,120){\makebox(0,0){\shortstack{\scriptsize{$\mathcal{S}={\langle{I,s,X}\rangle}$}\\$\overbrace{\rule{50pt}{0pt}}$}}}
\put(-55,40){\makebox(0,0){\scriptsize{${K}\left\{\rule{0pt}{20pt}\right.$}}}
\put(25,90){\makebox(0,0){\scriptsize{$\cdots$}}}
\put(60,90){\makebox(0,0){\scriptsize{$i\!:\!s_{i}$}}}
\put(97,90){\makebox(0,0){\scriptsize{$\cdots$}}}
\put(-10,40){\makebox(0,0){\scriptsize{$k$}}}
\put(25,40){\makebox(0,0){\scriptsize{$\cdots$}}}
\put(60,40){\makebox(0,0){\scriptsize{$t_{k,i}$}}}
\put(97,40){\makebox(0,0){\scriptsize{$\cdots$}}}
\put(-20,78){\line(1,0){140}}
\put(2,0){\line(0,1){100}}
\end{picture}
\end{tabular}}}
\end{tabular}}}
\end{center}
\caption{\texttt{FOLE} Table}
\label{fig:fole:tbl}
\end{figure}
\newline
Hence,
a table $\mathcal{T} = {\langle{\mathcal{S},\mathcal{A},K,t}\rangle}$
consists of 
a signature $\mathcal{S} = {\langle{I,s,X}\rangle}$ and
type domain $\mathcal{A} = {\langle{X,Y,\models_{\mathcal{A}}}\rangle}$
that share a common sort set $X$,
a key set $K$, and
a tuple function $K\xrightarrow{t}\mathrmbfit{tup}(\mathcal{D})=\mathrmbfit{tup}_{\mathcal{A}}(I,s)$.
Three alternate expressions are as follows.
\begin{description}
\item[signed domain:] 
Given a signed domain $\mathcal{D} = {\langle{\mathcal{S},\mathcal{A}}\rangle}$,
a table 
$\mathcal{T} = {\langle{K,t}\rangle}$
consists of 
a set $K$ of keys and
a tuple function $K\xrightarrow{t}\mathrmbfit{tup}(\mathcal{D})$.
Hence,
a table is an object in the fiber context $\mathrmbf{Tbl}(\mathcal{D})$.
(See \S\ref{sub:sec:tbl:sign:dom}.)
\item[signature:] 
Given a signature $\mathcal{S} = {\langle{I,s,X}\rangle}$,
a table 
$\mathcal{T} = {\langle{\mathcal{A},K,t}\rangle}$
consists of 
an $X$-sorted type domain $\mathcal{A} = {\langle{X,Y,\models_{\mathcal{A}}}\rangle}$, 
a set $K$ of keys, and
a tuple function $K\xrightarrow{t}\mathrmbfit{tup}_{\mathcal{A}}(I,s)$.
Hence,
a table is an object in the fiber context $\mathrmbf{Tbl}(\mathcal{S})$.
(See \S\ref{sub:sub:sec:tbl:sign:low}.)
\item[type domain:] 
Given a type domain $\mathcal{A} = {\langle{X,Y,\models_{\mathcal{A}}}\rangle}$,
a table 
$\mathcal{T} = {\langle{\mathcal{S},K,t}\rangle}$
consists of 
an $X$-sorted signature $\mathcal{S} = {\langle{I,s,X}\rangle}$,
a set $K$ of keys, and
a tuple function $K\xrightarrow{t}\mathrmbfit{tup}_{\mathcal{A}}(I,s)$.
Hence,
a table is an object in the fiber context $\mathrmbf{Tbl}(\mathcal{A})$.
(See \S\ref{sub:sub:sec:tbl:typ:dom:low}.)
\end{description}
%

%
\begin{figure}
\begin{center}
\begin{tabular}{@{\hspace{25pt}}c@{\hspace{45pt}}l}
\begin{tabular}{c}
\setlength{\unitlength}{0.6pt}
\begin{picture}(120,80)(0,-5)
\put(0,80){\makebox(0,0){\footnotesize{$K_{2}$}}}
\put(120,80){\makebox(0,0){\footnotesize{$K_{1}$}}}
\put(-27,0){\makebox(0,0){\footnotesize{$\mathrmbfit{tup}_{\mathcal{A}_{2}}(I_{2},s_{2})$}}}
\put(148,0){\makebox(0,0){\footnotesize{$\mathrmbfit{tup}_{\mathcal{A}_{1}}(I_{1},s_{1})$}}}
\put(63,95){\makebox(0,0){\scriptsize{$k$}}}
\put(63,12){\makebox(0,0){\scriptsize{$\mathrmbfit{tup}(h,f,g)$}}}
\put(-6,40){\makebox(0,0)[r]{\scriptsize{$t_{2}$}}}
\put(128,40){\makebox(0,0)[l]{\scriptsize{$t_{1}$}}}
\put(0,65){\vector(0,-1){50}}
\put(120,65){\vector(0,-1){50}}
\put(100,80){\vector(-1,0){80}}
\put(100,0){\vector(-1,0){80}}
\end{picture}
\end{tabular}
&
{\scriptsize
\setlength{\extrarowheight}{2pt}
\begin{tabular}{@{\hspace{-5pt}}r@{\hspace{5pt}}l}
defining &
$g(y_{1}) \models_{\mathcal{A}_{2}} x_{2} \;\;\text{\underline{iff}}\;\; y_{1} \models_{\mathcal{A}_{1}} f(x_{2})$ \\ 
conditions & 
$s_{2}{\,\cdot\,}f = h \cdot s_{1}$ \\ 
& $k \cdot t_{2} = t_{1} \cdot (h{\cdot}{(\mbox{-})})\cdot({(\mbox{-})}{\cdot}g)$ \\ 
imply &
${t_{2}}_{k_{2},i_{2}} \models_{\mathcal{A}_{2}} s_{2}(i_{2})
\;\;\text{\underline{iff}}\;\;
{t_{1}}_{k_{1},i_{1}} \models_{\mathcal{A}_{1}} s_{1}(i_{1})$ \\ 
since
& ${t_{2}}_{k_{2}} = h{\,\cdot\,}{t_{1}}_{k_{1}}{\,\cdot\,}g$,
\\ &
  ${t_{2}}_{k_{2},i_{2}} = g({t_{1}}_{k_{1},i_{1}})$,
  $s_{1}(i_{1}) = f(s_{2}(i_{2}))$ \\
where 
& $k_{1} \in K_{1}, i_{2} \in I_{2}, k_{2} = k(k_{1}) \in K_{2}, i_{1} = h(i_{2}) \in I_{1}$
\end{tabular}}
\\ & \\
\begin{tabular}{c}
\setlength{\unitlength}{0.6pt}
\begin{picture}(160,80)(-22,-45)
\put(-27,0){\makebox(0,0){\footnotesize{$\mathrmbfit{tup}_{\mathcal{A}_{2}}(I_{2},s_{2})$}}}
\put(148,0){\makebox(0,0){\footnotesize{$\mathrmbfit{tup}_{\mathcal{A}_{1}}(I_{1},s_{1})$}}}
\put(60,40){\makebox(0,0){\footnotesize{$\mathrmbfit{tup}_{\mathcal{A}_{2}}(f^{\ast}(I_{1},s_{1}))$}}}
\put(60,-40){\makebox(0,0){\footnotesize{$\mathrmbfit{tup}_{\mathcal{A}_{1}}(
{\scriptstyle\sum}_{f}
(I_{2},s_{2}))$}}}
\put(63,10){\makebox(0,0){\scriptsize{$\mathrmbfit{tup}(h,f,g)$}}}
\put(105,-22){\makebox(0,0)[l]{\scriptsize{$\mathrmbfit{tup}_{\mathcal{A}_{1}}(h)$}}} 
\put(15,-22){\makebox(0,0)[r]{\scriptsize{${(\mbox{-})} \cdot g = \grave{\tau}_{{\langle{f,g}\rangle}}(I_{2},s_{2})$}}}
\put(105,22){\makebox(0,0)[l]{\scriptsize{$\acute{\tau}_{{\langle{f,g}\rangle}}(I_{1},s_{1}) = \hat{f} \cdot {(\mbox{-})} \cdot g$}}} 
\put(15,22){\makebox(0,0)[r]{\scriptsize{$\mathrmbfit{tup}_{\mathcal{A}_{2}}(\widehat{h})$}}}
\put(100,0){\vector(-1,0){80}}
\put(42,-30){\vector(-3,2){30}}
\put(42,30){\vector(-3,-2){30}}
\put(108,-10){\vector(-3,-2){30}}
\put(108,10){\vector(-3,2){30}}
\end{picture}
\end{tabular}
& 
{\footnotesize{\begin{tabular}{@{\hspace{25pt}}l@{\hspace{15pt}}c}
$\left.\rule[0pt]{0pt}{32pt}\right\}$
& 
\setlength{\extrarowheight}{2pt}\begin{tabular}{c}
``$\acute{\mathrmbfit{tbl}}_{{\langle{f,g}\rangle}}(\mathcal{T}_{1}) \leq_{\mathcal{A}_{2}} \mathcal{T}_{2}$
\\ \underline{iff} \\
$\mathcal{T}_{1} \leq_{\mathcal{A}_{1}} \grave{\mathrmbfit{tbl}}_{{\langle{f,g}\rangle}}(\mathcal{T}_{2})$''
\end{tabular}
\end{tabular}}\normalsize}
\\ & \\ & \\
\begin{tabular}{c}
\setlength{\unitlength}{0.347pt}
\begin{picture}(180,180)(0,0)
%
\put(0,180){\makebox(0,0){\footnotesize{$I_{2}$}}}
\put(180,180){\makebox(0,0){\footnotesize{$I_{1}$}}}
\put(0,100){\makebox(0,0){\footnotesize{$X_{2}$}}}
\put(180,100){\makebox(0,0){\footnotesize{$X_{1}$}}}
\put(0,0){\makebox(0,0){\footnotesize{$Y_{2}$}}}
\put(180,0){\makebox(0,0){\footnotesize{$Y_{1}$}}}
\put(45,145){\makebox(0,0){\footnotesize{$\hat{I}_{1}$}}}
\qbezier[40](45,125)(40,70)(15,10)\put(15,10){\vector(-1,-3){0}}
\put(40,65){\makebox(0,0)[l]{\scriptsize{${\widehat{(\mbox{-})}}$}}}
\put(-6,140){\makebox(0,0)[r]{\scriptsize{$s_{2}$}}}
\put(22,116){\makebox(0,0)[l]{\scriptsize{$\hat{s}_{1}$}}}
\put(188,140){\makebox(0,0)[l]{\scriptsize{$s_{1}$}}}
\put(-8,50){\makebox(0,0)[r]{\scriptsize{$\models_{2}$}}}
\put(188,50){\makebox(0,0)[l]{\scriptsize{$\models_{1}$}}}
\put(90,194){\makebox(0,0){\scriptsize{$h = \widehat{h}{\cdot}\varepsilon_{f}$}}}
\put(100,165){\makebox(0,0){\scriptsize{$\varepsilon_{f}=\hat{f}$}}}
\put(25,168){\makebox(0,0)[l]{\scriptsize{$\widehat{h}$}}}
\put(90,114){\makebox(0,0){\scriptsize{$f$}}}
\put(90,14){\makebox(0,0){\scriptsize{$g$}}}
\put(-48,90){\makebox(0,0)[r]{\scriptsize{${t_{2}(k_{2})}$}}}
\put(232,90){\makebox(0,0)[l]{\scriptsize{${t_{1}(k_{1})}$}}}
\put(20,180){\vector(1,0){140}}
\put(20,100){\vector(1,0){140}}
\put(160,0){\vector(-1,0){140}}
\put(0,170){\vector(0,-1){60}}
\put(0,90){\vector(0,-1){80}}
\put(180,170){\vector(0,-1){60}}
\put(180,90){\vector(0,-1){80}}
\put(210,90){\oval(30,180)[r]}\put(204,0){\vector(-1,0){0}}
\put(-30,90){\oval(30,180)[l]}\put(-24,0){\vector(1,0){0}}
\put(57,147){\vector(4,1){100}}
\qbezier(10,110)(20,120)(30,130)\put(10,110){\vector(-1,-1){0}}
\qbezier(10,170)(20,160)(30,150)\put(30,150){\vector(1,-1){0}}
\qbezier(55,130)(60,130)(65,130)
\qbezier(65,140)(65,135)(65,130)
\end{picture}
\end{tabular}
&
\begin{tabular}{c}
\setlength{\unitlength}{0.45pt}
\begin{picture}(125,120)(-60,0)
\put(0,0){\begin{picture}(0,0)(0,0)
\put(-10,90){\makebox(0,0){\scriptsize{$\mathcal{T}_{2}$}}}
\put(60,120){\makebox(0,0){\shortstack{\scriptsize{$I_{2}$}\\$\overbrace{\rule{50pt}{0pt}}$}}}
\put(-55,40){\makebox(0,0){\scriptsize{$K_{2}\left\{\rule{0pt}{20pt}\right.$}}}
\put(43,90){\makebox(0,0)[l]{\scriptsize{$i_{2}{:}{s_{2}}_{i_{2}}$}}}
\put(-10,40){\makebox(0,0){\scriptsize{$k_{2}$}}}
\put(72,40){\makebox(0,0){\scriptsize{${t_{2}(k_{2})}_{i_{2}}$}}}
\put(-20,78){\line(1,0){140}}
\put(2,0){\line(0,1){100}}
\put(185,170){\makebox(0,0){\scriptsize{$h$}}}
\qbezier(80,140)(185,180)(290,140)\put(290,140){\vector(4,-1){0}}
\end{picture}}
\put(245,0){\begin{picture}(0,0)(0,0)
\put(-10,90){\makebox(0,0){\scriptsize{$\mathcal{T}_{1}$}}}
\put(60,120){\makebox(0,0){\shortstack{\scriptsize{$I_{1}$}\\$\overbrace{\rule{50pt}{0pt}}$}}}
\put(-55,40){\makebox(0,0){\scriptsize{$K_{1}\left\{\rule{0pt}{20pt}\right.$}}}
\put(43,90){\makebox(0,0)[l]{\scriptsize{$i_{1}{:}{s_{1}}_{i_{1}}$}}}
\put(-10,40){\makebox(0,0){\scriptsize{$k_{1}$}}}
\put(72,40){\makebox(0,0){\scriptsize{${t_{1}(k_{1})}_{i_{1}}$}}}
\put(-20,78){\line(1,0){140}}
\put(2,0){\line(0,1){100}}
\put(-225,-20){\makebox(0,0){\scriptsize{$k$}}}
\qbezier(-90,25)(-300,-40)(-310,20)\put(-310,20){\vector(-1,3){0}}
\end{picture}}
\end{picture}
\end{tabular}
\\ & \\ & \\
\multicolumn{2}{c}{{\scriptsize{$\begin{array}{l@{\hspace{5pt}=\hspace{5pt}}l}
k{\,\cdot\,}t_{2} 
& t_{1}{\,\cdot\,}\mathrmbfit{tup}(h,f,g)
\\
& t_{1}{\,\cdot\,}\mathrmbfit{tup}_{\mathcal{A}_{1}}(h){\,\cdot\,}\grave{\tau}_{{\langle{f,g}\rangle}}(I_{2},s_{2})
\\
& t_{1}{\,\cdot\,}\acute{\tau}_{{\langle{f,g}\rangle}}(I_{1},s_{1}){\,\cdot\,}\mathrmbfit{tup}_{\mathcal{A}_{2}}(\widehat{h})
\end{array}$}}}
\\ & \\
%
\multicolumn{2}{c}{{\scriptsize{$\begin{array}{l}
{\langle{\hat{I}_{1},\hat{s}_{1},K_{1},t_{1}{\,\cdot\,}\acute{\tau}_{{\langle{f,g}\rangle}}(I_{1},s_{1})}\rangle}
=
{\langle{f^{\ast}(I_{1},s_{1}),
{\scriptstyle\sum}
_{\acute{\tau}_{{\langle{f,g}\rangle}}(I_{1},s_{1})}(K_{1},t_{1})}\rangle}
\stackrel{\acute{\mathrmbfit{tbl}}_{{\langle{f,g}\rangle}}}{\mapsfrom}
{\langle{I_{1},s_{1},K_{1},t_{1}}\rangle}
\\
{\langle{I_{2},s_{2},K_{2},t_{2}}\rangle}
\stackrel{\grave{\mathrmbfit{tbl}}_{{\langle{f,g}\rangle}}}{\mapsto}
{\langle{{\scriptstyle\sum}_{f}(I_{2},s_{2}),(\grave{\tau}_{{\langle{f,g}\rangle}}(I_{2},s_{2}))^{\ast}(K_{2},t_{2})}\rangle}
\end{array}$}}}
\\ & \\
\multicolumn{2}{c}{{\scriptsize\begin{tabular}{p{268pt}}
This four-part figure illustrates the defining conditions on table morphisms.
It has been annotated to help guide the understanding.
The condition is symbolically stated in terms of set functions in the line of text just above.
The top left diagram illustrates the condition,
and the bottom left diagram expands on this.
The top right diagram text is more detailed in terms of 
a source row (tuple) $k_{1} \in K_{1}$ and a target column (attribute) $i_{2} \in I_{2}$.
Here we see appearance of the infomorphism condition 
\[
g({t_{1}(k_{1})}_{i_{1}}\!) \models_{\mathcal{A}_{2}} {s_{2}}_{i_{2}} 
\;\;\text{\underline{iff}}\;\;
{t_{1}(k_{1})}_{i_{1}} \models_{\mathcal{A}_{1}} f({s_{2}}_{i_{2}}).
\]
Finally,
the bottom right figure illustrates the meaning of the morphism's defining condition 
with respect to source/target tables
$\mathcal{T}_{1}$ and $\mathcal{T}_{2}$.
\end{tabular}}}
\end{tabular}
\end{center}
\caption{\texttt{FOLE} Table Morphism}
\label{fig:tbl:mor}
\end{figure}
%

\paragraph{Table Morphisms.}

A table morphism (morphism of database relations)
\[\mbox{\footnotesize{$
\mathcal{T}_{2} = {\langle{{\langle{I_{2},s_{2},\mathcal{A}_{2}}\rangle},K_{2},t_{2}}\rangle}
\xleftarrow{{\langle{{\langle{h,f,g}\rangle},k}\rangle}} 
{\langle{{\langle{I_{1},s_{1},\mathcal{A}_{1}}\rangle},K_{1},t_{1}}\rangle} = \mathcal{T}_{1}
$}\normalsize}\]
consists of 
a signed domain morphism
${\langle{I_{2},s_{2},\mathcal{A}_{2}}\rangle}\xrightarrow{{\langle{h,f,g}\rangle}}{\langle{I_{1},s_{1},\mathcal{A}_{1}}\rangle}$
and
a key function $K_{2}\xleftarrow{k}K_{1}$,
which satisfy the naturality condition
$k{\;\cdot\;}t_{2} = t_{1}{\;\cdot\;}\mathrmbfit{tup}(h,f,g)$.
\footnote{Since
the table tuple function embodies the entity/domain integrity constraints, 
this condition on morphisms asserts the preservation of data integrity.}
%
Hence,
a table morphism
$\mathcal{T}_{2} = {\langle{\mathcal{S}_{2},\mathcal{A}_{2},K_{2},t_{2}}\rangle}
\xleftarrow{{\langle{h,f,g,k}\rangle}} 
{\langle{\mathcal{S}_{1},\mathcal{A}_{1},K_{1},t_{1}}\rangle} = \mathcal{T}_{1}$
consists of 
a signature morphism
$\mathcal{A}_{2}={\langle{I_{2},s_{2},X_{2}}\rangle}\xrightarrow{{\langle{h,f}\rangle}}{\langle{I_{1},s_{1},X_{1}}\rangle}=\mathcal{S}_{1}$ and
a type domain morphism 
$\mathcal{A}_{2} = {\langle{X_{2},Y_{2},\models_{\mathcal{A}_{2}}}\rangle} 
\xrightleftharpoons{{\langle{f,g}\rangle}} 
{\langle{X_{1},Y_{1},\models_{\mathcal{A}_{1}}}\rangle} = \mathcal{A}_{1}$
with common sort function $X_{2}\xrightarrow{f}X_{1}$,
and a key function $K_{2}\xleftarrow{k}K_{1}$,
which satisfy the naturality condition above.

Table morphisms are illustrated in Fig.~\ref{fig:tbl:mor}.
Here we see that table morphisms have the pleasing property that
corresponding entries in the source and target tables
satisfy the infomorphism condition from the theory of information flow
(Barwise and Seligman~\cite{barwise:seligman:97}).
Composition of morphisms is defined component-wise.
Let 
\[\mbox{\footnotesize{$
\mathrmbf{Set}\xleftarrow{\;\mathrmbfit{key}\;}
\mathrmbf{Tbl} = {\bigl(\mathrmbf{Set}{\,\downarrow\,}\mathrmbfit{tup}\bigr)}
\xrightarrow{\;\mathrmbfit{dom}\;}\mathrmbf{Dom}^{\mathrm{op}}
$}\normalsize}\]
denote the comma context of tables 
(Fig.~\ref{fig:tbl:cxt})
with the key/signed-domain projection passages. 
There is a defining tuple bridge 
$\mathrmbfit{key}\xRightarrow{\;\tau\,\;}\mathrmbfit{dom}{\;\circ\;}\mathrmbfit{tup}$,
whose $\mathcal{T}^{\text{th}}$ component is the tuple function 
$K\xrightarrow{t}\mathrmbfit{tup}(\mathcal{D})$.
%
Composition yields
signature/type-domain projection passages
$\mathrmbf{List}^{\mathrm{op}}\xleftarrow{\mathrmbfit{sign}}\mathrmbf{Tbl}\xrightarrow{\mathrmbfit{data}}\mathrmbf{Cls}^{\mathrm{op}}$.
\begin{figure}
\begin{center}
{{\begin{tabular}{c@{\hspace{30pt}}c}
{{\begin{tabular}{c}
\setlength{\unitlength}{0.5pt}
\begin{picture}(100,120)(-20,0)
\put(0,118){\makebox(0,0){\footnotesize{$\mathrmbf{Tbl}$}}}
\put(60,60){\makebox(0,0){\footnotesize{$\mathrmbf{Dom}^{\mathrm{op}}$}}}
\put(0,0){\makebox(0,0){\footnotesize{$\mathrmbf{Set}$}}}
\put(-45,60){\makebox(0,0)[r]{\scriptsize{$\mathrmbfit{key}$}}}
\put(36,95){\makebox(0,0)[l]{\scriptsize{$\mathrmbfit{dom}$}}}
\put(36,24){\makebox(0,0)[l]{\scriptsize{$\mathrmbfit{tup}$}}}
\put(0,65){\makebox(0,0){{$\xRightarrow{\;\;\tau\;\;}$}}}
\put(15,105){\vector(1,-1){30}}
\put(45,45){\vector(-1,-1){30}}
\qbezier(-12,105)(-60,60)(-12,15)\put(-12,15){\vector(1,-1){0}}
\end{picture}
\end{tabular}}}
&
{{\begin{tabular}{c}
\setlength{\unitlength}{0.55pt}
\begin{picture}(320,160)(0,-35)
\put(20,80){\makebox(0,0){\footnotesize{$\mathrmbf{Set}$}}}
\put(20,0){\makebox(0,0){\footnotesize{$\mathrmbf{1}$}}}
\put(140,80){\makebox(0,0){\footnotesize{$\mathrmbf{Tbl}$}}}
\put(285,80){\makebox(0,0){\footnotesize{$\mathrmbf{Cls}^{\mathrm{op}}$}}}
\put(145,0){\makebox(0,0){\footnotesize{$\mathrmbf{List}^{\mathrm{op}}$}}}
\put(285,0){\makebox(0,0){\footnotesize{$\mathrmbf{Set}^{\mathrm{op}}$}}}
\put(215,40){\makebox(0,0){\footnotesize{$\mathrmbf{Dom}^{\mathrm{op}}$}}}
\put(80,92){\makebox(0,0){\scriptsize{$\mathrmbfit{key}$}}}
\put(182,66){\makebox(0,0)[l]{\scriptsize{$\mathrmbfit{dom}$}}}
\put(210,92){\makebox(0,0){\scriptsize{$\mathrmbfit{data}$}}}
\put(134,43){\makebox(0,0)[r]{\scriptsize{$\mathrmbfit{sign}$}}}
\put(210,-12){\makebox(0,0){\scriptsize{$\mathrmbfit{sort}^{\mathrm{op}}$}}}
\put(288,40){\makebox(0,0)[l]{\scriptsize{$\mathrmbfit{sort}^{\mathrm{op}}$}}}
\put(115,80){\vector(-1,0){70}}
\put(165,80){\vector(1,0){90}}
\put(115,0){\vector(-1,0){80}}
\put(165,0){\vector(1,0){90}}
\put(20,65){\vector(0,-1){50}}
\put(140,65){\vector(0,-1){50}}
\put(280,65){\vector(0,-1){50}}
\put(155,70){\vector(2,-1){36}}
\put(190,30){\vector(-2,-1){36}}
\put(230,50){\vector(2,1){36}}
\qbezier(250,20)(255,20)(260,20)
\qbezier(250,20)(250,15)(250,10)
\end{picture}
\end{tabular}}}
\end{tabular}}}
\end{center}
\caption{\texttt{FOLE} Table Mathematical Context}
\label{fig:tbl:cxt}
\end{figure}
We can have three indexing contexts for tables (above diagram):
signatures $\mathrmbf{List}$, type domains $\mathrmbf{Cls}$ and signed domains $\mathrmbf{Dom}$.
Each has their uses:
signature indexing follows the true formal-semantics distinction,
type domain indexing proves that the context of tables is complete (\S~\ref{sub:sub:sec:props:groth})
(and the fibers help explain database fibers), and
signed domain indexing proves that the context of tables is cocomplete (\S~\ref{sub:sub:sec:props:comma}).
Corresponding to this indexing
(as illustrated in Fig.~\ref{fig:tbl:gro:constr}),
there are two chains of fiber contexts:
fibers indexed by a signed domain 
$\mathcal{D} = {\langle{I,s,\mathcal{A}}\rangle}$
are smallest, and contained in either
fibers indexed by a type domain $\mathcal{A} = {\langle{X,Y,\models_{\mathcal{A}}}\rangle}$ or
fibers indexed by a signature $\mathcal{S} = {\langle{I,s,X}\rangle}$.
%


\paragraph{Restatement:}
We now sharpen the definition for the the context of tables.
This will be useful for defining and working with relational databases.
The fibered context of tables is the comma mathematical context 
\[\mbox{\footnotesize{$
\mathrmbf{Tbl} 
= {\bigl(\mathrmbf{Set}{\,\downarrow\,}\mathrmbfit{tup}\bigr)}
$}\normalsize}\]
for the opspan of passages
$\mathrmbf{Set}\xrightarrow{\mathrmbfit{1}}\mathrmbf{Set}\xleftarrow{\mathrmbfit{tup}}\mathrmbf{Dom}^{\mathrm{op}}$.
It has the key and signed domain projection passages
$\mathrmbf{Set}\xleftarrow{\mathrmbfit{key}}\mathrmbf{Tbl}\xrightarrow{\mathrmbfit{dom}}\mathrmbf{Dom}^{\mathrm{op}}$
and the defining tuple bridge
$\mathrmbfit{key}\xRightarrow{\;\tau\;}\mathrmbfit{dom}{\;\circ\;}\mathrmbfit{tup}$
such that
for any span of passages
$\mathrmbf{Set}\xleftarrow{\mathrmbfit{K}}\mathrmbf{R}\xrightarrow{\mathrmbfit{Q}}\mathrmbf{Dom}^{\mathrm{op}}$,
there is a bijection
$\mathrmbfit{T}\mapsto\widehat{\tau}=\mathrmbfit{T}{\;\circ\;}\tau$
between 
\begin{itemize}
\item 
passages $\mathrmbf{R}\xrightarrow{\mathrmbfit{T}}\mathrmbf{Tbl}$
satisfying $\mathrmbfit{T}{\;\circ\;}\mathrmbfit{key} = \mathrmbfit{K}$ 
and $\mathrmbfit{T}{\;\circ\;}\mathrmbfit{dom} = \mathrmbfit{Q}$, and
\item 
bridges $\mathrmbfit{K}\xRightarrow{\;\widehat{\tau}\;}\mathrmbfit{Q}{\;\circ\;}\mathrmbfit{tup}$.
\end{itemize}
By composition,
there are also signature and classification projection passages
$\mathrmbf{List}^{\mathrm{op}}\!\xleftarrow{\mathrmbfit{sign}}\mathrmbf{Tbl}\xrightarrow{\mathrmbfit{data}}\mathrmbf{Cls}^{\mathrm{op}}$.
%

\subsection{Signed Domain Indexing}\label{sub:sec:tbl:sign:dom}

In this section we show that the context of tables 
is a fibered context over signed domains.
We first define the table fiber for fixed signed domain.
We next move between table fibers along signed domain morphisms.
Finally,
we invoke the Grothendieck construction indexed by signed domains.

\paragraph{Fiber Contexts (small-size).}

Let ${\langle{I,s,\mathcal{A}}\rangle}$ be a fixed signed domain.
The fiber mathematical context of ${\langle{I,s,\mathcal{A}}\rangle}$-tables is the comma context
\[\mbox{\footnotesize{$
\mathrmbf{Tbl}(I,s,\mathcal{A}) 
= \mathrmbf{Tbl}_{\mathcal{A}}(I,s)
= {\bigl(\mathrmbf{Set}{\,\downarrow\,}\mathrmbfit{tup}_{\mathcal{A}}(I,s)\bigr)}
$}\normalsize}\]
associated with the tuple set (constant passage)
$\mathrmbf{1}\xrightarrow[\mathrmbfit{tup}(I,s,\mathcal{A})]{\;\mathrmbfit{tup}_{\mathcal{A}}(I,s))\;}\mathrmbf{Set}$.
It has the key and trivial projection passages
$\mathrmbf{Set}\xleftarrow{\;\mathrmbfit{key}_{\mathcal{A}}(I,s)\;}\mathrmbf{Tbl}_{\mathcal{A}}(I,s)\xrightarrow{\;\Delta\;}\mathrmbf{1}$
and the defining bridge
$\mathrmbfit{key}_{\mathcal{A}}(I,s)\xRightarrow{\,\tau_{\mathcal{A}}(I,s)\;}\Delta{\;\circ\;}\mathrmbfit{tup}_{\mathcal{A}}(I,s)$. 
A $\mathrmbf{Tbl}_{\mathcal{A}}(I,s)$-object $\mathcal{T} = {\langle{K,t}\rangle}$, 
called an ${\langle{I,s,\mathcal{A}}\rangle}$-table,
consists of a set $K$ of (primary) keys
and a tuple function
$K\xrightarrow{\,t\;}\mathrmbfit{tup}_{\mathcal{A}}(I,s)$
mapping each key to its descriptor $\mathcal{A}$-tuple of type (signature) ${\langle{I,s}\rangle}$.
A $\mathrmbf{Tbl}_{\mathcal{A}}(I,s)$-morphism 
$\mathcal{T}' = {\langle{K',t'}\rangle}\xleftarrow{k}{\langle{K,t}\rangle} = \mathcal{T}$
is a key function $K'\xleftarrow{\,k\,}K$
that preserves descriptors by
satisfying the naturality condition $k{\;\cdot\;}t' = t$.
%

\paragraph{Fibered Context (large-size).}

The fibered context of tables 
$\mathrmbf{Tbl}^{\mathrm{op}}\xrightarrow{\;\mathrmbfit{dom}\;}\mathrmbf{Dom}$
is defined as follows.
We use the same definitions as in \S~\ref{sub:sec:tbl}.
%
A $\mathrmbf{Tbl}$-object 
$\mathcal{T} = {\langle{I,s,\mathcal{A},K,t}\rangle}$, 
called an table,
consists of 
a signed domain $\mathrmbfit{dom}(\mathcal{T}) = {\langle{I,s,\mathcal{A}}\rangle}$ and 
an ${\langle{I,s,\mathcal{A}}\rangle}$-table ${\langle{K,t}\rangle}$.
A $\mathrmbf{Tbl}$-morphism 
$\mathcal{T}_{2} = {\langle{I_{2},s_{2},\mathcal{A}_{2},K_{2},t_{2}}\rangle} 
\xleftarrow{\,{\langle{h,f,g,k}\rangle}\;} 
{\langle{I_{1},s_{1},\mathcal{A}_{1},K_{1},t_{1}}\rangle} = \mathcal{T}_{1}$
called an table morphism,
consists of  Morphism
a signed domain morphism
${\langle{I_{2},X_{2},\mathcal{A}_{2}}\rangle}\xrightarrow[\mathrmbfit{dom}(h,f,g,k)]{{\langle{h,f,g}\rangle}}{\langle{I_{1},X_{1},\mathcal{A}_{1}}\rangle}$
and 
a key function
$K_{2}\xleftarrow{\,k\;}K_{1}$
satisfying the naturality condition 
$k{\;\cdot\;}t_{2} = t_{1}{\;\cdot\;}\mathrmbfit{tup}(h,f,g)$.
This condition gives two alternate and adjoint definitions.
%
\begin{figure}
\begin{center}
{{\begin{tabular}{@{\hspace{30pt}}r@{\hspace{30pt}}c@{\hspace{30pt}}l}
{{\begin{tabular}{c}
{\setlength{\unitlength}{0.45pt}\begin{picture}(240,180)(0,-55)
\put(0,80){\makebox(0,0){\footnotesize{$K_{2}$}}}
\put(120,80){\makebox(0,0){\footnotesize{$K_{1}$}}}
\put(240,80){\makebox(0,0){\footnotesize{$K_{1}$}}}
\put(60,-25){\makebox(0,0){\footnotesize{$\mathrmbfit{tup}(I_{2},s_{2},\mathcal{A}_{2})$}}}
\put(240,-25){\makebox(0,0){\footnotesize{$\mathrmbfit{tup}(I_{1},s_{1},\mathcal{A}_{1})$}}}
\put(30,-56){\makebox(0,0){\footnotesize{$\underset{\textstyle{\mathcal{T}_{2}}}{\underbrace{\rule{30pt}{0pt}}}$}}}
\put(105,-58){\makebox(0,0){\footnotesize{$\underset{\textstyle{{\scriptstyle\sum}_{{\langle{h,f,g}\rangle}}(\mathcal{T}_{1})}}{\underbrace{\rule{30pt}{0pt}}}$}}}
\put(240,-56){\makebox(0,0){\footnotesize{$\underset{\textstyle{\mathcal{T}_{1}}}{\underbrace{\rule{40pt}{0pt}}}$}}}
\put(5,26){\makebox(0,0)[l]{\scriptsize{$t_{2}$}}}
\put(102,26){\makebox(0,0)[l]{\scriptsize{$t_{1}{\,\cdot\,}\mathrmbfit{tup}(h,f,g)$}}}
\put(248,26){\makebox(0,0)[l]{\scriptsize{$t_{1}$}}}
\put(160,-10){\makebox(0,0){\scriptsize{$\mathrmbfit{tup}(h,f,g)$}}}
\put(120,125){\makebox(0,0){\scriptsize{$k$}}}
\put(60,92){\makebox(0,0){\scriptsize{$k$}}}
\put(100,80){\vector(-1,0){80}}
\put(4,59){\vector(2,-3){46}}
\put(108,59){\vector(-2,-3){46}}
\put(190,80){\makebox(0,0){\normalsize{$=$}}}
\put(240,60){\vector(0,-1){68}}
\put(165,-25){\vector(-1,0){35}}
\put(10,100){\oval(20,20)[tl]}
\put(10,110){\line(1,0){220}}
\put(230,100){\oval(20,20)[tr]}
\put(0,94){\vector(0,-1){0}}
\put(60,-98){\makebox(0,0){\footnotesize{$\underset{\textstyle{\mathrmbf{Tbl}(I_{2},s_{2},\mathcal{A}_{2})}}{\underbrace{\rule{60pt}{0pt}}}$}}}
\end{picture}}
\end{tabular}}}
${\;\;\;\;\;\;\;\;\;\;\cong\;\;\;\;\;\;\;\;\;\;}$
{{\begin{tabular}{c}
{\setlength{\unitlength}{0.45pt}\begin{picture}(240,180)(-120,-55)
\put(-120,80){\makebox(0,0){\footnotesize{$K_{2}$}}}
\put(0,80){\makebox(0,0){\footnotesize{$K_{1}$}}}
\put(120,80){\makebox(0,0){\footnotesize{$K_{1}$}}}
\put(60,-25){\makebox(0,0){\footnotesize{$\mathrmbfit{tup}(I_{1},s_{1},\mathcal{A}_{1})$}}}
\put(-120,-25){\makebox(0,0){\footnotesize{$\mathrmbfit{tup}(I_{2},s_{2},\mathcal{A}_{2})$}}}
\put(-120,-56){\makebox(0,0){\footnotesize{$\underset{\textstyle{\mathcal{T}_{2}}}{\underbrace{\rule{40pt}{0pt}}}$}}}
\put(30,-60){\makebox(0,0){\footnotesize{$\underset{\textstyle{{\langle{h,f,g}\rangle}^{\ast}(\mathcal{T}_{2})}}{\underbrace{\rule{30pt}{0pt}}}$}}}
\put(100,-56){\makebox(0,0){\footnotesize{$\underset{\textstyle{\mathcal{T}_{1}}}{\underbrace{\rule{30pt}{0pt}}}$}}}
\put(-60,92){\makebox(0,0){\scriptsize{$e$}}}
\put(15,26){\makebox(0,0)[r]{\scriptsize{$\hat{t}$}}}
\put(105,26){\makebox(0,0)[l]{\scriptsize{$t_{1}$}}}
\put(-125,26){\makebox(0,0)[r]{\scriptsize{$t_{2}$}}}
\put(-25,-10){\makebox(0,0){\scriptsize{$\mathrmbfit{tup}(h,f,g)$}}}
\put(60,92){\makebox(0,0){\scriptsize{$k'$}}}
\put(0,125){\makebox(0,0){\scriptsize{$k$}}}
\put(100,80){\vector(-1,0){80}}
\put(-25,80){\vector(-1,0){70}}
\put(4,59){\vector(2,-3){46}}
\put(116,59){\vector(-2,-3){46}}
\put(-120,60){\vector(0,-1){68}}
\put(-15,-25){\vector(-1,0){35}}
\put(-110,100){\oval(20,20)[tl]}
\put(-110,110){\line(1,0){220}}
\put(110,100){\oval(20,20)[tr]}
\put(-120,94){\vector(0,-1){0}}
%
\qbezier(-24,45)(-16,45)(-8,45)
\qbezier(-24,45)(-24,53)(-24,62)
\put(60,-98){\makebox(0,0){\footnotesize{$\underset{\textstyle{\mathrmbf{Tbl}(I_{1},s_{1},\mathcal{A}_{1})}}{\underbrace{\rule{60pt}{0pt}}}$}}}
\end{picture}}
\end{tabular}}}
\\&&
\\&&
\end{tabular}}}
\end{center}
\caption{Table Morphism: Signed Domain}
\label{fig:tbl:mor:large}
\end{figure}
%
In terms of fibers,
an table morphism
consists of
a signed domain morphism
${\langle{I_{2},s_{2},\mathcal{A}_{2}}\rangle}\xrightarrow{{\langle{h,f,g}\rangle}}{\langle{I_{1},s_{1},\mathcal{A}_{1}}\rangle}$
\underline{and} 
either a morphism 
$\mathcal{T}_{2}\xleftarrow{\;k\;}{\scriptstyle\sum}_{{\langle{h,f,g}\rangle}}(\mathcal{T}_{1})$
in the fiber context $\mathrmbf{Tbl}(I_{2},s_{2},\mathcal{A}_{2})$ 
or a morphism 
${\langle{h,f,g}\rangle}^{\ast}(\mathcal{T}_{2})\xleftarrow{\,k'\,}\mathcal{T}_{1}$
in the fiber context $\mathrmbf{Tbl}(I_{1},s_{1},\mathcal{A}_{1})$.
\begin{equation}
{{\begin{picture}(120,10)(0,-4)
\put(60,0){\makebox(0,0){\footnotesize{$
\underset{\textstyle{\text{in}\;\mathrmbf{Tbl}(I_{2},s_{2},\mathcal{A}_{2})}}
{\mathcal{T}_{2}\xleftarrow{\;k\;}{\scriptstyle\sum}_{{\langle{h,f,g}\rangle}}(\mathcal{T}_{1})}
{\;\;\;\;\;\;\;\;\rightleftarrows\;\;\;\;\;\;\;\;}
\underset{\textstyle{\text{in}\;\mathrmbf{Tbl}(I_{1},s_{1},\mathcal{A}_{1})}}
{{\langle{h,f,g}\rangle}^{\ast}(\mathcal{T}_{2})\xleftarrow{\,k'\,}\mathcal{T}_{1}}
$}}}
\end{picture}}}
\end{equation}
The 
${\langle{I_{2},s_{2},\mathcal{A}_{2}}\rangle}$-table morphism 
$\mathcal{T}_{2}\xleftarrow{\;k\;}{\scriptstyle\sum}_{{\langle{h,f,g}\rangle}}(\mathcal{T}_{1})$
is the composition (RHS of Fig.~\ref{fig:tbl:mor:large}) 
of the fiber morphism
${\scriptstyle\sum}_{{\langle{h,f,g}\rangle}}\Bigl({\langle{h,f,g}\rangle}^{\ast}(\mathcal{T}_{2})\xleftarrow{k'}\mathcal{T}_{1}\Bigr)$
with the $\mathcal{T}_{2}^{\,\text{th}}$ counit component 
$\mathcal{T}_{2}\xleftarrow{e}{\scriptstyle\sum}_{{\langle{h,f,g}\rangle}}\Bigl({\langle{h,f,g}\rangle}^{\ast}(\mathcal{T}_{2})\Bigr)$
for the fiber adjunction
\[\mbox{\footnotesize{
$
\mathrmbf{Tbl}(I_{2},s_{2},\mathcal{A}_{2})
{\;\xleftrightharpoons[\;\;\;\;{\langle{h,f,g}\rangle}^{\ast}]{\;{\scriptscriptstyle\sum}_{{\langle{h,f,g}\rangle}}\;\;\;}\;}
\mathrmbf{Tbl}(I_{1},s_{1},\mathcal{A}_{1})$.
}\normalsize}\]
This fiber adjunction 
(top part of Tbl.~\ref{tbl:tbl-rel:refl:sign:dom:mor})
is a component of
the signed domain indexed adjunction of tables
$\mathrmbf{Dom}^{\mathrm{op}}\!\xrightarrow{\;\mathrmbfit{tbl}\;}\mathrmbf{Adj}$.
\begin{table}
\begin{center}
\begin{tabular}{c}
{\scriptsize\setlength{\extrarowheight}{2pt}$\begin{array}[b]{c}
\\
{\langle{I_{2},X_{2},\mathcal{A}_{2}}\rangle}\xrightarrow{{\langle{h,f,g}\rangle}}{\langle{I_{1},X_{1},\mathcal{A}_{1}}\rangle}
\\
\mathrmbfit{tup}_{\mathcal{A}_{2}}(I_{2},s_{2})
\xleftarrow{\mathrmbfit{tup}(h,f,g)}
\mathrmbfit{tup}_{\mathcal{A}_{1}}(I_{1},s_{1})
\rule[-10pt]{0pt}{10pt}
\end{array}$}
\\\\
{{\begin{tabular}[b]{c}
\setlength{\unitlength}{0.55pt}
\begin{picture}(320,220)(-40,-30)
\put(0,160){\makebox(0,0){\footnotesize{$
\underset{=\,\bigl(\mathrmbf{Set}{\,\downarrow\,}\mathrmbfit{tup}_{\mathcal{A}_{2}}(I_{2},s_{2})\bigr)}
{\mathrmbf{Tbl}_{\mathcal{A}_{2}}(I_{2},s_{2})}$}}}
\put(0,0){\makebox(0,0){\footnotesize{$
\underset{=\,{\wp}\mathrmbfit{tup}_{\mathcal{A}_{2}}(I_{2},s_{2})}
{\mathrmbf{Rel}_{\mathcal{A}_{2}}(I_{2},s_{2})}$}}}
\put(240,160){\makebox(0,0){\footnotesize{$
\underset{=\,\bigl(\mathrmbf{Set}{\,\downarrow\,}\mathrmbfit{tup}_{\mathcal{A}_{1}}(I_{1},s_{1})\bigr)}
{\mathrmbf{Tbl}_{\mathcal{A}_{1}}(I_{1},s_{1})}$}}}
\put(240,0){\makebox(0,0){\footnotesize{$
\underset{=\,{\wp}\mathrmbfit{tup}_{\mathcal{A}_{1}}(I_{1},s_{1})}
{\mathrmbf{Rel}_{\mathcal{A}_{1}}(I_{1},s_{1})}$}}}
%
\put(120,190){\makebox(0,0){\scriptsize{${\scriptstyle\sum}_{{\langle{h,f,g}\rangle}}$}}}
\put(122,162){\makebox(0,0){\scriptsize{${\langle{h,f,g}\rangle}^{\ast}$}}}
\put(120,130){\makebox(0,0){\scriptsize{${\scriptstyle\prod}_{{\langle{h,f,g}\rangle}}$}}}
\put(120,30){\makebox(0,0){\scriptsize{$\exists_{{\langle{h,f,g}\rangle}}$}}}
\put(122,2){\makebox(0,0){\scriptsize{${\langle{h,f,g}\rangle}^{{\scriptscriptstyle-}1}$}}}
\put(120,-30){\makebox(0,0){\scriptsize{$\forall_{{\langle{h,f,g}\rangle}}$}}}
\put(-15,80){\makebox(0,0)[r]{\scriptsize{$\mathrmbfit{im}_{\mathcal{A}_{2}}(I_{2},s_{2})$}}}
\put(16,80){\makebox(0,0)[l]{\scriptsize{$\mathrmbfit{inc}_{\mathcal{A}_{2}}(I_{2},s_{2})$}}}
\put(225,80){\makebox(0,0)[r]{\scriptsize{$\mathrmbfit{im}_{\mathcal{A}_{1}}(I_{1},s_{1})$}}}
\put(256,80){\makebox(0,0)[l]{\scriptsize{$\mathrmbfit{inc}_{\mathcal{A}_{1}}(I_{1},s_{1})$}}}
\put(170,180){\vector(-1,0){100}}
\qbezier(70,160)(85,160)(100,160)\qbezier(140,160)(155,160)(170,160)\put(170,160){\vector(1,0){0}}
\put(170,140){\vector(-1,0){100}}
\put(170,20){\vector(-1,0){100}}
\qbezier(70,0)(85,0)(100,0)\qbezier(140,0)(155,0)(170,0)\put(170,0){\vector(1,0){0}}
\put(170,-20){\vector(-1,0){100}}
\put(-10,125){\vector(0,-1){90}}
\put(10,35){\vector(0,1){90}}
\put(230,125){\vector(0,-1){90}}
\put(250,35){\vector(0,1){90}}
\end{picture}
\end{tabular}}}
\\\\
{\scriptsize\setlength{\extrarowheight}{2.5pt}$\begin{array}[b]{|l|}
\hline
{\scriptstyle\sum}_{{\langle{h,f,g}\rangle}} \dashv {\langle{h,f,g}\rangle}^{\ast} \dashv {\scriptstyle\prod}_{{\langle{h,f,g}\rangle}} 
\\
\exists_{{\langle{h,f,g}\rangle}} \dashv {\langle{h,f,g}\rangle}^{{\scriptscriptstyle-}1} \dashv \forall_{{\langle{h,f,g}\rangle}} 
\\
\mathrmbfit{im}_{\mathcal{A}_{2}}(I_{2},s_{2}) \dashv \mathrmbfit{inc}_{\mathcal{A}_{2}}(I_{2},s_{2}),\; 
\mathrmbfit{im}_{\mathcal{A}_{1}}(I_{1},s_{1}) \dashv \mathrmbfit{inc}_{\mathcal{A}_{1}}(I_{1},s_{1}) 
\\ \hline
\mathrmbfit{inc}_{\mathcal{A}_{2}}(I_{2},s_{2}){\;\circ\;}{\langle{h,f,g}\rangle}^{\ast}	
 = {\langle{h,f,g}\rangle}^{{\scriptscriptstyle-}1}{\;\circ\;}\mathrmbfit{inc}_{\mathcal{A}_{1}}(I_{1},s_{1}) 
\\
\mathrmbfit{inc}_{\mathcal{A}_{1}}(I_{1},s_{1}){\;\circ\;}{\scriptstyle\prod}_{{\langle{h,f,g}\rangle}} 
 = \forall_{{\langle{h,f,g}\rangle}}{\;\circ\;}\mathrmbfit{inc}_{\mathcal{A}_{2}}(I_{2},s_{2}) 
\\ \hline
{\scriptstyle\sum}_{{\langle{h,f,g}\rangle}}{\;\circ\;}\mathrmbfit{im}_{\mathcal{A}_{2}}(I_{2},s_{2})
 \cong \mathrmbfit{im}_{\mathcal{A}_{1}}(I_{1},s_{1}){\;\circ\;}\exists_{{\langle{h,f,g}\rangle}}
\\
{\langle{h,f,g}\rangle}^{\ast}{\;\circ\;}\mathrmbfit{im}_{\mathcal{A}_{1}}(I_{1},s_{1})
 \cong \mathrmbfit{im}_{\mathcal{A}_{2}}(I_{2},s_{2}){\;\circ\;}{\langle{h,f,g}\rangle}^{{\scriptscriptstyle-}1}
\\ \hline
\end{array}$}
\\ \\
{\textit{small fibers -- long distance}}
\end{tabular}
\end{center}
%
\caption{Reflection: Signed Domain}
\label{tbl:tbl-rel:refl:sign:dom:mor}
\end{table}
\begin{theorem}\label{thm:fib:cxt:tbl:dom}
The fibered context of tables
$\mathrmbf{Tbl}\xrightarrow{\;\mathrmbfit{dom}\;}\mathrmbf{Dom}^{\mathrm{op}}$
is the Grothendieck construction $\int_{\mathrmbf{Dom}}$ of 
the signed domain indexed adjunction
$\mathrmbf{Dom}^{\mathrm{op}}\!\xrightarrow{\;\mathrmbfit{tbl}\;}\mathrmbf{Adj}$.$^{\ref{grothendieck}}$
\end{theorem}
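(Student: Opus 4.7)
The plan is to apply the general Grothendieck construction recalled in the footnote to the signed domain indexed adjunction $\mathrmbfit{tbl}$ constructed in this subsection. The earlier material of \S\ref{sub:sec:tbl:sign:dom} has already produced the fiber contexts $\mathrmbf{Tbl}(\mathcal{D}) = \bigl(\mathrmbf{Set}\,{\downarrow}\,\mathrmbfit{tup}(\mathcal{D})\bigr)$ together with the fiber adjunctions ${\scriptstyle\sum}_{{\langle{h,f,g}\rangle}} \dashv {\langle{h,f,g}\rangle}^{\ast}$ along each signed domain morphism ${\langle{h,f,g}\rangle}\!:\mathcal{D}_2 \to \mathcal{D}_1$. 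What remains is (i) to verify that the assignment $\mathcal{D}\mapsto\mathrmbf{Tbl}(\mathcal{D})$ with these adjunctions assembles into a pseudo-passage $\mathrmbfit{tbl}:\mathrmbf{Dom}^{\mathrm{op}}\to\mathrmbf{Adj}$, and (ii) to identify the Grothendieck construction $\int_{\mathrmbf{Dom}}\mathrmbfit{tbl}$ with the comma context $\mathrmbf{Tbl}=\bigl(\mathrmbf{Set}\,{\downarrow}\,\mathrmbfit{tup}\bigr)$.

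For (i), the left adjoint is post-composition with the tuple function $\mathrmbfit{tup}(h,f,g)$ and the right adjoint is pullback along it; both are the standard slice-category constructions in $\mathrmbf{Set}$. Pseudo-functoriality then reduces to two facts: (a) $\mathrmbfit{tup}$ is strictly functorial (Def.~\ref{def:sign:dom:tup}), so the left adjoints ${\scriptstyle\sum}_{(-)}$ compose strictly; (b) iterated pullbacks in $\mathrmbf{Set}$ compose up to canonical isomorphism satisfying the cocycle/pentagon condition, so the right adjoints ${(-)}^{\ast}$ compose coherently. No further coherence data is required beyond these.

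For (ii), I would exhibit the comparison passage $\int_{\mathrmbf{Dom}}\mathrmbfit{tbl} \to \mathrmbf{Tbl}$. On objects, $(\mathcal{D},(K,t))$ is sent to ${\langle{I,s,\mathcal{A},K,t}\rangle}$, which is precisely a $\mathrmbf{Tbl}$-object. A morphism in the Grothendieck construction is a pair ${\langle{{\langle{h,f,g}\rangle},k}\rangle}$ presented either as $\mathcal{T}_2 \xleftarrow{k} {\scriptstyle\sum}_{{\langle{h,f,g}\rangle}}(\mathcal{T}_1)$ in $\mathrmbf{Tbl}(\mathcal{D}_2)$ or, adjointly, as ${\langle{h,f,g}\rangle}^{\ast}(\mathcal{T}_2) \xleftarrow{k'} \mathcal{T}_1$ in $\mathrmbf{Tbl}(\mathcal{D}_1)$ (cf.\ Fig.~\ref{fig:tbl:mor:large}); chasing the adjunction yields exactly the naturality equation $k\cdot t_2 = t_1 \cdot \mathrmbfit{tup}(h,f,g)$ that defines a $\mathrmbf{Tbl}$-morphism in \S\ref{sub:sec:tbl}. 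The object and morphism maps are mutually inverse, and composition is preserved because it is defined component-wise on both sides, yielding the desired isomorphism of fibered contexts over $\mathrmbf{Dom}$.

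The main obstacle is the cocycle verification in step (i); once pseudo-functoriality is secured, step (ii) is a bookkeeping matching of the two adjoint presentations of a fiber morphism with the single naturality equation defining a $\mathrmbf{Tbl}$-morphism. The cocycle obstacle itself reduces to the standard Beck--Chevalley compatibility of pullbacks in $\mathrmbf{Set}$, which holds strictly on the left-adjoint side and up to canonical isomorphism on the right-adjoint side.
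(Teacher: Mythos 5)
Your proposal is correct and follows essentially the same route as the paper, which justifies Theorem~\ref{thm:fib:cxt:tbl:dom} by the preceding construction: the fiber contexts $\mathrmbf{Tbl}(I,s,\mathcal{A})=\bigl(\mathrmbf{Set}{\,\downarrow\,}\mathrmbfit{tup}_{\mathcal{A}}(I,s)\bigr)$, the fiber adjunctions ${\scriptstyle\sum}_{{\langle{h,f,g}\rangle}}\dashv{\langle{h,f,g}\rangle}^{\ast}$, and the matching of the two adjoint presentations of a fiber morphism with the single naturality condition $k{\,\cdot\,}t_{2}=t_{1}{\,\cdot\,}\mathrmbfit{tup}(h,f,g)$ (Fig.~\ref{fig:tbl:mor:large}). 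Your explicit verification of pseudo-functoriality (strict on the left-adjoint side via functoriality of $\mathrmbfit{tup}$, up to canonical isomorphism on the right-adjoint side via pullback pasting) is a coherence check the paper leaves implicit, and is sound, though calling it ``Beck--Chevalley'' is a slight misnomer for what is really the cocycle condition on iterated pullbacks.
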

%

\subsection{Signature Indexing}\label{sub:sec:tbl:sign}

In this section we show that the context of tables 
is a fibered context over signatures.
We first define the table fiber for fixed signature.
We next move between table fibers along signature morphisms.
Finally,
we invoke the Grothendieck construction indexed by signatures.


\subsubsection{Lower Aspect.}\label{sub:sub:sec:tbl:sign:low}

Let $\mathcal{S} = {\langle{I,s,X}\rangle}$ be a fixed signature.
For database tables,
the signature (header) $\mathcal{S}$
consists of a fixed sort set $X$ and a fixed $X$-signature ${\langle{I,s}\rangle}$.
Here, we show that the context of $\mathcal{S}$-tables $\mathrmbf{Tbl}(\mathcal{S})$ is fibered over $X$-sorted type domains
$\mathrmbf{Tbl}(\mathcal{S})\xrightarrow{\mathrmbfit{data}_{\mathcal{S}}}\mathrmbf{Cls}(X)^{\mathrm{op}}$.
We use the Grothendieck construction $\int_{\mathrmbf{Cls}(X)}$
on the indexed adjunction
$\mathrmbf{Cls}(X)\xrightarrow{\;\mathrmbfit{tbl}_{\mathcal{S}}}\mathrmbf{Adj}
:\mathcal{A}\mapsto{\mathrmbf{Tbl}_{\mathcal{S}}(\mathcal{A})}$.

\paragraph{Fiber(ed) Contexts (medium-size).}

The fiber(ed) mathematical context of $\mathcal{S}$-tables is the comma context
\[\mbox{\footnotesize{$
\mathrmbf{Set} 
\xleftarrow{\mathrmbfit{key}_{\mathcal{S}}}
\mathrmbf{Tbl}(\mathcal{S}) = {\bigl(\mathrmbf{Set}{\,\downarrow\,}\mathrmbfit{tup}_{\mathcal{S}}\bigr)}
\xrightarrow{\mathrmbfit{data}_{\mathcal{S}}} 
{\mathrmbf{Cls}(X)}^{\mathrm{op}}.
$}}\]
associated with the signature tuple passage
$\mathrmbf{Cls}(X)^{\mathrm{op}}\xrightarrow{\;\mathrmbfit{tup}_{\mathcal{S}}\;}\mathrmbf{Set}$
defined in \S~\ref{sub:sub:sec:tup:bridge:sign}.
A $\mathrmbf{Tbl}(\mathcal{S})$-object $\mathcal{T} = {\langle{\mathcal{A},K,t}\rangle}$,
called an $\mathcal{S}$-table,
consists of 
an $X$-sorted type domain $\mathcal{A} = {\langle{X,Y,\models_{\mathcal{A}}}\rangle}$
with data-type collection
$\{ \mathcal{A}_{x} \mid x \in X \}$,
a set $K$ of (primary) keys
and a tuple function
$K\xrightarrow{\,t\;}\mathrmbfit{tup}_{\mathcal{A}}(I,s)
= \prod_{i\in{I}}\mathcal{A}_{s(i)}$
mapping each key to its descriptor $\mathcal{A}$-tuple of type (signature) ${\langle{I,s}\rangle}$.
A $\mathrmbf{Tbl}(\mathcal{S})$-morphism
$\mathcal{T} = {\langle{\mathcal{A},K,t}\rangle} 
\xleftarrow{\;{\langle{g,k}\rangle}\;}
{\langle{\widetilde{\mathcal{A}},\widetilde{K},\widetilde{t}}\rangle} = \widetilde{\mathcal{T}}$
consists of
an $X$-sorted type domain morphism
$\mathcal{A}
\xrightleftharpoons{{\langle{\mathrmit{1}_{X},g}\rangle}} 
\widetilde{\mathcal{A}}$
and
a key function $K\xleftarrow{k}\widetilde{K}$,
which satisfies the condition
$k{\,\cdot\,}t = \widetilde{t}{\,\cdot\,}\mathrmbfit{tup}_{\mathcal{S}}(g)$.
\begin{figure}
\begin{center}
{{\begin{tabular}{@{\hspace{30pt}}r@{\hspace{30pt}}c@{\hspace{30pt}}l}
{{\begin{tabular}{c}
{\setlength{\unitlength}{0.45pt}\begin{picture}(240,180)(0,-55)
\put(0,80){\makebox(0,0){\footnotesize{$K$}}}
\put(120,80){\makebox(0,0){\footnotesize{$\widetilde{K}$}}}
\put(240,80){\makebox(0,0){\footnotesize{$\widetilde{K}$}}}
\put(60,-25){\makebox(0,0){\footnotesize{$\mathrmbfit{tup}_{\mathcal{S}}(\mathcal{A})$}}}
\put(240,-25){\makebox(0,0){\footnotesize{$\mathrmbfit{tup}_{\mathcal{S}}(\widetilde{\mathcal{A}})$}}}
\put(30,-56){\makebox(0,0){\footnotesize{$\underset{\textstyle{\mathcal{T}}}{\underbrace{\rule{30pt}{0pt}}}$}}}
\put(105,-58){\makebox(0,0){\footnotesize{$\underset{\textstyle{{\scriptstyle\sum}_{{g}}(\widetilde{\mathcal{T}})}}{\underbrace{\rule{30pt}{0pt}}}$}}}
\put(240,-56){\makebox(0,0){\footnotesize{$\underset{\textstyle{\widetilde{\mathcal{T}}}}{\underbrace{\rule{40pt}{0pt}}}$}}}
\put(5,26){\makebox(0,0)[l]{\scriptsize{$t$}}}
\put(102,26){\makebox(0,0)[l]{\scriptsize{$\widetilde{t}{\,\cdot\,}\mathrmbfit{tup}_{\mathcal{S}}(g)$}}}
\put(248,26){\makebox(0,0)[l]{\scriptsize{$\widetilde{t}$}}}
\put(160,-10){\makebox(0,0){\scriptsize{$\mathrmbfit{tup}_{\mathcal{S}}(g)$}}}
\put(120,125){\makebox(0,0){\scriptsize{$k$}}}
\put(60,92){\makebox(0,0){\scriptsize{$k$}}}
\put(100,80){\vector(-1,0){80}}
\put(4,59){\vector(2,-3){46}}
\put(108,59){\vector(-2,-3){46}}
\put(190,80){\makebox(0,0){\normalsize{$=$}}}
\put(240,60){\vector(0,-1){68}}
\put(185,-25){\vector(-1,0){65}}
\put(10,100){\oval(20,20)[tl]}
\put(10,110){\line(1,0){220}}
\put(230,100){\oval(20,20)[tr]}
\put(0,94){\vector(0,-1){0}}
\put(60,-98){\makebox(0,0){\footnotesize{$\underset{\textstyle{\mathrmbf{Tbl}_{\mathcal{S}}(\mathcal{A})}}{\underbrace{\rule{60pt}{0pt}}}$}}}
\end{picture}}
\end{tabular}}}
${\;\;\;\;\;\;\;\;\;\;\cong\;\;\;\;\;\;\;\;\;\;}$
{{\begin{tabular}{c}
{\setlength{\unitlength}{0.45pt}\begin{picture}(240,180)(-120,-55)
\put(-120,80){\makebox(0,0){\footnotesize{$K$}}}
\put(0,80){\makebox(0,0){\footnotesize{$\widehat{K}$}}}
\put(120,80){\makebox(0,0){\footnotesize{$\widetilde{K}$}}}
\put(60,-25){\makebox(0,0){\footnotesize{$\mathrmbfit{tup}_{\mathcal{S}}(\widetilde{\mathcal{A}})$}}}
\put(-120,-25){\makebox(0,0){\footnotesize{$\mathrmbfit{tup}_{\mathcal{S}}(\mathcal{A})$}}}
\put(-120,-56){\makebox(0,0){\footnotesize{$\underset{\textstyle{\mathcal{T}}}{\underbrace{\rule{40pt}{0pt}}}$}}}
\put(30,-60){\makebox(0,0){\footnotesize{$\underset{\textstyle{{g}^{\ast}(\mathcal{T})}}{\underbrace{\rule{30pt}{0pt}}}$}}}
\put(100,-56){\makebox(0,0){\footnotesize{$\underset{\textstyle{\widetilde{\mathcal{T}}}}{\underbrace{\rule{30pt}{0pt}}}$}}}
\put(-60,92){\makebox(0,0){\scriptsize{$e$}}}
\put(15,26){\makebox(0,0)[r]{\scriptsize{$\hat{t}$}}}
\put(105,26){\makebox(0,0)[l]{\scriptsize{$\widetilde{t}$}}}
\put(-125,26){\makebox(0,0)[r]{\scriptsize{$t$}}}
\put(-25,-10){\makebox(0,0){\scriptsize{$\mathrmbfit{tup}_{\mathcal{S}}(g)$}}}
\put(60,92){\makebox(0,0){\scriptsize{$\tilde{k}$}}}
\put(0,125){\makebox(0,0){\scriptsize{$k$}}}
\put(100,80){\vector(-1,0){80}}
\put(-25,80){\vector(-1,0){70}}
\put(4,59){\vector(2,-3){46}}
\put(116,59){\vector(-2,-3){46}}
\put(-120,60){\vector(0,-1){68}}
\put(5,-25){\vector(-1,0){65}}
\put(-110,100){\oval(20,20)[tl]}
\put(-110,110){\line(1,0){220}}
\put(110,100){\oval(20,20)[tr]}
\put(-120,94){\vector(0,-1){0}}
\qbezier(-100,15)(-92,15)(-84,15)
\qbezier(-84,-1)(-84,7)(-84,15)
\put(60,-98){\makebox(0,0){\footnotesize{$\underset{\textstyle{\mathrmbf{Tbl}_{\mathcal{S}}(\widetilde{\mathcal{A}})}}{\underbrace{\rule{60pt}{0pt}}}$}}}
\end{picture}}
\end{tabular}}}
\\&&
\end{tabular}}}
\end{center}
\caption{$\mathcal{S}$-Table Morphism}
\label{fig:tbl:mor:medium:fixed:S}
\end{figure}
%
%
In terms of fibers,
an $\mathcal{S}$-table morphism
consists of
an $X$-sorted type domain morphism
$\mathcal{A}\xrightleftharpoons{{\langle{\mathrmit{1}_{X},g}\rangle}} \widetilde{\mathcal{A}}$
\underline{and} 
either a morphism 
$\mathcal{T}\xleftarrow{\;k\;}{\scriptstyle\sum}_{g}(\widetilde{\mathcal{T}})$
in the fiber context $\mathrmbf{Tbl}_{\mathcal{S}}(\mathcal{A})$ 
or a morphism 
${g}^{\ast}(\mathcal{T})\xleftarrow{\;\tilde{k}}\widetilde{\mathcal{T}}$
in the fiber context $\mathrmbf{Tbl}_{\mathcal{S}}(\widetilde{\mathcal{A}})$.
The 
${\langle{I,s,\mathcal{A}}\rangle}$-table morphism 
$\mathcal{T}\xleftarrow{\;k\;}{\scriptstyle\sum}_{g}(\widetilde{\mathcal{T}})$
is the composition (RHS of Fig.~\ref{fig:tbl:mor:medium:fixed:S}) 
of the fiber morphism
${\scriptstyle\sum}_{g}\Bigl({g}^{\ast}(\mathcal{T})\xleftarrow{\;\tilde{k}}\widetilde{\mathcal{T}}\Bigr)$
with the $\mathcal{T}^{\,\text{th}}$ counit component 
$\mathcal{T}\xleftarrow{e}{\scriptstyle\sum}_{g}\Bigl({g}^{\ast}(\mathcal{T})\Bigr)$
for fiber adjunction
${\bigl\langle{{\scriptstyle\sum}_{g}{\;\dashv\;}{g}^{\ast}}\bigr\rangle}
:\mathrmbf{Tbl}_{\mathcal{S}}(\mathcal{A}){\;\leftrightarrows\;}\mathrmbf{Tbl}_{\mathcal{S}}(\widetilde{\mathcal{A}})$.
%
\footnote{Here,
the span 
$K'\xleftarrow{\,e\,}{\widehat{K}}\xrightarrow{\,\hat{t}\,}\mathrmbfit{tup}_{\mathcal{S}}(\widetilde{\mathcal{A}})$
is the pullback in the context $\mathrmbf{Set}$ of the opspan
$K\xrightarrow{\,t\,}\mathrmbfit{tup}_{\mathcal{S}}(\mathcal{A})\xleftarrow{\,\mathrmbfit{tup}_{\mathcal{S}}(g)\,}\mathrmbfit{tup}_{\mathcal{S}}(\widetilde{\mathcal{A}})$ and
$\widehat{K}\xleftarrow{\,\tilde{k}\,}\widetilde{K}$
is the mediating morphism for the span
$K\xleftarrow{\,k\,}\widetilde{K}\xrightarrow{\,\tilde{t}\,}\mathrmbfit{tup}_{\mathcal{S}}(\widetilde{\mathcal{A}})$.}
%
\begin{equation}
{{\begin{picture}(120,10)(0,-4)
\put(60,0){\makebox(0,0){\footnotesize{$
\underset{\textstyle{\text{in}\;\mathrmbf{Tbl}_{\mathcal{S}}(\mathcal{A})}}
{\mathcal{T}\xleftarrow{\;k\;}{\scriptstyle\sum}_{g}(\widetilde{\mathcal{T}})}
{\;\;\;\;\;\;\;\;\rightleftarrows\;\;\;\;\;\;\;\;}
\underset{\textstyle{\text{in}\;\mathrmbf{Tbl}_{\mathcal{S}}(\widetilde{\mathcal{A}})}}
{{g}^{\ast}(\mathcal{T})\xleftarrow{\;\tilde{k}\,}\widetilde{\mathcal{T}}}
$}}}
\end{picture}}}
\end{equation}
This fiber adjunction 
(top part of Tbl.~\ref{tbl:tbl-rel:refl:sign:mor})
is a component of
the signed domain indexed adjunction of tables
$\mathrmbf{Cls}(X)^{\mathrm{op}}\xrightarrow{\;\mathrmbfit{tbl}_{\mathcal{S}}}\mathrmbf{Adj}
:\mathcal{A}\mapsto{\mathrmbf{Tbl}_{\mathcal{S}}(\mathcal{A})}$.
\begin{table}
\begin{center}
\begin{tabular}{c}
{\scriptsize\setlength{\extrarowheight}{2pt}$\begin{array}[b]{c}
\mathcal{A}\xrightleftharpoons{{\langle{\mathrmit{1}_{X},g}\rangle}}\widetilde{\mathcal{A}}
\\
\underset{=\,\mathrmbfit{tup}_{\mathcal{A}}(I,s)}
{\mathrmbfit{tup}_{\mathcal{S}}(\mathcal{A})}
\xleftarrow[{(\mbox{-})}{\cdot}{g}]{\mathrmbfit{tup}_{\mathcal{S}}(g)}
\underset{=\,\mathrmbfit{tup}_{\widetilde{\mathcal{A}}}(I,s)}
{\mathrmbfit{tup}_{\mathcal{S}}(\widetilde{\mathcal{A}})}
\rule[-10pt]{0pt}{10pt}
\end{array}$}
\\\\
{{\begin{tabular}[b]{c}
\setlength{\unitlength}{0.55pt}
\begin{picture}(320,220)(-40,-30)
\put(0,160){\makebox(0,0){\footnotesize{$
\underset{=\,\bigl(\mathrmbf{Set}{\,\downarrow\,}\mathrmbfit{tup}_{\mathcal{A}}(I,s)\bigr)}
{\mathrmbf{Tbl}_{\mathcal{S}}(\mathcal{A})}$}}}
\put(0,0){\makebox(0,0){\footnotesize{$
\underset{=\,{\wp}\mathrmbfit{tup}_{\mathcal{A}}(I,s)}
{\mathrmbf{Rel}_{\mathcal{S}}(\mathcal{A})}$}}}
\put(240,160){\makebox(0,0){\footnotesize{$
\underset{=\,\bigl(\mathrmbf{Set}{\,\downarrow\,}\mathrmbfit{tup}_{\widetilde{\mathcal{A}}}(I,s)\bigr)}
{\mathrmbf{Tbl}_{\mathcal{S}}(\widetilde{\mathcal{A}})}$}}}
\put(240,0){\makebox(0,0){\footnotesize{$
\underset{=\,{\wp}\mathrmbfit{tup}_{\widetilde{\mathcal{A}}}(I,s)}
{\mathrmbf{Rel}_{\mathcal{S}}(\widetilde{\mathcal{A}})}$}}}
%
\put(120,190){\makebox(0,0){\scriptsize{${\scriptstyle\sum}_{g}$}}}
\put(122,162){\makebox(0,0){\scriptsize{${g}^{\ast}$}}}
\put(120,130){\makebox(0,0){\scriptsize{${\scriptstyle\prod}_{g}$}}}
\put(120,30){\makebox(0,0){\scriptsize{$\exists_{g}$}}}
\put(122,2){\makebox(0,0){\scriptsize{${g}^{{\scriptscriptstyle-}1}$}}}
\put(120,-30){\makebox(0,0){\scriptsize{$\forall_{g}$}}}
\put(-15,80){\makebox(0,0)[r]{\scriptsize{$\mathrmbfit{im}_{\mathcal{A}}(I,s)$}}}
\put(16,80){\makebox(0,0)[l]{\scriptsize{$\mathrmbfit{inc}_{\mathcal{A}}(I,s)$}}}
\put(225,80){\makebox(0,0)[r]{\scriptsize{$\mathrmbfit{im}_{\widetilde{\mathcal{A}}}(I,s)$}}}
\put(256,80){\makebox(0,0)[l]{\scriptsize{$\mathrmbfit{inc}_{\widetilde{\mathcal{A}}}(I,s)$}}}
\put(170,180){\vector(-1,0){100}}
\qbezier(70,160)(85,160)(100,160)\qbezier(140,160)(155,160)(170,160)\put(170,160){\vector(1,0){0}}
\put(170,140){\vector(-1,0){100}}
\put(170,20){\vector(-1,0){100}}
\qbezier(70,0)(85,0)(100,0)\qbezier(140,0)(155,0)(170,0)\put(170,0){\vector(1,0){0}}
\put(170,-20){\vector(-1,0){100}}
\put(-10,125){\vector(0,-1){90}}
\put(10,35){\vector(0,1){90}}
\put(230,125){\vector(0,-1){90}}
\put(250,35){\vector(0,1){90}}
\end{picture}
\end{tabular}}}
\\\\
{\scriptsize\setlength{\extrarowheight}{2.5pt}$\begin{array}[b]{|l|}
\hline
{\scriptstyle\sum}_{g} \dashv {g}^{\ast} \dashv {\scriptstyle\prod}_{g},\; 
\exists_{g} \dashv {g}^{{\scriptscriptstyle-}1} \dashv \forall_{g} 
\\
\mathrmbfit{im}_{\mathcal{A}}(I,s) \dashv \mathrmbfit{inc}_{\mathcal{A}}(I,s),\; 
\mathrmbfit{im}_{\widetilde{\mathcal{A}}}(I,s) \dashv \mathrmbfit{inc}_{\widetilde{\mathcal{A}}}(I,s) 
\\ \hline
\mathrmbfit{inc}_{\mathcal{A}}(I,s){\;\circ\;}{g}^{\ast}	
 = {g}^{{\scriptscriptstyle-}1}{\;\circ\;}\mathrmbfit{inc}_{\widetilde{\mathcal{A}}}(I,s) 
\\
\mathrmbfit{inc}_{\widetilde{\mathcal{A}}}(I,s){\;\circ\;}{\scriptstyle\prod}_{g} 
 = \forall_{g}{\;\circ\;}\mathrmbfit{inc}_{\mathcal{A}}(I,s) 
\\ \hline
{\scriptstyle\sum}_{g}{\;\circ\;}\mathrmbfit{im}_{\mathcal{A}}(I,s)
 \cong \mathrmbfit{im}_{\widetilde{\mathcal{A}}}(I_{1},s_{1}){\;\circ\;}\exists_{g}
\\
{g}^{\ast}{\;\circ\;}\mathrmbfit{im}_{\widetilde{\mathcal{A}}}(I,s)
 \cong \mathrmbfit{im}_{\mathcal{A}}(I,s){\;\circ\;}{g}^{{\scriptscriptstyle-}1}
\\ \hline
\end{array}$}
\\ \\
{\textit{small fibers -- short distance}}
\end{tabular}
\end{center}
\caption{Reflection: Signature}
\label{tbl:tbl-rel:refl:sign:mor}
\end{table}
\begin{theorem}\label{thm:fib:cxt:tbl:S:cls:X}
The fibered context of $\mathcal{S}$-tables
$\mathrmbf{Tbl}(\mathcal{S})\xrightarrow{\mathrmbfit{data}_{\mathcal{S}}}\mathrmbf{Cls}(X)^{\mathrm{op}}$
is the Grothendieck construction $\int_{\mathrmbf{Cls}(X)}$ of 
the type domain indexed adjunction
$\mathrmbf{Cls}(X)^{\mathrm{op}}\xrightarrow{\;\mathrmbfit{tbl}_{\mathcal{S}}}\mathrmbf{Adj}$.$^{\ref{grothendieck}}$
\end{theorem}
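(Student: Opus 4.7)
The plan is to verify the two standard correspondences that characterize a Grothendieck construction: an object/morphism bijection between $\mathrmbf{Tbl}(\mathcal{S})$ and the oplax sum $\int_{\mathrmbf{Cls}(X)}\mathrmbfit{tbl}_{\mathcal{S}}$, and agreement of the projection $\mathrmbfit{data}_{\mathcal{S}}$ with the canonical projection of the oplax sum to $\mathrmbf{Cls}(X)^{\mathrm{op}}$. This mirrors Thms.~\ref{thm:fib:cxt:sign:set} and~\ref{thm:fib:cxt:cls:set}, so the proof is mostly a repackaging of the material already displayed in \S\ref{sub:sub:sec:tbl:sign:low}.

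For objects, a $\mathrmbf{Tbl}(\mathcal{S})$-object $\mathcal{T}={\langle{\mathcal{A},K,t}\rangle}$ decomposes uniquely into its base datum $\mathrmbfit{data}_{\mathcal{S}}(\mathcal{T})=\mathcal{A}\in\mathrmbf{Cls}(X)$ and the fiber datum ${\langle{K,t}\rangle}\in\mathrmbf{Tbl}_{\mathcal{S}}(\mathcal{A})$, matching the object description of the Grothendieck construction on the nose. For morphisms, I would invoke the defining naturality identity $k\cdot t = \widetilde{t}\cdot\mathrmbfit{tup}_{\mathcal{S}}(g)$: Fig.~\ref{fig:tbl:mor:medium:fixed:S} and the adjunction display that follows already show that this condition is equivalent to a vertical fiber morphism $\mathcal{T}\xleftarrow{k}{\scriptstyle\sum}_{g}(\widetilde{\mathcal{T}})$ in $\mathrmbf{Tbl}_{\mathcal{S}}(\mathcal{A})$ (equivalently, via the adjunction ${\scriptstyle\sum}_{g}\dashv g^{\ast}$, a vertical fiber morphism $g^{\ast}(\mathcal{T})\xleftarrow{\tilde{k}}\widetilde{\mathcal{T}}$ in $\mathrmbf{Tbl}_{\mathcal{S}}(\widetilde{\mathcal{A}})$) paired with the base morphism $\mathcal{A}\xrightleftharpoons{{\langle{\mathrmit{1}_{X},g}\rangle}}\widetilde{\mathcal{A}}$. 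This is precisely the ``base morphism plus vertical morphism'' packaging of an oplax-sum morphism, and composition of $\mathrmbf{Tbl}(\mathcal{S})$-morphisms agrees component-wise with composition in the oplax sum because the counit $e$ interposed in the decomposition is natural.

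The remaining step is pseudofunctoriality of $\mathrmbfit{tbl}_{\mathcal{S}}:\mathrmbf{Cls}(X)^{\mathrm{op}}\to\mathrmbf{Adj}$ on composable type domain morphisms $\mathcal{A}\xrightleftharpoons{{\langle{\mathrmit{1}_{X},g}\rangle}}\widetilde{\mathcal{A}}\xrightleftharpoons{{\langle{\mathrmit{1}_{X},\widetilde{g}}\rangle}}\widetilde{\widetilde{\mathcal{A}}}$. Because $g^{\ast}$ is defined by pullback along $\mathrmbfit{tup}_{\mathcal{S}}(g)={(\mbox{-})}\cdot g$ (whose functoriality in $g$ is strict), pullbacks compose canonically to give $(g\cdot\widetilde{g})^{\ast}\cong\widetilde{g}^{\ast}\circ g^{\ast}$, with the left adjoints ${\scriptstyle\sum}$ composing dually. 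These comparison isomorphisms supply the compositor data of the pseudo-passage, and both unit and counit coherences follow from the universal property of pullback (as already implicitly used in the factorization of Lem.~\ref{lem:tup:fn:fact}).

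I expect the main obstacle to be purely notational rather than conceptual: keeping straight which of the two adjoint presentations of the vertical part is in use at each stage, and verifying that the counit $e$ used to realize the source-fiber factorization transports naturally under composition of type domain morphisms. Once these bookkeeping items are in place, the equivalence with $\int_{\mathrmbf{Cls}(X)}\mathrmbfit{tbl}_{\mathcal{S}}$, and hence the promised identification of the fibered context $\mathrmbf{Tbl}(\mathcal{S})\xrightarrow{\mathrmbfit{data}_{\mathcal{S}}}\mathrmbf{Cls}(X)^{\mathrm{op}}$ as its Grothendieck construction, is immediate.
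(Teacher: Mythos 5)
Your proposal is correct and follows essentially the same route as the paper, which states this theorem without a separate proof and relies on exactly the material you cite: the comma-context definition of $\mathrmbf{Tbl}(\mathcal{S})$, the object decomposition into base type domain plus fiber table, and the adjoint equivalence of Fig.~\ref{fig:tbl:mor:medium:fixed:S} packaging a morphism as a base morphism ${\langle{\mathrmit{1}_{X},g}\rangle}$ together with a vertical fiber morphism. Your explicit check of pseudofunctoriality via composition of pullbacks is a detail the paper leaves implicit, but it is the intended justification and introduces nothing at variance with the paper's construction.
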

%

\subsubsection{Upper Aspect.}\label{sub:sub:sec:tbl:sign:up}

Here, we show that the context of tables $\mathrmbf{Tbl}$ is fibered over signatures
via the projection passage $\mathrmbf{Tbl}\xrightarrow{\mathrmbfit{sign}}\mathrmbf{List}^{\mathrm{op}}$.
We use the Grothendieck construction $\int_{\mathrmbf{List}}$
on the indexed context
$\mathrmbf{List}^{\mathrm{op}}\!\xrightarrow{\;\mathrmbfit{tbl}\;}\mathrmbf{Cxt}
:\mathcal{S}\mapsto\mathrmbf{Tbl}(\mathcal{S})$.
We use the same definitions as in \S~\ref{sub:sec:tbl}.
A $\mathrmbf{Tbl}$-object $\mathcal{T} = {\langle{\mathcal{S},\mathcal{A},K,t}\rangle}$,
called an table (database relation),
consists of 
a signature $\mathcal{S} = {\langle{I,s,X}\rangle} \in \mathrmbf{List}$ and
an $\mathcal{S}$-table ${\langle{\mathcal{A},K,t}\rangle}\in\mathrmbf{Tbl}(\mathcal{S})$.
A $\mathrmbf{Tbl}$-morphism
$\mathcal{T}_{2} = {\langle{I_{2},s_{2},\mathcal{A}_{2},K_{2},t_{2}}\rangle} 
\xleftarrow{{\langle{h,f,g,k}\rangle}} 
{\langle{I_{1},s_{1},\mathcal{A}_{1},K_{1},t_{1}}\rangle} = \mathcal{T}_{1}$,
called a table morphism
(see Fig.~\ref{fig:tbl:mor}),
consists of 
a signed domain morphism
${\langle{I_{2},s_{2},\mathcal{A}_{2}}\rangle}\xrightarrow{{\langle{h,f,g}\rangle}}{\langle{I_{1},s_{1},\mathcal{A}_{1}}\rangle}$
%
\footnote{A signed domain morphism factors into
a signature morphism
$\mathcal{S}_{2} = {\langle{I_{2},s_{2},X_{2}}\rangle} 
\xrightarrow{{\langle{h,f}\rangle}} 
{\langle{I_{1},s_{1},X_{1}}\rangle} = \mathcal{S}_{1}$ and
a type domain morphism 
$\mathcal{A}_{2} = {\langle{X_{2},Y_{2},\models_{\mathcal{A}_{2}}}\rangle} 
\xrightleftharpoons{{\langle{f,g}\rangle}} 
{\langle{X_{1},Y_{1},\models_{\mathcal{A}_{1}}}\rangle} = \mathcal{A}_{1}$
with common sort function $X_{2} \xrightarrow{f} X_{1}$.}
%
and a key function $K_{2}\xleftarrow{\;k\;}K_{1}$,
which satisfy the condition 
(using Lem.~\ref{lem:tup:fn:fact} in \S\ref{sub:sub:sec:tup:fn:fact}):
{\footnotesize{$k{\,\cdot\,}t_{2} 
= t_{1}{\,\cdot\,}\mathrmbfit{tup}(h,f,g)
= (t_{1}{\,\cdot\,}\tau_{{\langle{h,f}\rangle}}(\mathcal{A}_{1})){\,\cdot\,}\mathrmbfit{tup}_{\mathcal{S}_{2}}(g)$}}.
This gives an alternate, but equivalent, definition in terms of fibers.
%
%
\begin{lemma}\label{lem:tbl:fbr:fact:sign}
For any signed domain morphism
${\langle{I_{2},s_{2},\mathcal{A}_{2}}\rangle}\xrightarrow{{\langle{h,f,g}\rangle}}{\langle{I_{1},s_{1},\mathcal{A}_{1}}\rangle}$,
the tuple resolution 
{\footnotesize{$\mathrmbfit{tup}(h,f,g) = \tau_{{\langle{h,f}\rangle}}(\mathcal{A}_{1}){\;\cdot\;}\mathrmbfit{tup}_{\mathcal{S}_{2}}(g)$}}
(Lem.~\ref{lem:tup:fn:fact} in \S~\ref{sub:sub:sec:tup:fn:fact})
resolves the table fiber passage
\[\mbox{\footnotesize{
$
\mathrmbf{Tbl}(I_{2},s_{2},\mathcal{A}_{2})
{\;\xleftarrow{\;{\scriptscriptstyle\sum}_{{\langle{h,f,g}\rangle}}\;\;\;}\;}
\mathrmbf{Tbl}(I_{1},s_{1},\mathcal{A}_{1})$.
}\normalsize}\]
into the table fiber passage factorization in Fig.~\ref{fig:tbl:fbr:fact:sign}.
\end{lemma}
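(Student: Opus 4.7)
The plan is to lift the set-theoretic factorization of Lemma~\ref{lem:tup:fn:fact} to a passage-level factorization, using the fact that each table fiber is a slice $\mathrmbf{Tbl}(I,s,\mathcal{A})=(\mathrmbf{Set}{\,\downarrow\,}\mathrmbfit{tup}_{\mathcal{A}}(I,s))$ and that slice contexts behave functorially under post-composition along their codomain objects.

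First I would unpack ${\scriptstyle\sum}_{{\langle{h,f,g}\rangle}}$ explicitly: the Grothendieck construction of Theorem~\ref{thm:fib:cxt:tbl:dom} makes this left-adjoint fiber passage into post-composition with the tuple function $\mathrmbfit{tup}(h,f,g)$, sending $(K_{1},t_{1})\in\mathrmbf{Tbl}(I_{1},s_{1},\mathcal{A}_{1})$ to $(K_{1},\,t_{1}{\,\cdot\,}\mathrmbfit{tup}(h,f,g))\in\mathrmbf{Tbl}(I_{2},s_{2},\mathcal{A}_{2})$ and acting as the identity on key functions.

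Next I would substitute the factorization $\mathrmbfit{tup}(h,f,g)=\tau_{{\langle{h,f}\rangle}}(\mathcal{A}_{1}){\,\cdot\,}\mathrmbfit{tup}_{\mathcal{S}_{2}}(g)$ supplied by Lemma~\ref{lem:tup:fn:fact}, which routes through the intermediate tuple set $\mathrmbfit{tup}_{\mathcal{S}_{2}}(f^{-1}(\mathcal{A}_{1}))$. By strict associativity of composition in $\mathrmbf{Set}$, post-composition through an intermediate set splits as a composite of two post-composition passages, yielding the commuting diagram
\[
\mathrmbf{Tbl}(I_{1},s_{1},\mathcal{A}_{1}){\;\longrightarrow\;}\mathrmbf{Tbl}(I_{2},s_{2},f^{-1}(\mathcal{A}_{1})){\;\longrightarrow\;}\mathrmbf{Tbl}(I_{2},s_{2},\mathcal{A}_{2}),
\]
whose composite is ${\scriptstyle\sum}_{{\langle{h,f,g}\rangle}}$. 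The first factor is induced by the signature morphism ${\langle{h,f}\rangle}$ acting on the pulled-back type domain $f^{-1}(\mathcal{A}_{1})$, and the second is the fiber passage ${\scriptstyle\sum}_{g}$ of the $\mathcal{S}_{2}$-indexed adjunction from Theorem~\ref{thm:fib:cxt:tbl:S:cls:X}. This is precisely the factorization drawn in Figure~\ref{fig:tbl:fbr:fact:sign}.

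The main obstacle is bookkeeping: one must verify that the intermediate context $\mathrmbf{Tbl}(I_{2},s_{2},f^{-1}(\mathcal{A}_{1}))$ agrees under both its signature-indexed and its signed-domain-indexed descriptions, and that the two factor passages compose strictly (not merely up to canonical isomorphism) to ${\scriptstyle\sum}_{{\langle{h,f,g}\rangle}}$. Both issues reduce to strict associativity of function composition in $\mathrmbf{Set}$, so no further coherence calculation is required.
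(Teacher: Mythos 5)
Your proposal is correct and matches the argument the paper intends (the paper states the lemma without a written proof, leaving exactly this reasoning implicit): since each fiber $\mathrmbf{Tbl}(I,s,\mathcal{A})=(\mathrmbf{Set}{\,\downarrow\,}\mathrmbfit{tup}_{\mathcal{A}}(I,s))$ is a slice and ${\scriptstyle\sum}_{{\langle{h,f,g}\rangle}}$ is post-composition with $\mathrmbfit{tup}(h,f,g)$, substituting the factorization of Lem.~\ref{lem:tup:fn:fact} and using associativity of composition in $\mathrmbf{Set}$ splits it into ${\scriptstyle\sum}_{\tau_{{\langle{h,f}\rangle}}(\mathcal{A}_{1})}$ followed by ${\scriptstyle\sum}_{\mathrmbfit{tup}_{\mathcal{S}_{2}}(g)}$ through the intermediate fiber $\mathrmbf{Tbl}_{\mathcal{S}_{2}}(f^{-1}(\mathcal{A}_{1}))$, as in Fig.~\ref{fig:tbl:fbr:fact:sign}. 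No gaps.
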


\begin{figure}
\begin{center}
{{\begin{tabular}{c}
\setlength{\unitlength}{0.54pt}
\begin{picture}(240,210)(0,-50)
\put(0,160){\makebox(0,0){\footnotesize{$
\underset{=\,(\mathrmbf{Set}{\,\downarrow\,}\mathrmbfit{tup}_{\mathcal{A}_{2}}(\mathcal{S}_{2}))}
{\mathrmbf{Tbl}_{\mathcal{S}_{2}}(\mathcal{A}_{2})}$}}}
\put(0,0){\makebox(0,0){\footnotesize{$
\underset{=\,(\mathrmbf{Set}{\,\downarrow\,}\mathrmbfit{tup}_{{f}^{-1}(\mathcal{A}_{1})}(\mathcal{S}_{2}))}
{\mathrmbf{Tbl}_{\mathcal{S}_{2}}({f}^{-1}(\mathcal{A}_{1}))}$}}}
\put(240,0){\makebox(0,0){\footnotesize{$
\underset{=\,(\mathrmbf{Set}{\,\downarrow\,}\mathrmbfit{tup}_{\mathcal{A}_{1}}(\mathcal{S}_{1}))}
{\mathrmbf{Tbl}_{\mathcal{S}_{1}}(\mathcal{A}_{1})}$}}}
\put(130,-8){\makebox(0,0){\scriptsize{${\scriptstyle\sum}_{\tau_{{\langle{h,f}\rangle}}(\mathcal{A}_{1})}$}}}
\put(-6,80){\makebox(0,0)[r]{\scriptsize{${\scriptstyle\sum}_{\mathrmbfit{tup}_{\mathcal{S}_{2}}(g)}$}}}
\put(130,90){\makebox(0,0)[l]{\scriptsize{${\scriptstyle\sum}_{{\langle{h,f,g}\rangle}}$}}}
\put(180,8){\vector(-1,0){100}}
\put(0,35){\vector(0,1){90}}
\put(194,28){\vector(-3,2){160}}
\put(360,100){\makebox(0,0)[l]{\footnotesize{$\underset{\textit{{sort function}}}
{X_{2}\xrightarrow{f}X_{1}}$}}}
\put(340,60){\makebox(0,0)[l]{\scriptsize{$\underset{X_{2}-\textit{{type domain morphism}}}
{\mathcal{A}_{2}\xrightarrow{g}{f}^{-1}(\mathcal{A}_{1})}$}}}
\put(120,-60){\makebox(0,0){\footnotesize{$\underset{\textit{{signature morphism}}}
{\mathcal{S}_{2}\xrightarrow{{\langle{h,f}\rangle}}\mathcal{S}_{1}}$}}}
\end{picture}
\end{tabular}}}
\end{center}
\caption{Table Fiber Passage Factorization}
\label{fig:tbl:fbr:fact:sign}
\end{figure}
\newpage
\mbox{}\newline
For any signature $\mathcal{S}={\langle{I,s,X}\rangle}$,
the fibered context of $\mathcal{S}$-tables $\mathrmbf{Tbl}(\mathcal{S})$
separates into the partition
$\mathrmbf{Tbl}(\mathcal{S})
= \coprod_{\!\!\!\!\!\!\underset{\in\mathrmbf{Cls}(X)}{\mathcal{A}}\!\!\!\!\!\!}\mathrmbf{Tbl}_{\mathcal{A}}(I,s)$.
For any signature morphism
${\mathcal{S}_{2}\xrightarrow{{\langle{h,f}\rangle}}\mathcal{S}_{1}}$,
we can sum the partitions of fibered passages as follows:
\begin{center}
{{\begin{tabular}{c}
\setlength{\unitlength}{0.54pt}
\begin{picture}(300,120)(0,-10)
\put(10,40){\makebox(0,0){\footnotesize{$
\underset{\coprod_{\!\!\!\!\!\!\underset{\in\mathrmbf{Cls}(X_{2})}{\mathcal{A}_{2}}\!\!\!\!\!\!\!\!}\mathrmbf{Tbl}_{\mathcal{S}_{2}}(\mathcal{A}_{2})}
{\underbrace{\mathrmbf{Tbl}(\mathcal{S}_{2})}}$}}}
\put(295,40){\makebox(0,0){\footnotesize{$
\underset{\coprod_{\!\!\!\!\!\!\underset{\in\mathrmbf{Cls}(X_{1})}{\mathcal{A}_{1}}\!\!\!\!\!\!\!\!}\mathrmbf{Tbl}_{\mathcal{S}_{1}}(\mathcal{A}_{1})}
{\underbrace{\mathrmbf{Tbl}(\mathcal{S}_{1})}}$}}}
\put(150,40){\makebox(0,0){\scriptsize{$
\xleftarrow
[\coprod_{\!\!\!\!\!\!\underset{\in\mathrmbf{Cls}(X_{1})}{\mathcal{A}_{1}}\!\!\!\!\!\!}{\scriptscriptstyle\sum}_{\tau_{{\langle{h,f}\rangle}}(\mathcal{A}_{1})}]
{\grave{\mathrmbfit{tbl}}_{{\langle{h,f}\rangle}}}$}}}
\end{picture}
\end{tabular}}}
\end{center}
\mbox{}\newline
The factorization in Fig.~\ref{fig:tbl:fbr:fact:sign},
suggests the following definition of table fiber passage,
where
the fiber passage
$\mathrmbf{Tbl}(\mathcal{S}_{2})\xleftarrow{\grave{\mathrmbfit{tbl}}_{{\langle{h,f}\rangle}}}\mathrmbf{Tbl}(\mathcal{S}_{1})$
is define in terms of the component tuple functions
{\footnotesize{$
\mathrmbfit{tup}_{\mathcal{S}_{2}}(f^{-1}(\mathcal{A}_{1}))
\xleftarrow[h{\,\cdot\,}{(\mbox{-})}]{\tau_{{\langle{h,f}\rangle}}(\mathcal{A}_{1})}
\mathrmbfit{tup}_{\mathcal{S}_{1}}(\mathcal{A}_{1})$}}
and the inverse image function
$\mathrmbf{Cls}(X_{2})\xleftarrow{\;f^{-1}\!}\mathrmbf{Cls}(X_{1})$.
\begin{definition}(table fiber passage)\label{def:fbr:pass:sign}
\begin{description}
\item[$\grave{\mathrmbfit{tbl}}_{{\langle{h,f}\rangle}}$:] 
An $\mathcal{S}_{1}$-table is mapped to an $\mathcal{S}_{1}$-table as follows:
\[\mbox{\footnotesize{$
{\langle{f^{-1}(\mathcal{A}_{1}),{\scriptstyle\sum}_{\tau_{{\langle{h,f}\rangle}}(\mathcal{A}_{1})}(K_{1},t_{1})}\rangle}
\stackrel{{\grave{\mathrmbfit{tbl}}_{{\langle{h,f}\rangle}}}}{\longmapsfrom}
{\langle{\mathcal{A}_{1},(K_{1},t_{1})}\rangle}
$}}\]
where 
${\langle{I_{2},s_{2},f^{-1}(\mathcal{A}_{1})}\rangle}$-tuple
${\scriptstyle\sum}_{\tau_{{\langle{h,f}\rangle}}(\mathcal{A}_{1})}(K_{1},t_{1})
= {\langle{K_{1},t_{1}{\cdot\,}\tau_{{\langle{h,f}\rangle}}(\mathcal{A}_{1})}\rangle}
 \in \mathrmbfit{tup}_{\mathcal{S}_{2}}(f^{-1}(\mathcal{A}_{1}))$
is the existential (direct) image of 
${\langle{I_{1},s_{1},\mathcal{A}_{1}}\rangle}$-tuple 
${\langle{K_{1},t_{1}}\rangle} \in \mathrmbfit{tup}_{\mathcal{S}_{1}}(\mathcal{A}_{1})$
along $\tau_{{\langle{h,f}\rangle}}(\mathcal{A}_{1})$.
A morphism of $\mathcal{S}_{1}$-tables
${\langle{\mathcal{A}_{1},K_{1},t_{1}}\rangle} 
\xleftarrow{\;{\langle{g,k}\rangle}\;}
{\langle{\widetilde{\mathcal{A}}_{1},\widetilde{K}_{1},\tilde{t}_{1}}\rangle}$
\newline
is mapped to the morphism of $\mathcal{S}_{2}$-tables
\newline\mbox{}\hfill
${\langle{f^{-1}(\mathcal{A}_{1}),{\scriptstyle\sum}_{\tau_{{\langle{h,f}\rangle}}(\mathcal{A}_{1})}(K_{1},t_{1})}\rangle}
\xleftarrow{\;{\langle{f^{-1}(g),k}\rangle}\;}
{\langle{f^{-1}(\widetilde{\mathcal{A}}_{1}),
{\scriptstyle\sum}_{\tau_{{\langle{h,f}\rangle}}(\widetilde{\mathcal{A}}_{1})}(\widetilde{K}_{1},\tilde{t}_{1})}\rangle}$.
\hfill\mbox{}
%
\end{description}
\end{definition}
%


%
\begin{figure}
\begin{center}
{{\begin{tabular}{c}
{\setlength{\unitlength}{0.45pt}\begin{picture}(430,200)(0,-95)
\put(0,80){\makebox(0,0){\footnotesize{$K_{2}$}}}
\put(120,80){\makebox(0,0){\footnotesize{$K_{1}$}}}
\put(240,80){\makebox(0,0){\footnotesize{$K_{1}$}}}
\put(430,80){\makebox(0,0){\footnotesize{$K_{1}$}}}
\put(60,-25){\makebox(0,0){\footnotesize{$\mathrmbfit{tup}_{\mathcal{S}_{2}}(\mathcal{A}_{2})$}}}
\put(240,-25){\makebox(0,0){\footnotesize{$\mathrmbfit{tup}_{\mathcal{S}_{2}}(
f^{-1}(\mathcal{A}_{1})
)$}}}
\put(430,-25){\makebox(0,0){\footnotesize{$\mathrmbfit{tup}_{\mathcal{S}_{1}}(\mathcal{A}_{1})$}}}
\put(30,-56){\makebox(0,0){\footnotesize{$\underset{\textstyle{\mathcal{T}_{2}}}{\underbrace{\rule{30pt}{0pt}}}$}}}
\put(240,-56){\makebox(0,0){\footnotesize{$\underset{\textstyle{
\grave{\mathrmbfit{tbl}}_{{\langle{h,f}\rangle}}(\mathcal{T}_{1})
}}{\underbrace{\rule{40pt}{0pt}}}$}}}
\put(430,-56){\makebox(0,0){\footnotesize{$\underset{\textstyle{\mathcal{T}_{1}}}{\underbrace{\rule{40pt}{0pt}}}$}}}
\put(15,26){\makebox(0,0)[r]{\scriptsize{$t_{2}$}}}
\put(102,26){\makebox(0,0)[l]{\scriptsize{$\widetilde{t}{\,\cdot\,}\mathrmbfit{tup}_{\mathcal{S}_{2}}(g)$}}}
\put(248,26){\makebox(0,0)[l]{\scriptsize{$\widetilde{t}=t_{1}{\,\cdot\,}\tau_{{\langle{h,f}\rangle}}(\mathcal{A}_{1})$}}}
\put(440,26){\makebox(0,0)[l]{\scriptsize{$t_{1}$}}}
\put(145,-10){\makebox(0,0){\scriptsize{$\mathrmbfit{tup}_{\mathcal{S}_{2}}(g)$}}}
\put(145,-35){\makebox(0,0){\scriptsize{${(\mbox{-})}{\,\cdot\,}g$}}}
\put(350,-10){\makebox(0,0){\scriptsize{$\tau_{{\langle{h,f}\rangle}}(\mathcal{A}_{1})$}}}
\put(350,-35){\makebox(0,0){\scriptsize{$h{\,\cdot\,}{(\mbox{-})}$}}}
\put(120,125){\makebox(0,0){\scriptsize{$k$}}}
\put(60,92){\makebox(0,0){\scriptsize{$k$}}}
\put(100,80){\vector(-1,0){80}}
\put(4,59){\vector(2,-3){46}}
\put(108,59){\vector(-2,-3){46}}
\put(180,80){\makebox(0,0){\normalsize{$=$}}}
\put(340,80){\makebox(0,0){\normalsize{$=$}}}
\put(240,60){\vector(0,-1){68}}
\put(160,-25){\vector(-1,0){45}}
\put(375,-25){\vector(-1,0){55}}
\put(430,60){\vector(0,-1){68}}
\put(10,100){\oval(20,20)[tl]}
\put(10,110){\line(1,0){220}}
\put(0,94){\vector(0,-1){0}}
\put(230,110){\line(1,0){190}}
\put(420,100){\oval(20,20)[tr]}
%
\put(150,-105){\makebox(0,0){\footnotesize{$\underset{\textstyle{\mathrmbf{Tbl}(\mathcal{S}_{2})}}{\underbrace{\rule{120pt}{0pt}}}$}}}
\put(435,-105){\makebox(0,0){\footnotesize{$\underset{\textstyle{\mathrmbf{Tbl}(\mathcal{S}_{1})}}{\underbrace{\rule{50pt}{0pt}}}$}}}
\end{picture}}
\end{tabular}}}
\end{center}
\caption{Table Morphism: Signature}
\label{fig:tbl:mor:large:sign}
\end{figure}
%
A table morphism (Fig.~\ref{fig:tbl:mor:large:sign})
consists of
signature morphism $\mathcal{S}_{2}\xrightarrow{{\langle{h,f}\rangle}}\mathcal{S}_{1}$
\underline{and} 
a morphism 
$\mathcal{T}_{2}\xleftarrow{{\langle{g,k}\rangle}}\grave{\mathrmbfit{tbl}}_{{\langle{h,f}\rangle}}(\mathcal{T}_{1})$
in the fiber context $\mathrmbf{Tbl}(S_{2})$,
where the fiber passage
$\mathrmbf{Tbl}(\mathcal{S}_{2})\xleftarrow{\grave{\mathrmbfit{tbl}}_{{\langle{h,f}\rangle}}}\mathrmbf{Tbl}(\mathcal{S}_{1})$
along the
signature morphism $\mathcal{S}_{2}\xrightarrow{{\langle{h,f}\rangle}}\mathcal{S}_{1}$
is defined in Def.~\ref{def:fbr:pass:sign}
of \S\,\ref{sub:sub:sec:tbl:sign:up}.
This fiber passage is a component of the signature indexed context of tables 
$\mathrmbf{List}^{\mathrm{op}}\!\xrightarrow{\;\mathrmbfit{tbl}\;}\mathrmbf{Cxt}$.
\begin{theorem}\label{thm:fib:cxt:tbl:sign}
The fibered context of tables (an opfibration)
$\mathrmbf{Tbl}\xrightarrow{\;\mathrmbfit{sign}\;}\mathrmbf{List}^{\mathrm{op}}$
is the Grothendieck construction 
$\int_{\mathrmbf{List}}$ 
--- visualized in the upper-left quadrant of Fig.~\ref{fig:tbl:gro:constr} ---
of the signature indexed context of tables 
$\mathrmbf{List}^{\mathrm{op}}\!\xrightarrow{\;\grave{\mathrmbfit{tbl}}\;}\mathrmbf{Cxt}$.
$^{\ref{grothendieck}}$
\end{theorem}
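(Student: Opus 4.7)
My plan is to invoke the Grothendieck equivalence recorded in footnote~\ref{grothendieck}: it suffices to exhibit $\grave{\mathrmbfit{tbl}}$ as a genuine pseudo-passage $\mathrmbf{List}^{\mathrm{op}} \to \mathrmbf{Cxt}$ and to match the objects, morphisms and projection of $\mathrmbf{Tbl}$ with those of the oplax sum $\int_{\mathrmbf{List}} \grave{\mathrmbfit{tbl}}$. Since $\grave{\mathrmbfit{tbl}}_{\langle h,f\rangle}$ is built in Def.~\ref{def:fbr:pass:sign} as existential image ${\scriptstyle\sum}_{\tau_{\langle h,f\rangle}(-)}$ along the signature tuple bridge composed with the inverse image passage $f^{-1}$ on type domains, pseudo-functoriality reduces to the pseudo-functoriality of $f\mapsto f^{-1}$ (implicit in Thm.~\ref{thm:fib:cxt:cls:set}) together with the compositionality of $\tau$ (equation~(\ref{sign:tup:bridge})); both reflect the associativity and unit laws for composition with $h\cdot(-)$ and $(-)\cdot g$.

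Next I would unpack the oplax sum. Its objects are pairs $(\mathcal{S},\mathcal{T})$ with $\mathcal{S}\in\mathrmbf{List}$ and $\mathcal{T}\in\mathrmbf{Tbl}(\mathcal{S})$; by \S\ref{sub:sub:sec:tbl:sign:low} these unpack to tables ${\langle\mathcal{S},\mathcal{A},K,t\rangle}$, exactly the objects of $\mathrmbf{Tbl}$. Its morphisms are pairs
\[
\bigl(\mathcal{S}_{2}\xrightarrow{\langle h,f\rangle}\mathcal{S}_{1},\;\;
\mathcal{T}_{2}\xleftarrow{\langle g,k\rangle}\grave{\mathrmbfit{tbl}}_{\langle h,f\rangle}(\mathcal{T}_{1})\bigr),
\]
whose second component is an $\mathcal{S}_{2}$-table morphism. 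The crucial bijection with $\mathrmbf{Tbl}$-morphisms as defined in \S\ref{sub:sec:tbl} is governed by Lem.~\ref{lem:tbl:fbr:fact:sign}: the defining naturality condition $k\cdot t_{2}=t_{1}\cdot\mathrmbfit{tup}(h,f,g)$ splits, via the factorization $\mathrmbfit{tup}(h,f,g)=\tau_{\langle h,f\rangle}(\mathcal{A}_{1})\cdot\mathrmbfit{tup}_{\mathcal{S}_{2}}(g)$, into $k\cdot t_{2}=\bigl(t_{1}\cdot\tau_{\langle h,f\rangle}(\mathcal{A}_{1})\bigr)\cdot\mathrmbfit{tup}_{\mathcal{S}_{2}}(g)$, which is precisely the condition that ${\langle g,k\rangle}$ is a morphism from $\mathcal{T}_{2}$ to $\grave{\mathrmbfit{tbl}}_{\langle h,f\rangle}(\mathcal{T}_{1})$ in the fiber $\mathrmbf{Tbl}(\mathcal{S}_{2})$ (cf.\ Fig.~\ref{fig:tbl:mor:large:sign}). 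Identities correspond because an identity signature morphism forces identity substitutions and tuple bridges, and composable pairs concatenate componentwise under the pseudo-functor coherence witnessed above.

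The main obstacle will be the variance bookkeeping: $\mathrmbfit{sign}$ is an opfibration while $\mathrmbfit{data}$ is a fibration, and the $(g,k)$ component of a table morphism reverses direction on the type domain and key relative to the covariant $(h,f)$ on signatures. Care must be taken to align these opposite orientations with the $\mathrmbf{List}^{\mathrm{op}}$ indexing of $\grave{\mathrmbfit{tbl}}$, so that the fiber morphism lives on the side demanded by the Grothendieck construction. Once this is threaded through, the theorem follows immediately from the equivalence $\int\colon[\mathrmbf{B}^{\mathrm{op}},\mathrmbf{Cxt}]\stackrel{\cong}{\longrightarrow}\mathrmbf{Fib}(\mathrmbf{B})$ cited in footnote~\ref{grothendieck}.
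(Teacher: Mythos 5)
Your proposal is correct and follows essentially the same route as the paper: the paper likewise justifies the theorem by the surrounding discussion in \S\ref{sub:sub:sec:tbl:sign:up}, using the tuple resolution of Lem.~\ref{lem:tbl:fbr:fact:sign} to split the naturality condition $k\cdot t_{2}=t_{1}\cdot\mathrmbfit{tup}(h,f,g)$ into a signature morphism together with a fiber morphism $\mathcal{T}_{2}\xleftarrow{\langle g,k\rangle}\grave{\mathrmbfit{tbl}}_{\langle h,f\rangle}(\mathcal{T}_{1})$ in $\mathrmbf{Tbl}(\mathcal{S}_{2})$, with $\grave{\mathrmbfit{tbl}}_{\langle h,f\rangle}$ given by Def.~\ref{def:fbr:pass:sign}, and then appealing to the Grothendieck equivalence of footnote~\ref{grothendieck}. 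Only take care with the direction of the fiber component (the arrow points \emph{from} $\grave{\mathrmbfit{tbl}}_{\langle h,f\rangle}(\mathcal{T}_{1})$ \emph{to} $\mathcal{T}_{2}$, as your display correctly shows, though your prose momentarily reverses it).
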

%

\subsection{Type Domain Indexing}\label{sub:sec:tbl:typ:dom}

In this section we show that the context of tables 
is a fibered context over type domains.
We first define the table fiber for fixed type domain.
We next move between table fibers along type domain morphisms.
Finally,
we invoke the Grothendieck construction indexed by type domains.

\subsubsection{Lower Aspect.}\label{sub:sub:sec:tbl:typ:dom:low}

Let $\mathcal{A}={\langle{X,Y,\models_{\mathcal{A}}}\rangle}$ be a fixed type domain.
For database tables,
the type domain $\mathcal{A}$
consists of a fixed sort set $X$ and a fixed $X$-indexed collection of data types $\{
\mathcal{A}_{x} = \mathrmbfit{ext}_{\mathcal{A}}(x) \mid x{\,\in\,}X \}$.
Here, we show that the context of $\mathcal{A}$-tables $\mathrmbf{Tbl}(\mathcal{A})$ is fibered over $X$-sorted signatures
$\mathrmbf{Tbl}(\mathcal{A})\xrightarrow{\mathrmbfit{sign}_{\mathcal{A}}}\mathrmbf{List}(X)^{\mathrm{op}}$.
We use the Grothendieck construction $\int_{\mathrmbf{List}(X)}$
on the indexed adjunction
$\mathrmbf{List}(X)^{\mathrm{op}}\xrightarrow{\;\mathrmbfit{tbl}_{\mathcal{A}}}\mathrmbf{Adj}
:{\langle{I,s}\rangle}\mapsto{\mathrmbf{Tbl}_{\mathcal{A}}(I,s)}$.

\paragraph{Fibered Context (medium-size).}

The fibered context of $\mathcal{A}$-tables 
%
\footnote{The context of $\mathcal{A}$-tables $\mathrmbf{Tbl}(\mathcal{A})$
corresponds to the context of tables $\mathrmbf{Tables}^{\pi}$
in (Spivak \cite{spivak:sd})
for a (fixed) datatype specification $U\xrightarrow{\pi}\mathrmbf{DT}$
with universe $U$ and set of datatypes $\mathrmbf{DT}$,
since a data-type specification is a special case of a type domain.
However, in \cite{spivak:sd} 
there is no connection between contexts of tables with different data-type specifications,
analogous to the fiber adjunction 
(Prop.~\ref{prop:typ:dom:tbl:fbr:adj} of \S~\ref{sub:sub:sec:tbl:typ:dom:up})
$\mathrmbf{Tbl}(\mathcal{A}_{2})
\xleftarrow{{\langle{\acute{\mathrmbfit{tbl}}_{{\langle{f,g}\rangle}}\!\dashv\;\grave{\mathrmbfit{tbl}}_{{\langle{f,g}\rangle}}}\rangle}}
\mathrmbf{Tbl}(\mathcal{A}_{1})$
for type domain morphism $\mathcal{A}_{2}\xrightleftharpoons{{\langle{f,g}\rangle}}\mathcal{A}_{1}$.}
%
is the comma mathematical context
\[\mbox{\footnotesize{$
\mathrmbf{Tbl}(\mathcal{A}) 
= {\bigl(\mathrmbf{Set}{\,\downarrow\,}\mathrmbfit{tup}_{\mathcal{A}}\bigr)}
$}\normalsize}\]
associated with the type domain tuple passage
$\mathrmbf{List}(X)^{\mathrm{op}}\xrightarrow{\;\mathrmbfit{tup}_{\mathcal{A}}\;}\mathrmbf{Set}$
defined in \S~\ref{sub:sub:sec:tup:bridge:typ:dom}.
It has key and signature projection passages
$\mathrmbf{Set}
\xleftarrow{\mathrmbfit{key}_{\mathcal{A}}}
\mathrmbf{Tbl}(\mathcal{A})
\xrightarrow{\mathrmbfit{sign}_{\mathcal{A}}}
\mathrmbf{List}(X)^{\mathrm{op}}$
and defining bridge
$\mathrmbfit{key}_{\mathcal{A}}\xRightarrow{\,\tau_{\mathcal{A}}\;} 
\mathrmbfit{sign}_{\mathcal{A}}{\;\circ\;}\mathrmbfit{tup}_{\mathcal{A}}$. 
A $\mathrmbf{Tbl}(\mathcal{A})$-object $\mathcal{T} = {\langle{I,s,K,t}\rangle}$, 
called an $\mathcal{A}$-table,
consists of 
an indexing $X$-sorted signature $\mathrmbfit{sign}_{\mathcal{A}}(\mathcal{T}) = {\langle{I,s}\rangle}$, 
a key set $\mathrmbfit{key}_{\mathcal{A}}(\mathcal{T}) = K$ and 
a tuple function $K\xrightarrow[t]{\tau_{\mathcal{A}}(\mathcal{T})}\mathrmbfit{tup}_{\mathcal{A}}(I,s)$.
A $\mathrmbf{Tbl}(\mathcal{A})$-morphism 
$\mathcal{T}'={\langle{I',s',K',t'}\rangle}\xleftarrow{\langle{h,k}\rangle}{\langle{I,s,K,t}\rangle}=\mathcal{T}$
consists of 
an indexing $X$-sorted signature morphism 
${\langle{I',s'}\rangle}\xrightarrow[\,h\,]{\mathrmbfit{sign}_{\mathcal{A}}(h,k)}{\langle{I,s}\rangle}$ and 
a key function $K'\xleftarrow{\,k\,}K$
satisfying the naturality condition $k{\;\cdot\;}t'=t{\;\cdot\;}\mathrmbfit{tup}_{\mathcal{A}}(h)$.
%
The naturality condition gives two alternate and adjoint definitions.
\begin{figure}
\begin{center}
{{\begin{tabular}{@{\hspace{30pt}}r@{\hspace{30pt}}c@{\hspace{30pt}}l}
{{\begin{tabular}{c}
{\setlength{\unitlength}{0.45pt}\begin{picture}(240,180)(0,-55)
\put(0,80){\makebox(0,0){\footnotesize{$K'$}}}
\put(120,80){\makebox(0,0){\footnotesize{$K$}}}
\put(240,80){\makebox(0,0){\footnotesize{$K$}}}
\put(60,-25){\makebox(0,0){\footnotesize{$\mathrmbfit{tup}_{\mathcal{A}}(I',s')$}}}
\put(240,-25){\makebox(0,0){\footnotesize{$\mathrmbfit{tup}_{\mathcal{A}}(I,s)$}}}
\put(30,-56){\makebox(0,0){\footnotesize{$\underset{\textstyle{\mathcal{T}'}}{\underbrace{\rule{30pt}{0pt}}}$}}}
\put(105,-58){\makebox(0,0){\footnotesize{$\underset{\textstyle{{\scriptstyle\sum}_{{h}}(\mathcal{T})}}{\underbrace{\rule{30pt}{0pt}}}$}}}
\put(240,-56){\makebox(0,0){\footnotesize{$\underset{\textstyle{\mathcal{T}}}{\underbrace{\rule{40pt}{0pt}}}$}}}
\put(5,26){\makebox(0,0)[l]{\scriptsize{$t'$}}}
\put(102,26){\makebox(0,0)[l]{\scriptsize{$t{\,\cdot\,}\mathrmbfit{tup}_{\mathcal{A}}(h)$}}}
\put(248,26){\makebox(0,0)[l]{\scriptsize{$t$}}}
\put(160,-10){\makebox(0,0){\scriptsize{$\mathrmbfit{tup}_{\mathcal{A}}(h)$}}}
\put(120,125){\makebox(0,0){\scriptsize{$k$}}}
\put(60,92){\makebox(0,0){\scriptsize{$k$}}}
\put(100,80){\vector(-1,0){80}}
\put(4,59){\vector(2,-3){46}}
\put(108,59){\vector(-2,-3){46}}
\put(190,80){\makebox(0,0){\normalsize{$=$}}}
\put(240,60){\vector(0,-1){68}}
\put(185,-25){\vector(-1,0){65}}
\put(10,100){\oval(20,20)[tl]}
\put(10,110){\line(1,0){220}}
\put(230,100){\oval(20,20)[tr]}
\put(0,94){\vector(0,-1){0}}
\put(60,-98){\makebox(0,0){\footnotesize{$\underset{\textstyle{\mathrmbf{Tbl}_{\mathcal{A}}(I',s')}}{\underbrace{\rule{60pt}{0pt}}}$}}}
\end{picture}}
\end{tabular}}}
${\;\;\;\;\;\;\;\;\;\;\cong\;\;\;\;\;\;\;\;\;\;}$
{{\begin{tabular}{c}
{\setlength{\unitlength}{0.45pt}\begin{picture}(240,180)(-120,-55)
\put(-120,80){\makebox(0,0){\footnotesize{$K'$}}}
\put(0,80){\makebox(0,0){\footnotesize{$\widehat{K}$}}}
\put(120,80){\makebox(0,0){\footnotesize{$K$}}}
\put(60,-25){\makebox(0,0){\footnotesize{$\mathrmbfit{tup}_{\mathcal{A}}(I,s)$}}}
\put(-120,-25){\makebox(0,0){\footnotesize{$\mathrmbfit{tup}_{\mathcal{A}}(I',s')$}}}
\put(-120,-56){\makebox(0,0){\footnotesize{$\underset{\textstyle{\mathcal{T}'}}{\underbrace{\rule{40pt}{0pt}}}$}}}
\put(30,-60){\makebox(0,0){\footnotesize{$\underset{\textstyle{{h}^{\ast}(\mathcal{T}')}}{\underbrace{\rule{30pt}{0pt}}}$}}}
\put(100,-56){\makebox(0,0){\footnotesize{$\underset{\textstyle{\mathcal{T}}}{\underbrace{\rule{30pt}{0pt}}}$}}}
\put(-60,92){\makebox(0,0){\scriptsize{$e$}}}
\put(15,26){\makebox(0,0)[r]{\scriptsize{$\hat{t}$}}}
\put(105,26){\makebox(0,0)[l]{\scriptsize{$t$}}}
\put(-125,26){\makebox(0,0)[r]{\scriptsize{$t'$}}}
\put(-25,-10){\makebox(0,0){\scriptsize{$\mathrmbfit{tup}_{\mathcal{A}}(h)$}}}
\put(60,92){\makebox(0,0){\scriptsize{$k'$}}}
\put(0,125){\makebox(0,0){\scriptsize{$k$}}}
\put(100,80){\vector(-1,0){80}}
\put(-25,80){\vector(-1,0){70}}
\put(4,59){\vector(2,-3){46}}
\put(116,59){\vector(-2,-3){46}}
\put(-120,60){\vector(0,-1){68}}
\put(5,-25){\vector(-1,0){65}}
\put(-110,100){\oval(20,20)[tl]}
\put(-110,110){\line(1,0){220}}
\put(110,100){\oval(20,20)[tr]}
\put(-120,94){\vector(0,-1){0}}
\qbezier(-100,15)(-92,15)(-84,15)
\qbezier(-84,-1)(-84,7)(-84,15)
\put(60,-98){\makebox(0,0){\footnotesize{$\underset{\textstyle{\mathrmbf{Tbl}_{\mathcal{A}}(I,s)}}{\underbrace{\rule{60pt}{0pt}}}$}}}
\end{picture}}
\end{tabular}}}
\\&&
\end{tabular}}}
\end{center}
\caption{$\mathcal{A}$-Table Morphism}
\label{fig:tbl:mor:medium}
\end{figure}
In terms of fibers,
an $\mathcal{A}$-table morphism
(see Fig.~\ref{fig:tbl:mor:medium}) 
consists of
a $X$-signature morphism ${\langle{I',s'}\rangle}\xrightarrow{\,h\,}{\langle{I,s}\rangle}$ 
\underline{and} 
either a morphism 
$\mathcal{T}'\xleftarrow{\;k\;}{\scriptstyle\sum}_{h}(\mathcal{T})$
in the fiber context $\mathrmbf{Tbl}_{\mathcal{A}}(I',s')$ 
or a morphism 
${h}^{\ast}(\mathcal{T}')\xleftarrow{\;k'}\mathcal{T}$
in the fiber context $\mathrmbf{Tbl}_{\mathcal{A}}(I,s)$.
The ${\langle{I',s',\mathcal{A}}\rangle}$-table morphism 
$\mathcal{T}'\xleftarrow{\;k\;}{\scriptstyle\sum}_{h}(\mathcal{T})$
is the composition (RHS of Fig.~\ref{fig:tbl:mor:medium}) 
of the fiber morphism
${\scriptstyle\sum}_{h}\Bigl({h}^{\ast}(\mathcal{T}')\xleftarrow{k'}\mathcal{T}\Bigr)$
with the $\mathcal{T}'^{\,\text{th}}$ counit component 
$\mathcal{T}'\xleftarrow{e}{\scriptstyle\sum}_{h}\Bigl({h}^{\ast}(\mathcal{T}')\Bigr)$
for the fiber adjunction
$\mathrmbf{Tbl}_{\mathcal{A}}(I',s')
{\;\xleftarrow{{\bigl\langle{{\scriptscriptstyle\sum}_{h}{\;\dashv\;}{h}^{\ast}}\bigr\rangle}}\;}
\mathrmbf{Tbl}_{\mathcal{A}}(I,s)$.
\begin{equation}\label{def:tbl:cxt}
{{\begin{picture}(120,10)(0,-4)
\put(60,0){\makebox(0,0){\footnotesize{$
\underset{\textstyle{\text{in}\;\mathrmbf{Tbl}_{\mathcal{A}}(I',s')}}
{\mathcal{T}'\xleftarrow{\;k\;}{\scriptstyle\sum}_{h}(\mathcal{T})}
{\;\;\;\;\;\;\;\;\rightleftarrows\;\;\;\;\;\;\;\;}
\underset{\textstyle{\text{in}\;\mathrmbf{Tbl}_{\mathcal{A}}(I,s)}}
{{h}^{\ast}(\mathcal{T}')\xleftarrow{\;k'\,}\mathcal{T}}
$}}}
\end{picture}}}
\end{equation}
This fiber adjunction 
(top part of Tbl.~\ref{tbl:tbl-rel:refl:sml:shrt})
is a component of
the $X$-signature indexed adjunction of tables
$\mathrmbf{List}(X)^{\mathrm{op}}\xrightarrow{\;\mathrmbfit{tbl}_{\mathcal{A}}}\mathrmbf{Adj}
:{\langle{I,s}\rangle}\mapsto{\mathrmbf{Tbl}_{\mathcal{A}}(I,s)}$.
%
\footnote{Here,
the span 
$K'\xleftarrow{\,e\,}{\widehat{K}}\xrightarrow{\,\hat{t}\,}\mathrmbfit{tup}_{\mathcal{A}}(I,s)$
is the pullback in the context $\mathrmbf{Set}$ of the opspan
$K'\xrightarrow{\,t'\,}\mathrmbfit{tup}_{\mathcal{A}}(I',s')\xleftarrow{\,\mathrmbfit{tup}_{\mathcal{A}}(h)\,}\mathrmbfit{tup}_{\mathcal{A}}(I,s)$ and
$\widehat{K}\xleftarrow{\,k'\,}K$
is the mediating morphism for the span
$K'\xleftarrow{\,k\,}K\xrightarrow{\,t\,}\mathrmbfit{tup}_{\mathcal{A}}(I,s)$.}
%
\begin{table}
\begin{center}
\begin{tabular}{@{\hspace{5pt}}c@{\hspace{30pt}}c}
\\ 
{{\begin{tabular}[b]{c}
\setlength{\unitlength}{0.48pt}
\begin{picture}(320,220)(-40,-30)
\put(0,160){\makebox(0,0){\footnotesize{$
\underset{=(\mathrmbf{Set}{\,\downarrow\,}\mathrmbfit{tup}_{\mathcal{A}}(I',s'))}
{\mathrmbf{Tbl}_{\mathcal{A}}(I',s')}$}}}
\put(0,0){\makebox(0,0){\footnotesize{$
\underset{={\wp}\mathrmbfit{tup}_{\mathcal{A}}(I',s')}
{\mathrmbf{Rel}_{\mathcal{A}}(I',s')}$}}}
\put(240,160){\makebox(0,0){\footnotesize{$
\underset{=(\mathrmbf{Set}{\,\downarrow\,}\mathrmbfit{tup}_{\mathcal{A}}(I,s))}
{\mathrmbf{Tbl}_{\mathcal{A}}(I,s)}$}}}
\put(240,0){\makebox(0,0){\footnotesize{$
\underset{={\wp}\mathrmbfit{tup}_{\mathcal{A}}(I,s)}
{\mathrmbf{Rel}_{\mathcal{A}}(I,s)}$}}}
\put(120,190){\makebox(0,0){\scriptsize{${\scriptstyle\sum}_{{h}}$}}}
\put(122,162){\makebox(0,0){\scriptsize{${h}^{\ast}$}}}
\put(120,130){\makebox(0,0){\scriptsize{${\scriptstyle\prod}_{{h}}$}}}
\put(120,30){\makebox(0,0){\scriptsize{$\exists_{h}$}}}
\put(122,2){\makebox(0,0){\scriptsize{${h}^{{\scriptscriptstyle-}1}$}}}
\put(120,-30){\makebox(0,0){\scriptsize{$\forall_{h}$}}}
\put(-15,80){\makebox(0,0)[r]{\scriptsize{$\mathrmbfit{im}^{\mathcal{A}}_{(I',s')}$}}}
\put(16,80){\makebox(0,0)[l]{\scriptsize{$\mathrmbfit{inc}^{\mathcal{A}}_{(I',s')}$}}}
\put(225,80){\makebox(0,0)[r]{\scriptsize{$\mathrmbfit{im}^{\mathcal{A}}_{(I,s)}$}}}
\put(256,80){\makebox(0,0)[l]{\scriptsize{$\mathrmbfit{inc}^{\mathcal{A}}_{(I,s)}$}}}
\put(170,180){\vector(-1,0){100}}
\qbezier(70,160)(85,160)(100,160)\qbezier(140,160)(155,160)(170,160)\put(170,160){\vector(1,0){0}}
\put(170,140){\vector(-1,0){100}}
\put(170,20){\vector(-1,0){100}}
\qbezier(70,0)(85,0)(100,0)\qbezier(140,0)(155,0)(170,0)\put(170,0){\vector(1,0){0}}
\put(170,-20){\vector(-1,0){100}}
\put(-10,125){\vector(0,-1){90}}
\put(10,35){\vector(0,1){90}}
\put(230,125){\vector(0,-1){90}}
\put(250,35){\vector(0,1){90}}
\end{picture}
\end{tabular}}}
&
{{\begin{tabular}[b]{l}
{\scriptsize\setlength{\extrarowheight}{2pt}$\begin{array}[b]{l}
\mathrmbfit{tup}_{\mathcal{A}}(I',s')\xleftarrow{\mathrmbfit{tup}_{\mathcal{A}}(h)}\mathrmbfit{tup}_{\mathcal{A}}(I,s)
\rule[-10pt]{0pt}{10pt}
\end{array}$}
\\ 
{\scriptsize\setlength{\extrarowheight}{2pt}$\begin{array}[b]{|l|}
\hline
{\scriptstyle\sum}_{{h}} \dashv {h}^{\ast} \dashv {\scriptstyle\prod}_{{h}} 
\\
\exists_{{h}} \dashv {h}^{{\scriptscriptstyle-}1} \dashv \forall_{{h}}
\\
\mathrmbfit{im}_{\mathcal{A}}(I',s') \dashv \mathrmbfit{inc}_{\mathcal{A}}(I',s')
\\
\mathrmbfit{im}_{\mathcal{A}}(I,s) \dashv \mathrmbfit{inc}_{\mathcal{A}}(I,s) 
\\ \hline
\mathrmbfit{inc}^{\mathcal{A}}_{(I',s')}{\;\circ\;}{h}^{\ast}	
 = {h}^{{\scriptscriptstyle-}1}{\;\circ\;}\mathrmbfit{inc}^{\mathcal{A}}_{(I,s)} 
\\
\mathrmbfit{inc}^{\mathcal{A}}_{(I,s)}{\;\circ\;}{\scriptstyle\prod}_{{h}}	
 = \forall_{h}{\;\circ\;}\mathrmbfit{inc}^{\mathcal{A}}_{(I',s')} 
\\ \hline
{\scriptstyle\sum}_{{h}}{\;\circ\;}\mathrmbfit{im}^{\mathcal{A}}_{(I',s')}
 \cong \mathrmbfit{im}^{\mathcal{A}}_{(I,s)}{\;\circ\;}\exists_{{h}}
\\
{h}^{\ast}{\;\circ\;}\mathrmbfit{im}^{\mathcal{A}}_{(I,s)}
 \cong \mathrmbfit{im}^{\mathcal{A}}_{(I',s')}{\;\circ\;}{h}^{{\scriptscriptstyle-}1}
\\ \hline
\end{array}$}
\\ \\
\end{tabular}}}
\\ \\
\multicolumn{2}{c}{\textit{small fibers -- short distance}}
\end{tabular}
\end{center}
\caption{Reflection: Type Domain}
\label{tbl:tbl-rel:refl:sml:shrt}
\end{table}
\begin{theorem}\label{thm:fib:cxt:tbl:A:list:X}
The fibered context of $\mathcal{A}$-tables
$\mathrmbf{Tbl}(\mathcal{A})\xrightarrow{\mathrmbfit{sign}_{\mathcal{A}}}\mathrmbf{List}(X)^{\mathrm{op}}$
is the Grothendieck construction $\int_{\mathrmbf{List}(X)}$ 
--- visualized in the lower-right quadrant of Fig.~\ref{fig:tbl:gro:constr} ---
of the $X$-signature indexed adjunction
$\mathrmbf{List}(X)^{\mathrm{op}}\xrightarrow{\;\mathrmbfit{tbl}_{\mathcal{A}}}\mathrmbf{Adj}$.$^{\ref{grothendieck}}$
\end{theorem}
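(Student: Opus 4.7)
The plan is to establish the Grothendieck correspondence by exhibiting, on the one hand, the decomposition of $\mathrmbf{Tbl}(\mathcal{A})$-objects and morphisms into a base-indexing part in $\mathrmbf{List}(X)$ together with a fiber part in some $\mathrmbf{Tbl}_{\mathcal{A}}(I,s)$, and, on the other hand, confirming that this decomposition is exactly the one produced by the Grothendieck construction applied to the $X$-signature indexed adjunction $\mathrmbfit{tbl}_{\mathcal{A}}$. This parallels the earlier theorems (Thm.~\ref{thm:fib:cxt:sign:set}, \ref{thm:fib:cxt:cls:set}, \ref{thm:fib:cxt:tbl:dom}, \ref{thm:fib:cxt:tbl:S:cls:X}) and rests on the same equivalence of 2-contexts cited in footnote~\ref{grothendieck}.

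First I would observe that a $\mathrmbf{Tbl}(\mathcal{A})$-object $\mathcal{T}={\langle{I,s,K,t}\rangle}$ canonically splits as the pair $\bigl(\mathrmbfit{sign}_{\mathcal{A}}(\mathcal{T}),{\langle{K,t}\rangle}\bigr)$ consisting of an $X$-signature ${\langle{I,s}\rangle}\in\mathrmbf{List}(X)$ together with an object ${\langle{K,t}\rangle}$ of the fiber context $\mathrmbf{Tbl}_{\mathcal{A}}(I,s)$. This is precisely the object description of $\int_{\mathrmbf{List}(X)}\mathrmbfit{tbl}_{\mathcal{A}}$.

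Next, for morphisms, I would invoke the adjoint reformulation displayed in equation~(\ref{def:tbl:cxt}): a $\mathrmbf{Tbl}(\mathcal{A})$-morphism $\mathcal{T}'\xleftarrow{{\langle{h,k}\rangle}}\mathcal{T}$ can equivalently be presented either as a base morphism $h$ in $\mathrmbf{List}(X)$ together with a fiber morphism $\mathcal{T}'\xleftarrow{k}{\scriptstyle\sum}_{h}(\mathcal{T})$ in $\mathrmbf{Tbl}_{\mathcal{A}}(I',s')$, or dually as a fiber morphism $h^{\ast}(\mathcal{T}')\xleftarrow{k'}\mathcal{T}$ in $\mathrmbf{Tbl}_{\mathcal{A}}(I,s)$. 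The first is the standard opfibered morphism presentation for the Grothendieck construction of an $\mathrmbf{Adj}$-valued indexed context (using left adjoints), while the second is the standard fibered presentation (using right adjoints). The fiber adjunction ${\langle{{\scriptstyle\sum}_{h}\dashv h^{\ast}}\rangle}$ (cf.~Tbl.~\ref{tbl:tbl-rel:refl:sml:shrt}) gives the bijection between the two, matching the Grothendieck data.

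Finally I would check that composition of $\mathrmbf{Tbl}(\mathcal{A})$-morphisms, defined componentwise (signature composition and key composition subject to the naturality constraint $k{\cdot}t' = t{\cdot}\mathrmbfit{tup}_{\mathcal{A}}(h)$), agrees with Grothendieck composition: given ${\langle{h_{1},k_{1}}\rangle}$ and ${\langle{h_{2},k_{2}}\rangle}$, one must chase through the counit of ${\langle{{\scriptstyle\sum}_{h_{2}}\dashv h_{2}^{\ast}}\rangle}$ to see that the fiber components compose as prescribed by the pseudofunctoriality of $\mathrmbfit{tbl}_{\mathcal{A}}$ (the functoriality $\mathrmbfit{tup}_{\mathcal{A}}(h_{1}{\cdot}h_{2}) = \mathrmbfit{tup}_{\mathcal{A}}(h_{2}){\cdot}\mathrmbfit{tup}_{\mathcal{A}}(h_{1})$ inherited from the tuple passage of \S~\ref{sub:sub:sec:tup:bridge:typ:dom}). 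Identities pose no difficulty. I expect the main obstacle to be precisely this coherence check — verifying that the unit/counit substitutions used in switching between the two morphism presentations are compatible with composition — but this is a routine consequence of the adjointness and the universal property of the pullback defining $h^{\ast}$. Once coherence is in hand, the equivalence of 2-contexts cited in footnote~\ref{grothendieck} yields the theorem.
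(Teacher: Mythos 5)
Your proposal is correct and follows essentially the same route as the paper: the paper's justification for this theorem is exactly the preceding construction in \S\ref{sub:sub:sec:tbl:typ:dom:low} — the splitting of a $\mathrmbf{Tbl}(\mathcal{A})$-object into an $X$-signature plus a fiber table, the adjoint reformulation of morphisms via Eqn.~(\ref{def:tbl:cxt}) and the fiber adjunction ${\langle{{\scriptstyle\sum}_{h}\dashv h^{\ast}}\rangle}$ of Tbl.~\ref{tbl:tbl-rel:refl:sml:shrt}, assembled into the indexed adjunction $\mathrmbfit{tbl}_{\mathcal{A}}$ and closed off by the general equivalence of footnote~\ref{grothendieck}. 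Your explicit coherence check for composition is a reasonable addition that the paper leaves implicit.
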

%

\subsubsection{Upper Aspect.}\label{sub:sub:sec:tbl:typ:dom:up}



Here, we show that the context of tables $\mathrmbf{Tbl}$ is fibered over type domains
via the projection passage $\mathrmbf{Tbl}\xrightarrow{\mathrmbfit{data}}\mathrmbf{Cls}^{\mathrm{op}}$.
We use the Grothendieck construction $\int_{\mathrmbf{Cls}}$
on the indexed adjunction
$\mathrmbf{Cls}^{\mathrm{op}}\xrightarrow{\!\mathrmbfit{tbl}\;}\mathrmbf{Adj}
:\mathcal{A}\mapsto\mathrmbf{Tbl}(\mathcal{A})$.
We use the same definitions as in \S~\ref{sub:sec:tbl}.
A $\mathrmbf{Tbl}$-object $\mathcal{T} = {\langle{I,s,\mathcal{A},K,t}\rangle}$,
called an table (database relation),
consists of 
a type domain $\mathcal{A} = {\langle{X,Y,\models_{\mathcal{A}}}\rangle}$ and
an $\mathcal{A}$-table ${\langle{I,s,K,t}\rangle}\in\mathrmbf{Tbl}(\mathcal{A})$.
%
A $\mathrmbf{Tbl}$-morphism
$\mathcal{T}_{2} = {\langle{I_{2},s_{2},\mathcal{A}_{2},K_{2},t_{2}}\rangle} 
\xleftarrow{{\langle{h,f,g,k}\rangle}} 
{\langle{I_{1},s_{1},\mathcal{A}_{1},K_{1},t_{1}}\rangle} = \mathcal{T}_{1}$,
called a table morphism
(see Fig.~\ref{fig:tbl:mor}),
consists of 
a signed domain morphism
${\langle{I_{2},s_{2},\mathcal{A}_{2}}\rangle}\xrightarrow{{\langle{h,f,g}\rangle}}{\langle{I_{1},s_{1},\mathcal{A}_{1}}\rangle}$
%
\footnote{A signed domain morphism factors into
a signature morphism
$\mathcal{S}_{2} = {\langle{I_{2},s_{2},X_{2}}\rangle} 
\xrightarrow{{\langle{h,f}\rangle}} 
{\langle{I_{1},s_{1},X_{1}}\rangle} = \mathcal{S}_{1}$ and
a type domain morphism 
$\mathcal{A}_{2} = {\langle{X_{2},Y_{2},\models_{\mathcal{A}_{2}}}\rangle} 
\xrightleftharpoons{{\langle{f,g}\rangle}} 
{\langle{X_{1},Y_{1},\models_{\mathcal{A}_{1}}}\rangle} = \mathcal{A}_{1}$
with common sort function $X_{2} \xrightarrow{f} X_{1}$.}
%
and a key function $K_{2}\xleftarrow{k}K_{1}$,
which satisfy the condition 
(using Lem.~\ref{lem:tup:fn:fact} in \S\ref{sub:sub:sec:tup:fn:fact}):
{\footnotesize{$k{\,\cdot\,}t_{2} = t_{1}{\,\cdot\,}\mathrmbfit{tup}(h,f,g)
=(t_{1}{\,\cdot\,}\acute{\tau}_{{\langle{f,g}\rangle}}(I_{1},s_{1})){\,\cdot\,}\mathrmbfit{tup}_{\mathcal{A}_{2}}(\hat{h})
=t_{1}{\,\cdot\,}\mathrmbfit{tup}_{\mathcal{A}_{1}}(h){\,\cdot\,}\grave{\tau}_{{\langle{f,g}\rangle}}(I_{2},s_{2})$}}.
This gives two alternate, but equivalent, definitions in terms of fibers.
%


\begin{lemma}\label{lem:tbl:fbr:fact:typ:dom}
For any signed domain morphism
${\langle{I_{2},s_{2},\mathcal{A}_{2}}\rangle}\xrightarrow{{\langle{h,f,g}\rangle}}{\langle{I_{1},s_{1},\mathcal{A}_{1}}\rangle}$,
the tuple resolution 
{\footnotesize{$
\mathrmbfit{tup}(h,f,g)
=\acute{\tau}_{{\langle{f,g}\rangle}}(I_{1},s_{1}){\,\cdot\,}\mathrmbfit{tup}_{\mathcal{A}_{2}}(\hat{h})
=\mathrmbfit{tup}_{\mathcal{A}_{1}}(h){\,\cdot\,}\grave{\tau}_{{\langle{f,g}\rangle}}(I_{2},s_{2})$
}}
(Lem.~\ref{lem:tup:fn:fact} in \S~\ref{sub:sub:sec:tup:fn:fact})
resolves the table fiber adjunction
\[\mbox{\footnotesize{
$
\mathrmbf{Tbl}(I_{2},s_{2},\mathcal{A}_{2})
{\;\xleftrightharpoons[\;\;\;\;{\langle{h,f,g}\rangle}^{\ast}]{\;{\scriptscriptstyle\sum}_{{\langle{h,f,g}\rangle}}\;\;\;}\;}
\mathrmbf{Tbl}(I_{1},s_{1},\mathcal{A}_{1})$.
}\normalsize}\]
into the table fiber adjunction factorization in Fig.~\ref{fig:tbl:fbr:fact:typ:dom}.
\end{lemma}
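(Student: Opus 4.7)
The plan is to lift the tuple function factorizations established in Lemma~\ref{lem:tup:fn:fact} (Fig.~\ref{fig:tup:fn:fact}) to factorizations of the induced slice adjunctions. The key observation is that, since $\mathrmbf{Tbl}(I,s,\mathcal{A})=(\mathrmbf{Set}\downarrow\mathrmbfit{tup}_{\mathcal{A}}(I,s))$ by definition, any set function $u:U\rightarrow V$ between two tuple sets canonically induces a slice adjunction ${\scriptstyle\sum}_{u}\dashv u^{\ast}$ between $(\mathrmbf{Set}\downarrow U)$ and $(\mathrmbf{Set}\downarrow V)$, with ${\scriptstyle\sum}_{u}$ given by post-composition with $u$ and $u^{\ast}$ by pullback along $u$. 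This assignment is pseudofunctorial in $u$: for composable functions $u_{1},u_{2}$ one has ${\scriptstyle\sum}_{u_{2}\cdot u_{1}}={\scriptstyle\sum}_{u_{2}}\circ{\scriptstyle\sum}_{u_{1}}$ and $(u_{2}\cdot u_{1})^{\ast}\cong u_{1}^{\ast}\circ u_{2}^{\ast}$. Consequently, any factorization of the tuple function $\mathrmbfit{tup}(h,f,g)$ produces a corresponding factorization of the table fiber adjunction ${\scriptstyle\sum}_{{\langle{h,f,g}\rangle}}\dashv{\langle{h,f,g}\rangle}^{\ast}$.

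First I would apply this principle to the levo factorization
\[
\mathrmbfit{tup}(h,f,g)=\acute{\tau}_{{\langle{f,g}\rangle}}(I_{1},s_{1}){\,\cdot\,}\mathrmbfit{tup}_{\mathcal{A}_{2}}(\hat{h}),
\]
which factors through the intermediate object $\mathrmbf{Tbl}_{\mathcal{A}_{2}}(f^{\ast}(I_{1},s_{1}))$. The first factor ${\scriptstyle\sum}_{\mathrmbfit{tup}_{\mathcal{A}_{2}}(\hat{h})}\dashv\mathrmbfit{tup}_{\mathcal{A}_{2}}(\hat{h})^{\ast}$ lives entirely inside a single type-domain fiber and is a component of the $X_{2}$-signature indexed adjunction of Theorem~\ref{thm:fib:cxt:tbl:A:list:X}; the second factor ${\scriptstyle\sum}_{\acute{\tau}_{{\langle{f,g}\rangle}}(I_{1},s_{1})}\dashv\acute{\tau}_{{\langle{f,g}\rangle}}(I_{1},s_{1})^{\ast}$ crosses between type-domain fibers using the levo tuple bridge. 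Then I would apply the principle symmetrically to the dextro factorization
\[
\mathrmbfit{tup}(h,f,g)=\mathrmbfit{tup}_{\mathcal{A}_{1}}(h){\,\cdot\,}\grave{\tau}_{{\langle{f,g}\rangle}}(I_{2},s_{2}),
\]
which factors instead through $\mathrmbf{Tbl}_{\mathcal{A}_{1}}({\scriptstyle\sum}_{f}(I_{2},s_{2}))$, again splitting into one intra-fiber factor (now a component of the $X_{1}$-signature indexed adjunction) and one cross-fiber factor (using the dextro tuple bridge).

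The main obstacle, more a matter of bookkeeping than of genuine depth, will be verifying that the composite units and counits coincide with those of the original adjunction ${\scriptstyle\sum}_{{\langle{h,f,g}\rangle}}\dashv{\langle{h,f,g}\rangle}^{\ast}$. This reduces to chasing the pullback/composition squares already shown in Fig.~\ref{fig:tup:fn:fact} (used in the proof of Lemma~\ref{lem:tup:fn:fact}) and then invoking the natural isomorphisms of Lemma~\ref{lem:nat:iso} to mediate between the levo and dextro presentations of the cross-fiber factor; the agreement of the two factorizations with the original adjunction then follows from uniqueness of adjoints up to unique isomorphism, producing the commutative diagram of Fig.~\ref{fig:tbl:fbr:fact:typ:dom}.
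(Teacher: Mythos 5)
Your proposal is correct and follows exactly the route the paper intends: the paper in fact states Lem.~\ref{lem:tbl:fbr:fact:typ:dom} without an explicit proof, treating it as an immediate consequence of the tuple function factorization (Lem.~\ref{lem:tup:fn:fact}) together with the fact that each fiber $\mathrmbf{Tbl}(I,s,\mathcal{A})=(\mathrmbf{Set}{\,\downarrow\,}\mathrmbfit{tup}_{\mathcal{A}}(I,s))$ is a slice over its tuple set, so that composition/pullback along tuple functions is (pseudo)functorial and the two factorizations of $\mathrmbfit{tup}(h,f,g)$ lift to the two paths around Fig.~\ref{fig:tbl:fbr:fact:typ:dom}. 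Your write-up supplies precisely the bookkeeping the paper leaves implicit, including the correct identification of the intra-fiber factors as components of the adjunctions of Thm.~\ref{thm:fib:cxt:tbl:A:list:X} and the reconciliation of the levo and dextro cross-fiber factors via Lem.~\ref{lem:nat:iso} and uniqueness of adjoints.
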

\begin{figure}
\begin{center}
{{\begin{tabular}{c}
\setlength{\unitlength}{0.54pt}
\begin{picture}(240,210)(0,-50)
\put(0,160){\makebox(0,0){\footnotesize{$
\underset{=\,(\mathrmbf{Set}{\,\downarrow\,}\mathrmbfit{tup}_{\mathcal{A}_{2}}(I_{2},s_{2}))}
{\mathrmbf{Tbl}_{\mathcal{A}_{2}}(I_{2},s_{2})}$}}}
\put(240,160){\makebox(0,0){\footnotesize{$
\underset{=\,(\mathrmbf{Set}{\,\downarrow\,}\mathrmbfit{tup}_{\mathcal{A}_{1}}({\scriptstyle\sum}_{f}(I_{2},s_{2})))}
{\mathrmbf{Tbl}_{\mathcal{A}_{1}}({\scriptscriptstyle\sum}_{f}(I_{2},s_{2}))}$}}}
\put(0,0){\makebox(0,0){\footnotesize{$
\underset{=\,(\mathrmbf{Set}{\,\downarrow\,}\mathrmbfit{tup}_{\mathcal{A}_{2}}({f}^{\ast}(I_{1},s_{1})))}
{\mathrmbf{Tbl}_{\mathcal{A}_{2}}({f}^{\ast}(I_{1},s_{1}))}$}}}
\put(240,0){\makebox(0,0){\footnotesize{$
\underset{=\,(\mathrmbf{Set}{\,\downarrow\,}\mathrmbfit{tup}_{\mathcal{A}_{1}}(I_{1},s_{1}))}
{\mathrmbf{Tbl}_{\mathcal{A}_{1}}(I_{1},s_{1})}$}}}
\put(120,170){\makebox(0,0){\scriptsize{${\grave{\tau}_{{\langle{f,g}\rangle}}(I_{2},s_{2})}^{\ast}$}}}
\put(130,-15){\makebox(0,0){\scriptsize{${\scriptstyle\sum}_{\acute{\tau}_{{\langle{f,g}\rangle}}(I_{1},s_{1})}$}}}
\put(-6,80){\makebox(0,0)[r]{\scriptsize{${\scriptstyle\sum}_{\mathrmbfit{tup}_{\mathcal{A}_{2}}(\hat{h})}$}}}
\put(246,80){\makebox(0,0)[l]{\scriptsize{${\mathrmbfit{tup}_{\mathcal{A}_{1}}(h)}^{\ast}$}}}
\put(120,60){\makebox(0,0)[r]{\scriptsize{${\scriptstyle\sum}_{{\langle{h,f,g}\rangle}}$}}}
\put(124,104){\makebox(0,0)[l]{\scriptsize{${{\langle{h,f,g}\rangle}}^{\ast}$}}}
\put(70,160){\vector(1,0){100}}
\put(180,0){\vector(-1,0){100}}
\put(0,35){\vector(0,1){90}}
\put(240,125){\vector(0,-1){90}}
\put(60,135){\vector(3,-2){160}}
\put(180,25){\vector(-3,2){160}}
\put(360,100){\makebox(0,0)[l]{\footnotesize{$\underset{\textit{{sort function}}}
{X_{2}\xrightarrow{f}X_{1}}$}}}
\put(360,60){\makebox(0,0)[l]{\scriptsize{$\underset{X_{1}-\textit{{signature morphism}}}
{{\scriptstyle\sum}_{f}(I_{2},s_{2})\xrightarrow{h}{\langle{I_{1},s_{1}}\rangle}}$}}}
\put(360,20){\makebox(0,0)[l]{\scriptsize{$\underset{X_{2}-\textit{{signature morphism}}}
{{\langle{I_{2},s_{2}}\rangle}\xrightarrow{\hat{h}}{f}^{\ast}(I_{1},s_{1})}$}}}
\put(120,-60){\makebox(0,0){\footnotesize{$\underset{\textit{{type domain morphism}}}
{\mathcal{A}_{2}\xrightarrow{{\langle{f,g}\rangle}}\mathcal{A}_{1}}$}}}
\end{picture}
\end{tabular}}}
\end{center}
\caption{Table Fiber Adjunction Factorization}
\label{fig:tbl:fbr:fact:typ:dom}
\end{figure}
\mbox{}\newline
For any type domain $\mathcal{A}$,
the fibered context of $\mathcal{A}$-tables $\mathrmbf{Tbl}(\mathcal{A})$
separates into the partition
$\mathrmbf{Tbl}(\mathcal{A})
= \coprod_{\!\!\!\underset{\in\mathrmbf{List}(X)}{\langle{I,s}\rangle}\!\!\!}\mathrmbf{Tbl}_{\mathcal{A}}(I,s)$.
For any type domain morphism
${\mathcal{A}_{2}\xrightleftharpoons{{\langle{f,g}\rangle}}\mathcal{A}_{1}}$,
we can sum the partitions of fibered passages as follows:
\begin{center}
{{\begin{tabular}{c}
\setlength{\unitlength}{0.54pt}
\begin{picture}(300,120)(0,-10)
\put(0,40){\makebox(0,0){\footnotesize{$
\underset{\coprod_{\!\!\!\underset{\in\mathrmbf{List}(X_{2})}{\langle{I_{2},s_{2}}\rangle}
\!\!\!\!\!}\mathrmbf{Tbl}_{\mathcal{A}_{2}}(I_{2},s_{2})}{\underbrace{\mathrmbf{Tbl}(\mathcal{A}_{2})}}$}}}
\put(305,40){\makebox(0,0){\footnotesize{$
\underset{\coprod_{\!\!\!\underset{\in\mathrmbf{List}(X_{1})}
{\langle{I_{1},s_{1}}\rangle}\!\!\!\!\!}\mathrmbf{Tbl}_{\mathcal{A}_{1}}(I_{1},s_{1})}
{\underbrace{\mathrmbf{Tbl}(\mathcal{A}_{1})}}$}}}
\put(150,80){\makebox(0,0){\scriptsize{$
\xrightarrow
[\coprod_{\!\!\!\underset{\in\mathrmbf{List}(X_{2})}{\langle{I_{2},s_{2}}\rangle}\!\!\!}{\grave{\tau}_{{\langle{f,g}\rangle}}(I_{2},s_{2})}^{\ast}]
{\grave{\mathrmbfit{tbl}}_{{\langle{f,g}\rangle}}}$}}}
\put(150,15){\makebox(0,0){\scriptsize{$
\xleftarrow
[\coprod_{\!\!\!\underset{\in\mathrmbf{List}(X_{1})}{\langle{I_{1},s_{1}}\rangle}\!\!\!}{\scriptscriptstyle\sum}_{\acute{\tau}_{{\langle{f,g}\rangle}}(I_{1},s_{1})}]
{\acute{\mathrmbfit{tbl}}_{{\langle{f,g}\rangle}}}$}}}
\end{picture}
\end{tabular}}}
\end{center}
\mbox{}\newline
The factorization in Fig.~\ref{fig:tbl:fbr:fact:typ:dom},
suggests the following definitions of table fiber passages,
where
the fiber passage
$\mathrmbf{Tbl}(\mathcal{A}_{2})\xleftarrow{\acute{\mathrmbfit{tbl}}_{{\langle{f,g}\rangle}}}\mathrmbf{Tbl}(\mathcal{A}_{1})$
is define in terms of the tuple function
{\footnotesize{$
\mathrmbfit{tup}_{\mathcal{A}_{2}}(f^{\ast}(I_{1},s_{1}))
\xleftarrow[\hat{f} \cdot {(\mbox{-})} \cdot g]{\acute{\tau}_{{\langle{f,g}\rangle}}(I_{1},s_{1})} 
\mathrmbfit{tup}_{\mathcal{A}_{1}}(I_{1},s_{1})
$}}
%
and the substitution (inverse image, pullback) function
$\mathrmbf{List}(X_{2})\xleftarrow{f^{\ast}}\mathrmbf{List}(X_{1})$,
and
the adjoint fiber passage
$\mathrmbf{Tbl}(\mathcal{A}_{2})\xrightarrow{\grave{\mathrmbfit{tbl}}_{{\langle{f,g}\rangle}}}\mathrmbf{Tbl}(\mathcal{A}_{1})$
is define in terms the adjoints,
the tuple function
{\footnotesize{$
\mathrmbfit{tup}_{\mathcal{A}_{2}}(I_{2},s_{2})
\xleftarrow[{(\mbox{-})} \cdot g]{\grave{\tau}_{{\langle{f,g}\rangle}}(I_{2},s_{2})} 
\mathrmbfit{tup}_{\mathcal{A}_{1}}({\scriptstyle\sum}_{f}(I_{2},s_{2}))
$}}
%
and the existential quantifier (direct image) function
$\mathrmbf{List}(X_{2})\xrightarrow{{\scriptscriptstyle\sum}_{f}}\mathrmbf{List}(X_{1})$.


%
\begin{definition}(adjoint table fiber passages)\label{def:fbr:pass:typ:dom}
\begin{description}
\item[$\acute{\mathrmbfit{tbl}}_{{\langle{f,g}\rangle}}$:] 
An $\mathcal{A}_{1}$-table is mapped to an $\mathcal{A}_{2}$-table as follows:
\[\mbox{\footnotesize{
${\langle{f^{\ast}(I_{1},s_{1}),{\scriptscriptstyle\sum}_{\acute{\tau}_{{\langle{f,g}\rangle}}(I_{1},s_{1})}(K_{1},t_{1})}\rangle}
\stackrel{\acute{\mathrmbfit{tbl}}_{{\langle{f,g}\rangle}}}{\mapsfrom}
{\langle{I_{1},s_{1},K_{1},t_{1}}\rangle}$,
}}\]
where 
${\langle{f^{\ast}(I_{1},s_{1}),\mathcal{A}_{2}}\rangle}$-tuple
${\scriptstyle\sum}_{\acute{\tau}_{{\langle{f,g}\rangle}}(I_{1},s_{1})}(K_{1},t_{1})
= {\langle{K_{1},t_{1}{\cdot\,}\acute{\tau}_{{\langle{f,g}\rangle}}(I_{1},s_{1})}\rangle}
 \in \mathrmbfit{tup}_{\mathcal{A}_{2}}(f^{\ast}(I_{1},s_{1}))$
is the existential (direct) image of 
${\langle{I_{1},s_{1},\mathcal{A}_{1}}\rangle}$-tuple 
${\langle{K_{1},t_{1}}\rangle} \in \mathrmbfit{tup}_{\mathcal{A}_{1}}(I_{1},s_{1})$
along $\acute{\tau}_{{\langle{f,g}\rangle}}(I_{1},s_{1})$.
A morphism of $\mathcal{A}_{1}$-tables
${\langle{h_{1},k_{1}}\rangle} : {\langle{I_{1},s_{1},K_{1},t_{1}}\rangle} \rightarrow {\langle{I'_{1},s'_{1},K'_{1},t'_{1}}\rangle}$
is mapped to the morphism of $\mathcal{A}_{2}$-tables
\newline
${\langle{f^{\ast}(h_{1}),k_{1}}\rangle} :
{\langle{f^{\ast}(I_{1},s_{1}),K_{1},t_{1}{\cdot\,}\acute{\tau}_{{\langle{f,g}\rangle}}(I_{1},s_{1})}\rangle}
\rightarrow
{\langle{f^{\ast}(I'_{1},s'_{1}),K'_{1},t'_{1}{\cdot\,}\acute{\tau}_{{\langle{f,g}\rangle}}(I'_{1},s'_{1})}\rangle}$.
\newline
\item[$\grave{\mathrmbfit{tbl}}_{{\langle{f,g}\rangle}}$:] 
An $\mathcal{A}_{2}$-table is mapped to an $\mathcal{A}_{1}$-table as follows: 
\[\mbox{\footnotesize{
${\langle{I_{2},s_{2},K_{2},t_{2}}\rangle}
\stackrel{\grave{\mathrmbfit{tbl}}_{{\langle{f,g}\rangle}}}{\mapsto}
{\langle{{\scriptstyle\sum}_{f}(I_{2},s_{2}),(\grave{\tau}_{{\langle{f,g}\rangle}}(I_{2},s_{2}))^{\ast}(K_{2},t_{2})}\rangle}$,
}}\]
where
${\langle{{\scriptstyle\sum}_{f}(I_{2},s_{2}),\mathcal{A}_{1}}\rangle}$-tuple
$(\grave{\tau}_{{\langle{f,g}\rangle}}(I_{2},s_{2}))^{\ast}(K_{2},t_{2}) = {\langle{\widehat{K}_{1},\hat{t}_{1}}\rangle}$
is the substitution (inverse image) of 
${\langle{I_{2},s_{2},\mathcal{A}_{2}}\rangle}$-tuple 
${\langle{K_{2},t_{2}}\rangle}$
along $\grave{\tau}_{{\langle{f,g}\rangle}}(I_{2},s_{2})$
(Fig.~\ref{fig:tbl:mor:large:typ:dom}).
A morphism of $\mathcal{A}_{2}$-tables
${\langle{h_{2},k_{2}}\rangle} : {\langle{I_{2},s_{2},K_{2},t_{2}}\rangle} \rightarrow {\langle{I'_{1},s'_{1},K'_{1},t'_{1}}\rangle}$
is mapped to the morphism of $\mathcal{A}_{1}$-tables
${\langle{{\scriptstyle\sum}_{f}(h_{2}),k_{1}}\rangle} : 
{\langle{{\scriptstyle\sum}_{f}(I_{2},s_{2}),\widehat{K}_{1},\hat{t}_{1}}\rangle} \rightarrow
{\langle{{\scriptstyle\sum}_{f}(I_{2}',s_{2}'),\widehat{K}_{1}',\hat{t}_{1}'}\rangle}$, 
where
$k_{1} : \widehat{K}_{1}\rightarrow\widehat{K}_{1}'$ is the unique mediating function for the span
$K_{2}'\xleftarrow{\hat{k}{\,\cdot\,}k_{2}}K_{1} 
\xrightarrow{\hat{t}_{1}{\,\cdot\,}\mathrmbfit{tup}_{\mathcal{A}_{1}}({\scriptstyle\sum}_{f}(h_{2}))} 
\mathrmbfit{tup}_{\mathcal{A}_{1}}({\scriptstyle\sum}_{f}(I_{2}',s_{2}'))$,
since
$(\hat{k}{\,\cdot\,}k_{2}) \cdot t_{2}'
= \hat{k}{\,\cdot\,}t_{2}{\,\cdot\,}\mathrmbfit{tup}_{\mathcal{A}_{2}}(h_{2})
= \hat{t}_{1}{\,\cdot\,}\grave{\tau}_{{\langle{f,g}\rangle}}(I_{2},s_{2}){\,\cdot\,}\mathrmbfit{tup}_{\mathcal{A}_{2}}(h_{2})
= (\hat{t}_{1}{\,\cdot\,}\mathrmbfit{tup}_{\mathcal{A}_{1}}({\scriptstyle\sum}_{f}(h_{2}))){\,\cdot\,}\grave{\tau}_{{\langle{f,g}\rangle}}(I_{2}',s_{2}')$.
\end{description}
\end{definition}
%


\begin{figure}
\begin{center}
{{\begin{tabular}{c@{\hspace{25pt}}c@{\hspace{25pt}}c}
%
{{\begin{tabular}{c}
{\setlength{\unitlength}{0.4pt}\begin{picture}(360,180)(0,-65)
\put(0,80){\makebox(0,0){\footnotesize{$K_{2}$}}}
\put(180,80){\makebox(0,0){\footnotesize{$K_{1}$}}}
\put(360,80){\makebox(0,0){\footnotesize{$K_{1}$}}}
\put(0,-25){\makebox(0,0){\footnotesize{$\mathrmbfit{tup}_{\mathcal{A}_{2}}(\mathcal{S}_{2})$}}}
\put(180,-25){\makebox(0,0){\footnotesize{$\mathrmbfit{tup}_{\mathcal{A}_{2}}(f^{\ast}(\mathcal{S}_{1}))$}}}
\put(360,-25){\makebox(0,0){\footnotesize{$\mathrmbfit{tup}_{\mathcal{A}_{1}}(\mathcal{S}_{1})$}}}
\put(0,-56){\makebox(0,0){\footnotesize{$\underset{\textstyle{\mathcal{T}_{2}}}{\underbrace{\rule{40pt}{0pt}}}$}}}
\put(180,-56){\makebox(0,0){\footnotesize{$\underset{\textstyle{\acute{\mathrmbfit{tbl}}_{{\langle{f,g}\rangle}}(\mathcal{T}_{1})}}{\underbrace{\rule{40pt}{0pt}}}$}}}
\put(360,-56){\makebox(0,0){\footnotesize{$\underset{\textstyle{\mathcal{T}_{1}}}{\underbrace{\rule{40pt}{0pt}}}$}}}
\put(-5,26){\makebox(0,0)[r]{\scriptsize{$t_{2}$}}}
\put(188,26){\makebox(0,0)[l]{\scriptsize{$t_{1}{\,\cdot\,}\acute{\tau}_{{\langle{f,g}\rangle}}(I_{1},s_{1})$}}}
\put(370,26){\makebox(0,0)[l]{\scriptsize{$t_{1}$}}}
\put(85,-8){\makebox(0,0){\scriptsize{$\mathrmbfit{tup}_{\mathcal{A}_{2}}\!(\hat{h})$}}}
\put(85,-38){\makebox(0,0){\scriptsize{${(\mbox{-})}{\,\cdot\,}\hat{h}$}}}
\put(290,-8){\makebox(0,0){\scriptsize{$\acute{\tau}_{{\langle{f,g}\rangle}}\!(I_{1},\!s_{1})$}}}
\put(275,-38){\makebox(0,0){\scriptsize{$\hat{f}{\,\cdot\,}{(\mbox{-})}{\,\cdot\,}g$}}}
\put(180,125){\makebox(0,0){\scriptsize{$k$}}}
\put(95,92){\makebox(0,0){\scriptsize{$k$}}}
\put(155,80){\vector(-1,0){130}}
\put(270,80){\makebox(0,0){\normalsize{$=$}}}
\put(0,60){\vector(0,-1){68}}
\put(180,60){\vector(0,-1){68}}
\put(100,-25){\vector(-1,0){40}}
\put(300,-25){\vector(-1,0){40}}
\put(360,60){\vector(0,-1){68}}
\put(10,100){\oval(20,20)[tl]}
\put(10,110){\line(1,0){160}}
\put(0,94){\vector(0,-1){0}}
\put(170,110){\line(1,0){180}}
\put(350,100){\oval(20,20)[tr]}
\put(90,-98){\makebox(0,0){\footnotesize{$\underset{\textstyle{\mathrmbf{Tbl}(\mathcal{A}_{2})}}{\underbrace{\rule{100pt}{0pt}}}$}}}
\put(360,-98){\makebox(0,0){\footnotesize{$\underset{\textstyle{\mathrmbf{Tbl}(\mathcal{A}_{1})}}{\underbrace{\rule{40pt}{0pt}}}$}}}
\end{picture}}
\end{tabular}}}
&
{{\begin{tabular}{c}
{\setlength{\unitlength}{0.4pt}\begin{picture}(0,180)(0,5)
\put(0,100){\makebox(0,0){\normalsize{$\cong$}}}
\end{picture}}
\end{tabular}}}
&
{{\begin{tabular}{c}
{\setlength{\unitlength}{0.4pt}\begin{picture}(360,180)(0,-65)
\put(0,80){\makebox(0,0){\footnotesize{$K_{2}$}}}
\put(180,80){\makebox(0,0){\footnotesize{$\widehat{K}_{1}$}}}
\put(360,80){\makebox(0,0){\footnotesize{$K_{1}$}}}
\put(0,-25){\makebox(0,0){\footnotesize{$\mathrmbfit{tup}_{\mathcal{A}_{2}}(\mathcal{S}_{2})$}}}
\put(180,-25){\makebox(0,0){\footnotesize{$\mathrmbfit{tup}_{\mathcal{A}_{1}}({\scriptstyle\sum}_{f}(\mathcal{S}_{2}))$}}}
\put(360,-25){\makebox(0,0){\footnotesize{$\mathrmbfit{tup}_{\mathcal{A}_{1}}(\mathcal{S}_{1})$}}}
\put(0,-56){\makebox(0,0){\footnotesize{$\underset{\textstyle{\mathcal{T}_{2}}}{\underbrace{\rule{40pt}{0pt}}}$}}}
\put(180,-56){\makebox(0,0){\footnotesize{$\underset{\textstyle{\grave{\mathrmbfit{tbl}}_{{\langle{f,g}\rangle}}(\mathcal{T}_{2})}}{\underbrace{\rule{40pt}{0pt}}}$}}}
\put(360,-56){\makebox(0,0){\footnotesize{$\underset{\textstyle{\mathcal{T}_{1}}}{\underbrace{\rule{40pt}{0pt}}}$}}}
\put(-5,26){\makebox(0,0)[r]{\scriptsize{$t_{2}$}}}
\put(188,26){\makebox(0,0)[l]{\scriptsize{$\hat{t}_{1}$}}}
\put(370,26){\makebox(0,0)[l]{\scriptsize{$t_{1}$}}}
\put(85,-10){\makebox(0,0){\scriptsize{$\grave{\tau}_{{\langle{f,g}\rangle}}(I_{2},s_{2})$}}}
\put(85,-38){\makebox(0,0){\scriptsize{${(\mbox{-})}{\,\cdot\,}g$}}}
\put(290,-10){\makebox(0,0){\scriptsize{$\mathrmbfit{tup}_{\!\mathcal{A}_{1}}\!(h)$}}}
\put(280,-38){\makebox(0,0){\scriptsize{$h{\,\cdot\,}{(\mbox{-})}$}}}
\put(95,92){\makebox(0,0){\scriptsize{$\hat{k}$}}}
\put(180,125){\makebox(0,0){\scriptsize{$k$}}}
\put(270,92){\makebox(0,0){\scriptsize{$\tilde{k}$}}}
\put(335,80){\vector(-1,0){130}}
\put(155,80){\vector(-1,0){130}}
\put(0,60){\vector(0,-1){68}}
\put(180,60){\vector(0,-1){68}}
\put(100,-25){\vector(-1,0){40}}
\put(300,-25){\vector(-1,0){40}}
\put(360,60){\vector(0,-1){68}}
\put(10,100){\oval(20,20)[tl]}
\put(10,110){\line(1,0){160}}
\put(0,94){\vector(0,-1){0}}
\put(170,110){\line(1,0){180}}
\put(350,100){\oval(20,20)[tr]}
\qbezier(20,20)(20,20)(40,20)
\qbezier(40,4)(40,12)(40,20)
\put(90,30){\makebox(0,0){\scriptsize{\textit{pullback}}}}
\put(270,70){\makebox(0,0){\scriptsize{\textit{mediator}}}}
\put(0,-98){\makebox(0,0){\footnotesize{$\underset{\textstyle{\mathrmbf{Tbl}(\mathcal{A}_{2})}}{\underbrace{\rule{40pt}{0pt}}}$}}}
\put(270,-98){\makebox(0,0){\footnotesize{$\underset{\textstyle{\mathrmbf{Tbl}(\mathcal{A}_{1})}}{\underbrace{\rule{100pt}{0pt}}}$}}}
\end{picture}}
\end{tabular}}}
\\&&\\&&\\&&\\
\multicolumn{3}{c}{\footnotesize{$k{\,\cdot\,}t_{2} = t_{1}{\,\cdot\,}\mathrmbfit{tup}(h,f,g)
\newline
=t_{1}{\,\cdot\,}\acute{\tau}_{{\langle{f,g}\rangle}}(I_{1},s_{1}){\,\cdot\,}\mathrmbfit{tup}_{\mathcal{A}_{2}}(\hat{h})
=t_{1}{\,\cdot\,}\mathrmbfit{tup}_{\mathcal{A}_{1}}(h){\,\cdot\,}\grave{\tau}_{{\langle{f,g}\rangle}}(I_{2},s_{2})$}}
\end{tabular}}}
\end{center}
\caption{Table Morphism: Type Domain}
\label{fig:tbl:mor:large:typ:dom}
\end{figure}
\begin{description}
\item[levo:] 
A table morphism (left side Fig.~\ref{fig:tbl:mor:large:typ:dom})
consists of
type domain morphism 
$\mathcal{A}_{2}\xrightleftharpoons{{\langle{f,g}\rangle}}\mathcal{A}_{1}$
\underline{and} 
a morphism 
$\mathcal{T}_{2}\xleftarrow{{\langle{\hat{h},k}\rangle}}\acute{\mathrmbfit{tbl}}_{{\langle{f,g}\rangle}}(\mathcal{T}_{1})$
in the fiber context $\mathrmbf{Tbl}(\mathcal{A}_{2})$,
where the fiber passage
$\mathrmbf{Tbl}(\mathcal{A}_{2})\xleftarrow{\acute{\mathrmbfit{tbl}}_{{\langle{f,g}\rangle}}}\mathrmbf{Tbl}(\mathcal{A}_{1})$
along the
type domain morphism $\mathcal{A}_{2}\xrightleftharpoons{{\langle{f,g}\rangle}}\mathcal{A}_{1}$
is defined in 
Def.~\ref{def:fbr:pass:typ:dom}.
This fiber passage is a component of
the type domain indexed context of tables 
$\mathrmbf{Cls}^{\mathrm{op}}\!\xrightarrow{\;\acute{\mathrmbfit{tbl}}\;}\mathrmbf{Cxt}$.
\newline
\item[dextro:] 
A table morphism (right side Fig.~\ref{fig:tbl:mor:large:typ:dom})
consists of
type domain morphism 
$\mathcal{A}_{2}\xrightleftharpoons{{\langle{f,g}\rangle}}\mathcal{A}_{1}$
\underline{and} 
a morphism 
$\grave{\mathrmbfit{tbl}}_{{\langle{f,g}\rangle}}(\mathcal{T}_{2})\xleftarrow{{\langle{h,\tilde{k}}\rangle}}\mathcal{T}_{1}$
in the fiber context $\mathrmbf{Tbl}(\mathcal{A}_{1})$,
where the fiber passage
$\mathrmbf{Tbl}(\mathcal{A}_{2})\xrightarrow{\grave{\mathrmbfit{tbl}}_{{\langle{f,g}\rangle}}}\mathrmbf{Tbl}(\mathcal{A}_{1})$
along the
type domain morphism $\mathcal{A}_{2}\xrightleftharpoons{{\langle{f,g}\rangle}}\mathcal{A}_{1}$
is defined in 
Def.~\ref{def:fbr:pass:typ:dom}.
This fiber passage is a component of
the type domain indexed context of tables 
$\mathrmbf{Cls}\xrightarrow{\;\grave{\mathrmbfit{tbl}}\;}\mathrmbf{Cxt}$.
\end{description}
\begin{proposition}\label{prop:typ:dom:tbl:fbr:adj}
For any type domain morphism $\mathcal{A}_{2}\xrightleftharpoons{{\langle{f,g}\rangle}}\mathcal{A}_{1}$,
there is a table fiber adjunction
$\mathrmbf{Tbl}(\mathcal{A}_{2})
\xleftarrow{{\langle{
\acute{\mathrmbfit{tbl}}_{{\langle{f,g}\rangle}}{\!\dashv\,}\grave{\mathrmbfit{tbl}}_{{\langle{f,g}\rangle}}
}\rangle}}
\mathrmbf{Tbl}(\mathcal{A}_{1})$.
%
This fiber adjunction is a component of
a type domain indexed adjunction of tables 
$\mathrmbf{Cls}^{\mathrm{op}}\xrightarrow{\;\mathrmbfit{tbl}\;}\mathrmbf{Adj}$.
\end{proposition}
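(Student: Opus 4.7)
The adjunction reduces to the signature fiber adjunction ${\scriptstyle\sum}_{f}{\;\dashv\;}f^{\ast}$ of Thm.~\ref{thm:fib:cxt:sign:set} (specialized to $\mathrmbf{List}(X_{2}){\,\leftrightarrows\,}\mathrmbf{List}(X_{1})$), transported by the natural isomorphism between the levo and dextro tuple bridges $\acute{\tau}_{{\langle{f,g}\rangle}}\cong\grave{\tau}_{{\langle{f,g}\rangle}}$ of Lem.~\ref{lem:nat:iso}. The key observation is already implicit in Fig.~\ref{fig:tbl:mor:large:typ:dom}: a single table morphism $\mathcal{T}_{2}\xleftarrow{{\langle{h,f,g,k}\rangle}}\mathcal{T}_{1}$ covering the fixed ${\langle{f,g}\rangle}$ admits two equivalent presentations --- a \emph{levo} form, factoring the tuple compatibility through $\acute{\tau}_{{\langle{f,g}\rangle}}(I_{1},s_{1})$ and witnessed by a signature morphism $\hat{h}\colon{\langle{I_{2},s_{2}}\rangle}\to{f}^{\ast}(I_{1},s_{1})$ in $\mathrmbf{List}(X_{2})$; and a \emph{dextro} form, factoring through $\grave{\tau}_{{\langle{f,g}\rangle}}(I_{2},s_{2})$ and witnessed by $h\colon{\scriptstyle\sum}_{f}(I_{2},s_{2})\to{\langle{I_{1},s_{1}}\rangle}$ in $\mathrmbf{List}(X_{1})$.

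\textbf{Step 1 (hom-set bijection).} A morphism $\mathcal{T}_{2}\leftarrow\acute{\mathrmbfit{tbl}}_{{\langle{f,g}\rangle}}(\mathcal{T}_{1})$ in $\mathrmbf{Tbl}(\mathcal{A}_{2})$ unfolds, by Def.~\ref{def:fbr:pass:typ:dom}, to exactly the levo presentation of a table morphism; dually, a morphism $\grave{\mathrmbfit{tbl}}_{{\langle{f,g}\rangle}}(\mathcal{T}_{2})\leftarrow\mathcal{T}_{1}$ in $\mathrmbf{Tbl}(\mathcal{A}_{1})$ unfolds to the dextro presentation (using the universal property of the pullback that defines $\grave{\mathrmbfit{tbl}}_{{\langle{f,g}\rangle}}$). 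The bijection between them is obtained by transposing the signature component across ${\scriptstyle\sum}_{f}{\;\dashv\;}f^{\ast}$ ($\hat{h}\leftrightarrow{h}$), while keeping the key function $k$ untouched; equivalence of the two tuple naturality conditions follows from the factorization $\mathrmbfit{tup}(h,f,g)=\acute{\tau}_{{\langle{f,g}\rangle}}(I_{1},s_{1}){\,\cdot\,}\mathrmbfit{tup}_{\mathcal{A}_{2}}(\hat{h})=\mathrmbfit{tup}_{\mathcal{A}_{1}}(h){\,\cdot\,}\grave{\tau}_{{\langle{f,g}\rangle}}(I_{2},s_{2})$ of Lem.~\ref{lem:tup:fn:fact}, together with the $\acute{\tau}\cong\grave{\tau}$ comparison.

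\textbf{Step 2 (unit, counit, triangles).} Equivalently, the unit $\mathcal{T}_{1}\to\grave{\mathrmbfit{tbl}}_{{\langle{f,g}\rangle}}\acute{\mathrmbfit{tbl}}_{{\langle{f,g}\rangle}}(\mathcal{T}_{1})$ is assembled from the signature unit $\eta_{f}$ and the mediating map into the pullback defining $\grave{\mathrmbfit{tbl}}_{{\langle{f,g}\rangle}}\acute{\mathrmbfit{tbl}}_{{\langle{f,g}\rangle}}(\mathcal{T}_{1})$; the counit $\acute{\mathrmbfit{tbl}}_{{\langle{f,g}\rangle}}\grave{\mathrmbfit{tbl}}_{{\langle{f,g}\rangle}}(\mathcal{T}_{2})\to\mathcal{T}_{2}$ is assembled from the counit $\varepsilon_{f}=\hat{f}$ and the cartesian leg of the same pullback. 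Naturality in $\mathcal{T}_{1},\mathcal{T}_{2}$ follows from naturality of the signature transpose and from naturality of $\acute{\tau},\grave{\tau}$ in the signature argument (Sec.~\ref{sub:sub:sec:tup:bridge:typ:dom}). The two triangle identities reduce to those of ${\scriptstyle\sum}_{f}{\;\dashv\;}f^{\ast}$ plus the uniqueness clause of the pullback mediator.

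\textbf{Step 3 (indexed adjunction) and main obstacle.} For the indexed statement $\mathrmbf{Cls}^{\mathrm{op}}\xrightarrow{\mathrmbfit{tbl}}\mathrmbf{Adj}$, I verify pseudo-functoriality: the identity type-domain morphism produces the identity adjunction, and composition ${\langle{f_{2},g_{2}}\rangle}{\,\cdot\,}{\langle{f_{1},g_{1}}\rangle}$ produces an adjunction isomorphic (coherently) to the composite of the two. Both reduce to the corresponding facts for the signature indexed adjunction $\mathrmbf{Set}\xrightarrow{\mathrmbfit{list}}\mathrmbf{Adj}$, together with functoriality of the inclusion passages $\acute{\mathrmbfit{inc}},\grave{\mathrmbfit{inc}}$ of Sec.~\ref{sub:sub:sec:tup:bridge:typ:dom}. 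The main bookkeeping obstacle is the comparison $\grave{\mathrmbfit{tbl}}_{{\langle{f_{1},g_{1}}\rangle}}\circ\grave{\mathrmbfit{tbl}}_{{\langle{f_{2},g_{2}}\rangle}}\cong\grave{\mathrmbfit{tbl}}_{{\langle{f_{1}{\cdot}f_{2},\,g_{2}{\cdot}g_{1}}\rangle}}$, which follows from pullback-pasting applied to the two defining pullbacks, combined with the pseudo-functoriality of ${\scriptstyle\sum}_{(\text{-})}$; everything else is routine once the hom-set bijection above is in hand.
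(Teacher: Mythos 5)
Your proposal is correct and follows essentially the same route as the paper's proof: the hom-set bijection is obtained by transposing the signature component along ${\scriptstyle\sum}_{f}{\;\dashv\;}f^{\ast}$, the counit/unit are assembled from $\eta^{f}$, $\varepsilon^{f}$ and the pullback defining $\grave{\mathrmbfit{tbl}}_{{\langle{f,g}\rangle}}$, and existence/uniqueness reduce to the tuple-function factorization of Lem.~\ref{lem:tup:fn:fact} plus uniqueness of the pullback mediator. One phrase in your Step 1 is inaccurate: the key function is \emph{not} left untouched by the bijection --- a morphism $\acute{\mathrmbfit{tbl}}_{{\langle{f,g}\rangle}}(\mathcal{T}_{1})\rightarrow\mathcal{T}_{2}$ carries $K_{1}\xrightarrow{k_{2}}K_{2}$, whereas its transpose $\mathcal{T}_{1}\rightarrow\grave{\mathrmbfit{tbl}}_{{\langle{f,g}\rangle}}(\mathcal{T}_{2})$ carries the unique mediator $K_{1}\xrightarrow{k_{1}}\widehat{K}_{1}$ into the pullback, related by $k_{2}=k_{1}{\,\cdot\,}\hat{k}$; you in fact use exactly this mediator in Step 2, so the slip is cosmetic rather than a gap. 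Your Step 3 on pseudo-functoriality (pullback pasting plus coherence of ${\scriptstyle\sum}_{(\mbox{-})}$) supplies detail the paper leaves implicit.
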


\comment{
COMPRESS-PROOF-COMPRESS-PROOF-COMPRESS-PROOF-COMPRESS-PROOF-COMPRESS-PROOF-COMPRESS-PROOF-COMPRESS-PROOF-COMPRESS-PROOF-COMPRESS-PROOF
COMPRESS-PROOF-COMPRESS-PROOF-COMPRESS-PROOF-COMPRESS-PROOF-COMPRESS-PROOF-COMPRESS-PROOF-COMPRESS-PROOF-COMPRESS-PROOF-COMPRESS-PROOF
COMPRESS-PROOF-COMPRESS-PROOF-COMPRESS-PROOF-COMPRESS-PROOF-COMPRESS-PROOF-COMPRESS-PROOF-COMPRESS-PROOF-COMPRESS-PROOF-COMPRESS-PROOF
\begin{proof}
\mbox{}
\begin{description}
\item[$\acute{\mathrmbfit{tbl}}_{{\langle{f,g}\rangle}} \dashv \grave{\mathrmbfit{tbl}}_{{\langle{f,g}\rangle}}$:] 
For any $\mathcal{A}_{2}$-table morphism (on the left),
there is a unique $\mathcal{A}_{1}$-table morphism (on the right)
such that
${\langle{h_{2},k_{2}}\rangle} = 
\acute{\mathrmbfit{tbl}}_{{\langle{f,g}\rangle}}(h_{1},k_{1}) \circ \varepsilon_{\mathcal{T}_{2}}$.
\[\mbox{\footnotesize{$
\acute{\mathrmbfit{tbl}}_{{\langle{f,g}\rangle}}(\mathcal{T}_{1})
\xrightarrow
[\acute{\mathrmbfit{tbl}}_{{\langle{f,g}\rangle}}(h_{1},k_{1}){\;\cdot\;}\varepsilon^{{\langle{f,g}\rangle}}_{\mathcal{T}_{2}}]
{\langle{h_{2},k_{2}}\rangle} 
\mathcal{T}_{2} 
\;\;\;\;\stackrel{\cong}{\longleftrightarrow}\;\;\;
\mathcal{T}_{1}
\xrightarrow
[\eta^{{\langle{f,g}\rangle}}_{\mathcal{T}_{1}}{\;\cdot\;}\grave{\mathrmbfit{tbl}}_{{\langle{f,g}\rangle}}(h_{2},k_{2})]
{\langle{h_{1},k_{1}}\rangle} 
\grave{\mathrmbfit{tbl}}_{{\langle{f,g}\rangle}}(\mathcal{T}_{2}) 
$}}\]
We use the notation introduced above,
for
$\mathcal{A}_{2}$-table
$\mathcal{T}_{2} = {\langle{I_{2},s_{2},K_{2},t_{2}}\rangle}$
with image
$\grave{\mathrmbfit{tbl}}_{{\langle{f,g}\rangle}}(\mathcal{T}_{2}) 
= {\langle{{\scriptstyle\sum}_{f}(I_{2},s_{2}),\widehat{K}_{1},\hat{t}_{1}}\rangle}$
and 
$\mathcal{A}_{1}$-table
$\mathcal{T}_{1} = {\langle{I_{1},s_{1},K_{1},t_{1}}\rangle}$
with image
$\acute{\mathrmbfit{tbl}}_{{\langle{f,g}\rangle}}(\mathcal{T}_{1}) =
{\langle{f^{\ast}(I_{1},s_{1}),K_{1},t_{1}{\cdot}\acute{\tau}_{{\langle{f,g}\rangle}}(I_{1},s_{1})}\rangle}$. 
\begin{center}
{{\begin{tabular}{@{\hspace{-60pt}}c@{\hspace{60pt}}c}


\\&\\
{{\begin{tabular}[t]{c}
\setlength{\unitlength}{0.45pt}
\begin{picture}(540,120)(-80,0)
\put(0,120){\makebox(0,0){\footnotesize{$
\mathcal{T}_{2}
$}}}
\put(180,120){\makebox(0,0){\footnotesize{$
{\acute{\mathrmbfit{tbl}}_{{\langle{f,g}\rangle}}(\grave{\mathrmbfit{tbl}}_{{\langle{f,g}\rangle}}(\mathcal{T}_{2}))}
$}}}
\put(340,120){\makebox(0,0){\footnotesize{
$\grave{\mathrmbfit{tbl}}_{{\langle{f,g}\rangle}}(\mathcal{T}_{2}) 
$}}}
\put(180,0){\makebox(0,0){\footnotesize{
$\acute{\mathrmbfit{tbl}}_{{\langle{f,g}\rangle}}(\mathcal{T}_{1}) 
$}}}
\put(340,0){\makebox(0,0){\footnotesize{$
\mathcal{T}_{1}
$}}}
\put(90,50){\makebox(0,0)[r]{\scriptsize{${\langle{h_{2},k_{2}}\rangle}$}}}
\put(60,155){\makebox(0,0){\scriptsize{$
\overset{\textstyle{\varepsilon^{{\langle{f,g}\rangle}}_{\mathcal{T}_{2}}}}{\overbrace{\langle{\eta^{f}_{{\langle{I_{2},s_{2}}\rangle}},\hat{k}}\rangle}}$}}}
\put(190,60){\makebox(0,0)[l]{\scriptsize{${\langle{f^{\ast}(h_{1}),k_{1}}\rangle}$}}}
\put(350,60){\makebox(0,0)[l]{\scriptsize{${\langle{h_{1},k_{1}}\rangle}$}}}
\put(80,120){\vector(-1,0){60}}
\put(180,20){\vector(0,1){80}}
\put(340,20){\vector(0,1){80}}
\put(150,20){\vector(-3,2){130}}
\end{picture}
\end{tabular}}}

&

{{\begin{tabular}[t]{c}
\setlength{\unitlength}{0.4pt}
\begin{picture}(240,120)(20,0)
\put(0,120){\makebox(0,0){\footnotesize{$\widehat{K}_{1}$}}}
\put(240,120){\makebox(0,0){\footnotesize{$K_{1}$}}}
\put(0,0){\makebox(0,0){\footnotesize{$\mathrmbfit{tup}_{\mathcal{A}_{1}}({\scriptstyle\sum}_{f}(I_{2},s_{2}))$}}}
\put(240,0){\makebox(0,0){\footnotesize{$\mathrmbfit{tup}_{\mathcal{A}_{1}}(I_{1},s_{1})$}}}
\put(15,60){\makebox(0,0)[l]{\scriptsize{$\hat{t}_{1}$}}}
\put(255,60){\makebox(0,0)[l]{\scriptsize{$t_{1}$}}}
\put(120,135){\makebox(0,0){\scriptsize{$k_{1}$}}}
\put(140,-20){\makebox(0,0){\scriptsize{$\mathrmbfit{tup}_{\mathcal{A}_{1}}(h_{1})$}}}
\put(0,100){\vector(0,-1){80}}
\put(240,100){\vector(0,-1){80}}
\put(215,120){\vector(-1,0){190}}
\put(160,0){\vector(-1,0){60}}
\end{picture}
\end{tabular}}}

\\ & \\

{{\begin{tabular}[t]{c}
\setlength{\unitlength}{0.37pt}
\begin{picture}(540,340)(0,-160)
\put(0,120){\makebox(0,0){\footnotesize{$K_{2}$}}}
\put(0,0){\makebox(0,0){\footnotesize{$\mathrmbfit{tup}_{\mathcal{A}_{2}}(I_{2},s_{2})$}}}
\put(280,0){\makebox(0,0){\footnotesize{$\mathrmbfit{tup}_{\mathcal{A}_{2}}({\scriptstyle\sum}_{f}(I_{2},s_{2}))$}}}
\put(280,120){\makebox(0,0){\footnotesize{$\widehat{K}_{1}$}}}
\put(540,120){\makebox(0,0){\footnotesize{$K_{1}$}}}
\put(540,0){\makebox(0,0){\footnotesize{$\mathrmbfit{tup}_{\mathcal{A}_{1}}(I_{1},s_{1})$}}}
\put(-15,60){\makebox(0,0)[r]{\scriptsize{$t_{2}$}}}
\put(295,60){\makebox(0,0)[l]{\scriptsize{$\hat{t}_{1}$}}}
\put(555,60){\makebox(0,0)[l]{\scriptsize{$t_{1}$}}}
\put(140,135){\makebox(0,0){\scriptsize{$\hat{k}$}}}
\put(410,135){\makebox(0,0){\scriptsize{$k_{1}$}}}
\put(280,180){\makebox(0,0){\scriptsize{$k_{2}$}}}
\put(280,-180){\makebox(0,0){\scriptsize{$\mathrmbfit{tup}_{\mathcal{A}_{2}}(h_{2})$}}}

\put(137,5){\makebox(0,0){\scriptsize{$
\underset{=\;\text{(-)}{\cdot}{g}}
{\grave{\tau}_{{\langle{f,g}\rangle}}(I_{2},s_{2})}
$}}}

\put(430,-15){\makebox(0,0){\scriptsize{$\mathrmbfit{tup}_{\mathcal{A}_{1}}(h_{1})$}}}
\put(0,100){\vector(0,-1){80}}
\put(280,100){\vector(0,-1){80}}
\put(540,100){\vector(0,-1){80}}
\put(250,120){\vector(-1,0){230}}
\put(515,120){\vector(-1,0){210}}
\put(170,0){\vector(-1,0){90}}
\put(456,0){\vector(-1,0){70}}
\put(270,140){\oval(540,50)[t]}\put(0,135){\vector(0,-1){0}}
\put(270,-140){\oval(540,50)[b]}\put(0,-140){\vector(0,1){125}}
\qbezier(60,50)(60,40)(60,30)
\qbezier(60,50)(50,50)(40,50)
%
%
\put(260,-120){\makebox(0,0){\footnotesize{
$\mathrmbfit{tup}_{\mathcal{A}_{2}}(f^{\ast}({\scriptstyle\sum}_{f}(I_{2},s_{2})))$}}}
\put(555,-120){\makebox(0,0){\footnotesize{$\mathrmbfit{tup}_{\mathcal{A}_{2}}(f^{\ast}(I_{1},s_{1}))$}}}
\put(235,-60){\makebox(0,0)[l]{\scriptsize{$
\underset{=\;\varepsilon^{f}_{{\langle{I_{2},s_{2}}\rangle}}{\cdot}\text{(-)}{\cdot}{g}}
{\acute{\tau}_{\langle{f,g}\rangle}({\scriptstyle\sum}_{f}(I_{2},s_{2}))}
$}}}
\put(555,-60){\makebox(0,0)[l]{\scriptsize{$\acute{\tau}_{\langle{f,g}\rangle}(I_{1},s_{1})$}}}
\put(420,-140){\makebox(0,0){\scriptsize{$\mathrmbfit{tup}_{\mathcal{A}_{2}}(f^{\ast}(h_{1}))$}}}
\put(120,-78){\makebox(0,0){\scriptsize{$
\underset{=\;\eta^{f}_{{\langle{I_{2},s_{2}}\rangle}}{\cdot}\text{(-)}}
{\mathrmbfit{tup}_{\mathcal{A}_{2}}(\eta^{f}_{{\langle{I_{2},s_{2}}\rangle}})}
$}}}
\put(280,-20){\vector(0,-1){80}}
\put(540,-20){\vector(0,-1){80}}
\put(450,-120){\vector(-1,0){60}}
\put(235,-103){\vector(-2,1){170}}
\put(360,65){\makebox(0,0)[l]{\footnotesize{$\left.\rule{0pt}{20pt}\right\}\grave{\mathrmbfit{tbl}}_{{\langle{f,g}\rangle}}(\mathcal{T}_{2})$}}}
\put(270,-220){\makebox(0,0){\footnotesize{$\overset{\underbrace{\rule{60pt}{0pt}}}
{\acute{\mathrmbfit{tbl}}_{{\langle{f,g}\rangle}}(\grave{\mathrmbfit{tbl}}_{{\langle{f,g}\rangle}}(\mathcal{T}_{2}))}$}}}
\put(540,-210){\makebox(0,0){\footnotesize{$\overset{\underbrace{\rule{60pt}{0pt}}}
{\acute{\mathrmbfit{tbl}}_{{\langle{f,g}\rangle}}(\mathcal{T}_{1})}$}}}
\end{picture}
\end{tabular}}}

&

{{\begin{tabular}[t]{c}
\setlength{\unitlength}{0.6pt}
\begin{picture}(120,220)(0,-60)
\put(-60,120){\makebox(0,0){\footnotesize{$I_{2}$}}}
\put(5,60){\makebox(0,0){\footnotesize{$I_{2}{\times_{X_{1}}}X_{2}$}}}
\put(0,-20){\makebox(0,0){\footnotesize{$X_{2}$}}}
\put(120,80){\makebox(0,0){\footnotesize{$I_{2}$}}}
\put(120,-20){\makebox(0,0){\footnotesize{$X_{1}$}}}
\put(0,-45){\makebox(0,0){\scriptsize{$\overset{\underbrace{\rule{20pt}{0pt}}}{
f^{\ast}({\scriptstyle\sum}_{f}(I_{2},s_{2}))}$}}}
\put(120,-45){\makebox(0,0){\scriptsize{$\overset{\underbrace{\rule{20pt}{0pt}}}{{\scriptstyle\sum}_{f}(I_{2},s_{2})}$}}}
\put(50,120){\makebox(0,0){\scriptsize{$\mathrmit{id}$}}}
\put(60,-30){\makebox(0,0){\scriptsize{$f$}}}
\put(-60,55){\makebox(0,0)[r]{\scriptsize{$s_{2}$}}}
\put(118,40){\makebox(0,0)[r]{\scriptsize{$s_{2}{\cdot}f$}}}
\put(80,70){\makebox(0,0){\scriptsize{$\varepsilon^{f}_{{\langle{I_{2},s_{2}}\rangle}}$}}}
\put(125,20){\makebox(0,0){\scriptsize{$\varepsilon^{f}_{{\langle{I_{1},s_{1}}\rangle}}$}}}

\put(-3,22){\makebox(0,0)[r]{\scriptsize{$\widehat{s_{2}{\cdot}f}$}}}
\put(-23,100){\makebox(0,0)[l]{\scriptsize{$\eta^{f}_{{\langle{I_{2},s_{2}}\rangle}}$}}}
\put(40,63){\vector(4,1){65}}
\put(20,-20){\vector(1,0){85}}
\put(0,45){\vector(0,-1){50}}
\put(120,65){\vector(0,-1){70}}
\qbezier(-40,120)(30,120)(105,90)\put(105,90){\vector(4,-1){0}}
\put(-47,110){\vector(1,-1){40}}
\qbezier(-60,105)(-60,30)(-20,-15)\put(-20,-15){\vector(1,-1){0}}
\put(60,20){\makebox(0,0){\footnotesize{$I_{1}{\times_{X_{1}}}X_{2}$}}}
\put(170,20){\makebox(0,0){\footnotesize{$I_{1}$}}}
\put(155,60){\makebox(0,0)[l]{\scriptsize{$h_{1}$}}}
\put(155,-10){\makebox(0,0)[l]{\scriptsize{$s_{1}$}}}
\put(30,7){\makebox(0,0)[r]{\scriptsize{$\hat{s}_{1}$}}}
\put(35,45){\makebox(0,0)[l]{\scriptsize{$f^{\ast}(h_{1})$}}}
\put(97,20){\vector(1,0){60}}
\put(20,50){\vector(1,-1){18}}
\put(40,12){\vector(-1,-1){25}}
\put(160,12){\vector(-1,-1){25}}
\put(130,70){\vector(1,-1){38}}
\put(-28,50){\makebox(0,0)[r]{\scriptsize{$h_{2}$}}}
\qbezier(-54,106)(-40,35)(25,25)\put(25,25){\vector(4,-1){0}}
\put(30,-75){\makebox(0,0){\scriptsize{$\overset{\underbrace{\rule{20pt}{0pt}}}{f^{\ast}(I_{1},s_{1})}$}}}
\end{picture}
\end{tabular}}}
\\&\\&\\&
\end{tabular}}}
\end{center}
\item[Counit $\grave{\mathrmbfit{tbl}}_{{\langle{f,g}\rangle}}{\;\circ\;}\acute{\mathrmbfit{tbl}}_{{\langle{f,g}\rangle}}
\xRightarrow{\varepsilon^{{\langle{f,g}\rangle}}}
\mathrmbfit{id}_{\mathrmbf{Tbl}(\mathcal{A}_{2})}$:] 
%
%
For any $\mathcal{A}_{2}$-table $\mathcal{T}_{2} = {\langle{I_{2},s_{2},K_{2},t_{2}}\rangle}$,
the $\mathcal{T}_{2}^{\text{th}}$-component is the $\mathcal{A}_{2}$-table morphism
\[\mbox{\footnotesize{$
\varepsilon^{{\langle{f,g}\rangle}}_{\mathcal{T}_{2}} 
= {\langle{\eta^{f}_{{\langle{I_{2},s_{2}}\rangle}},\hat{k}}\rangle}
: \acute{\mathrmbfit{tbl}}_{{\langle{f,g}\rangle}}(\grave{\mathrmbfit{tbl}}_{{\langle{f,g}\rangle}}(\mathcal{T}_{2}))\rightarrow\mathcal{T}_{2}
$.}}\]
Well-define, since
$\hat{k}{\;\cdot\;}t_{2}
= t_{1}{\;\cdot\;}\grave{\tau}_{{\langle{f,g}\rangle}}(I_{2},s_{2})
= t_{1}{\;\cdot\;}\acute{\tau}_{\langle{f,g}\rangle}({\scriptstyle\sum}_{f}(I_{2},s_{2})){\;\cdot\;}
\mathrmbfit{tup}_{\mathcal{A}_{2}}(\eta^{f}_{{\langle{I_{2},s_{2}}\rangle}})$.
\end{description}
\begin{description}
\item[Existence:] 
The $\mathcal{A}_{2}$-table morphism
\[\mbox{\footnotesize{
$\overset{\textstyle{\acute{\mathrmbfit{tbl}}_{{\langle{f,g}\rangle}}(\mathcal{T}_{1})}}
{\overbrace{{\langle{f^{\ast}(I_{1},s_{1}),K_{1},t_{1}{\cdot}\acute{\tau}_{{\langle{f,g}\rangle}}(I_{1},s_{1})}\rangle}}}
\xrightarrow{\langle{h_{2},k_{2}}\rangle}
\overset{\textstyle{\mathcal{T}_{2}}}{\overbrace{{\langle{I_{2},s_{2},K_{2},t_{2}}\rangle}}}$
}\normalsize}\]
consists of 
an $X_{2}$-sorted signature morphism 
${\langle{I_{1},s_{1}}\rangle}\xrightarrow{\,h_{2}\,}{\langle{I_{1}{\times_{X_{1}}}X_{2},\hat{s}_{1}}\rangle}$ 
satisfying condition
$h_{2} \cdot \hat{s}_{1} = s_{2}$ and 
a key function $K_{2}\xleftarrow{\,k_{2}\,}K_{1}$
satisfying the condition 
$k_{2}{\;\cdot\;}t_{2} = 
t_{1}{\;\cdot\;}\acute{\tau}_{{\langle{f,g}\rangle}}(I_{1},s_{1}){\;\cdot\;}\mathrmbfit{tup}_{\mathcal{A}_{2}}(h_{2})$.
We define the adjoint $\mathcal{A}_{1}$-table morphism
\[\mbox{\footnotesize{
$
\overset{\textstyle{\mathcal{T}_{1}}}{\overbrace{{\langle{I_{1},s_{1},K_{1},t_{1}}\rangle}}}
\xrightarrow{\langle{h_{1},k_{1}}\rangle} 
\overset{\textstyle{\grave{\mathrmbfit{tbl}}_{{\langle{f,g}\rangle}}(\mathcal{T}_{2})}}
{\overbrace{{\langle{{\scriptstyle\sum}_{f}(I_{2},s_{2}),(\grave{\tau}_{{\langle{f,g}\rangle}}(I_{2},s_{2}))^{\ast}(K_{2},t_{2})}\rangle}}}
$
}\normalsize}\]
%
in terms of its signature and table fiber components.

Define the 
$X_{1}$-signature morphism 
${\scriptstyle\sum}_{f}(I_{2},s_{2})\;\xrightarrow{h_{1}}\;{\langle{I_{1},s_{1}}\rangle}$
%
\comment{
\begin{center}
{{\begin{tabular}{c}
\setlength{\unitlength}{0.5pt}
\begin{picture}(180,100)(0,-20)
\put(0,80){\makebox(0,0){\footnotesize{$I_{2}$}}}
\put(180,80){\makebox(0,0){\footnotesize{$I_{1}$}}}
\put(0,0){\makebox(0,0){\footnotesize{$X_{2}$}}}
\put(180,0){\makebox(0,0){\footnotesize{$X_{1}$}}}
\put(45,45){\makebox(0,0){\footnotesize{$I_{1}{\times_{X_{1}}}X_{2}$}}}
\put(-6,40){\makebox(0,0)[r]{\scriptsize{$s_{2}$}}}
\put(22,16){\makebox(0,0)[l]{\scriptsize{$\hat{s}_{2}$}}}
\put(188,40){\makebox(0,0)[l]{\scriptsize{$s_{1}$}}}
\put(90,94){\makebox(0,0){\scriptsize{$h$}}}
\put(82,66){\makebox(0,0){\scriptsize{$\hat{f}$}}}
\put(25,68){\makebox(0,0)[l]{\scriptsize{$h_{2}$}}}
\put(90,14){\makebox(0,0){\scriptsize{$f$}}}
\put(20,80){\vector(1,0){140}}
\put(20,0){\vector(1,0){140}}
\put(0,70){\vector(0,-1){60}}
\put(180,70){\vector(0,-1){60}}
\put(57,47){\vector(4,1){100}}
\qbezier(10,10)(20,20)(30,30)\put(10,10){\vector(-1,-1){0}}
\qbezier(10,70)(20,60)(30,50)\put(30,50){\vector(1,-1){0}}
%
\qbezier(160,20)(165,20)(170,20)
\qbezier(160,20)(160,15)(160,10)
\end{picture}
\end{tabular}}}
\end{center}
}
to be the adjoint
%
\footnote{See the signature fiber adjunction
$\mathrmbf{List}(X_{2})
{\;\xrightleftharpoons[{\langle{{\scriptscriptstyle\sum}_{f}{\;\dashv\;}f^{\ast}}\rangle}]{\mathrmbfit{list}(f)}\;}
\mathrmbf{List}(X_{1})$
of \S~\ref{sub:sec:fole:comps:sign}.}
%
to the $X_{2}$-signature morphism 
${\langle{I_{1}{\times_{X_{1}}}X_{2},\hat{s}_{1}}\rangle}\xrightarrow{\,h_{2}\,}{\langle{I_{1},s_{1}}\rangle}$. 
So that,
$h_{1} = {\scriptstyle\sum}_{f}(h_{2}){\;\cdot\;}\varepsilon^{f}_{{\langle{I_{1},s_{1}}\rangle}}$
and
$h_{2} = \eta^{f}_{{\langle{I_{2},s_{2}}\rangle}}{\;\cdot\;}f^{\ast}(h_{1})$.
%
Then
${\langle{I_{2},s_{2},\mathcal{A}_{2}}\rangle}\xrightarrow{{\langle{h_{1},f,g}\rangle}}{\langle{I_{1},s_{1},\mathcal{A}_{1}}\rangle}$
is a signed domain morphism.
This implies,
$k_{2}{\;\cdot\;}t_{2} 
= t_{1}
{\;\cdot\;}\acute{\tau}_{{\langle{f,g}\rangle}}(I_{1},s_{1})
{\;\cdot\;}\mathrmbfit{tup}_{\mathcal{A}_{2}}(h_{2})
= t_{1}
{\;\cdot\;}\mathrmbfit{tup}_{\mathcal{A}_{1}}(h_{1})
{\;\cdot\;}\grave{\tau}_{\langle{f,g}\rangle}(I_{2},s_{2})$
by the tuple function factorization (Fig.~\ref{fig:tup:fn:fact}) in \S~\ref{sub:sec:fole:comps:sign:dom}.
%
%
The inverse image table 
$(\grave{\tau}_{{\langle{f,g}\rangle}}(I_{2},s_{2}))^{\ast}(K_{2},t_{2}) 
= {\langle{\widehat{K}_{1},\hat{t}_{1}}\rangle}
{\;\in\;}\mathrmbf{Tbl}_{\mathcal{A}_{2}}({\scriptstyle\sum}_{f}(I_{2},s_{2}))$
is define by pullback of 
the table 
${\langle{K_{2},t_{2}}\rangle}
{\;\in\;}\mathrmbf{Tbl}_{\mathcal{A}_{2}}(I_{2},s_{2})$
along the tuple function $\grave{\tau}_{{\langle{f,g}\rangle}}(I_{2},s_{2})$.
%
Hence,
there is a unique mediating (key) function
$K_{1}\xrightarrow{k_{1}}\widehat{K}_{1}$
satisfying the two conditions
$k_{1}{\;\cdot\;}\hat{t}_{1} = t_{1}{\;\cdot\;}\mathrmbfit{tup}_{\mathcal{A}_{1}}(h_{1})$ and
$k_{1}{\;\cdot\;}\hat{k} = k_{2}$.
Thus,
there is an $\mathcal{A}_{1}$-table morphism 
$\mathcal{T}_{1}
\xrightarrow{\langle{h_{1},k_{1}}\rangle} 
\grave{\mathrmbfit{tbl}}_{{\langle{f,g}\rangle}}(\mathcal{T}_{2})$
satisfying
${\langle{h_{2},k_{2}}\rangle} 
= {\langle{f^{\ast}(h_{1}){\,\cdot\,}\eta^{f}_{{\langle{I_{2},s_{2}}\rangle}},k_{1}{\,\cdot\,}\hat{k}}\rangle}
= {\langle{f^{\ast}(h_{1}),k_{1}}\rangle}{\;\cdot\;}{\langle{\eta^{f}_{{\langle{I_{2},s_{2}}\rangle}},\hat{k}}\rangle}
= \acute{\mathrmbfit{tbl}}_{{\langle{f,g}\rangle}}(h_{1},k_{1}){\;\cdot\;}\varepsilon_{\mathcal{T}_{2}}$
%

%

%
%

%
\item[Uniqueness:] 
Suppose there is 
another $\mathcal{A}_{1}$-table morphism
$\mathcal{T}_{1}
\xrightarrow{\langle{h_{1}',k_{1}'}\rangle} 
{\langle{{\scriptstyle\sum}_{f}(I_{2},s_{2}),\widehat{K}_{1},\hat{t}_{1}}\rangle}$
that satisfies
${\langle{h_{2},k_{2}}\rangle} 
= \acute{\mathrmbfit{tbl}}_{{\langle{f,g}\rangle}}(h_{1}',k_{1}'){\;\cdot\;}\varepsilon_{\mathcal{T}_{2}}
= {\langle{f^{\ast}(h_{1}'),k_{1}'}\rangle}{\;\cdot\;}{\langle{\eta^{f}_{{\langle{I_{2},s_{2}}\rangle}},\hat{k}}\rangle}$.
This means that
$h_{1}'{\;\cdot\;}s_{1} = s_{2}{\;\cdot\;}f$,
$h_{2} = \eta^{f}_{{\langle{I_{2},s_{2}}\rangle}}{\;\cdot\;}f^{\ast}(h_{1}')$, 
$k_{1}'{\;\cdot\;}\hat{t}_{1} = t_{1}{\;\cdot\;}\mathrmbfit{tup}_{\mathcal{A}_{1}}(h_{1}')$
and
$k_{2} = k_{1}'{\;\cdot\;}\hat{k}$.
Hence,
$h_{1}'
= \eta^{f}_{{\langle{I_{2},s_{2}}\rangle}}{\;\cdot\;}\varepsilon^{f}_{{\langle{I_{1},s_{1}}\rangle}}{\;\cdot\;}h_{1}'
= \eta^{f}_{{\langle{I_{2},s_{2}}\rangle}}{\;\cdot\;}f^{\ast}(h_{1}'){\;\cdot\;}\varepsilon^{f}_{{\langle{I_{1},s_{1}}\rangle}}
= h_{2}{\;\cdot\;}\varepsilon^{f}_{{\langle{I_{1},s_{1}}\rangle}}
= h_{1}'$.
Since
$k_{1}'{\;\cdot\;}\hat{k} = k_{2}$
and
$k_{1}'{\;\cdot\;}\hat{t}_{1} \cdot \grave{\tau}_{{\langle{f,g}\rangle}}(I_{2},s_{2})
= t_{1}{\;\cdot\;}\mathrmbfit{tup}_{\mathcal{A}_{2}}(h'_{1}){\;\cdot\;}\grave{\tau}_{{\langle{f,g}\rangle}}(I_{2},s_{2})
= t_{1}{\;\cdot\;}\mathrmbfit{tup}_{\mathcal{A}_{2}}(h_{1}){\;\cdot\;}\grave{\tau}_{{\langle{f,g}\rangle}}(I_{2},s_{2})$,
by uniqueness of the (key) mediator function
$k_{1}'= k_{1}$.
\mbox{}\hfill\rule{5pt}{5pt}
\end{description}
\end{proof}
COMPRESS-PROOF-COMPRESS-PROOF-COMPRESS-PROOF-COMPRESS-PROOF-COMPRESS-PROOF-COMPRESS-PROOF-COMPRESS-PROOF-COMPRESS-PROOF-COMPRESS-PROOF
COMPRESS-PROOF-COMPRESS-PROOF-COMPRESS-PROOF-COMPRESS-PROOF-COMPRESS-PROOF-COMPRESS-PROOF-COMPRESS-PROOF-COMPRESS-PROOF-COMPRESS-PROOF
COMPRESS-PROOF-COMPRESS-PROOF-COMPRESS-PROOF-COMPRESS-PROOF-COMPRESS-PROOF-COMPRESS-PROOF-COMPRESS-PROOF-COMPRESS-PROOF-COMPRESS-PROOF
}


%
\begin{theorem}\label{thm:fib:cxt:tbl:cls}
The fibered context of tables
$\mathrmbf{Tbl}\xrightarrow{\mathrmbfit{data}}\mathrmbf{Cls}^{\mathrm{op}}$
is the Grothendieck construction $\int_{\mathrmbf{Cls}}$ 
--- visualized in the upper-right quadrant of Fig.~\ref{fig:tbl:gro:constr} ---
of the type domain indexed adjunction
$\mathrmbf{Cls}^{\mathrm{op}}\xrightarrow{\!\mathrmbfit{tbl}\;}\mathrmbf{Adj}$.
$^{\ref{grothendieck}}$
\end{theorem}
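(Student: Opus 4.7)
The plan is to verify the two characterizing properties of the Grothendieck construction $\int_{\mathrmbf{Cls}}$ applied to the indexed adjunction $\mathrmbfit{tbl}$, namely (i) the identification of fibers and (ii) the cartesian/opcartesian factorization of morphisms, and then to cite the established equivalence between pseudo-functors and fibrations (footnote \ref{grothendieck}) to conclude.

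First, I would observe that unpacking a $\mathrmbf{Tbl}$-object $\mathcal{T}={\langle{I,s,\mathcal{A},K,t}\rangle}$ as a type domain $\mathcal{A}=\mathrmbfit{data}(\mathcal{T})$ together with an $\mathcal{A}$-table ${\langle{I,s,K,t}\rangle}$ exactly matches the presentation of $\mathcal{A}$-tables in \S~\ref{sub:sub:sec:tbl:typ:dom:low}, so the fibers of $\mathrmbfit{data}$ coincide with the medium-size contexts $\mathrmbf{Tbl}(\mathcal{A})$ appearing as the values of $\mathrmbfit{tbl}$. Pseudofunctoriality of $\mathrmbfit{tbl}:\mathrmbf{Cls}^{\mathrm{op}}\to\mathrmbf{Adj}$ is supplied by Prop.~\ref{prop:typ:dom:tbl:fbr:adj}, with coherence following from the standard fact that pullbacks (and their direct images) compose up to canonical isomorphism.

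Next, I would produce the cartesian/opcartesian factorization of an arbitrary table morphism $\mathcal{T}_{2}\xleftarrow{{\langle{h,f,g,k}\rangle}}\mathcal{T}_{1}$ lying over the type domain morphism $\mathcal{A}_{2}\xrightleftharpoons{{\langle{f,g}\rangle}}\mathcal{A}_{1}$. The key input is Lem.~\ref{lem:tbl:fbr:fact:typ:dom}, which yields the tuple resolution
\[
\mbox{\footnotesize$\mathrmbfit{tup}(h,f,g)
= \acute{\tau}_{{\langle{f,g}\rangle}}(I_{1},s_{1}){\,\cdot\,}\mathrmbfit{tup}_{\mathcal{A}_{2}}(\hat{h})
= \mathrmbfit{tup}_{\mathcal{A}_{1}}(h){\,\cdot\,}\grave{\tau}_{{\langle{f,g}\rangle}}(I_{2},s_{2})$.}
\]
Substituting into the naturality condition $k{\,\cdot\,}t_{2}=t_{1}{\,\cdot\,}\mathrmbfit{tup}(h,f,g)$ and using the two figures in Fig.~\ref{fig:tbl:mor:large:typ:dom}, the morphism splits (levo) as the opcartesian lift $\acute{\mathrmbfit{tbl}}_{{\langle{f,g}\rangle}}(\mathcal{T}_{1})\xleftarrow{}\mathcal{T}_{1}$ followed by a morphism $\mathcal{T}_{2}\xleftarrow{{\langle{\hat{h},k}\rangle}}\acute{\mathrmbfit{tbl}}_{{\langle{f,g}\rangle}}(\mathcal{T}_{1})$ in the fiber $\mathrmbf{Tbl}(\mathcal{A}_{2})$, or equivalently (dextro) as a morphism $\grave{\mathrmbfit{tbl}}_{{\langle{f,g}\rangle}}(\mathcal{T}_{2})\xleftarrow{{\langle{h,\tilde{k}}\rangle}}\mathcal{T}_{1}$ in the fiber $\mathrmbf{Tbl}(\mathcal{A}_{1})$ following the cartesian lift $\mathcal{T}_{2}\xleftarrow{}\grave{\mathrmbfit{tbl}}_{{\langle{f,g}\rangle}}(\mathcal{T}_{2})$. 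The equivalence of the levo and dextro factorizations is precisely the adjunction $\acute{\mathrmbfit{tbl}}_{{\langle{f,g}\rangle}}\dashv\grave{\mathrmbfit{tbl}}_{{\langle{f,g}\rangle}}$ of Prop.~\ref{prop:typ:dom:tbl:fbr:adj}, which promotes the fibration/opfibration to a bifibration.

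The universal property of the (op)cartesian lift is then inherited from the unit/counit characterization of this fiber adjunction: given any $\mathcal{T}_{2}\xleftarrow{}\mathcal{T}_{1}$ over ${\langle{f,g}\rangle}$, the mediating fiber morphism $\mathcal{T}_{2}\xleftarrow{}\acute{\mathrmbfit{tbl}}_{{\langle{f,g}\rangle}}(\mathcal{T}_{1})$ (respectively $\grave{\mathrmbfit{tbl}}_{{\langle{f,g}\rangle}}(\mathcal{T}_{2})\xleftarrow{}\mathcal{T}_{1}$) is unique by the transpose bijection. The part I expect to require most attention is the \emph{coherence} bookkeeping for pseudofunctoriality --- verifying that composing two type domain morphisms yields table fiber passages that compose up to canonical isomorphism --- but this reduces to the compatibility of pullbacks of signatures with pullbacks of tuple sets, which is routine pasting of pullback squares. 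With fibers, cartesian lifts, universal property, and pseudofunctoriality all established, the general equivalence $\int:[\mathrmbf{Cls},\mathrmbf{Cxt}]\cong\mathrmbf{Fib}(\mathrmbf{Cls}^{\mathrm{op}})$ of footnote~\ref{grothendieck} identifies $\mathrmbf{Tbl}\xrightarrow{\mathrmbfit{data}}\mathrmbf{Cls}^{\mathrm{op}}$ with $\int_{\mathrmbf{Cls}}\mathrmbfit{tbl}$, as claimed.
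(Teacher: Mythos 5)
Your proposal is correct and follows essentially the same route as the paper: the paper leaves this theorem without a separate displayed proof precisely because the surrounding development of \S~\ref{sub:sub:sec:tbl:typ:dom:up} --- the identification of fibers with $\mathrmbf{Tbl}(\mathcal{A})$, the tuple resolution of Lem.~\ref{lem:tbl:fbr:fact:typ:dom}, the levo/dextro factorizations of Fig.~\ref{fig:tbl:mor:large:typ:dom} via Def.~\ref{def:fbr:pass:typ:dom}, and the fiber adjunction of Prop.~\ref{prop:typ:dom:tbl:fbr:adj} --- is the proof, capped off by the equivalence of footnote~\ref{grothendieck}. Your write-up assembles exactly these ingredients in the same order, with the coherence bookkeeping for pseudofunctoriality correctly flagged as the only routine-but-nontrivial residue.
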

%
%

%
\begin{figure}
\begin{center}{\footnotesize{
\begin{tabular}{@{\hspace{-30pt}}c@{\hspace{0pt}}c@{\hspace{0pt}}c}
\begin{tabular}{c}
\setlength{\unitlength}{0.4pt}
\begin{picture}(0,70)(120,-35)
\put(0,0){\makebox(0,0){\footnotesize{$\mathrmbf{Cls}$}}}
\end{picture}
\end{tabular}
&
\multicolumn{2}{c}{$\mathcal{A}_{2} = {\langle{X_{2},Y_{2},\models_{\mathcal{A}_{2}}}\rangle}
\xrightleftharpoons{{\langle{f,g}\rangle}} 
{\langle{X_{1},Y_{1},\models_{\mathcal{A}_{1}}}\rangle} = \mathcal{A}_{1}$}
\\ && \\ \hline
&& \\
\begin{tabular}{c}
\setlength{\unitlength}{0.4pt}
\begin{picture}(0,70)(120,-35)
\put(8,35){\makebox(0,0){\footnotesize{$\mathrmbf{Cls}^{\mathrm{op}}$}}}
\put(10,0){\makebox(0,0)[l]{\scriptsize{$\mathrmbfit{tup}$}}}
\put(0,-35){\makebox(0,0){\footnotesize{$\bigl(\mathrmbf{Adj}{\,\Uparrow\,}\mathrmbf{Set}\bigr)$}}}
\put(0,20){\vector(0,-1){40}}
\end{picture}
\end{tabular}
&
{\footnotesize{$\begin{array}{c@{\hspace{10pt}}}
{f^{\ast}}^{\mathrm{op}}{\;\circ\;}\mathrmbfit{tup}_{\mathcal{A}_{2}}\xLeftarrow{\acute{\tau}_{{\langle{f,g}\rangle}}}\mathrmbfit{tup}_{\mathcal{A}_{1}}
\\
\acute{\tau}_{{\langle{f,g}\rangle}}
= (\varepsilon_{f}^{\mathrm{op}}{\;\circ\;}\mathrmbfit{tup}_{\mathcal{A}_{1}})\bullet({f^{\ast}}^{\mathrm{op}}{\;\circ\;}\grave{\tau}_{{\langle{f,g}\rangle}})
\\
\text{(Eqn.~\ref{typ:dom:tup:bridge} in \S~\ref{sub:sub:sec:tup:bridge:typ:dom})}
\end{array}$}}
&
{\footnotesize{$\begin{array}{@{\hspace{30pt}}c}
\mathrmbfit{tup}_{\mathcal{A}_{2}}\xLeftarrow{\grave{\tau}_{{\langle{f,g}\rangle}}}{\scriptstyle\sum}_{f}^{\mathrm{op}}{\;\circ\;}\mathrmbfit{tup}_{\mathcal{A}_{1}} 
\\
\grave{\tau}_{{\langle{f,g}\rangle}}
= ({\scriptstyle\sum}_{f}^{\mathrm{op}}{\;\circ\;}\acute{\tau}_{{\langle{f,g}\rangle}})\bullet(\eta_{f}^{\mathrm{op}}{\;\circ\;}\mathrmbfit{tup}_{\mathcal{A}_{2}})
\\
\text{(Eqn.~\ref{sign:tup:bridge} in \S~\ref{sub:sub:sec:tup:bridge:sign})}
\end{array}$}}
\\
&& \\ 
&& \\
\hline
&& \\&& \\
\begin{tabular}{c}
\setlength{\unitlength}{0.4pt}
\begin{picture}(0,360)(-80,-120)
\put(-80,180){\makebox(0,0)[r]{{$\mathrmbf{fibered}\;\;\left\{\rule{0pt}{24pt}\right.$}}}
\put(-80,60){\makebox(0,0)[r]{{$\mathrmbf{indexed}\;\;\left\{\rule{0pt}{24pt}\right.$}}}
\end{picture}
\end{tabular}
&
\begin{tabular}{c}
\setlength{\unitlength}{0.4pt}
\begin{picture}(160,340)(0,-100)
\put(0,240){\makebox(0,0){\footnotesize{$\mathrmbf{Tbl}(\mathcal{A}_{2})$}}}
\put(160,240){\makebox(0,0){\footnotesize{$\mathrmbf{Tbl}(\mathcal{A}_{1})$}}}
\put(5,120){\makebox(0,0){\footnotesize{${\mathrmbf{List}(X_{2})}^{\mathrm{op}}$}}}
\put(165,120){\makebox(0,0){\footnotesize{${\mathrmbf{List}(X_{1})}^{\mathrm{op}}$}}}
\put(0,0){\makebox(0,0){\footnotesize{$\mathrmbf{Cxt}$}}}
\put(160,0){\makebox(0,0){\footnotesize{$\mathrmbf{Cxt}$}}}
\put(-6,180){\makebox(0,0)[r]{\scriptsize{$\mathrmbfit{sign}_{\mathcal{A}_{2}}^{\mathrm{op}}$}}}
\put(170,180){\makebox(0,0)[l]{\scriptsize{$\mathrmbfit{sign}_{\mathcal{A}_{1}}^{\mathrm{op}}$}}}
\put(-6,60){\makebox(0,0)[r]{\scriptsize{$\mathrmbfit{tbl}_{\mathcal{A}_{2}}$}}}
\put(170,60){\makebox(0,0)[l]{\scriptsize{$\mathrmbfit{tbl}_{\mathcal{A}_{1}}$}}}
\put(80,258){\makebox(0,0){\scriptsize{$\acute{\mathrmbfit{tbl}}_{{\langle{f,g}\rangle}}$}}}
\put(90,138){\makebox(0,0){\scriptsize{${{f^{\ast}}^{\mathrm{op}}}$}}}
\put(80,-12){\makebox(0,0){\scriptsize{$\mathrmbfit{id}$}}}
\put(80,60){\makebox(0,0){\large{$\overset{\acute{\tau}^{{\Sigma}}_{{\langle{f,g}\rangle}}}{\Longleftarrow}$}}}
\put(105,240){\vector(-1,0){50}}
\put(100,120){\vector(-1,0){40}}
\put(130,0){\vector(-1,0){100}}
\put(0,210){\vector(0,-1){60}}
\put(160,210){\vector(0,-1){60}}
\put(0,90){\vector(0,-1){60}}
\put(160,90){\vector(0,-1){60}}
\put(80,-70){\makebox(0,0){{$\overset{\underbrace{\rule{60pt}{0pt}}}{\rule{0pt}{10pt}\mathrmbf{left}\;\;\mathrmbf{adjoint}}$}}}
\end{picture}
\end{tabular}
&
\begin{tabular}{c}
\setlength{\unitlength}{0.4pt}
\begin{picture}(160,340)(0,-100)
\put(0,240){\makebox(0,0){\footnotesize{$\mathrmbf{Tbl}(\mathcal{A}_{2})$}}}
\put(160,240){\makebox(0,0){\footnotesize{$\mathrmbf{Tbl}(\mathcal{A}_{1})$}}}
\put(5,120){\makebox(0,0){\footnotesize{${\mathrmbf{List}(X_{2})}^{\mathrm{op}}$}}}
\put(165,120){\makebox(0,0){\footnotesize{${\mathrmbf{List}(X_{1})}^{\mathrm{op}}$}}}
\put(0,0){\makebox(0,0){\footnotesize{$\mathrmbf{Cxt}$}}}
\put(160,0){\makebox(0,0){\footnotesize{$\mathrmbf{Cxt}$}}}
\put(-6,180){\makebox(0,0)[r]{\scriptsize{$\mathrmbfit{sign}_{\mathcal{A}_{2}}^{\mathrm{op}}$}}}
\put(170,180){\makebox(0,0)[l]{\scriptsize{$\mathrmbfit{sign}_{\mathcal{A}_{1}}^{\mathrm{op}}$}}}
\put(-6,60){\makebox(0,0)[r]{\scriptsize{$\mathrmbfit{tbl}_{\mathcal{A}_{2}}$}}}
\put(170,60){\makebox(0,0)[l]{\scriptsize{$\mathrmbfit{tbl}_{\mathcal{A}_{1}}$}}}
\put(80,258){\makebox(0,0){\scriptsize{$\grave{\mathrmbfit{tbl}}_{{\langle{f,g}\rangle}}$}}}
\put(90,138){\makebox(0,0){\scriptsize{${{{\Sigma}_{f}}^{\mathrm{op}}}$}}}
\put(80,-12){\makebox(0,0){\scriptsize{$\mathrmbfit{id}$}}}
\put(80,60){\makebox(0,0){\large{$\overset{\grave{\tau}^{\ast}_{{\langle{f,g}\rangle}}}{\Longleftarrow}$}}}
\put(55,240){\vector(1,0){50}}
\put(60,120){\vector(1,0){40}}
\put(30,0){\vector(1,0){100}}
\put(0,210){\vector(0,-1){60}}
\put(160,210){\vector(0,-1){60}}
\put(0,90){\vector(0,-1){60}}
\put(160,90){\vector(0,-1){60}}
\put(80,-70){\makebox(0,0){{$\overset{\underbrace{\rule{60pt}{0pt}}}{\rule{0pt}{10pt}\mathrmbf{right}\;\;\mathrmbf{adjoint}}$}}}
\end{picture}
\end{tabular}
\\ && \\ 
\begin{tabular}{c}
\setlength{\unitlength}{0.4pt}
\begin{picture}(0,70)(120,-35)
\put(8,35){\makebox(0,0){\footnotesize{$\mathrmbf{Cls}^{\mathrm{op}}$}}}
\put(10,0){\makebox(0,0)[l]{\scriptsize{$\mathrmbfit{tbl}$}}}
\put(0,-35){\makebox(0,0){\footnotesize{$\mathrmbf{Adj}$}}}
\put(0,20){\vector(0,-1){40}}
\end{picture}
\end{tabular}
&
{\footnotesize{$\begin{array}{c@{\hspace{10pt}}}
\mathrmbfit{tup}_{\mathcal{A}_{2}}(f^{\ast}(\mathcal{S}_{1}))
\xleftarrow{\acute{\tau}_{{\langle{f,g}\rangle}}(\mathcal{S}_{1})}
\mathrmbfit{tup}_{\mathcal{A}_{1}}(\mathcal{S}_{1})
\\
\mathrmbfit{tbl}_{\mathcal{A}_{2}}(f^{\ast}(\mathcal{S}_{1}))
\xleftarrow{{\Sigma}_{\acute{\tau}_{{\langle{f,g}\rangle}}(\mathcal{S}_{1})}}
\mathrmbfit{tbl}_{\mathcal{A}_{1}}(\mathcal{S}_{1})
\end{array}$}}
&
{\footnotesize{$\begin{array}{@{\hspace{30pt}}c}
\mathrmbfit{tup}_{\mathcal{A}_{2}}(\mathcal{S}_{2})
\xleftarrow{\grave{\tau}_{{\langle{f,g}\rangle}}(\mathcal{S}_{2})}
\mathrmbfit{tup}_{\mathcal{A}_{1}}({\Sigma}_{f}(\mathcal{S}_{2}))
\\
\mathrmbfit{tbl}_{\mathcal{A}_{2}}(\mathcal{S}_{2})
\xrightarrow{{\grave{\tau}_{{\langle{f,g}\rangle}}(\mathcal{S}_{2})}^{\ast}}
\mathrmbfit{tbl}_{\mathcal{A}_{1}}({\Sigma}_{f}(\mathcal{S}_{2}))
\end{array}$}}
\\
&& \\
&& \\
\end{tabular}
}\normalsize}
\end{center}
\caption{Indexed Adjunction of Tables}
\label{fig:ind:fbr:tbl:cls}
\end{figure}
%

\subsection{Fibered Contexts of \texttt{FOLE} Tables}\label{sub:sec:fbr:cxt:tbl}

The Grothendieck constructions for \texttt{FOLE} tables are listed in Tbl.~\ref{tbl:grothen:construct}.
Here we indicated whether the construction is a fibration, an opfibration or a bifibration.
We also list the proposition or theorem proving the construction and its location.
The Grothendieck constructions for \texttt{FOLE} tables are displayed in Fig.~\ref{fig:tbl:gro:constr}.

\begin{table}
\begin{center}
{{\begin{tabular}{c}
{\scriptsize\setlength{\extrarowheight}{3pt}
$\begin{array}{|@{\hspace{3pt}}r@{\hspace{4pt}}|@{\hspace{5pt}}l@{\hspace{3pt}}l@{\hspace{3pt}}|@{\hspace{5pt}}l@{\hspace{15pt}=\int\hspace{15pt}}l@{\hspace{5pt}}|}
\multicolumn{1}{r}{}
&
\multicolumn{2}{l}{}
&
\multicolumn{1}{l}{\textsf{fibered construct}}
&
\multicolumn{1}{l}{\textsf{indexed construct}}
\\\hline
\bowtie
&
\S\ref{sub:sec:fole:comps:sign}
&
\text{(Thm.\ref{thm:fib:cxt:sign:set})}
&
\rule[8pt]{0pt}{5pt}
\mathrmbf{List}\xrightarrow{\;\mathrmbfit{sort}\;}\mathrmbf{Set}
&
\mathrmbf{Set}\xrightarrow{\;\mathrmbfit{list}\;}\mathrmbf{Adj}
\\
\triangleleft
&
\S\ref{sub:sec:fole:comps:typ:dom}
&
\text{(Thm.\ref{thm:fib:cxt:cls:set})}
&
\mathrmbf{Cls}\xrightarrow{\;\mathrmbfit{sort}\;}\mathrmbf{Set}
&
\mathrmbf{Set}^{\mathrm{op}}\!\xrightarrow{\;\mathrmbfit{cls}\;}\mathrmbf{Cxt}
\\\hline\hline
\bowtie
&
\S\ref{sub:sec:tbl:sign:dom}
&
\text{(Thm.\ref{thm:fib:cxt:tbl:dom})}
&
\rule[8pt]{0pt}{5pt}
\mathrmbf{Tbl}\xrightarrow{\;\mathrmbfit{dom}}\mathrmbf{Dom}^{\mathrm{op}}
&
\mathrmbf{Dom}^{\mathrm{op}}\!\xrightarrow{\;\mathrmbfit{tbl}\;}\mathrmbf{Adj}
\\\hline
\bowtie
&
\S\ref{sub:sub:sec:tbl:sign:low}
&
\text{(Thm.\ref{thm:fib:cxt:tbl:S:cls:X})}
&
\rule[8pt]{0pt}{5pt}
\mathrmbf{Tbl}(\mathcal{S})\xrightarrow{\;\mathrmbfit{data}_{\mathcal{S}}}\mathrmbf{Cls}(X)^{\mathrm{op}}
& 
\mathrmbf{Cls}(X)^{\mathrm{op}}\!\xrightarrow{\;\mathrmbfit{tbl}_{\mathcal{S}}\;}\mathrmbf{Adj}
\\
\triangleright
&
\S\ref{sub:sub:sec:tbl:sign:up}
&
\text{(Thm.\ref{thm:fib:cxt:tbl:sign})}
&
\mathrmbf{Tbl}\xrightarrow{\;\mathrmbfit{sign}\;}\mathrmbf{List}^{\mathrm{op}}
&
\mathrmbf{List}^{\mathrm{op}}\!\xrightarrow{\;\mathrmbfit{tbl}\;}\mathrmbf{Cxt}
\\\hline
\bowtie
&
\S\ref{sub:sub:sec:tbl:typ:dom:low}
&
\text{(Thm.\ref{thm:fib:cxt:tbl:A:list:X})}
&
\rule[10pt]{0pt}{5pt}
\mathrmbf{Tbl}(\mathcal{A})\xrightarrow{\;\mathrmbfit{sign}_{\mathcal{A}}\;}\mathrmbf{List}(X)^{\mathrm{op}}
&
\mathrmbf{List}(X)^{\mathrm{op}}\!\xrightarrow{\;\mathrmbfit{tbl}_{\mathcal{A}}\;}\mathrmbf{Adj}
\\
\bowtie
&
\S\ref{sub:sub:sec:tbl:typ:dom:up}
&
\text{(Thm.\ref{thm:fib:cxt:tbl:cls})}
&
\mathrmbf{Tbl}\xrightarrow{\;\mathrmbfit{data}\;}\mathrmbf{Cls}^{\mathrm{op}}
&
\mathrmbf{Cls}^{\mathrm{op}}\!\xrightarrow{\;\mathrmbfit{tbl}\;}\mathrmbf{Adj}
\\\hline\hline
\multicolumn{1}{|r}{}
&&
&
\multicolumn{2}{l|}{\bowtie\;\,=\textit{bifibration},\;\;\triangleleft\,=\textit{fibration},\;\;\triangleright\,=\textit{opfibration}}
\\\hline
\end{array}$}
\end{tabular}}}
\end{center}
\caption{Grothendieck Constructions}
\label{tbl:grothen:construct}
\end{table}
\begin{figure}
\begin{center}
{{\begin{tabular}{c}
\setlength{\unitlength}{0.66pt}
\begin{picture}(320,380)(0,0)
\put(160,320){\makebox(0,0){\footnotesize{$\mathrmbf{Tbl}$}}}
\put(165,370){\makebox(0,0){\footnotesize{$\mathrmbf{Dom}^{\mathrm{op}}$}}}
\put(95,321){\makebox(0,0){\footnotesize{$\mathrmbf{List}^{\mathrm{op}}$}}}
\put(230,321){\makebox(0,0){\footnotesize{$\mathrmbf{Cls}^{\mathrm{op}}$}}}
\put(164,343){\makebox(0,0)[l]{\scriptsize{$\mathrmbfit{dom}$}}}
\put(131.8,327){\makebox(0,0){\scriptsize{$\mathrmbfit{sign}$}}}
\put(190,327){\makebox(0,0){\scriptsize{$\mathrmbfit{data}$}}}
\put(160,330){\vector(0,1){30}}
\put(143,320){\vector(-1,0){30}}
\put(177,320){\vector(1,0){30}}
%
\put(320,160){\makebox(0,0){\footnotesize{$
\underset{\underset{\mathcal{A}={\langle{X,Y,\models_{\mathcal{A}}}\rangle}}
{\mathcal{A}{\;\mapsto\;}\mathrmbf{Tbl}(\mathcal{A})}}
{\mathrmbf{Cls}^{\mathrm{op}}\xrightarrow{\;\mathrmbfit{tbl}\;}\mathrmbf{Adj}}$}}}
\put(0,160){\makebox(0,0){\footnotesize{$\underset{
\underset{\mathcal{S}={\langle{I,s,X}\rangle}}
{\mathcal{S}{\;\mapsto\;}\mathrmbf{Tbl}(\mathcal{S})}}
{\mathrmbf{List}^{\mathrm{op}}{\!\xrightarrow{\:\mathrmbfit{tbl}\;}\;}\mathrmbf{Cxt}}$}}}
\put(160,0){\makebox(0,0){\footnotesize{$
\underset{{\langle{I,s,\mathcal{A}}\rangle}\mapsto\mathrmbf{Tbl}_{{\langle{I,s,\mathcal{A}}\rangle}}}
{\mathrmbf{Dom}\xrightarrow{\;\mathrmbfit{tbl}\;}\mathrmbf{Adj}}$}}}
\put(160,160){\makebox(0,0){\footnotesize{$
\underset{\stackrel{\text{signed domain}}{\text{Sec.~\ref{sub:sec:tbl:sign:dom}}}}{\int_{\mathrmbf{Dom}}}$}}}
\put(240,230){\makebox(0,0){\footnotesize{$
\underset{\stackrel{\text{type domain}}{\text{Sec.~\ref{sub:sub:sec:tbl:typ:dom:up}}}}{\int_{\mathrmbf{Cls}}}$}}}
\put(80,230){\makebox(0,0){\footnotesize{$
\underset{\stackrel{\text{signature}}{\text{Sec.~\ref{sub:sub:sec:tbl:sign:up}}}}{\int_{\mathrmbf{List}}}$}}}
\put(15,175){\vector(1,1){130}}
\put(305,175){\vector(-1,1){130}}
\put(160,20){\vector(0,1){280}}
\dottedline[$\cdot$]{5}(145,15)(20,140)\put(17,143){\vector(-1,1){0}}
\dottedline[$\cdot$]{5}(175,15)(300,140)\put(303,143){\vector(1,1){0}}
%
%
\put(425,120){\makebox(0,0){\scriptsize{$
\Bigl\{\mathrmbf{Tbl}(\mathcal{A})\xrightarrow{\mathrmbfit{sign}_{\mathcal{A}}}\mathrmbf{List}(X)^{\mathrm{op}}
\mid\mathcal{A}\in\mathrmbf{Cls}\Bigr\}$}}}
\put(340,20){\makebox(0,0){\scriptsize{$
\Bigl\{\underset{{\langle{I,s}\rangle}\mapsto{\mathrmbf{Tbl}_{\mathcal{A}}(I,s)}}
{\mathrmbf{List}(X)\xrightarrow{\;\mathrmbfit{tbl}_{\mathcal{A}}}\mathrmbf{Adj}}
\mid\mathcal{A}\in\mathrmbf{Cls}\Bigr\}$}}}
\put(320,60){\makebox(0,0){\scriptsize{$
\underset{\stackrel{\text{$X$-signature}}{\text{Sec.~\ref{sub:sub:sec:tbl:typ:dom:low}}}}{\int_{\mathrmbf{List}(X)}}
$}}}
\put(320,35){\vector(1,1){70}}
%
%
\put(-100,120){\makebox(0,0){\scriptsize{$
\Bigl\{\mathrmbf{Tbl}(\mathcal{S})\xrightarrow{\mathrmbfit{data}_{\mathcal{S}}}\mathrmbf{Cls}(X)^{\mathrm{op}}
\mid\mathcal{S}\in\mathrmbf{List}\Bigr\}$}}}
\put(-20,20){\makebox(0,0){\scriptsize{$
\Bigl\{\underset{{\langle{X,Y,\models_{\mathcal{A}}}\rangle}\mapsto{\mathrmbf{Tbl}_{\mathcal{A}}(I,s)}}
{\mathrmbf{Cls}(X)\xrightarrow{\;\mathrmbfit{tbl}_{\mathcal{S}}}\mathrmbf{Cxt}}
\mid\mathcal{S}\in\mathrmbf{List}\Bigr\}$}}}
\put(0,60){\makebox(0,0){\scriptsize{$
\underset{\stackrel{\text{$X$-type domain}}{\text{Sec.~\ref{sub:sub:sec:tbl:sign:low}}}}{\int_{\mathrmbf{Cls}(X)}}$}}}
\put(0,35){\vector(-1,1){70}}
%
%
\put(385,140){\setlength{\unitlength}{0.35pt}
\begin{picture}(160,340)(0,-100)
\put(0,240){\makebox(0,0){\footnotesize{$\mathrmbf{Tbl}(\mathcal{A}_{2})$}}}
\put(160,240){\makebox(0,0){\footnotesize{$\mathrmbf{Tbl}(\mathcal{A}_{1})$}}}
\put(5,120){\makebox(0,0){\footnotesize{${\mathrmbf{List}(X_{2})}^{\mathrm{op}}$}}}
\put(165,120){\makebox(0,0){\footnotesize{${\mathrmbf{List}(X_{1})}^{\mathrm{op}}$}}}
\put(0,0){\makebox(0,0){\footnotesize{$\mathrmbf{Cxt}$}}}
\put(160,0){\makebox(0,0){\footnotesize{$\mathrmbf{Cxt}$}}}
\put(-6,180){\makebox(0,0)[r]{\scriptsize{$\mathrmbfit{sign}_{\mathcal{A}_{2}}$}}}
\put(170,180){\makebox(0,0)[l]{\scriptsize{$\mathrmbfit{sign}_{\mathcal{A}_{1}}$}}}
\put(-6,60){\makebox(0,0)[r]{\scriptsize{$\mathrmbfit{tbl}_{\mathcal{A}_{2}}$}}}
\put(170,60){\makebox(0,0)[l]{\scriptsize{$\mathrmbfit{tbl}_{\mathcal{A}_{1}}$}}}
\put(80,258){\makebox(0,0){\scriptsize{$\acute{\mathrmbfit{tbl}}_{{\langle{f,g}\rangle}}$}}}
\put(95,138){\makebox(0,0){\scriptsize{${{f^{\ast}}^{\mathrm{op}}}$}}}
\put(80,-12){\makebox(0,0){\scriptsize{$\mathrmbfit{id}$}}}
\put(80,60){\makebox(0,0){\large{$\overset{\acute{\tau}^{{\Sigma}}_{{\langle{f,g}\rangle}}}{\Longleftarrow}$}}}
\put(105,240){\vector(-1,0){50}}
\put(100,120){\vector(-1,0){40}}
\put(130,0){\vector(-1,0){100}}
\put(0,210){\vector(0,-1){60}}
\put(160,210){\vector(0,-1){60}}
\put(0,90){\vector(0,-1){60}}
\put(160,90){\vector(0,-1){60}}
\thicklines
\put(-94,190){\vector(1,0){0}}
\put(-121,190){\line(1,0){14}}
\put(-120,50){\line(0,1){140}}
\put(-121,50){\line(1,0){14}}
\thinlines
\put(-130,120){\makebox(0,0)[r]{\scriptsize{${\int_{\mathrmbf{Cls}}}$}}}
%
\put(80,-65){\makebox(0,0){{$\overset{\underbrace{\rule{60pt}{0pt}}}{\rule{0pt}{10pt}\mathrmbf{left}\;\;\mathrmbf{adjoint}}$}}}
\end{picture}}
%
%
\put(-165,140){\setlength{\unitlength}{0.35pt}
\begin{picture}(160,340)(0,-100)
\put(0,240){\makebox(0,0){\footnotesize{$\mathrmbf{Tbl}(\mathcal{S}_{2})$}}}
\put(160,240){\makebox(0,0){\footnotesize{$\mathrmbf{Tbl}(\mathcal{S}_{1})$}}}
\put(5,120){\makebox(0,0){\footnotesize{${\mathrmbf{Cls}(X_{2})}^{\mathrm{op}}$}}}
\put(165,120){\makebox(0,0){\footnotesize{${\mathrmbf{Cls}(X_{1})}^{\mathrm{op}}$}}}
\put(0,0){\makebox(0,0){\footnotesize{$\mathrmbf{Cxt}$}}}
\put(160,0){\makebox(0,0){\footnotesize{$\mathrmbf{Cxt}$}}}
\put(-6,180){\makebox(0,0)[r]{\scriptsize{$\mathrmbfit{data}_{\mathcal{S}_{2}}$}}}
\put(170,180){\makebox(0,0)[l]{\scriptsize{$\mathrmbfit{data}_{\mathcal{S}_{1}}$}}}
\put(-6,60){\makebox(0,0)[r]{\scriptsize{$\mathrmbfit{tbl}_{\mathcal{S}_{2}}$}}}
\put(170,60){\makebox(0,0)[l]{\scriptsize{$\mathrmbfit{tbl}_{\mathcal{S}_{1}}$}}}
\put(80,258){\makebox(0,0){\scriptsize{$\acute{\mathrmbfit{tbl}}_{{\langle{h,f}\rangle}}$}}}
\put(95,138){\makebox(0,0){\scriptsize{${{f^{-1}}^{\mathrm{op}}}$}}}
\put(80,-12){\makebox(0,0){\scriptsize{$\mathrmbfit{id}$}}}
\put(80,60){\makebox(0,0){\large{$\overset{\tau^{{\Sigma}}_{{\langle{h,f}\rangle}}}{\Longleftarrow}$}}}
\put(105,240){\vector(-1,0){50}}
\put(100,120){\vector(-1,0){40}}
\put(130,0){\vector(-1,0){100}}
\put(0,210){\vector(0,-1){60}}
\put(160,210){\vector(0,-1){60}}
\put(0,90){\vector(0,-1){60}}
\put(160,90){\vector(0,-1){60}}
\thicklines
\put(254,190){\vector(-1,0){0}}
\put(281,190){\line(-1,0){14}}
\put(280,50){\line(0,1){140}}
\put(281,50){\line(-1,0){14}}
\thinlines
\put(290,120){\makebox(0,0)[l]{\scriptsize{${\int_{\mathrmbf{List}}}$}}}
%
%
\end{picture}}
\end{picture}
\\\\
\end{tabular}}}
%
\end{center}
\caption{The Fibered Hierarchy of \texttt{FOLE} Tables}
\label{fig:tbl:gro:constr}
\end{figure}
%

\newpage
\section{Table Constructions}\label{sec:tbl:lim:colim}

In this section we use properties of comma contexts and the Grothendieck construction
to prove that the various (sub)contexts of \texttt{FOLE} tables are complete (joins exist) and cocomplete (unions exist).

\subsection{Preliminaries}\label{sub:sec:prelims}

%
\begin{proposition}\label{prop:cls:co:compl}
The mathematical context of classifications (type domains) $\mathrmbf{Cls}$ is (co)complete, and
its type (sort) and instance (data) projections
$\mathrmbf{Set}^{\mathrm{op}}\!\!\xleftarrow{\;\mathrmbfit{data}^{\mathrm{op}}\!\!}\mathrmbf{Cls}\xrightarrow{\mathrmbfit{sort}}\mathrmbf{Set}$
are (co)continuous.
\end{proposition}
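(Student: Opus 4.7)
The plan is to construct limits and colimits in $\mathrmbf{Cls}$ explicitly with sorts and instances moving in opposite variances, and then read off (co)continuity of the two projections from the construction. Since cocompleteness follows dually (and can even be obtained from completeness via the Chu-like involution swapping types and instances), I describe the limit case in detail.

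Given a small diagram $\mathrmbfit{D}: \mathrmbf{I} \to \mathrmbf{Cls}$ with $\mathrmbfit{D}(i) = \mathcal{A}_{i} = {\langle{X_{i},Y_{i},\models_{i}}\rangle}$ and $\mathrmbfit{D}(a) = {\langle{f_{a},g_{a}}\rangle}$ for $a: i \to j$ (so $f_{a}: X_{i} \to X_{j}$ and $g_{a}: Y_{j} \to Y_{i}$), I would first form the sort limit $X^{\ast} = \lim_{\mathrmbf{I}}X_{i}$ in $\mathrmbf{Set}$ with projections $p_{i}: X^{\ast} \to X_{i}$, and dually the data colimit $Y^{\ast} = \mathrm{colim}_{\mathrmbf{I}^{\mathrm{op}}}Y_{i}$ in $\mathrmbf{Set}$ (equivalently a limit in $\mathrmbf{Set}^{\mathrm{op}}$) with coprojections $q_{i}: Y_{i} \to Y^{\ast}$. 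I then define the classification $\models^{\ast}\,\subseteq\, X^{\ast} \times Y^{\ast}$ by declaring $q_{i}(y_{i}) \models^{\ast} x$ \underline{iff} $y_{i} \models_{i} p_{i}(x)$, for any $i$ and $y_{i} \in Y_{i}$.

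The main obstacle is checking that $\models^{\ast}$ is well-defined on the quotient. For representatives related by a single $a: i \to j$ with $y_{i} = g_{a}(y_{j})$, the infomorphism condition $g_{a}(y_{j}) \models_{i} x_{i}$ \underline{iff} $y_{j} \models_{j} f_{a}(x_{i})$, combined with compatibility of $x$ in the limit cone ($f_{a}(p_{i}(x)) = p_{j}(x)$), gives $y_{i} \models_{i} p_{i}(x)$ \underline{iff} $y_{j} \models_{j} p_{j}(x)$; the general case follows by zigzag. The cone morphisms $\pi_{i} = {\langle{p_{i},q_{i}}\rangle}: \mathcal{A}^{\ast} \to \mathcal{A}_{i}$ then satisfy the infomorphism condition by definition of $\models^{\ast}$. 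For any competing cone $\{\phi_{i} = {\langle{\phi_{i}^{s},\phi_{i}^{d}}\rangle}: \mathcal{B} \to \mathcal{A}_{i}\}$ in $\mathrmbf{Cls}$, the sort parts $\{\phi_{i}^{s}: X_{\mathcal{B}} \to X_{i}\}$ form a cone in $\mathrmbf{Set}$ and the data parts $\{\phi_{i}^{d}: Y_{i} \to Y_{\mathcal{B}}\}$ form a cocone of the diagram $Y_{\bullet}: \mathrmbf{I}^{\mathrm{op}} \to \mathrmbf{Set}$ (by contravariant naturality of infomorphism composition), yielding unique mediating maps $\phi^{s}: X_{\mathcal{B}} \to X^{\ast}$ and $\phi^{d}: Y^{\ast} \to Y_{\mathcal{B}}$; a short calculation using the definition of $\models^{\ast}$ confirms that ${\langle{\phi^{s},\phi^{d}}\rangle}: \mathcal{B} \to \mathcal{A}^{\ast}$ is an infomorphism.

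Colimits are built dually with $X^{\ast} = \mathrm{colim}_{\mathrmbf{I}}X_{i}$, $Y^{\ast} = \lim_{\mathrmbf{I}^{\mathrm{op}}}Y_{i}$, and the analogously defined classification. Once these constructions are in hand, (co)continuity of the projections is immediate: $\mathrmbfit{sort}$ returns the sort component, which is by construction the (co)limit in $\mathrmbf{Set}$; while $\mathrmbfit{data}^{\mathrm{op}}$ returns the data component, which by construction is a (co)limit in $\mathrmbf{Set}^{\mathrm{op}}$ (its $\mathrmbf{Set}$-incarnation having the opposite variance, as required).
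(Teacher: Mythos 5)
Your proof is correct, but it takes a different route from the paper: the paper's entire proof of this proposition is a citation to Barwise and Seligman's \emph{Information Flow}, whereas you supply the explicit construction that the citation stands in for. Your construction is the standard one and all the key points check out: for a limit you take the limit $X^{\ast}=\lim X_{i}$ of the (covariant) sort components and the colimit $Y^{\ast}=\mathrm{colim}\,Y_{i}$ of the (contravariant) instance components, and the one genuinely delicate step --- well-definedness of $\models^{\ast}$ on the quotient $Y^{\ast}$ --- is exactly where the infomorphism condition $g_{a}(y_{j})\models_{i}x_{i}$ \underline{iff} $y_{j}\models_{j}f_{a}(x_{i})$ earns its keep, since it is an \emph{iff} and therefore propagates along zigzags; you identify this correctly. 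The verification that the mediating pair ${\langle{\phi^{s},\phi^{d}}\rangle}$ is again an infomorphism is a one-line unwinding of the definition of $\models^{\ast}$, as you say. Your side remark that cocompleteness can be deduced from completeness via the flip involution ${\langle{X,Y,\models}\rangle}\mapsto{\langle{Y,X,\models^{-1}}\rangle}$, which is a contravariant isomorphism $\mathrmbf{Cls}^{\mathrm{op}}\cong\mathrmbf{Cls}$, is a nice economy not mentioned in the paper. What your approach buys is self-containedness and, importantly for the proposition as stated, the fact that the (co)continuity of $\mathrmbfit{sort}$ and $\mathrmbfit{data}^{\mathrm{op}}$ is visible by construction rather than having to be extracted from the reference; what the paper's citation buys is brevity. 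The only thing I would add is an explicit sentence confirming naturality of the cone maps $\pi_{i}={\langle{p_{i},q_{i}}\rangle}$ with respect to the diagram (i.e.\ $\pi_{i}\cdot\mathrmbfit{D}(a)=\pi_{j}$), which follows immediately from the cone/cocone identities in $\mathrmbf{Set}$ but should be stated.
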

\begin{proof}
Barwise and Seligman~\cite{barwise:seligman:97}.
\mbox{}\hfill\rule{5pt}{5pt}
\end{proof}
%


%
\begin{proposition}\label{prop:clsX:compl}
For any sort set $X$,
the context of $X$-sorted type domains $\mathrmbf{Cls}(X)$ is complete, and
its instance (data) projection
$\mathrmbf{Set}\xleftarrow{\;\mathrmbfit{data}_{X}}\mathrmbf{Cls}(X)^{\mathrm{op}}$
is cocontinuous.
\end{proposition}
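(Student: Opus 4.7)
The plan is to identify $\mathrmbf{Cls}(X)^{\mathrm{op}}$ with a familiar slice context of $\mathrmbf{Set}$ and then invoke standard results about slices. For any $X$-sorted type domain $\mathcal{A}=\langle{X,Y,\models_{\mathcal{A}}}\rangle$, the classification relation is equivalent to its right companion $\beta_{\mathcal{A}}:Y\to\wp X$ defined by $\beta_{\mathcal{A}}(y)=\{x\in X\mid y\models_{\mathcal{A}}x\}$. An $X$-sorted type domain morphism $\mathcal{A}_{2}\xrightleftharpoons{{\langle{\mathrmit{1}_{X},g}\rangle}}\mathcal{A}_{1}$ --- whose defining condition $g(y_{1})\models_{\mathcal{A}_{2}}x$ iff $y_{1}\models_{\mathcal{A}_{1}}x$ was recorded in \S~\ref{sub:sec:fole:comps:typ:dom} --- translates exactly to the equation $\beta_{\mathcal{A}_{2}}\circ g=\beta_{\mathcal{A}_{1}}$. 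Hence there is a canonical isomorphism of mathematical contexts
\[
\mathrmbf{Cls}(X)^{\mathrm{op}}\;\cong\;\bigl(\mathrmbf{Set}{\,\downarrow\,}\wp X\bigr)
\]
under which the data projection $\mathrmbfit{data}_{X}$ is identified with the standard domain projection $(Y,\beta)\mapsto Y$.

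First I would establish this isomorphism explicitly, then appeal to the standard fact that for any set $Z$ the slice context $\bigl(\mathrmbf{Set}{\,\downarrow\,}Z\bigr)$ is cocomplete and its domain projection to $\mathrmbf{Set}$ \emph{creates} colimits: a colimit is computed by taking the colimit of the underlying sets and using the universal property of the colimiting cocone to induce the unique structure map into $Z$. Applied here with $Z=\wp X$, this immediately yields that $\mathrmbf{Cls}(X)^{\mathrm{op}}$ is cocomplete with $\mathrmbfit{data}_{X}$ cocontinuous, which is equivalent to the claim that $\mathrmbf{Cls}(X)$ is complete and $\mathrmbfit{data}_{X}$ is cocontinuous.

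For the reader preferring an explicit description, the limit of a diagram $D:\mathrmbf{J}\to\mathrmbf{Cls}(X)$ with $D(j)=\langle{X,Y_{j},\models_{j}}\rangle$ and transition infomorphisms $g_{\phi}:Y_{k}\to Y_{j}$ (for $\phi:j\to k$ in $\mathrmbf{J}$) is built by taking $Y^{\ast}=\mathrm{colim}_{\mathrmbf{J}^{\mathrm{op}}}Y_{(\mbox{-})}$ in $\mathrmbf{Set}$, with coprojections $\iota_{j}:Y_{j}\to Y^{\ast}$, and declaring $\iota_{j}(y_{j})\models^{\ast}x$ iff $y_{j}\models_{j}x$. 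Verification of the limit universal property then reduces mechanically to that of the underlying colimit in $\mathrmbf{Set}$, and automatically recovers cocontinuity of $\mathrmbfit{data}_{X}$.

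The only non-formal step is checking that the relation $\models^{\ast}$ is well-defined on the equivalence classes of the $\mathrmbf{Set}$-colimit; this is exactly what the infomorphism condition on each $g_{\phi}$ guarantees, since it forces $g_{\phi}(y_{k})\models_{j}x\Leftrightarrow y_{k}\models_{k}x$ and hence compatibility of $\models^{\ast}$ with the generating relations. The main conceptual point to keep straight throughout is the contravariance of the instance leg of an infomorphism: limits on the $\mathrmbf{Cls}(X)$-side are built from colimits on the $\mathrmbf{Set}$-side of instance sets, which is precisely what the cocontinuity statement about $\mathrmbfit{data}_{X}$ records.
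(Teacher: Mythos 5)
Your proof is correct, but it takes a genuinely different route from the paper's. The paper's proof is a direct case check: it exhibits products (instance set the disjoint union of the instance sets), a terminal object (instance set $\emptyset$), and pullbacks (instance set the pushout of the instance spans), and assembles general limits from these. You instead transport the whole problem across the isomorphism $\mathrmbf{Cls}(X)^{\mathrm{op}}\cong\bigl(\mathrmbf{Set}{\,\downarrow\,}{\wp}X\bigr)$ induced by $\mathcal{A}\mapsto\beta_{\mathcal{A}}$, and then invoke the standard fact that the domain projection of a slice of $\mathrmbf{Set}$ creates colimits. The identification is sound: a classification $\models_{\mathcal{A}}\subseteq X{\times}Y$ is exactly a function $Y\to{\wp}X$, and the infomorphism condition over $\mathrm{1}_{X}$ is exactly commutativity over ${\wp}X$, with the variance working out because the instance leg of an infomorphism is contravariant. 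What your approach buys: it handles all small limits uniformly rather than via the three generating cases, it yields the stronger statement that $\mathrmbfit{data}_{X}$ creates (not merely preserves) the relevant colimits, and it places $\mathrmbf{Cls}(X)$ squarely inside the comma-context machinery the paper already uses everywhere else --- one could then cite Fact~\ref{fact:comma:colim} for the opspan $\mathrmbf{Set}\xrightarrow{\mathrmbfit{1}}\mathrmbf{Set}\xleftarrow{{\wp}X}\mathrmbf{1}$, exactly parallel to how $\mathrmbf{List}(X)=\bigl(\mathrmbf{Set}{\,\downarrow\,}X\bigr)$ is treated. What the paper's approach buys is the explicit concrete description of each limit (which it reuses later, e.g.\ in the join construction), without needing the reader to accept the slice identification. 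Your explicit colimit description of $Y^{\ast}$ with $\models^{\ast}$ agrees with the paper's case-by-case descriptions, and your well-definedness check on the equivalence classes is the one point that genuinely needs the infomorphism condition, so nothing is missing.
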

\begin{proof}
To prove the proposition in general, use the three special cases: 
any collection of $X$-sorted type domains has a \emph{product}, 
whose instance set is the coproduct (disjoint union) of the collection of instance (data) sets;
there is a \emph{terminal} $X$-sorted type domain,
whose instance (data) set is the empty set $\emptyset$; and
any opspan of $X$-sorted type domain morphisms has a \emph{pullback}, 
whose instance set is the pushout of the instance (data) projection span.
%
\mbox{}\hfill\rule{5pt}{5pt}
\end{proof}

\mbox{}\newline

\begin{proposition}\label{prop:fbr:pass:cts}
\mbox{}\newline
{{\begin{tabular}{c@{\hspace{30pt}}c}
{{\begin{tabular}[t]{p{200pt}}
For any sort function $X_{2}\xrightarrow{f}X_{1}$, 
the type domain fiber passage
$\mathrmbf{Cls}(X_{2})\xleftarrow{\mathrmbfit{cls}(f)}\mathrmbf{Cls}(X_{1})$
is continuous (preserves limits):
$\prod(\mathrmbfit{A}{\,\circ}{f}^{-1}) = {f}^{-1}(\prod\mathrmbfit{A})$
for any diagram
$\mathrmbf{I}\xrightarrow{\mathrmbfit{A}}\mathrmbf{Cls}(X_{1})$.
\end{tabular}}}
&
{{\begin{tabular}[c]{c}
\setlength{\unitlength}{0.63pt}
\begin{picture}(120,40)(0,57)
\put(60,120){\makebox(0,0){\footnotesize{$\mathrmbf{I}$}}}
\put(0,60){\makebox(0,0){\footnotesize{$\mathrmbf{Cls}(X_{2})$}}}
\put(120,60){\makebox(0,0){\footnotesize{$\mathrmbf{Cls}(X_{1})$}}}
\put(60,0){\makebox(0,0){\footnotesize{$\mathrmbf{Set}^{\mathrm{op}}$}}}
\put(30,95){\makebox(0,0)[r]{\footnotesize{$\mathrmbfit{A}{\,\circ\,}{f}^{-1}$}}}
\put(95,95){\makebox(0,0)[l]{\footnotesize{$\mathrmbfit{A}$}}}
\put(65,70){\makebox(0,0){\footnotesize{${f}^{-1}$}}}
\put(25,27){\makebox(0,0)[r]{\scriptsize{$\mathrmbfit{data}_{X_{2}}^{\mathrm{op}}$}}}
\put(97,27){\makebox(0,0)[l]{\scriptsize{$\mathrmbfit{data}_{X_{1}}^{\mathrm{op}}$}}}
\put(70,110){\vector(1,-1){40}}
\put(50,110){\vector(-1,-1){40}}
\put(90,60){\vector(-1,0){60}}
\put(10,50){\vector(1,-1){40}}
\put(110,50){\vector(-1,-1){40}}
\end{picture}
\end{tabular}}}
\end{tabular}}}
\end{proposition}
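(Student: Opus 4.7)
The plan is to verify continuity of $f^{-1}$ by reducing to the three generating classes of limits used in the proof of Prop.~\ref{prop:clsX:compl}: the terminal object, arbitrary products, and pullbacks. The structural key is that $f^{-1}$ leaves the data value set untouched; for any $\mathcal{A}_1 = {\langle{X_1,Y_1,\models_{\mathcal{A}_1}}\rangle}$, the image $f^{-1}(\mathcal{A}_1) = {\langle{X_2,Y_1,\models_{f^{-1}(\mathcal{A}_1)}}\rangle}$ shares the underlying instance set $Y_1$, with extents merely reindexed via $(f^{-1}(\mathcal{A}_1))_{x_2} = \mathcal{A}_{1,f(x_2)}$. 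In diagrammatic terms the triangle $\mathrmbfit{data}_{X_2}^{\mathrm{op}}{\,\circ\,}{f}^{-1} = \mathrmbfit{data}_{X_1}^{\mathrm{op}}$ commutes strictly, so $f^{-1}$ lives over $\mathrmbf{Set}^{\mathrm{op}}$ along the identity.

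First I would verify the terminal case: the terminal $X_1$-sorted type domain ${\langle{X_1,\emptyset,\emptyset}\rangle}$ is sent by $f^{-1}$ to ${\langle{X_2,\emptyset,\emptyset}\rangle}$, which is terminal in $\mathrmbf{Cls}(X_2)$. Next, by the cocontinuity of $\mathrmbfit{data}_{X_1}$ established in Prop.~\ref{prop:clsX:compl}, a product $\prod_{i\in I} \mathcal{A}_{1,i}$ has data set $Y = \coprod_{i} Y_{1,i}$ with classification given summand-wise: for $y \in Y_{1,j}{\,\hookrightarrow\,}Y$, $y \models x_1$ iff $y \models_{1,j} x_1$. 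Applying $f^{-1}$ yields the $X_2$-sorted type domain on $Y$ with $y \models x_2$ iff $y \models_{1,j} f(x_2)$. On the other hand, $\prod_{i} f^{-1}(\mathcal{A}_{1,i})$ has the same underlying coproduct $\coprod_{i} Y_{1,i}$ with classification $y \models x_2$ iff $y \models_{f^{-1}(\mathcal{A}_{1,j})} x_2$ iff $y \models_{1,j} f(x_2)$, agreeing summand-for-summand with the image of the product. The pullback case is handled identically, using the pushout description of the data set of a pullback. These three cases suffice by the reduction in Prop.~\ref{prop:clsX:compl}.

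The main obstacle is purely bookkeeping: keeping the extents, classification relations, and sort sets clearly distinguished across $X_1$ and $X_2$. Since $f^{-1}$ acts as the identity on data and purely as a reindexing on sorts, no genuine conceptual hurdle arises once one observes that limits in $\mathrmbf{Cls}(X)$ are entirely controlled by their data-side colimits, on which $f^{-1}$ has no effect. An alternative, slicker formulation observes that under the natural identification $\mathrmbf{Cls}(X) \cong (\mathrmbf{Set}{\,\downarrow\,}{\wp}X)^{\mathrm{op}}$ the passage $f^{-1}$ corresponds to post-composition with the inverse image function ${\wp}X_1 \rightarrow {\wp}X_2$; post-composition preserves colimits in a slice context, hence limits in the opposite. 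I would prefer the elementary case-by-case verification, however, as it parallels the style of Prop.~\ref{prop:clsX:compl} and keeps the argument within the idiom already established in the paper.
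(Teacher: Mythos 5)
Your proposal is correct and follows essentially the same route as the paper: the paper's own proof is a two-line sketch that reduces to products and pullbacks and notes that inverse image preserves the data projection (${f}^{-1}{\;\circ\;}\mathrmbfit{data}_{X_{2}}^{\mathrm{op}} = \mathrmbfit{data}_{X_{1}}^{\mathrm{op}}$), which is exactly the structural observation you lead with. Your write-up simply fills in the summand-wise verification for the terminal object, products, and pullbacks that the paper leaves implicit.
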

\begin{proof}
To prove this, show that the inverse image of the limit is the limit of the inverse image of any diagram in $\mathrmbf{Cls}(X_{1})$.
We need only show this for products and pullbacks.
We note that inverse image preserves data projection:  
${f}^{-1}{\;\circ\;}\mathrmbfit{data}_{X_{2}}^{\mathrm{op}} = \mathrmbfit{data}_{X_{1}}^{\mathrm{op}}$.%
\mbox{}\hfill\rule{5pt}{5pt}
\end{proof}
%


%
\begin{proposition}\label{prop:tup:pass:cts}
The tuple passage 
$\mathrmbf{List}(X)^{\mathrm{op}}\xrightarrow{\mathrmbfit{tup}_{\mathcal{A}}}\mathrmbf{Set}$
is continuous.
\end{proposition}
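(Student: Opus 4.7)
The plan is to show that the contravariant functor $\mathrmbfit{tup}_{\mathcal{A}}$ sends colimits in $\mathrmbf{List}(X)$ to limits in $\mathrmbf{Set}$. I would first reduce continuity to the two generating cases: preservation of arbitrary products (dually, of coproducts in $\mathrmbf{List}(X)$) and preservation of equalizers (dually, of coequalizers in $\mathrmbf{List}(X)$). The structural fact enabling this reduction is that colimits in the slice $\mathrmbf{List}(X) = (\mathrmbf{Set}{\,\downarrow\,}X)$ are created by the arity projection $\mathrmbfit{ind}_{X}$: for a diagram $\mathrmbf{J}\xrightarrow{\mathrmbfit{S}}\mathrmbf{List}(X)$ with $\mathrmbfit{S}(j) = {\langle{I_{j},s_{j}}\rangle}$, the colimit is ${\langle{I,s}\rangle}$ where $I = \mathrm{colim}_{j}I_{j}$ in $\mathrmbf{Set}$ and $s\colon I\to X$ is the unique mediator induced by the cocone $\{s_{j}\}_{j}$, with coprojections $I_{j}\xrightarrow{\iota_{j}}I$ satisfying $\iota_{j}\cdot s = s_{j}$.

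For the coproduct case, using the explicit formula $\mathrmbfit{tup}_{\mathcal{A}}(I,s) = \prod_{i\in I}\mathcal{A}_{s(i)}$, I would compute directly
\[
\mathrmbfit{tup}_{\mathcal{A}}\bigl(\textstyle\coprod_{j}{\langle{I_{j},s_{j}}\rangle}\bigr)
\;=\; \prod_{i\in \coprod_{j}I_{j}}\mathcal{A}_{[s_{j}](i)}
\;\cong\; \prod_{j}\prod_{i\in I_{j}}\mathcal{A}_{s_{j}(i)}
\;=\; \prod_{j}\mathrmbfit{tup}_{\mathcal{A}}(I_{j},s_{j}),
\]
using the standard identification of a product indexed by a disjoint union with an iterated product. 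For coequalizers of a parallel pair ${\langle{I_{2},s_{2}}\rangle}\rightrightarrows{\langle{I_{1},s_{1}}\rangle}$ with coequalizer ${\langle{I,s}\rangle}$, a tuple $t\colon I\to Y$ in $\mathrmbfit{tup}_{\mathcal{A}}(I,s)$ corresponds, by the universal property of $I = \mathrm{coeq}(I_{2}\rightrightarrows I_{1})$ applied to $\mathrmbf{Set}(\mathord{-},Y)$, to an element $t_{1}\in\mathrmbfit{tup}_{\mathcal{A}}(I_{1},s_{1})$ whose two restrictions to $I_{2}$ agree, i.e., to an element of the equalizer of the induced pair $\mathrmbfit{tup}_{\mathcal{A}}(I_{1},s_{1})\rightrightarrows\mathrmbfit{tup}_{\mathcal{A}}(I_{2},s_{2})$. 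This delivers the required limit presentation.

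The main point to watch is not any deep obstacle but the bookkeeping: one must verify that the classification constraint $t(i)\models_{\mathcal{A}}s(i)$ over the colimit arity $I$ is literally the intersection (limit) of the analogous constraints over the constituent $I_{j}$. This is immediate from the cocone factorization $\iota_{j}\cdot s = s_{j}$, so that $t\colon I\to Y$ satisfies the typing on $(I,s)$ iff each $\iota_{j}\cdot t$ satisfies the typing on $(I_{j},s_{j})$. With this pointwise compatibility, the hom-set continuity of $\mathrmbf{Set}(\mathord{-},Y)$ restricts cleanly to $\mathrmbfit{tup}_{\mathcal{A}}$, yielding continuity of $\mathrmbf{List}(X)^{\mathrm{op}}\xrightarrow{\mathrmbfit{tup}_{\mathcal{A}}}\mathrmbf{Set}$.
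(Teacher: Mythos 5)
Your proof is correct, and it is somewhat more thorough than the paper's own argument, which is a two-line sketch reducing continuity to two generating cases: the tuple passage sends the \emph{initial object} of $\mathrmbf{List}(X)$ to the terminal set and sends \emph{pushouts} of spans to pullbacks of the image opspans. You instead decompose colimits into arbitrary \emph{coproducts} and \emph{coequalizers}, which is the decomposition that actually covers all small colimits (the paper's initial-object-plus-pushouts reduction, read literally, only yields preservation of finite colimits, so your version better matches the unqualified claim of continuity). Your explicit computations are also sound: colimits in the slice $(\mathrmbf{Set}{\,\downarrow\,}X)$ are created by the arity projection; the identification $\prod_{i\in\coprod_{j}I_{j}}\mathcal{A}_{[s_{j}](i)}\cong\prod_{j}\prod_{i\in I_{j}}\mathcal{A}_{s_{j}(i)}$ handles coproducts; and for coequalizers the surjectivity of the quotient map $I_{1}\to I$ guarantees that the typing constraint $t(i)\models_{\mathcal{A}}s(i)$ transfers both ways, so the hom-set continuity of $\mathrmbf{Set}(-,Y)$ restricts to the typed subsets exactly as you say. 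What the paper's route buys is brevity and a direct match with the binary-join example of \S\ref{sub:sec:eg} (where the pushout-to-pullback instance is the one actually used); what yours buys is a complete verification at the level of generality the proposition asserts.
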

\begin{proof}
We need only show that the tuple passage 
maps the initial object in ${\mathrmbf{List}(X)}$ 
to the terminal object in $\mathrmbf{Set}$
and maps the pushout of a span in ${\mathrmbf{List}(X)}$ 
to the pullback of the image opspan in $\mathrmbf{Set}$.
\mbox{}\hfill\rule{5pt}{5pt}
\end{proof}



\newpage
\subsection{Propositions}\label{sub:sub:sec:props}

%
\begin{table}
\begin{center}
{{\begin{tabular}{c}
{\footnotesize\setlength{\extrarowheight}{3.5pt}
$\begin{array}{|@{\hspace{3pt}}r@{\hspace{10pt}}cccccccc@{\hspace{3pt}}|}
\hline
{\textsf{complete:}}\rule[-5pt]{0pt}{5pt}
&
\overset{\downarrow}{\mathrmbf{List}(X)},
&
\overset{\downarrow,\int}{\mathrmbf{List}},
&
\overset{\ast}{\mathrmbf{Cls}},
&
\overset{\downarrow}{\mathrmbf{Dom}},
&
\overset{\downarrow}{\mathrmbf{Tbl}_{\mathcal{A}}(I,s)},
&
&
\overset{\downarrow,\int}{\mathrmbf{Tbl}(\mathcal{A})},
&
\overset{\int}{\mathrmbf{Tbl}}
\\\hline\hline
{\textsf{cocomplete:}}\rule[-5pt]{0pt}{5pt}
&
\overset{\downarrow}{\mathrmbf{List}(X)},
&
\overset{\downarrow,\int}{\mathrmbf{List}},
&
\overset{\ast}{\mathrmbf{Cls}},
&
\overset{\downarrow}{\mathrmbf{Dom}},
&
\overset{\downarrow}{\mathrmbf{Tbl}_{\mathcal{A}}(I,s)},
&
\overset{\downarrow,\int}{\mathrmbf{Tbl}(\mathcal{S})},
&
\overset{\downarrow,\int}{\mathrmbf{Tbl}(\mathcal{A})},
&
\overset{\downarrow,\int}{\mathrmbf{Tbl}}
\\\hline\hline
{\textit{proven by:}}\rule[-5pt]{0pt}{5pt}
&
\multicolumn{8}{l|}{{\ast=\text{Info. Flow.}
\text{\cite{barwise:seligman:97}},
\;\;\downarrow\,=\text{comma context},
\;\;\int=\text{Grothendieck}
}}
\\\hline
\end{array}$}
\end{tabular}}}
\end{center}
\caption{Complete/Cocomplete Contexts}
\label{tbl:co:complete:cxts}
\end{table}
%

\subsubsection{Using Comma Contexts.}\label{sub:sub:sec:props:comma}

These propositions use Facts~\ref{fact:comma:lim},~\ref{fact:comma:colim} in \S~\ref{sub:sec:comma:cxt}.

\begin{proposition}\label{prop:com:cxt:lim:colim}
The comma contexts of $X$-signatures, signatures, signed domains, ${\langle{\mathcal{A},I,s}\rangle}$-tables, and $\mathcal{A}$-tables 
are associated with the following passage opspans:
\begin{center}
{{\footnotesize\setlength{\extrarowheight}{4pt}$\begin{array}{r@{\;=\;}l@{\hspace{6pt}:\hspace{25pt}}l}
\multicolumn{2}{c}{\textit{comma context}} &
\textit{passage opspan}
\\
\mathrmbf{List}(X)
& \bigl(\mathrmbf{Set}{\,\downarrow\,}X\bigr)
& \mathrmbf{Set}\xrightarrow{\;\mathrmbfit{1}\;}\mathrmbf{Set}\xleftarrow{\;X\;}\mathrmbf{1}
\\
\mathrmbf{List}
& \bigl(\mathrmbf{Set}{\,\downarrow\,}\mathrmbf{Set}\bigr)
& \mathrmbf{Set}\xrightarrow{\;\mathrmbfit{1}\;}\mathrmbf{Set}\xleftarrow{\;\mathrmbfit{1}\;}\mathrmbf{Set}
\\
\mathrmbf{Dom}
& \bigl(\mathrmbf{Set}{\,\downarrow\,}\mathrmbfit{sort}\bigr)
& \mathrmbf{Set}\xrightarrow{\;\mathrmbfit{1}\;}\mathrmbf{Set}\xleftarrow{\;\mathrmbfit{sort}\;}\mathrmbf{Cls}
\\
\mathrmbf{Tbl}_{\mathcal{A}}(I,s)
& \bigl(\mathrmbf{Set}{\,\downarrow\,}\mathrmbfit{tup}_{\mathcal{A}}(I,s)\bigr)
& \mathrmbf{Set}\xrightarrow{\;\mathrmbfit{1}\;}\mathrmbf{Set}\xleftarrow{\;\mathrmbfit{tup}_{\mathcal{A}}(I,s)\;}\mathrmbf{1}
\\
\mathrmbf{Tbl}(\mathcal{A})
& \bigl(\mathrmbf{Set}{\,\downarrow\,}\mathrmbfit{tup}_{\mathcal{A}}\bigr)
& \mathrmbf{Set}\xrightarrow{\;\mathrmbfit{1}\;}\mathrmbf{Set}\xleftarrow{\;\mathrmbfit{tup}_{\mathcal{A}}\;}\mathrmbf{List}(X)^{\mathrm{op}}
\text{.}
\end{array}$}}
\end{center}
respectively.
Hence, they are (co)complete and their projections
\begin{center}
{\begin{tabular}{@{\hspace{20pt}}c}
{{\footnotesize\setlength{\extrarowheight}{4pt}$\begin{array}[t]{@{\hspace{5pt}}r@{\hspace{5pt}}c@{\hspace{5pt}}l@{\hspace{5pt}}}
\mathrmbf{Set}\xleftarrow{\mathrmbfit{arity}_{X}}
\!\!\!\!\!\!
&
{\mathrmbf{List}(X)}
&
\!\!\!\!\!
\xrightarrow{\;\;\;}\mathrmbf{1}
\\
\mathrmbf{Set}\xleftarrow{\;\mathrmbfit{arity}\;}
\!\!\!\!\!\!\!\!\!\!
&
{\mathrmbf{List}}
&
\!\!\!\!\!\!\!\!\!
\xrightarrow{\;\mathrmbfit{sort}\;}\mathrmbf{Set}
\\
\mathrmbf{Set}\xleftarrow{\;\mathrmbfit{arity}\;}
\!\!\!\!\!\!\!\!\!
&
{\mathrmbf{Dom}}
&
\!\!\!\!\!\!\!
\xrightarrow{\;\mathrmbfit{data}\;}\mathrmbf{Cls}
\\
\mathrmbf{Set}\xleftarrow{\;\mathrmbfit{key}_{\mathcal{A}}(I,s)\;}
\!\!\!
&
{\mathrmbf{Tbl}_{\mathcal{A}}(I,s)}
&
\!
\xrightarrow{\;\;\;}\mathrmbf{1}
\\
\mathrmbf{Set}\xleftarrow{\;\mathrmbfit{key}_{\mathcal{A}\;}}
\!\!\!\!\!\!
&
{\mathrmbf{Tbl}(\mathcal{A})}
& 
\!\!\!\!\!
\xrightarrow{\mathrmbfit{sign}_{\mathcal{A}}}\mathrmbf{List}(X)^{\mathrm{op}}
\end{array}$}}
\end{tabular}}
\end{center}
are (co)continuous.
%
\comment{\footnote{The coproduct of two signatures ${\langle{I_{1},s_{1}}\rangle}$ and ${\langle{I_{2},s_{2}}\rangle}$
is the signature 
${\langle{I_{1},s_{1}}\rangle} + {\langle{I_{2},s_{2}}\rangle} = {\langle{I_{1} + I_{2},[s_{1},s_{2}]}\rangle}$
with injections
${\langle{I_{1},s_{1}}\rangle}\xrightarrow{\iota_{1}}
{\langle{I_{1}+I_{2},[s_{1},s_{2}]}\rangle}
\xleftarrow{\iota_{2}}{\langle{I_{2},s_{2}}\rangle}$.
Given any opspan of signature morphisms
${\langle{I_{1},s_{1}}\rangle} \xrightarrow{h_{1}} {\langle{I,s}\rangle} \xleftarrow{h_{2}} {\langle{I_{2},s_{2}}\rangle}$,
the pairing function ${\langle{I_{1} + I_{2},[s_{1},s_{2}]}\rangle}\xrightarrow{[h_{1},h_{2}]} {\langle{I,s}\rangle}$
is the mediating morphism to the coproduct.
Any $X$-signature ${\langle{I,s}\rangle} \in \mathrmbf{List}(X)$
resolves into the coproduct ${\langle{I,s}\rangle} = \coprod_{i\in{I}} {\langle{1,s_{i}}\rangle}$;
entity types (sorts) form the subcontext
$X \hookrightarrow \mathrmbf{List}(X) : x \mapsto {\langle{1,x}\rangle}$.}}
%
\end{proposition}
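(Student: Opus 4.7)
The plan is to apply Facts~\ref{fact:comma:lim} and~\ref{fact:comma:colim} of \S\ref{sub:sec:comma:cxt} uniformly to each of the five listed opspans. Recall these facts state that a comma context $\bigl(\mathrmbf{A}{\downarrow}\mathrmbf{B}\bigr)$ associated with the opspan $\mathrmbf{A}\xrightarrow{\mathrmbfit{F}}\mathrmbf{C}\xleftarrow{\mathrmbfit{G}}\mathrmbf{B}$ is complete whenever $\mathrmbf{A}$ and $\mathrmbf{B}$ are complete and $\mathrmbfit{G}$ is continuous (dually for colimits with $\mathrmbfit{F}$), and that in this situation the two comma projections inherit (co)continuity. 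So the work splits into two checks per row: (i) the two ends of the opspan are (co)complete, and (ii) the appropriate leg preserves (co)limits.

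First I would dispatch the three ``easy'' rows. For $\mathrmbf{List}(X)$ and $\mathrmbf{Tbl}_{\mathcal{A}}(I,s)$ the right end is $\mathrmbf{1}$, which is both complete and cocomplete, and the right leg is a constant passage, which trivially preserves everything; both ends of the $\mathrmbf{List}$ opspan are $\mathrmbf{Set}$ with both legs the identity. In all three rows, $\mathrmbf{Set}$ is well-known to be bicomplete, so the hypotheses of Facts~\ref{fact:comma:lim} and~\ref{fact:comma:colim} are met, giving bicompleteness together with (co)continuity of the arity and key projections.

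Next I would handle $\mathrmbf{Dom}$: the right end is $\mathrmbf{Cls}$, which is bicomplete by Proposition~\ref{prop:cls:co:compl}, and the right leg $\mathrmbfit{sort}:\mathrmbf{Cls}\to\mathrmbf{Set}$ is both continuous and cocontinuous by that same proposition. Hence Facts~\ref{fact:comma:lim} and~\ref{fact:comma:colim} deliver bicompleteness of $\mathrmbf{Dom}$ and (co)continuity of $\mathrmbfit{arity}$ and $\mathrmbfit{data}$. Finally, for $\mathrmbf{Tbl}(\mathcal{A})$ the right end is $\mathrmbf{List}(X)^{\mathrm{op}}$, whose bicompleteness follows by dualizing the bicompleteness of $\mathrmbf{List}(X)$ already established in the first row. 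The left leg $\mathrmbfit{1}:\mathrmbf{Set}\to\mathrmbf{Set}$ trivially preserves colimits, so cocompleteness of $\mathrmbf{Tbl}(\mathcal{A})$ and cocontinuity of the projections are immediate; for completeness I would invoke Proposition~\ref{prop:tup:pass:cts}, which says that the right leg $\mathrmbfit{tup}_{\mathcal{A}}:\mathrmbf{List}(X)^{\mathrm{op}}\to\mathrmbf{Set}$ is continuous, and then conclude by Fact~\ref{fact:comma:lim}.

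The only point demanding care will be the bookkeeping of variances: one must remember that the right leg of the $\mathrmbf{Tbl}(\mathcal{A})$-opspan is valued in $\mathrmbf{List}(X)^{\mathrm{op}}$ (so ``$\mathrmbf{List}(X)$ complete'' is what is used to get ``$\mathrmbf{Tbl}(\mathcal{A})$ cocomplete'') and symmetrically for the limit direction. No new construction of (co)limits is needed beyond those already packaged in the comma-context facts; the proof is essentially a table of hypothesis verifications, and the main obstacle is purely notational rather than mathematical.
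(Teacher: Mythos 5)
Your proof is correct and follows essentially the same route as the paper's: the paper's own proof is a one-line verification that $\mathrmbf{Set}$, $\mathrmbf{Cls}$, $\mathrmbf{1}$ and $\mathrmbf{List}(X)^{\mathrm{op}}$ are (co)complete, that $\mathrmbf{Set}\xrightarrow{\mathrmbfit{1}}\mathrmbf{Set}$ is (co)continuous, and that the right legs ($\mathrmbfit{1}$, $\mathrmbfit{sort}$, the constant passages, and $\mathrmbfit{tup}_{\mathcal{A}}$) are continuous, exactly as you do, invoking Facts~\ref{fact:comma:lim} and~\ref{fact:comma:colim} together with Propositions~\ref{prop:cls:co:compl} and~\ref{prop:tup:pass:cts}. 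Your write-up is merely a more explicit row-by-row version of the same hypothesis check, including the variance bookkeeping for $\mathrmbf{List}(X)^{\mathrm{op}}$ that the paper leaves implicit.
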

\begin{proof}
The contexts $\mathrmbf{Set}$, $\mathrmbf{Cls}$, $\mathrmbf{1}$ and $\mathrmbf{List}(X)^{\mathrm{op}}$ are (co)complete; 
the passage $\mathrmbf{Set}\xrightarrow{\mathrmbfit{1}}\mathrmbf{Set}$ is (co)cocontinuous; and
the passages 
$\mathrmbf{Set}\xrightarrow{\mathrmbfit{1}}\mathrmbf{Set}$,
$\mathrmbf{Cls}\xrightarrow{\mathrmbfit{sort}}\mathrmbf{Set}$,
$\mathrmbf{1}\xrightarrow{X,\;\mathrmbfit{tup}_{\mathcal{A}}(I,s)}\mathrmbf{Set}$, and 
$\mathrmbf{List}(X)^{\mathrm{op}}\xrightarrow{\mathrmbfit{tup}_{\mathcal{A}}}\mathrmbf{Set}$
are continuous.
\mbox{}\hfill\rule{5pt}{5pt}
\end{proof}

\newpage

\begin{proposition}\label{prop:com:cxt:colim}
The comma contexts of 
tables and $\mathcal{S}$-tables 
are associated with the following passage opspans:
\begin{center}
{{\footnotesize\setlength{\extrarowheight}{4pt}$\begin{array}{r@{\;=\;}l@{\hspace{6pt}:\hspace{25pt}}l}
\multicolumn{2}{c}{\textit{comma context}} &
\textit{passage opspan}
\\
\mathrmbf{Tbl}
& \bigl(\mathrmbf{Set}{\,\downarrow\,}\mathrmbfit{tup}\bigr)
& \mathrmbf{Set}\xrightarrow{\;\mathrmbfit{1}\;}\mathrmbf{Set}\xleftarrow{\;\mathrmbfit{tup}\;}\mathrmbf{Dom}^{\mathrm{op}}
\\
\mathrmbf{Tbl}(\mathcal{S})
& \bigl(\mathrmbf{Set}{\,\downarrow\,}\mathrmbfit{tup}_{\mathcal{S}}\bigr)
& \mathrmbf{Set}\xrightarrow{\;\mathrmbfit{1}\;}\mathrmbf{Set}\xleftarrow{\;\mathrmbfit{tup}_{\mathcal{S}}\;}\mathrmbf{Cls}(X)^{\mathrm{op}}
\text{.}
\end{array}$}}
\end{center}
respectively.
Hence, they are cocomplete and their projections
\begin{center}
{\begin{tabular}{@{\hspace{20pt}}c}
{{\footnotesize\setlength{\extrarowheight}{4pt}$\begin{array}[t]{@{\hspace{5pt}}r@{\hspace{5pt}}c@{\hspace{5pt}}l@{\hspace{5pt}}}
\mathrmbf{Set}\xleftarrow{\;\mathrmbfit{key}\;}
\!\!\!\!\!\!
&
{\mathrmbf{Tbl}}
& 
\!\!\!\!\!
\xrightarrow{\mathrmbfit{dom}}\mathrmbf{Dom}^{\mathrm{op}}
\\
\mathrmbf{Set}\xleftarrow{\;\mathrmbfit{key}_{\mathcal{S}\;}}
\!\!\!
&
{\mathrmbf{Tbl}(\mathcal{S})}
& 
\!\!
\xrightarrow{\mathrmbfit{data}_{\mathcal{S}}}\mathrmbf{Cls}(X)^{\mathrm{op}}
\end{array}$}}
\end{tabular}}
\end{center}
are cocontinuous.
\end{proposition}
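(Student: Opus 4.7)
The plan is to mirror the proof of Proposition~\ref{prop:com:cxt:lim:colim}, but only for the colimit half, invoking Fact~\ref{fact:comma:colim} in \S~\ref{sub:sec:comma:cxt}. That fact says: for a comma context $(\mathbf{A}\xrightarrow{F}\mathbf{C}\xleftarrow{G}\mathbf{B})$, if $\mathbf{A}$ and $\mathbf{B}$ are cocomplete and the left passage $F$ is cocontinuous, then the comma context is cocomplete with cocontinuous projection passages. So the only work is to verify these three ingredients for each opspan displayed in the statement.

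First I would check the ingredients for $\mathrmbf{Tbl} = (\mathrmbf{Set}\downarrow\mathrmbfit{tup})$, associated with $\mathrmbf{Set}\xrightarrow{\mathrmbfit{1}}\mathrmbf{Set}\xleftarrow{\mathrmbfit{tup}}\mathrmbf{Dom}^{\mathrm{op}}$: the context $\mathrmbf{Set}$ is cocomplete, the identity passage $\mathrmbfit{1}$ is trivially cocontinuous, and $\mathrmbf{Dom}^{\mathrm{op}}$ is cocomplete because $\mathrmbf{Dom}$ was already established to be complete in Proposition~\ref{prop:com:cxt:lim:colim}. The same checklist works for $\mathrmbf{Tbl}(\mathcal{S}) = (\mathrmbf{Set}\downarrow\mathrmbfit{tup}_{\mathcal{S}})$, associated with $\mathrmbf{Set}\xrightarrow{\mathrmbfit{1}}\mathrmbf{Set}\xleftarrow{\mathrmbfit{tup}_{\mathcal{S}}}\mathrmbf{Cls}(X)^{\mathrm{op}}$, except that cocompleteness of $\mathrmbf{Cls}(X)^{\mathrm{op}}$ is unpacked from completeness of $\mathrmbf{Cls}(X)$ shown in Proposition~\ref{prop:clsX:compl}. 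Applying Fact~\ref{fact:comma:colim} to each opspan then delivers cocompleteness of $\mathrmbf{Tbl}$ and $\mathrmbf{Tbl}(\mathcal{S})$, together with cocontinuity of the four projection passages $\mathrmbfit{key}$, $\mathrmbfit{dom}$, $\mathrmbfit{key}_{\mathcal{S}}$, and $\mathrmbfit{data}_{\mathcal{S}}$.

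Since the argument is a direct invocation of the comma-context colimit fact, there is no serious obstacle; the only minor subtlety is the translation between cocompleteness of $\mathrmbf{Dom}^{\mathrm{op}}$ (resp.~$\mathrmbf{Cls}(X)^{\mathrm{op}}$) and completeness of $\mathrmbf{Dom}$ (resp.~$\mathrmbf{Cls}(X)$). It is worth flagging in the proof why the completeness half is \emph{not} included here, in contrast to Proposition~\ref{prop:com:cxt:lim:colim}: the dual application of the comma-context fact for limits would require $\mathrmbfit{tup}\colon\mathrmbf{Dom}^{\mathrm{op}}\to\mathrmbf{Set}$ (resp.~$\mathrmbfit{tup}_{\mathcal{S}}\colon\mathrmbf{Cls}(X)^{\mathrm{op}}\to\mathrmbf{Set}$) to be continuous, equivalently to convert colimits of signed (resp.~type) domains into limits in $\mathrmbf{Set}$, which is not available at this point. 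As Table~\ref{tbl:co:complete:cxts} records, completeness of $\mathrmbf{Tbl}$ is instead obtained later via the Grothendieck construction (Theorem~\ref{thm:fib:cxt:tbl:dom}).
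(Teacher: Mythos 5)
Your proposal matches the paper's own proof: both simply invoke Fact~\ref{fact:comma:colim}, checking that $\mathrmbf{Set}$, $\mathrmbf{Dom}^{\mathrm{op}}$ and $\mathrmbf{Cls}(X)^{\mathrm{op}}$ are cocomplete (the latter two via the earlier (co)completeness results for $\mathrmbf{Dom}$ and $\mathrmbf{Cls}(X)$) and that the identity passage on $\mathrmbf{Set}$ is cocontinuous. Your added remark on why the completeness half is deferred to the Grothendieck construction is correct but goes beyond what the paper's two-line proof records.
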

\begin{proof}
The contexts 
$\mathrmbf{Set}$, $\mathrmbf{Dom}^{\mathrm{op}}$ and $\mathrmbf{Cls}(X)^{\mathrm{op}}$ are cocomplete
(Prop.~\ref{prop:clsX:compl}); 
and the passage $\mathrmbf{Set}\xrightarrow{\mathrmbfit{1}}\mathrmbf{Set}$ is cococontinuous.
\mbox{}\hfill\rule{5pt}{5pt}
\end{proof}
%

\subsubsection{Using the Grothendieck Construction.}\label{sub:sub:sec:props:groth}


%
\begin{proposition}\label{prop:grothen:lim:colim}
The fibered contexts (Tbl.~\ref{tbl:grothen:construct}) of 
signatures $\mathrmbf{List}$, 
tables $\mathrmbf{Tbl}$ and 
$\mathcal{A}$-table $\mathrmbf{Tbl}(\mathcal{A})$
are (co)complete and their projections are (co)continuous.
\begin{center}
{{\begin{tabular}{c}
{\scriptsize\setlength{\extrarowheight}{3pt}
$\begin{array}{|@{\hspace{5pt}}l@{\hspace{3pt}}l@{\hspace{3pt}}|@{\hspace{5pt}}l@{\hspace{15pt}=\int\hspace{15pt}}l@{\hspace{5pt}}|}
\multicolumn{2}{l}{}
&
\multicolumn{1}{l}{\textsf{fibered construct}}
&
\multicolumn{1}{l}{\textsf{indexed construct}}
\\\hline
\S\ref{sub:sec:fole:comps:sign}
&
\text{(Thm.\ref{thm:fib:cxt:sign:set})}
&
\rule[8pt]{0pt}{5pt}
\mathrmbf{List}\xrightarrow{\;\mathrmbfit{sort}\;}\mathrmbf{Set}
&
\mathrmbf{Set}\xrightarrow{\;\mathrmbfit{list}\;}\mathrmbf{Adj}
\\
\S\ref{sub:sec:tbl:sign:dom}
&
\text{(Thm.\ref{thm:fib:cxt:tbl:dom})}
&
\rule[8pt]{0pt}{5pt}
\mathrmbf{Tbl}\xrightarrow{\;\mathrmbfit{dom}}\mathrmbf{Dom}^{\mathrm{op}}
&
\mathrmbf{Dom}^{\mathrm{op}}\!\xrightarrow{\;\mathrmbfit{tbl}\;}\mathrmbf{Adj}
\\
\S\ref{sub:sub:sec:tbl:typ:dom:low}
&
\text{(Thm.\ref{thm:fib:cxt:tbl:A:list:X})}
&
\rule[10pt]{0pt}{5pt}
\mathrmbf{Tbl}(\mathcal{A})\xrightarrow{\;\mathrmbfit{sign}_{\mathcal{A}}\;}\mathrmbf{List}(X)^{\mathrm{op}}
&
\mathrmbf{List}(X)^{\mathrm{op}}\!\xrightarrow{\;\mathrmbfit{tbl}_{\mathcal{A}}\;}\mathrmbf{Adj}
\\
\S\ref{sub:sub:sec:tbl:typ:dom:up}
&
\text{(Thm.\ref{thm:fib:cxt:tbl:cls})}
&
\mathrmbf{Tbl}\xrightarrow{\;\mathrmbfit{data}\;}\mathrmbf{Cls}^{\mathrm{op}}
&
\mathrmbf{Cls}^{\mathrm{op}}\!\xrightarrow{\;\mathrmbfit{tbl}\;}\mathrmbf{Adj}
\\\hline
\end{array}$}
\end{tabular}}}
\end{center}
\end{proposition}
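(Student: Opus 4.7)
The plan is to invoke a standard lifting principle for Grothendieck constructions: whenever a fibered context $\int\mathrmbfit{F} \to \mathrmbf{B}$ arises from an $\mathrmbf{Adj}$-valued indexed construct $\mathrmbf{B}^{\mathrm{op}} \to \mathrmbf{Adj}$ (equivalently, a bifibration), with $\mathrmbf{B}$ (co)complete and each fiber $\mathrmbfit{F}(B)$ (co)complete, then the total context $\int\mathrmbfit{F}$ is (co)complete and the projection preserves (co)limits. For limits this uses that the right-adjoint reindexing $f^{\ast}$ preserves limits, allowing one to compute a limit of a diagram $\mathrmbfit{D} \to \int\mathrmbfit{F}$ by first forming its projection's limit in $\mathrmbf{B}$, reindexing each component into the fiber over that limit, and taking the fiber limit; dually for colimits using the left-adjoint reindexing $\Sigma_{f}$. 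Each of the four indexed constructs in Tbl.~\ref{tbl:grothen:construct} is $\mathrmbf{Adj}$-valued, so the bifibration hypothesis is automatic.

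With the lifting principle in hand, the argument reduces to checking its hypotheses in each row. For $\mathrmbf{List} \to \mathrmbf{Set}$: the base $\mathrmbf{Set}$ is (co)complete and each fiber $\mathrmbf{List}(X) = (\mathrmbf{Set}{\downarrow}X)$ is (co)complete as a comma context over (co)complete data (Prop.~\ref{prop:com:cxt:lim:colim}). For $\mathrmbf{Tbl} \to \mathrmbf{Dom}^{\mathrm{op}}$: $\mathrmbf{Dom}$ is (co)complete by Prop.~\ref{prop:com:cxt:lim:colim}, hence so is $\mathrmbf{Dom}^{\mathrm{op}}$, and each fiber $\mathrmbf{Tbl}_{\mathcal{A}}(I,s) = (\mathrmbf{Set}{\downarrow}\mathrmbfit{tup}_{\mathcal{A}}(I,s))$ is (co)complete by the same proposition. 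For $\mathrmbf{Tbl}(\mathcal{A}) \to \mathrmbf{List}(X)^{\mathrm{op}}$: the base is (co)complete (opposite of a (co)complete context), and the fibers are again the $\mathrmbf{Tbl}_{\mathcal{A}}(I,s)$. For $\mathrmbf{Tbl} \to \mathrmbf{Cls}^{\mathrm{op}}$: $\mathrmbf{Cls}^{\mathrm{op}}$ is (co)complete by Prop.~\ref{prop:cls:co:compl}, and each fiber $\mathrmbf{Tbl}(\mathcal{A})$ is (co)complete by Prop.~\ref{prop:com:cxt:lim:colim}.

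The main obstacle is the lifting principle itself, whose verification in our setting amounts to two observations. First, since each indexed construct factors through $\mathrmbf{Adj}$, the reindexing passages come in adjoint pairs (for example $\acute{\mathrmbfit{tbl}}_{{\langle{f,g}\rangle}} \dashv \grave{\mathrmbfit{tbl}}_{{\langle{f,g}\rangle}}$ from Prop.~\ref{prop:typ:dom:tbl:fbr:adj}); consequently the ``cartesian'' (right-adjoint) reindexing preserves limits and the ``cocartesian'' (left-adjoint) reindexing preserves colimits, so no extra continuity hypothesis needs checking. Second, the explicit construction of (co)limits in a bifibered Grothendieck context is standard: a limit cone in $\int\mathrmbfit{F}$ is obtained by lifting a limit cone in $\mathrmbf{B}$ via cartesian morphisms and then taking a limit in the fiber over the apex; this makes the projection continuous by construction, and cocompleteness follows dually.

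Putting these pieces together, each of the four fibered contexts satisfies the hypotheses of the lifting principle, giving (co)completeness of the total context and (co)continuity of the projection in every case. The projections into $\mathrmbf{Set}$, $\mathrmbf{Dom}^{\mathrm{op}}$, $\mathrmbf{List}(X)^{\mathrm{op}}$, and $\mathrmbf{Cls}^{\mathrm{op}}$ are all (co)continuous, completing the proof and justifying the corresponding entries of Tbl.~\ref{tbl:co:complete:cxts}.
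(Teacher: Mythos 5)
Your proposal is correct and follows essentially the same route as the paper: it invokes the bifibration (co)completeness result (Fact~\ref{fact:groth:adj:lim:colim}) and discharges its hypotheses by citing the (co)completeness of the indexing contexts and of the comma-context fibers (Prop.~\ref{prop:com:cxt:lim:colim}, Prop.~\ref{prop:cls:co:compl}), with the adjoint reindexing passages automatically (co)continuous. Your added explanation of why the lifting principle holds corresponds to the paper's proofs of Facts~\ref{fact:groth:lim}--\ref{fact:groth:adj:lim:colim} in the appendix, so nothing is missing.
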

\begin{proof}
This proposition uses Fact.~\ref{fact:groth:adj:lim:colim} of \S\ref{sub:sec:ind:fbr}.
The indexing contexts $\mathrmbf{Set}$, $\mathrmbf{Dom}^{\mathrm{op}}$, $\mathrmbf{List}(X)^{\mathrm{op}}$, $\mathrmbf{Cls}^{\mathrm{op}}$
are (co)complete (Prop.\ref{prop:com:cxt:lim:colim}),
and
the fiber contexts $\mathrmbf{List}(X)$, $\mathrmbf{Tbl}(I,s,\mathcal{A})$ and $\mathrmbf{Tbl}(\mathcal{A})$
are (co)complete (Prop.\ref{prop:com:cxt:lim:colim}).
%
\mbox{}\hfill\rule{5pt}{5pt}
\end{proof}

\comment{
%
%
\begin{proposition}\label{prop:groth:sign:lim:colim}
The fibered context $\mathrmbf{List}$ is (co)complete and 
the projection $\mathrmbf{List}\xrightarrow{\mathrmbfit{sort}}\mathrmbf{Set}$ is (co)continuous.
\end{proposition}
\begin{proof}
Uses
Thm.~\ref{thm:groth:adj:lim:colim} in \S\ref{sub:sec:ind:fbr}
and
Prop.\ref{prop:fib:cxt:sign:set} in \S\ref{sub:sec:fole:comps:sign}
\begin{enumerate}
\item
The indexing context $\mathrmbf{Set}$ is (co)complete,
\item
the signature fiber context $\mathrmbf{List}(X)$
is (co)complete for each sort set $X$ (Prop.\ref{prop:com:cxt:lim:colim}), and
\item
there is a list fiber adjunction
$\mathrmbf{List}(X_{2})
{\;\xrightleftharpoons[{\langle{{\scriptscriptstyle\sum}_{f}{\;\dashv\;}f^{\ast}}\rangle}]{\mathrmbfit{list}(f)}\;}
\mathrmbf{List}(X_{1})$
for each sort function $X_{2}\xrightarrow{\,f\,}X_{1}$.
\mbox{}\hfill\rule{5pt}{5pt}
\end{enumerate}
\end{proof}
\begin{proposition}\label{prop:groth:sign:dom:lim:colim}
[signed-domain]
The fibered context $\mathrmbf{Tbl}$
is (co)complete and 
the projection $\mathrmbf{Tbl}\xrightarrow{\mathrmbfit{dom}}\mathrmbf{Dom}^{\mathrm{op}}$ is (co)continuous.
\end{proposition}
\begin{proof}
Uses
Thm.~\ref{thm:groth:adj:lim:colim} in \S\ref{sub:sec:ind:fbr}
and
Thm.~\ref{thm:fib:cxt:tbl:dom} in \S~\ref{sub:sec:tbl:sign:dom}.
\begin{enumerate}
\item
The indexing context $\mathrmbf{Dom}^{\mathrm{op}}$ is (co)complete (Prop.~\ref{prop:com:cxt:lim:colim}),
\item
the table fiber context $\mathrmbf{Tbl}(I,s,\mathcal{A})$ is (co)complete 
for each signed domain ${\langle{I,s,\mathcal{A}}\rangle}$ in $\mathrmbf{Dom}$ (Prop.~\ref{prop:com:cxt:lim:colim}), and
\item
there is a table fiber adjunction
$\mathrmbf{Tbl}(I_{2},s_{2},\mathcal{A}_{2})
{\;\xleftrightharpoons[\;\;\;\;{\langle{h,f,g}\rangle}^{\ast}]{\;{\scriptscriptstyle\sum}_{{\langle{h,f,g}\rangle}}\;\;\;}\;}
\mathrmbf{Tbl}(I_{1},s_{1},\mathcal{A}_{1})$
for each signed domain morphism 
${\langle{I_{2},s_{2},\mathcal{A}_{2}}\rangle}\xrightarrow{{\langle{h,f,g}\rangle}}{\langle{I_{1},s_{1},\mathcal{A}_{1}}\rangle}$.
\end{enumerate}
\end{proof}
\begin{proposition}\label{prop:groth:typ:dom:lwr:lim:colim}
[type-domain-lower]
The fibered context $\mathrmbf{Tbl}(\mathcal{A})$
is (co)complete and 
the projection $\mathrmbf{Tbl}(\mathcal{A})\xrightarrow{\mathrmbfit{sign}_{\mathcal{A}}}\mathrmbf{List}(X)^{\mathrm{op}}$ is (co)continuous.
\end{proposition}
\begin{proof}
Uses
Thm.~\ref{thm:groth:adj:lim:colim} in \S\ref{sub:sec:ind:fbr}
and
Thm.~\ref{thm:fib:cxt:tbl:A:list:X} in \S~\ref{sub:sub:sec:tbl:typ:dom:low}.
%
%
\begin{enumerate}
\item
The indexing context $\mathrmbf{List}(X)^{\mathrm{op}}$ is (co)complete (Prop.\ref{prop:com:cxt:lim:colim}),
\item
the fiber context 
$\mathrmbf{Tbl}_{\mathcal{A}}(I,s)$
is (co)complete for each $X$-signature ${\langle{I,s}\rangle}$ (Prop.\ref{prop:com:cxt:lim:colim}), and
\item
there is a table fiber adjunction
$\mathrmbf{Tbl}_{\mathcal{A}}(I',s')
{\;\xleftarrow{{\bigl\langle{{\scriptscriptstyle\sum}_{h}{\;\dashv\;}{h}^{\ast}}\bigr\rangle}}\;}
\mathrmbf{Tbl}_{\mathcal{A}}(I,s)$
for each $X$-signature morphism ${\langle{I',s'}\rangle}\xrightarrow{\,h\,}{\langle{I,s}\rangle}$. 
\end{enumerate}
\end{proof}
\begin{proposition}\label{prop:groth:typ:dom:upr:lim:colim}
[type-domain-upper]
The fibered context $\mathrmbf{Tbl}$
is (co)complete and 
the projection $\mathrmbf{Tbl}\xrightarrow{\mathrmbfit{data}}\mathrmbf{Cls}^{\mathrm{op}}$ is (co)continuous.
\end{proposition}
\begin{proof}
Uses
Thm.~\ref{thm:groth:adj:lim:colim} in \S\ref{sub:sec:ind:fbr}
and
Thm.~\ref{thm:fib:cxt:tbl:cls} in \S~\ref{sub:sub:sec:tbl:typ:dom:up}.
\begin{enumerate}
\item
The indexing context $\mathrmbf{Cls}^{\mathrm{op}}$ is (co)complete,
\item
the table fiber context $\mathrmbf{Tbl}(\mathcal{A})$ is (co)complete for each type domain $\mathcal{A}$ in $\mathrmbf{Cls}$ 
(Prop.\ref{prop:com:cxt:lim:colim}), and
\item
there is a table fiber adjunction
$\mathrmbf{Tbl}(\mathcal{A}_{2})
\xleftarrow{{\langle{\acute{\mathrmbfit{tbl}}_{{\langle{f,g}\rangle}}{\!\dashv\,}\grave{\mathrmbfit{tbl}}_{{\langle{f,g}\rangle}}}\rangle}}
\mathrmbf{Tbl}(\mathcal{A}_{1})$
for each type domain morphism $\mathcal{A}_{2}\xrightleftharpoons{{\langle{f,g}\rangle}}\mathcal{A}_{1}$ in $\mathrmbf{Cls}$.
\end{enumerate}
\end{proof}
%
}

\comment{
\begin{proposition}
The fibered context $\mathrmbf{Cls}$ is complete and 
the projection $\mathrmbf{Cls}\xrightarrow{\mathrmbfit{sort}}\mathrmbf{Set}$ is continuous.
\end{proposition}
\begin{proof}
Uses
Thm.~\ref{thm:groth:lim} in \S\ref{sub:sec:ind:fbr}
and
Prop.\ref{prop:fib:cxt:cls:set}.
%
\begin{enumerate}
\item
The indexing context $\mathrmbf{Set}$ is complete,
\item
the fiber context $\mathrmbf{Cls}(X)$
is complete for each sort set $X$ (Prop.\ref{prop:clsX:compl}), and
\item
the type domain fiber passage
$\mathrmbf{Cls}(X_{2})
\xleftarrow[{f}^{-1}]{\mathrmbfit{cls}(f)}
\mathrmbf{Cls}(X_{1})$
is continuous
for each sort function $X_{2}\xrightarrow{\,f\,}X_{1}$
(Prop.\ref{prop:fbr:pass:cts}).
\end{enumerate}
\end{proof}
}

\begin{proposition}\label{prop:groth:sign:lwr:colim}
[signature-lower]
The fibered context of $\mathcal{S}$-tables $\mathrmbf{Tbl}(\mathcal{S})$ is cocomplete and 
the projection $\mathrmbf{Tbl}(\mathcal{S})\xrightarrow{\mathrmbfit{data}_{\mathcal{S}}}\mathrmbf{Cls}(X)^{\mathrm{op}}$ is cocontinuous.
\end{proposition}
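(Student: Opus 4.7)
The plan is to invoke the Grothendieck-construction cocompleteness fact (Fact~\ref{fact:groth:adj:lim:colim} of \S\ref{sub:sec:ind:fbr}), exactly as in the proof of Prop.~\ref{prop:grothen:lim:colim}, but applied to the presentation of $\mathrmbf{Tbl}(\mathcal{S})$ supplied by Thm~\ref{thm:fib:cxt:tbl:S:cls:X}. That theorem identifies the projection
\[
\mathrmbf{Tbl}(\mathcal{S}) \xrightarrow{\mathrmbfit{data}_{\mathcal{S}}} \mathrmbf{Cls}(X)^{\mathrm{op}}
\]
as the Grothendieck construction $\int_{\mathrmbf{Cls}(X)}$ of the $X$-sorted type-domain indexed adjunction $\mathrmbf{Cls}(X)^{\mathrm{op}} \xrightarrow{\mathrmbfit{tbl}_{\mathcal{S}}} \mathrmbf{Adj}$, whose fiber at $\mathcal{A}$ is $\mathrmbf{Tbl}_{\mathcal{S}}(\mathcal{A}) = \mathrmbf{Tbl}_{\mathcal{A}}(I,s)$ and whose left-adjoint reindexings are the existential image passages ${\scriptstyle\sum}_{g}$ along $X$-sorted type-domain morphisms $g$, as exhibited in Tbl.~\ref{tbl:tbl-rel:refl:sign:mor}.

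Two hypotheses must then be discharged. The first is that the indexing base $\mathrmbf{Cls}(X)^{\mathrm{op}}$ be cocomplete; this is immediate from Prop.~\ref{prop:clsX:compl}, which shows $\mathrmbf{Cls}(X)$ is complete (and hence its opposite is cocomplete). The second is that each fiber $\mathrmbf{Tbl}_{\mathcal{A}}(I,s)$ be cocomplete; this follows from Prop.~\ref{prop:com:cxt:lim:colim}, applied to the comma-context description $\mathrmbf{Tbl}_{\mathcal{A}}(I,s) = \bigl(\mathrmbf{Set}{\,\downarrow\,}\mathrmbfit{tup}_{\mathcal{A}}(I,s)\bigr)$. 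The cocartesian transport needed to splice these into colimits in the total context is supplied precisely by the left adjoints ${\scriptstyle\sum}_{g}$ already packaged in the indexed adjunction $\mathrmbfit{tbl}_{\mathcal{S}}$. Fact~\ref{fact:groth:adj:lim:colim} then delivers both cocompleteness of $\mathrmbf{Tbl}(\mathcal{S})$ and cocontinuity of $\mathrmbfit{data}_{\mathcal{S}}$, where a colimit is computed by first taking the colimit of the underlying diagram in $\mathrmbf{Cls}(X)^{\mathrm{op}}$, transporting each fiber entry forward along the coprojection's ${\scriptstyle\sum}_{g}$, and then taking the fiberwise colimit in $\mathrmbf{Tbl}_{\mathcal{S}}(\mathcal{A}^{\ast})$.

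I do not anticipate a genuine obstacle: this is a direct specialization of the same template used in Prop.~\ref{prop:grothen:lim:colim}, and both hypotheses are drawn verbatim from earlier results in the paper. The main point of care is bookkeeping with variances — namely, that the opposite of $\mathrmbf{Cls}(X)$ appears both as the codomain of the projection and as the domain of the indexed adjunction, so that the appropriate asymmetric invocation of the Grothendieck fact is the cocompleteness half, matching what Prop.~\ref{prop:clsX:compl} establishes for $\mathrmbf{Cls}(X)$ (completeness only, not cocompleteness). This is why the present proposition asserts only the cocomplete half, whereas Prop.~\ref{prop:grothen:lim:colim} asserts both halves for the $\mathrmbf{Tbl}(\mathcal{A})$ indexing.
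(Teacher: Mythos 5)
Your argument is essentially the paper's own proof: the paper likewise verifies that the indexing base $\mathrmbf{Cls}(X)^{\mathrm{op}}$ is cocomplete (Prop.~\ref{prop:clsX:compl}), that each fiber $\mathrmbf{Tbl}_{\mathcal{S}}(\mathcal{A})$ is cocomplete, and that the reindexing passages ${\scriptstyle\sum}_{g}$ are cocontinuous because they are left adjoints, then concludes by the Grothendieck cocompleteness fact. The one correction is the citation: the fact you want is Fact~\ref{fact:groth:colim} (opfibration cocompleteness), not Fact~\ref{fact:groth:adj:lim:colim}, since the latter's hypotheses literally require the base and fibers to be complete as well as cocomplete — which is unavailable for $\mathrmbf{Cls}(X)^{\mathrm{op}}$ — whereas the three conditions you actually verify are exactly the hypotheses of Fact~\ref{fact:groth:colim}, as you implicitly acknowledge in your closing remark about the ``asymmetric invocation.''
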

\begin{proof}
Uses
Fact.~\ref{fact:groth:colim} in \S\ref{sub:sec:ind:fbr}
and
the discussion in \S~\ref{sub:sub:sec:tbl:sign:low}.
For the contravariant pseudo-passage
$\mathrmbf{Cls}(X)^{\mathrm{op}}\!\xrightarrow{\acute{\mathrmbfit{tbl}}_{\mathcal{S}}}\mathrmbf{Cxt}$
that uses the existential quantification, 
\begin{enumerate}
\item
the indexing context $\mathrmbf{Cls}(X)^{\mathrm{op}}$ is cocomplete,
\item
the fiber context $\mathrmbf{Tbl}_{\mathcal{S}}(\mathcal{A})$ is cocomplete
for each $X$-sorted type domain $\mathcal{A}$,
and
\item
the fiber passage
$\mathrmbf{Tbl}_{\mathcal{S}}(\mathcal{A})
\xleftarrow[{\scriptscriptstyle\sum}_{g}]{\acute{\mathrmbfit{tbl}}_{\mathcal{S}}(g)}
\mathrmbf{Tbl}_{\mathcal{S}}(\widetilde{\mathcal{A}})$ is cocontinuous
(being left adjoint)
for each $X$-sorted type domain morphism 
$\mathcal{A}\xrightleftharpoons{{\langle{\mathrmit{1}_{X},g}\rangle}}\widetilde{\mathcal{A}}$.
\mbox{}\hfill\rule{5pt}{5pt}
\end{enumerate}
\end{proof}
%

\subsection{Constructive Proof}\label{sub:sec:cons:pf}

%
\begin{proposition}\label{prop:construct}
The context of $\mathcal{A}$-tables $\mathrmbf{Tbl}(\mathcal{A})$ is complete.
\end{proposition}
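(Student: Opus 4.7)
The plan is to exploit the comma-context presentation $\mathrmbf{Tbl}(\mathcal{A}) = (\mathrmbf{Set} \downarrow \mathrmbfit{tup}_{\mathcal{A}})$ associated with the opspan $\mathrmbf{Set} \xrightarrow{\mathrmbfit{1}} \mathrmbf{Set} \xleftarrow{\mathrmbfit{tup}_{\mathcal{A}}} \mathrmbf{List}(X)^{\mathrm{op}}$, and then give an explicit limit construction in the spirit of this ``constructive proof'' subsection. The overall idea is: completeness of $\mathrmbf{Set}$, completeness of $\mathrmbf{List}(X)^{\mathrm{op}}$ (equivalently, cocompleteness of $\mathrmbf{List}(X)$, which was established in Prop.~\ref{prop:com:cxt:lim:colim} via its own comma-context presentation $(\mathrmbf{Set} \downarrow X)$), together with continuity of $\mathrmbfit{tup}_{\mathcal{A}}$ (Prop.~\ref{prop:tup:pass:cts}), are exactly the three hypotheses of Fact.~\ref{fact:comma:lim} needed to conclude completeness of the comma context.

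For the explicit construction, let $\mathrmbf{J} \xrightarrow{\mathrmbfit{D}} \mathrmbf{Tbl}(\mathcal{A})$ be a diagram with $\mathrmbfit{D}(j) = \mathcal{T}_{j} = {\langle{I_{j},s_{j},K_{j},t_{j}}\rangle}$. I would first form the limit of signatures ${\langle{I,s}\rangle} = \mathrmbfit{lim}_{j \in \mathrmbf{J}}{\langle{I_{j},s_{j}}\rangle}$ in $\mathrmbf{List}(X)^{\mathrm{op}}$, that is, the colimit in the cocomplete fiber $\mathrmbf{List}(X)$. Applying the continuous passage $\mathrmbfit{tup}_{\mathcal{A}}$ (Prop.~\ref{prop:tup:pass:cts}) then yields $\mathrmbfit{tup}_{\mathcal{A}}(I,s) \cong \mathrmbfit{lim}_{j}\mathrmbfit{tup}_{\mathcal{A}}(I_{j},s_{j})$ in $\mathrmbf{Set}$, with canonical projections $\mathrmbfit{tup}_{\mathcal{A}}(I,s) \to \mathrmbfit{tup}_{\mathcal{A}}(I_{j},s_{j})$ indexed by the $\mathbf{J}$-morphisms of signatures.

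Next, I would form the key set $K = \mathrmbfit{lim}_{j}K_{j}$ in $\mathrmbf{Set}$ with its projection cone $K \xrightarrow{\pi_{j}} K_{j}$. The family $K \xrightarrow{\pi_{j}{\,\cdot\,}t_{j}} \mathrmbfit{tup}_{\mathcal{A}}(I_{j},s_{j})$ is a cone over the diagram $\mathrmbfit{D}{\,\circ\,}\mathrmbfit{sign}_{\mathcal{A}}^{\mathrm{op}}{\,\circ\,}\mathrmbfit{tup}_{\mathcal{A}}$, so the universal property of $\mathrmbfit{tup}_{\mathcal{A}}(I,s)$ produces a unique tuple function $K \xrightarrow{t} \mathrmbfit{tup}_{\mathcal{A}}(I,s)$ with $t{\,\cdot\,}\mathrmbfit{tup}_{\mathcal{A}}(h_{j}) = \pi_{j}{\,\cdot\,}t_{j}$ for each coprojection $h_{j} : {\langle{I_{j},s_{j}}\rangle} \to {\langle{I,s}\rangle}$ in $\mathrmbf{List}(X)$. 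This defines the candidate limit $\mathcal{T} = {\langle{I,s,K,t}\rangle}$ with $\mathrmbf{Tbl}(\mathcal{A})$-projections $\mathcal{T} \xleftarrow{{\langle{h_{j},\pi_{j}}\rangle}} \mathcal{T}_{j}$.

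Verifying the universal property is routine: given a competing cone $\widetilde{\mathcal{T}} = {\langle{\widetilde{I},\widetilde{s},\widetilde{K},\widetilde{t}}\rangle}$ with morphisms ${\langle{\widetilde{h}_{j},\widetilde{k}_{j}}\rangle} : \widetilde{\mathcal{T}} \to \mathcal{T}_{j}$, the signature components induce a unique $\widetilde{h} : {\langle{I,s}\rangle} \to {\langle{\widetilde{I},\widetilde{s}}\rangle}$ in $\mathrmbf{List}(X)$ (by the colimit property), while the key components induce a unique $\widetilde{k} : \widetilde{K} \to K$ in $\mathrmbf{Set}$; the naturality condition $\widetilde{k}{\,\cdot\,}t = \widetilde{t}{\,\cdot\,}\mathrmbfit{tup}_{\mathcal{A}}(\widetilde{h})$ follows by equalizing against all projections and invoking uniqueness in the $\mathrmbf{Set}$ limit $\mathrmbfit{tup}_{\mathcal{A}}(I,s)$. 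The main obstacle I anticipate is precisely this last compatibility step: one must use continuity of $\mathrmbfit{tup}_{\mathcal{A}}$ (Prop.~\ref{prop:tup:pass:cts}) in a slightly subtle way, since the indexing moves \emph{contravariantly} between the $\mathrmbf{List}(X)$-colimit and the $\mathrmbf{Set}$-limit, but the reduction to the universal property of $K$ in $\mathrmbf{Set}$ makes it essentially bookkeeping.
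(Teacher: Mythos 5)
Your proof is correct, but it is not the route the paper takes for this particular proposition. The paper opens its proof by conceding that completeness of $\mathrmbf{Tbl}(\mathcal{A})$ has \emph{already} been established exactly the way you argue it --- via the comma presentation $\bigl(\mathrmbf{Set}{\,\downarrow\,}\mathrmbfit{tup}_{\mathcal{A}}\bigr)$, Fact~\ref{fact:comma:lim}, and continuity of $\mathrmbfit{tup}_{\mathcal{A}}$ (this is Prop.~\ref{prop:com:cxt:lim:colim}, with the Grothendieck variant in Prop.~\ref{prop:grothen:lim:colim}) --- and that the point of Prop.~\ref{prop:construct} is to give a \emph{different}, constructive proof illustrating that ``limits (natural joins) are resolvable into substitutions followed by meets.'' Concretely, after the shared first step (the colimiting cocone $\gamma$ of the underlying signature diagram in $\mathrmbf{List}(X)$), the paper does not take $K = \mathrmbfit{lim}_{n}K_{n}$ in $\mathrmbf{Set}$ as you do; instead it substitutes each peripheral table into the central fiber, $\gamma_{n}^{\ast}(K_{n},t_{n}) = {\langle{K_{n}^{\ast},t_{n}^{\ast}}\rangle} \in \mathrmbf{Tbl}_{\mathcal{A}}(I,s)$, takes the limit (meet) of the resulting diagram $\mathrmbfit{T}^{\ast}$ inside that single fiber, and recovers the cone legs by composing the fiber projections $\pi_{n}$ with the counit components $\varepsilon^{\gamma_{n}}_{\mathrmbfit{T}_{n}}$ of the fiber adjunctions ${\langle{{\scriptstyle\sum}_{\gamma_{n}}\dashv\gamma_{n}^{\ast}}\rangle}$. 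The two constructions produce isomorphic limits, and your componentwise version is the more economical piece of general nonsense (your anticipated ``subtle'' step --- the contravariant interplay between the $\mathrmbf{List}(X)$-colimit and the $\mathrmbf{Set}$-limit --- is indeed just bookkeeping, handled by the mediating-map argument you sketch). What your route does not deliver is the payload the paper wants from this proposition: the explicit factorization of the join into inverse image (substitution) followed by conjunction (meet) in a fixed fiber, which is the relational-algebra reading exercised in the Binary Join example of \S\ref{sub:sec:eg} and in Fig.~\ref{binary:join}.
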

\begin{proof}
We have already proved this using comma contexts and the Grothendieck construction.
Now we give a constructive proof of this fact,
which illustrates that 
\newline
\mbox{}\hfill{{``limits (natural joins) are resolvable into substitutions followed by meets.''}}\hfill\mbox{}
\newline
Suppose that $\mathrmbfit{T} : G \rightarrow \mathrmbf{Tbl}(\mathcal{A})$ 
is a diagram of $\mathcal{A}$-tables and $\mathcal{A}$-table morphisms
{\footnotesize\[
\mathrmbfit{T} = \{ 
\mathrmbfit{T}_{n} = {\langle{I_{n},s_{n},K_{n},t_{n}}\rangle} 
\xrightarrow{\langle{h_{e},k_{e}}\rangle} 
{\langle{I_{m},s_{m},K_{m},t_{m}}\rangle} = \mathrmbfit{T}_{m} 
\mid (n \xrightarrow{e} m) \in G 
\}. 
\]\normalsize}
\begin{center}
{\footnotesize{$\begin{array}{c@{\hspace{5pt}\stackrel{\cong}{\leftrightarrow}\hspace{5pt}}c}
\multicolumn{2}{c}{{\langle{I_{n},s_{n},K_{n},t_{n}}\rangle} \xrightarrow{\langle{h_{e},k_{e}}\rangle} 
{\langle{I_{m},s_{m},K_{m},t_{m}}\rangle}}
\\
{\scriptstyle\sum}_{h_{e}}(K_{n},t_{n}) \xrightarrow{k_{e}} {\langle{K_{m},t_{m}}\rangle}
& {\langle{K_{n},t_{n}}\rangle} \xrightarrow{\hat{k}_{e}} h_{e}^{\ast}(K_{m},t_{m})
\\
\multicolumn{2}{c}{{\scriptstyle\sum}_{h_{e}}(\hat{k}_{e}) \cdot \varepsilon^{h_{e}}_{\mathcal{T}_{m}}
= \hat{k}_{e} \cdot \varepsilon^{h_{e}}_{\mathcal{T}_{m}} = k_{e}}
\end{array}$}}
\end{center}
Let 
$\mathrmbfit{S} = \mathrmbfit{T}^{\mathrm{op}} \circ \mathrmbfit{sign}_{\mathcal{A}} 
: G^{\mathrm{op}} \rightarrow {\mathrmbf{List}(X)}$
be the underlying diagram of signatures and signature morphisms
{\footnotesize\[
\mathrmbfit{S} = \{ 
\mathrmbfit{S}_{n} = {\langle{I_{n},s_{n}}\rangle} 
\xleftarrow{h_{e}} 
{\langle{I_{m},s_{m}}\rangle} = \mathrmbfit{S}_{m} 
\mid (n \xrightarrow{e} m) \in G 
\}. 
\]\normalsize}
\hspace{-9pt}
Assume that $\gamma : \mathrmbfit{S} \Rightarrow \Delta{\langle{I,s}\rangle}$ is a colimiting cocone 
$\gamma = \{ \gamma_{n} : {\langle{I_{n},s_{n}}\rangle} \rightarrow {\langle{I,s}\rangle} \mid n \in G \}$
with base diagram $\mathrmbfit{S}$ and colimit signature ${\langle{I,s}\rangle}$,
so that $h_{e} \cdot \gamma_{n} =  \gamma_{m}$
for all edges $n \xrightarrow{e} m$ in $G$. 
For each $G$-node $n$,
use substitution to move
fiber tables and fiber table morphisms 
from the peripheral fiber categories $\{ \mathrmbf{Tbl}_{\mathcal{A}}(I_{n},s_{n}) \}$
to the central fiber context $\mathrmbf{Tbl}_{\mathcal{A}}(I,s)$:
\begin{center}
{\scriptsize\setlength{\extrarowheight}{2pt}
{$\begin{array}{|c@{\hspace{5pt}{\mapsto}\hspace{5pt}}c|}
\multicolumn{1}{c}{\rule[-6pt]{0pt}{10pt}\mbox{\normalsize{\sf{peripheral}}}} & 
\multicolumn{1}{c}{\mbox{\normalsize{\sf{central}}}}
\\ \hline
{\langle{K_{n},t_{n}}\rangle}
& \gamma_{n}^{\ast}(K_{n},t_{n}) = {\langle{K_{n}^{\ast},t_{n}^{\ast}}\rangle}
\\
{\langle{K_{n},t_{n}}\rangle} \xrightarrow{\hat{k}_{e}} h_{e}^{\ast}(K_{m},t_{m})
& \gamma_{n}^{\ast}(K_{n},t_{n}) \xrightarrow{\gamma_{n}^{\ast}(\hat{k}_{e})} \gamma_{n}^{\ast}(h_{e}^{\ast}(K_{m},t_{m})) 
= \gamma_{m}^{\ast}(K_{m},t_{m}) 
\\
{\scriptstyle\sum}_{h_{e}}(K_{n},t_{n}) \xrightarrow{k_{e}} {\langle{K_{m},t_{m}}\rangle}
& \gamma_{m}^{\ast}({\scriptstyle\sum}_{h_{e}}(K_{n},t_{n})) \xrightarrow{\gamma_{m}^{\ast}(k_{e})} \gamma_{m}^{\ast}(K_{m},t_{m})
\\
&
\stackrel{\cong}{\leftrightarrow}\;
{\scriptstyle\sum}_{\gamma_{m}}(\gamma_{m}^{\ast}({\scriptstyle\sum}_{h_{e}}(K_{n},t_{n}))) 
\xrightarrow{\widehat{\gamma_{m}^{\ast}(k_{e})}} 
{\langle{K_{m},t_{m}}\rangle}
\\ \hline
\end{array}$}}
\end{center}
Hence,
there is diagram $\mathrmbfit{T}^{\ast} : G \rightarrow \mathrmbf{Tbl}_{\mathcal{A}}(I,s)$
in the central fiber
{\footnotesize\[
\mathrmbfit{T}^{\ast} = \{
{\langle{K_{n}^{\ast},t_{n}^{\ast}}\rangle} = \gamma_{n}^{\ast}(K_{n},t_{n}) 
\xrightarrow{\gamma_{n}^{\ast}(\hat{k}_{e})} 
\gamma_{n}^{\ast}(h_{e}^{\ast}(K_{m},t_{m})) \cong \gamma_{m}^{\ast}(K_{m},t_{m}) = {\langle{K_{m}^{\ast},t_{m}^{\ast}}\rangle}
\mid (n \xrightarrow{e} m) \in G \}. 
\]\normalsize}
\hspace{-9pt}
%
Assume that $\pi : \mathrmbfit{T}^{\ast} \Rightarrow \Delta{\langle{K,t}\rangle}$ is a limiting cone 
$\pi = \{ {\langle{K,t}\rangle} \xrightarrow{\pi_{n}} {\langle{K_{n}^{\ast},t_{n}^{\ast}}\rangle} \mid n \in G \}$
with base diagram $\mathrmbfit{T}^{\ast}$ and 
join table ${\langle{K,t}\rangle} = \prod_{n \in G} {\langle{K_{n}^{\ast},t_{n}^{\ast}}\rangle}$
with fiber projections,
so that $\pi_{n} \cdot \gamma_{n}^{\ast}(\hat{k}_{e}) =  \pi_{m}$
for all edges $n \xrightarrow{e} m$ in $G$. 
We claim that the composite $\mathcal{A}$-table morphism
$\mathcal{T} = {\langle{I,s,K,t}\rangle} \xrightarrow{\langle{\gamma_{n},\pi_{n}{\cdot}\varepsilon^{\gamma_{n}}_{\mathrmbfit{T}_{n}}}\rangle} 
{\langle{I_{n},s_{n},K_{n},t_{n}}\rangle}$
is the $n^{\mathrm{th}}$ component of a limiting cone 
$\widehat{\gamma} : {\langle{I,s,K,t}\rangle} \Rightarrow \mathrmbfit{T}$ 
for $\mathrmbfit{T}$ in $\mathrmbf{Tbl}(\mathcal{A})$,
where each component 
has signature morphism 
${\langle{I_{n},s_{n}}\rangle} \xrightarrow{\gamma_{n}} {\langle{I,s}\rangle}$ 
and key function 
$\pi_{n}{\cdot}\varepsilon^{\gamma_{n}}_{\mathrmbfit{T}_{n}} : K \rightarrow K_{n}$. 
It is natural with respect to the diagram $\mathrmbfit{T}$.
Now suppose that 
$\alpha : {\langle{I',s',K',t'}\rangle} \Rightarrow \mathrmbfit{T}$ 
is another cone
$\alpha = \{ 
{\langle{I',s',K',t'}\rangle} \xrightarrow{\langle{h_{n},k_{n}}\rangle} {\langle{I_{n},s_{n},K_{n},t_{n}}\rangle} 
\mid n \in G \}$ over $\mathrmbfit{T}$,
each component
with signature morphism ${\langle{I_{n},s_{n}}\rangle} \xrightarrow{h_{n}} {\langle{I',s'}\rangle}$ 
and key function $k_{n} : K' \rightarrow K_{n}$
satisfying the condition $t' \cdot \mathrmbfit{tup}_{\mathcal{A}}(h_{n}) = k_{n} \cdot t_{n}$.
Since $\gamma$ is a colimiting cocone,
there is a unique signature morphism 
${\langle{I,s}\rangle} \xrightarrow{h} {\langle{I',s'}\rangle}$ 
such that $\alpha = \gamma \bullet \Delta{h}$,
or $\alpha_{n} = \gamma_{n} \cdot h$,
and hence,
$\mathrmbfit{tup}_{\mathcal{A}}(\alpha_{n}) =
\mathrmbfit{tup}_{\mathcal{A}}(h) \cdot \mathrmbfit{tup}_{\mathcal{A}}(\gamma_{n})$,
for each node $n \in G$.
Since
$k_{n} \cdot t_{n}
= t' \cdot \mathrmbfit{tup}_{\mathcal{A}}(h_{n}) 
= t' \cdot \mathrmbfit{tup}_{\mathcal{A}}(h) \cdot \mathrmbfit{tup}_{\mathcal{A}}(\gamma_{n})$,
there is a unique mediating key function $K' \xrightarrow{k^{\ast}_{n}} K_{n}^{\ast}$
satisfying
$k^{\ast}_{n} \cdot t_{n}^{\ast} = t' \cdot \mathrmbfit{tup}_{\mathcal{A}}(h)$ and
$k^{\ast}_{n} \cdot \varepsilon^{\gamma_{n}}_{\mathrmbfit{T}_{n}} = k_{n}$.
Hence,
we have the $\mathcal{A}$-table morphism
$\mathcal{T}' = {\langle{I',s',K',t'}\rangle} \xrightarrow{\langle{h,k^{\ast}_{n}}\rangle} {\langle{I,s,K^{\ast}_{n},t^{\ast}_{n}}\rangle} = \gamma^{\ast}_{n}(\mathrmbfit{T}_{n})$,
which satisfies
${\langle{h,k^{\ast}_{n}}\rangle} \cdot {\langle{\gamma_{n},\varepsilon^{\gamma_{n}}_{\mathrmbfit{T}_{n}}}\rangle}
= {\langle{h_{n},k_{n}}\rangle}$
for each $n \in G$.
The central fiber table morphism
${\scriptstyle\sum}_{h}(K',t')
 \xrightarrow{k^{\ast}_{n}} {\langle{K^{\ast}_{n},t^{\ast}_{n}}\rangle} = \gamma^{\ast}_{n}(\mathrmbfit{T}_{n})$,
is the $n^{\mathrm{th}}$ component of a central fiber cone 
$\alpha^{\ast} : {\langle{K',t'}\rangle} \Rightarrow \Delta\mathrmbfit{T}^{\ast}$.
Hence,
there is a unique mediating function 
$K' \xrightarrow{k} K$
such that
$\Delta{k} \bullet \pi = \alpha^{\ast}$,
or
$k \bullet \pi_{n} = k^{\ast}_{n}$
for each $n \in G$.
Hence,
we have the commuting diagram of $\mathcal{A}$-table morphisms
$\mathcal{T}' \xrightarrow{\langle{h,k^{\ast}_{n}}\rangle} \mathcal{T}
\xrightarrow{\langle{\gamma_{n},\pi_{n}{\cdot}\varepsilon^{\gamma_{n}}_{\mathrmbfit{T}_{n}}}\rangle} \mathrmbfit{T}_{n}
= \mathcal{T}' \xrightarrow{\langle{h_{n},k_{n}}\rangle} \mathrmbfit{T}_{n}$.
Uniqueness is straightforward.
\mbox{}\hfill\rule{5pt}{5pt}
\end{proof}
%

\subsection{Example}\label{sub:sec:eg}

We illustrate the use of these semantic operations
by using the observation made in Prop.~\ref{prop:construct}
that limits are resolvable into substitutions followed by meets.
Here we discuss the special case of pullback --- the join of two $\mathcal{A}$-tables.
Consider the $\mathrmbf{Tbl}(\mathcal{A})$-opspan 
%
\footnotesize
\begin{align}\label{tbl:opspan}
\mathcal{T}_{1}={\langle{I_{1},s_{1},K_{1},t_{1}}\rangle} \xrightarrow{\langle{h_{1},k_{1}}\rangle} 
\overset{\textstyle\mathcal{T}}{\overbrace{\langle{I,s,K,t}\rangle}}
\xleftarrow{\langle{h_{2},k_{2}}\rangle} {\langle{I_{2},s_{2},K_{2},t_{2}}\rangle}=\mathcal{T}_{2}
\end{align}
\normalsize
%
%
illustrated in the bottom part of Figure~\ref{binary:join} 
with key opspan
$K_{1} \xrightarrow{k_{1}} K \xleftarrow{k_{2}} K_{2}$
and signature span 
${\langle{I_{1},s_{1}}\rangle} \xleftarrow{h_{1}} {\langle{I,s}\rangle} \xrightarrow{h_{2}} {\langle{I_{2},s_{2}}\rangle}$.
Since $\mathrmbf{List}(X)$ is cocomplete,
we can form the colimiting cocone (opspan) of this signature span, 
with pushout signature
${\langle{I_{1}{+}_{I}I_{2},[s_{1},s_{2}]}\rangle}$
and injection signature morphisms
\footnotesize\[
{\langle{I_{1},s_{1}}\rangle} \xrightarrow{\iota_{1}} {\langle{I_{1}{+}_{I}I_{2},[s_{1},s_{2}]}\rangle} \xleftarrow{\iota_{2}} {\langle{I_{2},s_{2}}\rangle}
\]\normalsize
that satisfies the commutative diagram
$h_{1} \cdot \iota_{1} = h_{2} \cdot \iota_{2}$.
Apply the continuous tuple passage 
$\mathrmbfit{tup}_{\mathcal{A}} : {\mathrmbf{List}(X)}^{\mathrm{op}} \longrightarrow \mathrmbf{Set}$
to this signature opspan
to get the limiting cone 
over the $\mathrmbf{Set}$-opspan
$\mathrmbfit{tup}_{\mathcal{A}}(I_{1},s_{1})
\xrightarrow{\mathrmbfit{tup}_{\mathcal{A}}(h_{1})}
\mathrmbfit{tup}_{\mathcal{A}}(I,s)
\xleftarrow{\mathrmbfit{tup}_{\mathcal{A}}(h_{2})}
\mathrmbfit{tup}_{\mathcal{A}}(I_{2},s_{2})$
with pullback set
$\mathrmbfit{tup}_{\mathcal{A}}(I_{1},s_{1}){\times}_{\mathrmbfit{tup}_{\mathcal{A}}(I,s)}\mathrmbfit{tup}_{\mathcal{A}}(I_{2},s_{2})$
and projection functions
\footnotesize\[
\mathrmbfit{tup}_{\mathcal{A}}(I_{1},s_{1})
\xleftarrow{\mathrmbfit{tup}_{\mathcal{A}}(\iota_{1})}
\mathrmbfit{tup}_{\mathcal{A}}(I_{1}{+}_{I}I_{2},[s_{1},s_{2}])
\xrightarrow{\mathrmbfit{tup}_{\mathcal{A}}(\iota_{2})}
\mathrmbfit{tup}_{\mathcal{A}}(I_{2},s_{2})
\]\normalsize
This is illustrated in the top part of Figure~\ref{binary:join}.
%

In general,
the join (limit) of an arbitrary diagram in $\mathrmbf{Tbl}(\mathcal{A})$
is obtained by 
(1) inverse image (substitution) of the component tables along the colimit signature injections over the underlying signature diagram,
followed by
(2) meet (conjunction) at the colimit signature.
In particular,
the pullback of $\mathrmbf{Tbl}(\mathcal{A})$-opspan (\ref{tbl:opspan}) is the table
$\mathcal{T}_{1}{\times}_{\mathcal{T}}\mathcal{T}_{2}$
whose signature is the pushout signature ${\langle{I_{1}{+}_{I}I_{2},[s_{1},s_{2}]}\rangle}$,
whose key set is the pullback set $K_{1}{\times}_{K}K_{2}$, and
whose tuple function
\footnotesize\[
t_{1}{\times}_{t}t_{2} : 
K_{1}{\times}_{K}K_{2} \rightarrow \mathrmbfit{tup}_{\mathcal{A}}(I_{1}{+}_{I}I_{2},[s_{1},s_{2}]) = \mathrmbfit{tup}_{\mathcal{A}}(I_{1},s_{1}){\times}_{\mathrmbfit{tup}_{\mathcal{A}}(I,s)}\mathrmbfit{tup}_{\mathcal{A}}(I_{2},s_{2})
\]\normalsize
is the mediating function
obtained by taking the pullback of sources and targets in (\ref{tbl:opspan}).
For proof,
use a continuity proposition for comma categories,
and show that the key set and projection functions,
obtained by inverse image (substitution) and meet,
forms the pullback.
%
\footnote{Since we identify database joins with limits in $\mathrmbf{Tbl}(\mathcal{A})$,
this allows us to compute joins as inverse images followed by meets,
both of which are elementary logical operations.
The dual approach will identify database unions with colimits in $\mathrmbf{Tbl}(\mathcal{A})$.
This is the key insight for a structured/logical approach to database formalism
using fiber Boolean operations (conjunction and disjunction), substitution and the quantifiers.}

%
\begin{figure}
\begin{center}
\begin{tabular}{c}
\\ \\
\setlength{\unitlength}{0.5pt}
\begin{picture}(320,320)(0,0)
\put(340,60){{\scriptsize$\begin{array}{l}
\pi_{1} = \mathrmbfit{tup}_{\mathcal{A}}(\iota_{1}) \\
\pi_{2} = \mathrmbfit{tup}_{\mathcal{A}}(\iota_{2}) \\
\widehat{K}_{1} = \pi_{1}^{-1}(K_{1}) \\
\widehat{K}_{2} = \pi_{2}^{-1}(K_{2}) \\
K_{1}{\times}_{K}K_{2} = \pi_{1}^{-1}(K_{1}){\wedge}\pi_{2}^{-1}(K_{2}) \\
\widehat{t}_{1} = \pi_{1}^{-1}(t_{1}) \\
\widehat{t}_{2} = \pi_{2}^{-1}(t_{2}) \\
t_{1}{\times}_{t}t_{2} = \pi_{1}^{-1}(t_{1}){\wedge}\pi_{2}^{-1}(t_{2})
\end{array}$}}
\put(0,160){\begin{picture}(0,0)(0,0)
\put(5,0){\makebox(0,0)[r]{\footnotesize{$\mathcal{T}_{1}\;\left\{\rule{0pt}{10pt}\;\;\mathrmbfit{tup}_{\mathcal{A}}(I_{1},s_{1})\right.$}}}
\put(315,0){\makebox(0,0)[l]{\footnotesize{$\left.\mathrmbfit{tup}_{\mathcal{A}}(I_{2},s_{2})\;\;\rule{0pt}{10pt}\right\}\;\mathcal{T}_{2}$}}}
\put(80,0){\makebox(0,0){\footnotesize{$K_{1}$}}}
\put(160,-180){\makebox(0,0){\footnotesize{$
\overset{\underbrace{\;\;\rule[-6pt]{0pt}{10pt}\textstyle\mathrmbfit{tup}_{\mathcal{A}}(I,s)\;\;}}
{\mathcal{T}}$}}}
\put(160,-80){\makebox(0,0){\footnotesize{$K$}}}
\put(240,0){\makebox(0,0){\footnotesize{$K_{2}$}}}
\put(160,182){\makebox(0,0){\footnotesize{$
\overset{\textstyle\;\;\mathcal{T}_{1}{\times}_{\mathcal{T}}\mathcal{T}_{2}}
{\overbrace{\;\;\rule[0pt]{0pt}{10pt}\mathrmbfit{tup}_{\mathcal{A}}(I_{1}{+}_{I}I_{2},[s_{1},s_{2}])\;\;}}$}}}
\put(212,202){\makebox(0,0)[l]{\footnotesize{$=\pi_{1}^{\ast}(\mathcal{T}_{1})
\;{\wedge}_{{\langle{I_{1}{+}_{I}I_{2},[s_{1},s_{2}]}\rangle}}\;
\pi_{2}^{\ast}(\mathcal{T}_{2})$}}}
\put(254,160){\makebox(0,0)[l]{\footnotesize{$=\mathrmbfit{tup}_{\mathcal{A}}(I_{1},s_{1}){\times}_{\mathrmbfit{tup}_{\mathcal{A}}(I,s)}\mathrmbfit{tup}_{\mathcal{A}}(I_{2},s_{2})$}}}
\put(136,86){\makebox(0,0){\scriptsize{$\widehat{K}_{1}$}}}
\put(184,86){\makebox(0,0){\scriptsize{$\widehat{K}_{2}$}}}
\put(160,56){\makebox(0,0){\scriptsize{$K_{1}{\times}_{K}K_{2}$}}}
\put(160,62){\vector(0,1){82}}
\put(76,-84){\makebox(0,0)[r]{\scriptsize{$\mathrmbfit{tup}_{\mathcal{A}}(h_{1})$}}}
\put(244,-84){\makebox(0,0)[l]{\scriptsize{$\mathrmbfit{tup}_{\mathcal{A}}(h_{2})$}}}
\put(116,-44){\makebox(0,0)[r]{\scriptsize{$k_{1}$}}}
\put(204,-44){\makebox(0,0)[l]{\scriptsize{$k_{2}$}}}
\put(76,84){\makebox(0,0)[r]{\scriptsize{$\pi_{1}$}}}
\put(244,84){\makebox(0,0)[l]{\scriptsize{$\pi_{2}$}}}
\put(112,52){\makebox(0,0)[r]{\scriptsize{$\widehat{\pi}_{1}$}}}
\put(208,52){\makebox(0,0)[l]{\scriptsize{$\widehat{\pi}_{2}$}}}
\put(168,-116){\makebox(0,0)[l]{\scriptsize{$t$}}}
\put(50,-10){\makebox(0,0){\scriptsize{$t_{1}$}}}
\put(280,-10){\makebox(0,0){\scriptsize{$t_{2}$}}}
\put(144,120){\makebox(0,0)[r]{\scriptsize{$\widehat{t}_{1}$}}}
\put(176,120){\makebox(0,0)[l]{\scriptsize{$\widehat{t}_{2}$}}}
\put(160,100){\makebox(0,0){\scriptsize{$t_{1}{\times}_{t}t_{2}$}}}
\put(138,98){\vector(1,3){15}}
\put(130,74){\vector(-2,-3){40}}
\put(182,98){\vector(-1,3){15}}
\put(190,74){\vector(2,-3){40}}
\put(145,48){\vector(-4,-3){53}}
\put(175,48){\vector(4,-3){53}}
\qbezier(156,60)(148,70)(140,80)\put(140,80){\vector(-3,4){0}}
\qbezier(164,60)(172,70)(180,80)\put(180,80){\vector(3,4){0}}
\put(12,-12){\vector(1,-1){136}}
\put(308,-12){\vector(-1,-1){136}}
\put(148,148){\vector(-1,-1){136}}
\put(172,148){\vector(1,-1){136}}
\put(92,-12){\vector(1,-1){56}}
\put(228,-12){\vector(-1,-1){56}}
\put(68,0){\vector(-1,0){56}}
\put(160,-92){\vector(0,-1){56}}
\put(252,0){\vector(1,0){56}}
\end{picture}}
\end{picture}
\\ \\
\end{tabular}
\end{center}
\caption{Binary Join}
\label{binary:join}
\end{figure}
%

\newpage
\section{Conclusion and Future Work}\label{sec:conclu}

\subsection{This Paper in Review}\label{sub:sec:paper:review}

%
\begin{table}
\begin{center}
{\fbox{\scriptsize\setlength{\extrarowheight}{3pt}{\begin{tabular}{|@{\hspace{3pt}}r@{\hspace{3pt}:\hspace{6pt}}l|}
\multicolumn{2}{l}{\S\ref{sec:tbl:basics}
: \textbf{Table Basics}}
\\\hline
Thm.~\ref{thm:fib:cxt:sign:set}
&
$\mathrmbf{List}\xrightarrow{\mathrmbfit{sort}}\mathrmbf{Set}
=\check{\int}\bigl(\mathrmbf{Set}\xrightarrow{\;\mathrmbfit{list}\;}\mathrmbf{Adj}\bigr)$
\\
Thm.~\ref{thm:fib:cxt:cls:set}
&
$\mathrmbf{Cls}\xrightarrow{\mathrmbfit{sort}}\mathrmbf{Set}
=\grave{\int}\bigl(
\mathrmbf{Set}^{\mathrm{op}}\xrightarrow{\;\mathrmbfit{cls}\;}\mathrmbf{Cxt}\bigr)$
\\
Lem.~\ref{lem:nat:iso}
&
natural isomorphisms (\textbf{levo} $\cong$ \textbf{dextro}) inclusion \& tuple
\\
Prop.~\ref{prop:typ:dom:inc:tup}
&
inclusion/tuple passages
$\left\{\text{
{\scriptsize\setlength{\extrarowheight}{2pt}$\begin{array}{l}
\mathrmbf{Cls}\xrightarrow{\mathrmbfit{inc}}\bigl(\mathrmbf{Adj}{\,\Uparrow\,}\mathrmbf{Dom}\bigr)
\\
\mathrmbf{Cls}^{\mathrm{op}}\!\xrightarrow{\mathrmbfit{tup}}\bigl(\mathrmbf{Adj}{\,\Uparrow\,}\mathrmbf{Set}\bigr)
\end{array}$}
}\right.$
\\
Lem.~\ref{lem:tup:fn:fact}
&
tuple function factorizations: type domain \& signature
\\\hline
\multicolumn{2}{l}{\S\ref{sec:tbl:hier}
: \textbf{Hierarchy}}
\\\hline
Thm.~\ref{thm:fib:cxt:tbl:dom}
&
$\mathrmbf{Tbl}\xrightarrow{\;\mathrmbfit{dom}\;}\mathrmbf{Dom}^{\mathrm{op}}
=\check{\int}\bigl(\mathrmbf{Dom}^{\mathrm{op}}\!\xrightarrow{\;\mathrmbfit{tbl}\;}\mathrmbf{Adj}\bigr)$

\\\cline{1-1}
Thm.~\ref{thm:fib:cxt:tbl:S:cls:X}
&
$\mathrmbf{Tbl}(\mathcal{S})\xrightarrow{\mathrmbfit{data}_{\mathcal{S}}}\mathrmbf{Cls}(X)^{\mathrm{op}}
=\check{\int}\bigl(\mathrmbf{Cls}(X)^{\mathrm{op}}\xrightarrow{\;\mathrmbfit{tbl}_{\mathcal{S}}}\mathrmbf{Adj}\bigr)$
\\
Lem.~\ref{lem:tbl:fbr:fact:sign}
&
table fiber adjunction factorization (signature) 
\\
Thm.~\ref{thm:fib:cxt:tbl:sign}
&
$\mathrmbf{Tbl}\xrightarrow{\;\mathrmbfit{sign}\;}\mathrmbf{List}^{\mathrm{op}}
=\acute{\int}\bigl(\mathrmbf{List}^{\mathrm{op}}\!\xrightarrow{\;\mathrmbfit{tbl}\;}\mathrmbf{Cxt}\bigr)$
\\\cline{1-1}
Thm.~\ref{thm:fib:cxt:tbl:A:list:X}
&
$\mathrmbf{Tbl}(\mathcal{A})\xrightarrow{\mathrmbfit{sign}_{\mathcal{A}}}\mathrmbf{List}(X)^{\mathrm{op}}
=\check{\int}\bigl(\mathrmbf{List}(X)^{\mathrm{op}}\xrightarrow{\;\mathrmbfit{tbl}_{\mathcal{A}}}\mathrmbf{Adj}\bigr)$
\\
Lem.~\ref{lem:tbl:fbr:fact:typ:dom}
&
table fiber adjunction factorization (type domain)
\\
Prop.~\ref{prop:typ:dom:tbl:fbr:adj}
&
adjunction 
{\scriptsize{$\mathrmbf{Tbl}(\mathcal{A}_{2})
\xleftarrow{{\langle{
\acute{\mathrmbfit{tbl}}_{{\langle{f,g}\rangle}}{\!\dashv\,}\grave{\mathrmbfit{tbl}}_{{\langle{f,g}\rangle}}
}\rangle}}
\mathrmbf{Tbl}(\mathcal{A}_{1})$}}
\\
Thm.~\ref{thm:fib:cxt:tbl:cls}
&
$\mathrmbf{Tbl}\xrightarrow{\mathrmbfit{data}}\mathrmbf{Cls}^{\mathrm{op}}
=\check{\int}\bigl(\mathrmbf{Cls}^{\mathrm{op}}\xrightarrow{\!\mathrmbfit{tbl}\;}\mathrmbf{Adj}\bigr)$
\\\hline
\multicolumn{2}{l}{\S\ref{sec:tbl:lim:colim}: \textbf{Table Constructions}}
\\\hline
\multicolumn{2}{|l|}{\underline{preliminaries}}
\\ Prop.~\ref{prop:cls:co:compl} & (co)completeness of $\mathrmbf{Cls}$
\\ Prop.~\ref{prop:clsX:compl} & completeness of $\mathrmbf{Cls}(X)$
\\ Prop.~\ref{prop:fbr:pass:cts} & continuity of $\mathrmbf{Cls}(X_{2})\xleftarrow{\;\mathrmbfit{cls}(f)\;}\mathrmbf{Cls}(X_{1})$
\\ Prop.~\ref{prop:tup:pass:cts} & continuity of $\mathrmbf{List}(X)^{\mathrm{op}}\!\!\xrightarrow{\;\mathrmbfit{tup}_{\mathcal{A}}\;}\mathrmbf{Set}$
\\
\multicolumn{2}{|l|}{\underline{using comma contexts}}
\\
Prop.~\ref{prop:com:cxt:lim:colim}
&
(co)completeness of $\mathrmbf{List}(X),\mathrmbf{List},\mathrmbf{Dom},\mathrmbf{Tbl}_{\mathcal{A}}(I,s)$ \& $\mathrmbf{Tbl}_{\mathcal{A}}$
\\
Prop.~\ref{prop:com:cxt:colim}
&
cocompleteness of $\mathrmbf{Tbl}$ \& $\mathrmbf{Tbl}(\mathcal{S})$
\\
\multicolumn{2}{|l|}{\underline{using Grothendieck construction}}
\\ 
Prop.~\ref{prop:grothen:lim:colim} & (co)completeness of $\mathrmbf{List}$, $\mathrmbf{Tbl}$ and $\mathrmbf{Tbl}(\mathcal{A})$
\\ 
Prop.~\ref{prop:groth:sign:lwr:colim} & cocompleteness of $\mathrmbf{Tbl}(\mathcal{S})$
\\
\multicolumn{2}{|l|}{\underline{by construction}}
\\
Prop.~\ref{prop:construct}
&
completeness of $\mathrmbf{Tbl}(\mathcal{A})$
\\\hline
\multicolumn{2}{l}{\S~\ref{sec:append}: \textbf{Appendix}}
\\\hline
\multicolumn{2}{|l|}{\underline{$\mathcal{A}$-relations}}
\\
Prop.~\ref{prop:tbl:rel:refl}
&
reflection
$\mathrmbf{Tbl}_{\mathcal{A}}
\xrightarrow{{\langle{\mathrmbfit{im}_{\mathcal{A}}{\,\dashv\,}\mathrmbfit{inc}_{\mathcal{A}}}\rangle}}
\mathrmbf{Rel}_{\mathcal{A}}$
\\
\multicolumn{2}{|l|}{\underline{comma contexts}}
\\
Fact.~\ref{fact:comma:lim}
&
comma context completeness
\\
Fact.~\ref{fact:comma:colim}
&
comma context cocompleteness
\\
\multicolumn{2}{|l|}{\underline{Grothendieck construction}}
\\
Fact.~\ref{fact:groth:lim}
&
fibration completeness
\\
Fact.~\ref{fact:groth:colim}
&
opfibration cocompleteness
\\
Fact.~\ref{fact:groth:adj:lim:colim}
&
bifibration (co)completeness
\\\hline
\end{tabular}}}}
\end{center}
\caption{Lemmas, Propositions and Theorems}
\label{tbl:thms:props:lems:cors}
\end{table}
%

A precise mathematical basis for \texttt{FOLE} interpretation consists of two notions:
relational tables and relation databases.
This paper has developed the notion of \emph{relational table} in terms of 
comma contexts and the Grothendieck construction.
Table~\ref{tbl:thms:props:lems:cors} lists the lemmas, propositions and theorems in this paper.

The table concept is built upon 
the three more elementary concepts of
signature, type domain, and signed domain.
In \S~\ref{sec:tbl:basics}, 
we have discussed the mathematical contexts
for these three elementary concepts:
Thm{.}\,\ref{thm:fib:cxt:sign:set} 
describes the fibered context of signatures $\mathrmbf{List}$ 
as a Grothendieck construction
indexed by sort sets; and
Thm.\,\ref{thm:fib:cxt:cls:set}
describes the fibered context of type domains $\mathrmbf{Cls}$
as a Grothendieck construction
also indexed by sort sets.

In \S~\ref{sec:tbl:hier},
we have described 
how each elementary concept provides a distinct, but related, approach to the fibered nature of the table concept 
via the Grothendieck construction
(illustrated in Tbl.~\ref{fig:tbl:gro:constr} of \S~\ref{sub:sec:fbr:cxt:tbl})
---
each fixed elementary concept providing a fiber subcontext of tables:
Thm.\,\ref{thm:fib:cxt:tbl:dom}
describes the fibered context of tables $\mathrmbf{Tbl}$
as a Grothendieck construction
indexed by signed domains;
Thm.\,\ref{thm:fib:cxt:tbl:sign}
describes the fibered context of tables $\mathrmbf{Tbl}$
as a Grothendieck construction
indexed by signatures,
with the indexing defined by means of
Thm.\,\ref{thm:fib:cxt:tbl:S:cls:X}; and
Thm.\,\ref{thm:fib:cxt:tbl:cls}
describes the fibered context of tables $\mathrmbf{Tbl}$
as a Grothendieck construction
indexed by type domains,
with the indexing defined by means of
Thm.\,\ref{thm:fib:cxt:tbl:A:list:X}.

In \S~\ref{sec:tbl:lim:colim},
we proved the existence of sum and product constructions (database unions and joins) 
on various fiber contexts of tables by using both comma contexts and the Grothendieck construction:
Prop.~\ref{prop:cls:co:compl}--\ref{prop:groth:sign:lwr:colim}
prove that the contexts of signatures, type domains, signed domains, and tables have limits and colimits (joins and unions); and
Prop.~\ref{prop:construct} gives a detailed description of the limit construction (join) for tables with fixed type domain,
arguing that limits are resolvable into substitutions followed by meets.

In the appendix \S\ref{sec:append}, 
we discuss relations, comma contexts and fibrations:
Prop.~\ref{prop:tbl:rel:refl} describes the reflection between tables and relations,
thus linking traditional logic interpretation with relational database interpretation; and
Facts.~\ref{fact:comma:lim}--\ref{fact:groth:adj:lim:colim}\; 
state facts about comma contexts and the Grothendieck construction.

%

\newpage
\subsection{The Presentation of \texttt{FOLE}}\label{sub:sec:present:FOLE}

The first-order logical environment \texttt{FOLE} 
(Fig.~\ref{fig:tbl:papers}:~{\scriptsize{\textbf{0}}})
was first described in Kent~\cite{kent:iccs2013}.  
A series of three papers
(Fig.~\ref{fig:tbl:papers}:~{\scriptsize{\textbf{1},\textbf{2},\textbf{5}}})
describe in detail a mathematical representation
for ontologies within {\ttfamily FOLE}.
The {\ttfamily FOLE} representation can be expressed in two forms: 
a classification form and interpretative form.
The foundation paper (Kent~\cite{kent:fole:era:found}) and 
the superstructure paper (Kent~\cite{kent:fole:era:supstruc}) 
developed the classification form of {\ttfamily FOLE}.
A third paper (Kent~\cite{kent:fole:era:interp}) will develop the interpretative form of {\ttfamily FOLE} 
as a transformational passage from sound logics (Kent~\cite{kent:iccs2013}),
\footnote{Following the relational model,
we assume a semantic structure and use a logical theory consistent with that structure 
in terms of first-order logic (E.F. Codd~\cite{codd:90}).}
thereby defining the formalism and semantics of first-order logical/relational database systems (Kent~\cite{kent:db:sem}).
A series of two papers
(Fig.~\ref{fig:tbl:papers}:~{\scriptsize{\textbf{3},\textbf{4}}})
provide a rigorous mathematical foundation for the interpretation of {\ttfamily FOLE}:
the first [this paper] describes the notion of a \texttt{FOLE} table and
the second describes the notion of a \texttt{FOLE} database.
System interoperability, 
in the general setting of institutions and logical environments,
was defined in the paper ``System Consequence'' (Kent~\cite{kent:iccs2009}). 
This was inspired by the channel theory of information flow 
(Barwise and Seligman \cite{barwise:seligman:97}).
Since {\ttfamily FOLE} is a logical environment (Kent~\cite{kent:fole:era:supstruc}),
in two further papers 
(Fig.~\ref{fig:tbl:papers}:~{\scriptsize{\textbf{6},\textbf{7}}})
we apply this approach to interoperability
for information systems based on first-order logic and relational databases:
one paper discusses integration over a fixed type domain and
the other paper discusses integration over a fixed universe.
%

%
\begin{figure}
\begin{center}
{\scriptsize{\setlength{\unitlength}{0.5pt}{
\begin{tabular}{c}
{\fbox{{\tiny{\textbf{0.}}}~\texttt{FOLE}
\cite{kent:iccs2013}}}
\\\\
{{\begin{tabular}{c}
$\underset{\textstyle{
\rule[5pt]{0pt}{10pt}
\underbrace{
\rule[-10pt]{0pt}{10pt}
\text{\fbox{\begin{tabular}{l@{\hspace{4pt}}l}
{\fbox{{\tiny{\textbf{3.}}}~\textbf{Table}}}
&
{\fbox{{\tiny{\textbf{4.}}}~\textbf{Database}}}
\end{tabular}}}}
}}
{\underbrace{
\rule[-11pt]{0pt}{10pt}
\text{\fbox{{
\begin{tabular}{l@{\hspace{4pt}}l@{\hspace{4pt}}l}
{\fbox{{\tiny{\textbf{1.}}}~\textbf{Foundation}
\cite{kent:fole:era:found}}}
&
{\fbox{{\tiny{\textbf{2.}}}~\textbf{Superstructure}
\cite{kent:fole:era:supstruc}}}
&
{\fbox{{\tiny{\textbf{5.}}}~\textbf{Interpretation}}}
\end{tabular}
}}}}}$
\end{tabular}}}
\\
\rule[15pt]{0pt}{10pt}
{\fbox{{$
\begin{array}{l}
{\textbf{System Interoperability:}}
\left\{\rule{0pt}{18pt}\right.
{\text{{\setlength{\unitlength}{0.5pt}\begin{tabular}{l}
{\fbox{{\tiny{\textbf{6.}}}~\textbf{Fixed Type Domain}}}
\\
{\fbox{{\tiny{\textbf{7.}}}~\textbf{Fixed Universe}}}
\rule[7pt]{0pt}{5pt}
\end{tabular}}}}
\end{array}$}}}
\end{tabular}}}}
\end{center}
\caption{\texttt{FOLE} Papers: Sequence and Dependency}
\label{fig:tbl:papers}
\end{figure}

\appendix

\section{Appendix}\label{sec:append}

\subsection{$\mathcal{A}$-Relations.}\label{append:sub:sec:rel:tbl}

Let $\mathcal{A} = {\langle{X,Y,\models_{\mathcal{A}}}\rangle}$ be a fixed type domain.
The mathematical contexts of $\mathcal{A}$-relations and $\mathcal{A}$-tables 
\footnote{For fixed type domain $\mathcal{A}$,
the context of $\mathcal{A}$-Tables
is discussed in \S~\ref{sub:sub:sec:tbl:typ:dom:low}.}
are used for satisfaction and interpretation
(\cite{kent:fole:era:found}),
$\mathcal{A}$-relations for traditional interpretation and 
$\mathcal{A}$-tables for database interpretation.


\paragraph{Fiber Contexts.}

Let ${\langle{I,s}\rangle}$ be any signature. 
The ${\langle{I,s}\rangle}^{\text{th}}$-fiber context of relations is the subset order
\[\mbox{\footnotesize{$
\mathrmbf{Rel}_{\mathcal{A}}(I,s)={\langle{{\wp}\mathrmbfit{tup}_{\mathcal{A}}(I,s),\subseteq}\rangle}
$.}\normalsize}\]
An object 
$R{\;\in\;}\mathrmbf{Rel}_{\mathcal{A}}(I,s)$
consists of 
a subset of tuples $R{\;\subseteq\;}\mathrmbfit{tup}_{\mathcal{A}}(I,s)$.
\footnote{More abstractly,
we could define a relation to be a \emph{subobject}
of ${\langle{I,s,\mathcal{A}}\rangle}$-tuples;
that is,
an isomorphism class of monomorphisms
$R\xhookrightarrow{i}\mathrmbfit{tup}_{\mathcal{A}}(I,s)$.
These correspond to the proper or uncorrupted relational tables of Codd~\cite{codd:90}.}
%
A morphism 
$R' \leftarrow R$
in $\mathrmbf{Rel}_{\mathcal{A}}(I,s)$
consists of subset order
$R' \supseteq R$.

\paragraph{Fibered Context.}

The fibered context $\mathrmbf{Rel}(\mathcal{A})$
has indexed $\mathcal{A}$-relations ${\langle{I,s,R}\rangle}$ as objects
with $R\subseteq\mathrmbfit{tup}_{\mathcal{A}}(I,s)$ and 
morphisms ${\langle{I',s',R'}\rangle}\xleftarrow{\;h\;}{\langle{I,s,R}\rangle}$
%
\footnote{We use this orientation to accord with both relational fibers and table morphisms.}
%
consisting of a signature morphism ${\langle{I',s'}\rangle}\xrightarrow{h}{\langle{I,s}\rangle}$
satisfying either of the adjoint fiber orderings 
\begin{equation}\label{def:rel:cxt}
{{\begin{picture}(120,0)(0,-4)
\put(60,0){\makebox(0,0){\footnotesize{$
\underset{\textstyle{\text{in}\;\mathrmbf{Rel}_{\mathcal{A}}(I',s')}}
{R'{\;\supseteq\;}\exists_{{h}}(R)}
\;\;\;\;\rightleftarrows\;\;\;\;
\underset{\textstyle{\text{in}\;\mathrmbf{Rel}_{\mathcal{A}}(I,s)}}
{{h}^{{\scriptscriptstyle-}1}(R'){\;\supseteq\;}{R}}
$}}}
\end{picture}}}
\end{equation}
defined in terms of 
the fiber adjunction 
${\langle{\exists_{{h}}{\;\dashv\;}{h}^{{\scriptscriptstyle-}1}}\rangle}
:\mathrmbf{Rel}_{\mathcal{A}}(I',s'){\;\leftrightarrows\;}\mathrmbf{Rel}_{\mathcal{A}}(I,s)$
(Tbl.~\ref{tbl:tbl-rel:refl:sml:shrt} in \S~\ref{sub:sub:sec:tbl:typ:dom:low}).
As we show below,
the context $\mathrmbf{Rel}(\mathcal{A})$ of $\mathcal{A}$-relations
can be viewed as a mathematical subcontext of
the context $\mathrmbf{Tbl}(\mathcal{A})$ of $\mathcal{A}$-tables.


\paragraph{Inclusion.}

Let ${\langle{I,s}\rangle}$ be any signature. 
The ${\langle{I,s}\rangle}^{\text{th}}$-fiber inclusion passage 
\footnote{For fixed signed domain ${\langle{I,s,\mathcal{A}}\rangle}$,
the fiber mathematical context of ${\langle{I,s,\mathcal{A}}\rangle}$-tables is is discussed in \S~\ref{sub:sec:tbl:sign:dom}.}
%
\[\mbox{\footnotesize{$
\mathrmbf{Rel}_{\mathcal{A}}(I,s)\xrightarrow{\;\mathrmbfit{inc}^{\mathcal{A}}_{{\langle{I,s}\rangle}}\;\;}\mathrmbf{Tbl}_{\mathcal{A}}(I,s)
$}\normalsize}\]
is defined as follows.
An fiber relation 
$R{\;\in\;}\mathrmbf{Rel}_{\mathcal{A}}(I,s)$
is mapped to the fiber table
${\langle{R,inc}\rangle}{\;\in\;}\mathrmbf{Tbl}_{\mathcal{A}}(I,s)$.
A fiber morphism $R'{\,\supseteq\,}R$ in $\mathrmbf{Rel}_{\mathcal{A}}(I,s)$
is mapped to the fiber morphism
${\langle{R',inc}\rangle}\xleftarrow{inc}{\langle{R,inc}\rangle}$ in $\mathrmbf{Tbl}_{\mathcal{A}}(I,s)$.
The fibered inclusion passage 
\[\mbox{\footnotesize{$
\mathrmbf{Rel}(\mathcal{A})\xrightarrow{\;\mathrmbfit{inc}_{\mathcal{A}}\;\;}\mathrmbf{Tbl}(\mathcal{A})
$}\normalsize}\]
can be defined in terms of the fiber passages 
$\{ \mathrmbfit{inc}^{\mathcal{A}}_{{\langle{I,s}\rangle}} \mid {\langle{I,s}\rangle}\in\mathrmbf{List}(X)\}$.
An $\mathcal{A}$-relation 
${\langle{I,s,R}\rangle}\in\mathrmbf{Rel}(\mathcal{A})$
\underline{is mapped to} 
the
$\mathcal{A}$-table 
${\langle{I,s,R,inc}\rangle}
={\langle{I,s,\mathrmbfit{inc}^{\mathcal{A}}_{(I,s)}(R)}\rangle}
{\;\in\;}\mathrmbf{Tbl}(\mathcal{A})$.
An $\mathcal{A}$-relation morphism 
${\langle{I',s',R'}\rangle}\xleftarrow{\;h\;}{\langle{I,s,R}\rangle}$
consisting of a signature morphism ${\langle{I',s'}\rangle}\xrightarrow{h}{\langle{I,s}\rangle}$
satisfying either of the adjoint fiber orderings in Eqn.~\ref{def:rel:cxt}
\underline{is mapped to} 
the
$\mathcal{A}$-table morphism 
${\langle{I',s',R',inc}\rangle}\xleftarrow{\langle{h,r}\rangle}{\langle{I,s,R,inc}\rangle}$,
where the key function $R'\xleftarrow{\,r\,}R$
satisfying the condition $r{\;\cdot\;}{inc}={inc}{\;\cdot\;}\mathrmbfit{tup}_{\mathcal{A}}(h)$
is a restriction of the tuple function
$\mathrmbfit{tup}_{\mathcal{A}}(I',s')\xleftarrow{\mathrmbfit{tup}_{\mathcal{A}}(h)}\mathrmbfit{tup}_{\mathcal{A}}(I,s)$.
Hence,
we have the adjointly-related fiber context morphisms (see Eqn.~\ref{def:tbl:cxt}).
\begin{equation}\label{inc:mor}
{{\begin{picture}(100,40)(0,-8)
\put(35,10){\makebox(0,0){\footnotesize{$
\overset{\textstyle{\mathrmbfit{inc}^{\mathcal{A}}_{{\langle{I',s'}\rangle}}\Bigl(
R'{\;\supseteq\;}\exists_{{h}}(R)
\Bigr)}}
{\overbrace{\mathrmbfit{inc}^{\mathcal{A}}_{(I',s')}(R')
\underset{\textstyle{\;\xleftarrow{\;\acute{r}\;}\exists_{{h}}(\mathrmbfit{inc}^{\mathcal{A}}_{(I,s)}(R))}}
{\xhookleftarrow{inc}\mathrmbfit{inc}^{\mathcal{A}}_{(I,s)}(\exists_{{h}}(R))}
}}
{\;\;\;\;\rightleftarrows\;\;\;\;}
\overset{\textstyle{\mathrmbfit{inc}^{\mathcal{A}}_{{\langle{I,s}\rangle}}\Bigl(
{h}^{{\scriptscriptstyle-}1}(R'){\;\supseteq\;}{R}
\Bigr)}}
{\overbrace{
{h}^{{\scriptscriptstyle-}1}(\mathrmbfit{inc}^{\mathcal{A}}_{(I',s')}(R'))
\xhookleftarrow{inc}
\mathrmbfit{inc}^{\mathcal{A}}_{(I,s)}(R)}}
$}}}
\end{picture}}}
\end{equation}
Either pullback or image factorization can be used
(Tbl.~\ref{fig:inc:tbl:mor})
to define the key function $R'\xleftarrow{\,r\,}R$.
Using pullback,
the $\mathcal{A}$-table morphism is the composition of
the fiber morphism
$\exists_{{h}}\Bigl({h}^{{\scriptscriptstyle-}1}(\mathrmbfit{inc}^{\mathcal{A}}_{(I',s')}(R'))
\xhookleftarrow{inc}\mathrmbfit{inc}^{\mathcal{A}}_{(I,s)}(R)\Bigr)$
with 
the $\mathrmbfit{inc}^{\mathcal{A}}_{(I',s')}(R')^{\text{th}}$ counit component 
$\mathrmbfit{inc}^{\mathcal{A}}_{(I',s')}(R')
\xleftarrow{\grave{r}}\exists_{{h}}\Bigl({h}^{{\scriptscriptstyle-}1}(\mathrmbfit{inc}^{\mathcal{A}}_{(I',s')}(R'))\Bigr)$
for the fiber adjunction
${\langle{\exists_{{h}}{\;\dashv\;}{h}^{{\scriptscriptstyle-}1}}\rangle}
:\mathrmbf{Tbl}_{\mathcal{A}}(I',s'){\;\leftrightarrows\;}\mathrmbf{Tbl}_{\mathcal{A}}(I,s)$
(Tbl.~\ref{tbl:tbl-rel:refl:sml:shrt} in \S~\ref{sub:sub:sec:tbl:typ:dom:low}).
\begin{figure}
\begin{center}
{{\begin{tabular}{c}
{\setlength{\unitlength}{0.65pt}\begin{picture}(240,160)(0,-35)
\put(-40,110){\makebox(0,0){\footnotesize{$R'$}}}
\put(-20,95){\makebox(0,0){\footnotesize{$\supseteq$}}}
\put(0,80){\makebox(0,0){\footnotesize{$\exists_{h}({h}^{{\scriptscriptstyle-}1}(R'))$}}}
\put(20,65){\makebox(0,0){\footnotesize{$\supseteq$}}}
\put(40,50){\makebox(0,0){\footnotesize{$\exists_{h}(R)$}}}
\put(280,110){\makebox(0,0){\footnotesize{$R$}}}
\put(260,95){\makebox(0,0){\footnotesize{$\supseteq$}}}
\put(240,80){\makebox(0,0){\footnotesize{${h}^{{\scriptscriptstyle-}1}(\exists_{h}(R))$}}}
\put(220,65){\makebox(0,0){\footnotesize{$\supseteq$}}}
\put(200,50){\makebox(0,0){\footnotesize{${h}^{{\scriptscriptstyle-}1}(R')$}}}
\put(0,0){\makebox(0,0){\footnotesize{$\mathrmbfit{tup}_{\mathcal{A}}(I',s')$}}}
\put(240,0){\makebox(0,0){\footnotesize{$\mathrmbfit{tup}_{\mathcal{A}}(I,s)$}}}
\put(120,120){\makebox(0,0){\scriptsize{$r$}}}
\put(120,-10){\makebox(0,0){\scriptsize{$\mathrmbfit{tup}_{\mathcal{A}}(h)$}}}
\put(255,110){\vector(-1,0){270}}
\put(190,0){\vector(-1,0){140}}
\put(0,65){\vector(0,-1){50}}
\put(240,65){\vector(0,-1){50}}
\put(278,96){\vector(-1,-4){20}}
\put(215,40){\vector(3,-4){20}}
\put(-38,96){\vector(1,-4){20}}
\put(35,40){\vector(-3,-4){20}}
\qbezier(72,57)(155,90)(240,100)\put(72,57){\vector(-3,-1){0}}
\put(70,77){\makebox(0,0){\tiny{\itshape{pullback}}}}
\put(70,95){\makebox(0,0){\scriptsize{$\grave{r}$}}}
\put(170,77){\makebox(0,0){\tiny{\itshape{factor}}}}
\put(170,95){\makebox(0,0){\scriptsize{$\acute{r}$}}}
\qbezier(0,100)(85,90)(168,57)\put(0,100){\vector(-4,1){0}}
\put(-39,-26){\makebox(0,0){\footnotesize{$\underset{\mathrmbfit{inc}^{\mathcal{A}}_{(I',s')}(R')}{\underbrace{\rule{44pt}{0pt}}}$}}}
\put(144,-52){\makebox(0,0){\footnotesize{$\underset{\exists_{{h}}(\mathrmbfit{inc}^{\mathcal{A}}_{(I,s)}(R))}{\underbrace{\rule{182pt}{0pt}}}$}}}
\put(201,-26){\makebox(0,0){\footnotesize{$\underset{{h}^{{\scriptscriptstyle-}1}(\mathrmbfit{inc}^{\mathcal{A}}_{(I',s')}(R'))}{\underbrace{\rule{44pt}{0pt}}}$}}}
\put(279,-26){\makebox(0,0){\footnotesize{$\underset{\mathrmbfit{inc}^{\mathcal{A}}_{(I,s)}(R)}{\underbrace{\rule{44pt}{0pt}}}$}}}
\end{picture}}
\end{tabular}}}
\end{center}
\caption{Inclusion Table Morphism}
\label{fig:inc:tbl:mor}
\end{figure}
%


\paragraph{Image.}

Let ${\langle{I,s}\rangle}$ be any signature. 
The ${\langle{I,s}\rangle}^{\text{th}}$-fiber image passage 
\[\mbox{\footnotesize{$
\mathrmbf{Tbl}_{\mathcal{A}}(I,s)\xrightarrow{\;\mathrmbfit{im}^{\mathcal{A}}_{{\langle{I,s}\rangle}}\;\;}\mathrmbf{Rel}_{\mathcal{A}}(I,s)
$}\normalsize}\]
is defined as follows.
A fiber table
${\langle{K,t}\rangle}{\;\in\;}\mathrmbf{Tbl}_{\mathcal{A}}(I,s)$
is mapped to the fiber relation 
${\wp{t}}(K){\;\in\;}\mathrmbf{Rel}_{\mathcal{A}}(I,s)$.
A fiber morphism
${\langle{K',t'}\rangle}\xleftarrow{k}{\langle{K,t}\rangle}$ in $\mathrmbf{Tbl}_{\mathcal{A}}(I,s)$
is mapped to the fiber morphism ${\wp{t'}}(K'){\,\supseteq\,}{\wp{t}}(K)$ in $\mathrmbf{Rel}_{\mathcal{A}}(I,s)$
guaranteed by the table morphism condition $k{\;\cdot\;}t'=t$.
The fibered image passage 
\[\mbox{\footnotesize{$
\mathrmbf{Tbl}(\mathcal{A})\xrightarrow{\;\mathrmbfit{im}_{\mathcal{A}}\;\;}\mathrmbf{Rel}(\mathcal{A})
$}\normalsize}\]
can be defined in terms of the fiber image passages 
$\{ \mathrmbfit{im}^{\mathcal{A}}_{{\langle{I,s}\rangle}} \mid {\langle{I,s}\rangle}\in\mathrmbf{List}(X)\}$.
An $\mathcal{A}$-table 
${\langle{I,s,K,t}\rangle}\in\mathrmbf{Tbl}(\mathcal{A})$
with signature ${\langle{I,s}\rangle}$
and table ${\langle{K,t}\rangle}\in\mathrmbf{Tbl}_{\mathcal{A}}(I,s)$
\underline{is mapped to} 
the $\mathcal{A}$-relation 
${\langle{I,s,{\wp{t}}(K)}\rangle}\in\mathrmbf{Rel}(\mathcal{A})$
with the same signature 
and the relation
${\wp{t}}(K)=\mathrmbfit{im}_{{\langle{I,s}\rangle}}^{\mathcal{A}}(K,t)\in\mathrmbf{Rel}_{\mathcal{A}}(I,s)$.
An $\mathcal{A}$-table morphism 
$\mathcal{T}'={\langle{I',s',K',t'}\rangle}\xleftarrow{\langle{h,k}\rangle}{\langle{I,s,K,t}\rangle}=\mathcal{T}$
consisting of signature morphism ${\langle{I',s'}\rangle}\xrightarrow{\,h\,}{\langle{I,s}\rangle}$
satisfying either of the adjoint fiber orderings in Eqn.~\ref{def:tbl:cxt}
\underline{is mapped to} 
the $\mathcal{A}$-relation morphism 
$\mathrmbfit{im}_{\mathcal{A}}(\mathcal{T}')={\langle{I',s',R'}\rangle}
\xleftarrow{h}
{\langle{I,s,R}\rangle}=\mathrmbfit{im}_{\mathcal{A}}(\mathcal{T})$
with the same signature morphism
and satisfying either of the adjoint fiber orderings 
\[\mbox{\footnotesize{$
\underset{\textstyle{\text{in}\;\mathrmbf{Rel}_{\mathcal{A}}(I',s')}}
{\mathrmbfit{im}_{{\langle{I',s'}\rangle}}^{\mathcal{A}}(K',t')
{\;\supseteq}
\overset{\textstyle{\;\;\mathrmbfit{im}_{{\langle{I',s'}\rangle}}^{\mathcal{A}}(\exists_{h}(K,t))}}
{\overbrace{\exists_{h}(\mathrmbfit{im}_{{\langle{I,s}\rangle}}^{\mathcal{A}}(K,t))}}}
{\;\;\;\rightleftarrows\;\;\;}
\underset{\textstyle{\text{in}\;\mathrmbf{Rel}_{\mathcal{A}}(I,s)}}
{\overset{\textstyle{\mathrmbfit{im}_{{\langle{I,s}\rangle}}^{\mathcal{A}}({h}^{{\scriptscriptstyle-}1}(K',t'))}}
{\overbrace{{h}^{{\scriptscriptstyle-}1}(\mathrmbfit{im}_{{\langle{I',s'}\rangle}}^{\mathcal{A}}(K',t'))}}
{\;\supseteq\;}
\mathrmbfit{im}_{{\langle{I,s}\rangle}}^{\mathcal{A}}(K,t)}
$.}\normalsize}\]
%

\paragraph{Reflection.}

The inclusion passage 
$\mathrmbf{Rel}(\mathcal{A})\xrightarrow{\;\mathrmbfit{inc}_{\mathcal{A}}\;}\mathrmbf{Tbl}(\mathcal{A})$ is full.
The composite passage 
$\mathrmbf{Rel}(\mathcal{A})\xrightarrow{\;\mathrmbfit{inc}_{\mathcal{A}}{\;\circ\;}\mathrmbfit{im}_{\mathcal{A}}\;}\mathrmbf{Rel}(\mathcal{A})$
is the identity passage.
\begin{definition}
There is an image-factorization bridge
$\mathrm{1}_{\mathrmbf{Tbl}(\mathcal{A})}\xRightarrow{\;\eta\;}\mathrmbfit{im}_{\mathcal{A}}{\;\circ\;}\mathrmbfit{inc}_{\mathcal{A}}$.
\end{definition}
%
The $\mathcal{T}^{\text{th}}$-component
for $\mathcal{A}$-table $\mathcal{T}={\langle{I,s,K,t}\rangle}$
is the $\mathcal{A}$-table morphism
$\mathcal{T}\xrightarrow[{\langle{1,e}\rangle}]{\eta_{\mathcal{T}}}\mathrmbfit{inc}_{\mathcal{A}}(\mathrmbfit{im}_{\mathcal{A}}(\mathcal{T}))$,
where $K\xrightarrow{e}{\wp{t}}(K)\xrightarrow{inc}\mathrmbfit{tup}_{\mathcal{A}}(I,s)$
is the image factorization of the tuple function $K \xrightarrow{t} \mathrmbfit{tup}_{\mathcal{A}}(I,s)$.
The naturality diagram
\begin{center}
{{\begin{tabular}{c}
{\setlength{\unitlength}{0.5pt}\begin{picture}(160,150)(0,-15)
\put(0,120){\makebox(0,0){\footnotesize{$K'$}}}
\put(0,60){\makebox(0,0){\footnotesize{${\wp{t'}}(K)'$}}}
\put(160,120){\makebox(0,0){\footnotesize{$K$}}}
\put(160,60){\makebox(0,0){\footnotesize{${\wp{t}}(K)$}}}
\put(0,0){\makebox(0,0){\footnotesize{$\mathrmbfit{tup}_{\mathcal{A}}(I',s')$}}}
\put(160,0){\makebox(0,0){\footnotesize{$\mathrmbfit{tup}_{\mathcal{A}}(I,s)$}}}
\put(80,130){\makebox(0,0){\scriptsize{$k$}}}
\put(80,70){\makebox(0,0){\scriptsize{$r$}}}
\put(80,-10){\makebox(0,0){\scriptsize{$\mathrmbfit{tup}_{\mathcal{A}}(h)$}}}
\put(145,120){\vector(-1,0){130}}
\put(130,60){\vector(-1,0){100}}
\put(110,0){\vector(-1,0){60}}
\put(0,105){\vector(0,-1){30}}
\put(160,105){\vector(0,-1){30}}
\put(0,42){\vector(0,-1){27}}\put(4,43){\makebox(0,0){\scriptsize{$\cap$}}}
\put(160,42){\vector(0,-1){27}}\put(164,43){\makebox(0,0){\scriptsize{$\cap$}}}
\put(-40,30){\makebox(0,0)[r]{\footnotesize{$\mathrmbfit{inc}_{\mathcal{A}}(\mathrmbfit{im}_{\mathcal{A}}(\mathcal{T}'))
\left\{\rule{0pt}{24pt}\right.$}}}
\put(200,30){\makebox(0,0)[l]{\footnotesize{$\left.\rule{0pt}{24pt}\right\}
\mathrmbfit{inc}_{\mathcal{A}}(\mathrmbfit{im}_{\mathcal{A}}(\mathcal{T}))$}}}
%
%
\end{picture}}
\end{tabular}}}
\end{center}
factors the condition $k{\;\cdot\;}t'=t{\;\cdot\;}\mathrmbfit{tup}_{\mathcal{A}}(h)$ by diagonal fill-in.
This gives the $\mathcal{A}$-table morphism 
$\mathrmbfit{inc}_{\mathcal{A}}(\mathrmbfit{im}_{\mathcal{A}}(\mathcal{T}'))
\xleftarrow{\langle{h,r}\rangle}
\mathrmbfit{inc}_{\mathcal{A}}(\mathrmbfit{im}_{\mathcal{A}}(\mathcal{T}))$,
which is the image-inclusion composite passage 
applied to 
the $\mathcal{A}$-table morphism 
$\mathcal{T}'
\xleftarrow{\langle{h,k}\rangle}
\mathcal{T}$.
%
\begin{proposition}\label{prop:tbl:rel:refl}
There is a reflection
$\mathrmbf{Tbl}_{\mathcal{A}}
\;\xrightarrow{{\langle{\mathrmbfit{im}_{\mathcal{A}}{\;\dashv\;}\mathrmbfit{inc}_{\mathcal{A}}}\rangle}}\;
\mathrmbf{Rel}_{\mathcal{A}}$.
\end{proposition}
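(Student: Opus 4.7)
The plan is to exhibit the unit/counit pair for the purported adjunction and verify the triangle identities, then conclude the reflection property from the fact that the counit is the identity bridge. The unit $\eta\colon\mathrm{1}_{\mathrmbf{Tbl}(\mathcal{A})}\Rightarrow\mathrmbfit{im}_{\mathcal{A}}{\circ}\mathrmbfit{inc}_{\mathcal{A}}$ has already been defined in the excerpt as the image-factorization bridge: for $\mathcal{T}=\langle I,s,K,t\rangle$, the component $\eta_{\mathcal{T}}=\langle\mathrm{1},e\rangle$ uses the epi part $K\xrightarrow{e}{\wp t}(K)$ of the (epi,mono) factorization of $t$, with naturality witnessed by the diagonal-fill-in diagram displayed just before the statement. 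For the counit $\varepsilon\colon\mathrmbfit{inc}_{\mathcal{A}}{\circ}\mathrmbfit{im}_{\mathcal{A}}\Rightarrow\mathrm{1}_{\mathrmbf{Rel}(\mathcal{A})}$, I would take $\varepsilon=\mathrm{1}$, the identity bridge; this is well-defined because the excerpt already observes that the composite $\mathrmbfit{inc}_{\mathcal{A}}{\circ}\mathrmbfit{im}_{\mathcal{A}}$ is literally the identity passage on $\mathrmbf{Rel}(\mathcal{A})$ (an $\mathcal{A}$-relation $R\subseteq\mathrmbfit{tup}_{\mathcal{A}}(I,s)$ maps first to the fiber table $\langle R,\mathrm{inc}\rangle$ and then to the image $\mathrm{inc}(R)=R$).

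Next, I would verify the two triangle identities. Because $\varepsilon$ is the identity, both reduce to a single equation each. The identity $(\eta{\circ}\mathrmbfit{inc}_{\mathcal{A}})\bullet(\mathrmbfit{inc}_{\mathcal{A}}{\circ}\varepsilon)=\mathrm{1}_{\mathrmbfit{inc}_{\mathcal{A}}}$ collapses to $\eta{\circ}\mathrmbfit{inc}_{\mathcal{A}}=\mathrm{1}$; and indeed for each relation $R$ the table $\mathrmbfit{inc}_{\mathcal{A}}(R)=\langle I,s,R,\mathrm{inc}\rangle$ has a monic tuple function whose image factorization is trivial ($e=\mathrm{1}_R$), so $\eta_{\mathrmbfit{inc}_{\mathcal{A}}(R)}$ is an identity morphism. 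Dually, $(\mathrmbfit{im}_{\mathcal{A}}{\circ}\eta)\bullet(\varepsilon{\circ}\mathrmbfit{im}_{\mathcal{A}})=\mathrm{1}_{\mathrmbfit{im}_{\mathcal{A}}}$ collapses to $\mathrmbfit{im}_{\mathcal{A}}{\circ}\eta=\mathrm{1}$: applying the image passage to $\eta_{\mathcal{T}}=\langle\mathrm{1},e\rangle$ yields the signature-identity relation morphism whose subset comparison is ${\wp t}(K)\supseteq{\wp(e\cdot\mathrm{inc})}(K)={\wp t}(K)$, which is the identity in $\mathrmbf{Rel}_{\mathcal{A}}(I,s)$.

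Having established the adjunction $\mathrmbfit{im}_{\mathcal{A}}\dashv\mathrmbfit{inc}_{\mathcal{A}}$, I would conclude it is a reflection from the fact that the counit is literally the identity bridge (a fortiori an isomorphism), which is equivalent to the assertion --- already noted in the excerpt --- that the right adjoint $\mathrmbfit{inc}_{\mathcal{A}}$ is full and faithful. Alternatively, one may exhibit the adjunction via the natural bijection $\mathrmbf{Rel}(\mathcal{A})(\mathrmbfit{im}_{\mathcal{A}}(\mathcal{T}),R)\cong\mathrmbf{Tbl}(\mathcal{A})(\mathcal{T},\mathrmbfit{inc}_{\mathcal{A}}(R))$: a relation morphism $\wp t(K)\supseteq R'$ covering $h\colon\langle I',s'\rangle\to\langle I,s\rangle$ corresponds uniquely, by restriction of codomain through the mono $\mathrm{inc}\colon R'\hookrightarrow\mathrmbfit{tup}_{\mathcal{A}}(I',s')$, to a key function $K\to R'$ making the required table-morphism square commute.

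The main obstacle is the bookkeeping around signature morphisms: tables and relations both live in fibered contexts, and morphisms act along a (possibly nontrivial) signature morphism $h$, so I must check that image factorization of the tuple function interacts correctly with the reindexing $\mathrmbfit{tup}_{\mathcal{A}}(h)$. The diagonal-fill-in used to define $\eta$ handles this in the square relating the epi parts, but one should confirm that the fiber-wise image passages $\mathrmbfit{im}^{\mathcal{A}}_{\langle I,s\rangle}$ assemble coherently into the global $\mathrmbfit{im}_{\mathcal{A}}$, so that both $\eta$-naturality and the triangle identities carry across fibers without incident. This amounts to verifying that factorization is preserved by pullback and existential image along $h$, which is a standard property of the (epi,mono) factorization system in $\mathrmbf{Set}$ and poses no real difficulty once the notation is unpacked.
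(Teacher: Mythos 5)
Your proposal is correct and follows essentially the same route as the paper: the paper's (implicit) proof consists exactly of the preceding construction --- the image-factorization bridge $\eta$ as unit, the observation that $\mathrmbfit{inc}_{\mathcal{A}}{\;\circ\;}\mathrmbfit{im}_{\mathcal{A}}$ is the identity passage (so the counit is an identity), and the fullness of $\mathrmbfit{inc}_{\mathcal{A}}$. You merely make explicit the triangle-identity checks that the paper leaves to the reader, and both verifications are accurate.
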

This reflection embodies the notion of informational equivalence.

\subsection{Comma Contexts}\label{sub:sec:comma:cxt}

\begin{fact}\label{fact:comma:lim}
Let $\mathrmbf{A}\xrightarrow{\mathrmbfit{L}}\mathrmbf{C}\xleftarrow{\mathrmbfit{R}}\mathrmbf{B}$ be a passage opspan
with both $\mathrmbf{A}\xrightarrow{\mathrmbfit{L}}\mathrmbf{C}$ and $\mathrmbf{B}\xrightarrow{\mathrmbfit{R}}\mathrmbf{C}$ continuous passages.
%
\footnote{A passage $\mathrmbf{C}\xrightarrow{\mathrmbfit{F}}\mathrmbf{D}$ is \emph{continuous} 
when it preserves all small limits that exist in $\mathrmbf{C}$.}
%
If $\mathrmbf{A}$ and $\mathrmbf{B}$ are complete contexts,
then the comma context $\bigl(\mathrmbfit{L}{\,\downarrow\,}\mathrmbfit{R}\bigr)$ is complete and 
the projection passages 
$\mathrmbf{A}\leftarrow\bigl(\mathrmbfit{L}{\,\downarrow\,}\mathrmbfit{R}\bigr)\rightarrow\mathrmbf{B}$ 
are continuous.
\end{fact}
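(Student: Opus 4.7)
The plan is to build limits in $\bigl(\mathrmbfit{L}{\,\downarrow\,}\mathrmbfit{R}\bigr)$ componentwise from limits in $\mathrmbf{A}$ and $\mathrmbf{B}$, using the continuity of $\mathrmbfit{L}$ and $\mathrmbfit{R}$ to supply the mediating $\mathrmbf{C}$-morphism. Given a diagram $\mathrmbfit{D}\colon\mathrmbf{I}\to\bigl(\mathrmbfit{L}{\,\downarrow\,}\mathrmbfit{R}\bigr)$ with $\mathrmbfit{D}(i)={\langle{A_{i},B_{i},f_{i}}\rangle}$ where $f_{i}\colon\mathrmbfit{L}(A_{i})\to\mathrmbfit{R}(B_{i})$, I would first compose with the left and right projection passages to obtain diagrams $\mathrmbfit{D}_{\mathrmbf{A}}\colon\mathrmbf{I}\to\mathrmbf{A}$ and $\mathrmbfit{D}_{\mathrmbf{B}}\colon\mathrmbf{I}\to\mathrmbf{B}$. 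By completeness of $\mathrmbf{A}$ and $\mathrmbf{B}$, take limits $A=\lim\mathrmbfit{D}_{\mathrmbf{A}}$ with projection cone $\{\pi_{i}\colon A\to A_{i}\}$ and $B=\lim\mathrmbfit{D}_{\mathrmbf{B}}$ with projection cone $\{\rho_{i}\colon B\to B_{i}\}$.

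Next I would construct the $\mathrmbf{C}$-morphism $f\colon\mathrmbfit{L}(A)\to\mathrmbfit{R}(B)$. By continuity of $\mathrmbfit{R}$, the cone $\{\mathrmbfit{R}(\rho_{i})\colon\mathrmbfit{R}(B)\to\mathrmbfit{R}(B_{i})\}$ exhibits $\mathrmbfit{R}(B)$ as the limit of $\mathrmbfit{R}\circ\mathrmbfit{D}_{\mathrmbf{B}}$. The composites $\mathrmbfit{L}(\pi_{i})\cdot f_{i}\colon\mathrmbfit{L}(A)\to\mathrmbfit{R}(B_{i})$ form a cone over $\mathrmbfit{R}\circ\mathrmbfit{D}_{\mathrmbf{B}}$: for each edge $e\colon i\to j$ of $\mathrmbf{I}$, the naturality condition on $\mathrmbfit{D}(e)={\langle{a_{e},b_{e}}\rangle}$ reads $\mathrmbfit{L}(a_{e})\cdot f_{j}=f_{i}\cdot\mathrmbfit{R}(b_{e})$, and combined with the $\mathrmbf{A}$-cone identity $\pi_{i}=\pi_{j}\cdot a_{e}$ and functoriality of $\mathrmbfit{L}$ this gives $(\mathrmbfit{L}(\pi_{j})\cdot f_{j})\cdot\mathrmbfit{R}(b_{e})=\mathrmbfit{L}(\pi_{i})\cdot f_{i}$. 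The universal property of $\mathrmbfit{R}(B)$ then yields a unique $f\colon\mathrmbfit{L}(A)\to\mathrmbfit{R}(B)$ with $f\cdot\mathrmbfit{R}(\rho_{i})=\mathrmbfit{L}(\pi_{i})\cdot f_{i}$; this same identity shows that $\{{\langle{\pi_{i},\rho_{i}}\rangle}\}$ is a cone over $\mathrmbfit{D}$ with apex ${\langle{A,B,f}\rangle}$ in $\bigl(\mathrmbfit{L}{\,\downarrow\,}\mathrmbfit{R}\bigr)$.

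Finally I would verify universality. Given any cone ${\langle{a_{i},b_{i}}\rangle}\colon{\langle{A',B',f'}\rangle}\to\mathrmbfit{D}(i)$, the universal properties of $A$ and $B$ produce unique mediators $a\colon A'\to A$ and $b\colon B'\to B$ with $a\cdot\pi_{i}=a_{i}$ and $b\cdot\rho_{i}=b_{i}$. To check that ${\langle{a,b}\rangle}$ is a valid comma morphism one must show $\mathrmbfit{L}(a)\cdot f=f'\cdot\mathrmbfit{R}(b)$, and this is done by composing both sides with each $\mathrmbfit{R}(\rho_{i})$: both compositions equal $\mathrmbfit{L}(a_{i})\cdot f_{i}=f'\cdot\mathrmbfit{R}(b_{i})$, so the uniqueness clause in the universal property of $\mathrmbfit{R}(B)$ forces equality. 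Continuity of the projection passages then follows immediately, since by construction the $\mathrmbf{A}$- and $\mathrmbf{B}$-projections of the comma-limit are precisely $A$ and $B$. The main obstacle is the verification of the comma morphism condition for the mediator ${\langle{a,b}\rangle}$ — it is the one place where continuity of $\mathrmbfit{L}$ (to recognise $\mathrmbfit{L}(a)$ as a limit mediator into $\mathrmbfit{L}(A)$) and the uniqueness half of the universal property of $\mathrmbfit{R}(B)$ must be deployed together in a careful diagram chase.
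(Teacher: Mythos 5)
Your construction---form the limits $A$ and $B$ of the two projected diagrams, use continuity of $\mathrmbfit{R}$ to exhibit $\mathrmbfit{R}(B)$ as the limit of $\mathrmbfit{R}\circ\mathrmbfit{D}_{\mathrmbf{B}}$ so that the cone $\{\mathrmbfit{L}(\pi_{i})\cdot f_{i}\}$ induces the unique $f\colon\mathrmbfit{L}(A)\to\mathrmbfit{R}(B)$, and then check the comma condition for mediators by postcomposing with the $\mathrmbfit{R}(\rho_{i})$---is precisely the paper's argument (the paper writes out only the dual cocompleteness proof, which is this construction with $\mathrmbfit{L}$ and colimits in place of $\mathrmbfit{R}$ and limits), so the proposal is correct and essentially identical in approach. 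Two small corrections: the cone identity should read $\pi_{j}=\pi_{i}\cdot a_{e}$, giving $(\mathrmbfit{L}(\pi_{i})\cdot f_{i})\cdot\mathrmbfit{R}(b_{e})=\mathrmbfit{L}(\pi_{j})\cdot f_{j}$ (your version has the indices transposed and does not typecheck), and continuity of $\mathrmbfit{L}$ is never actually used---only its functoriality enters the final diagram chase---which is why the dual fact correctly hypothesizes cocontinuity of $\mathrmbfit{L}$ alone.
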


\begin{fact}\label{fact:comma:colim}
Let $\mathrmbf{A}\xrightarrow{\mathrmbfit{L}}\mathrmbf{C}\xleftarrow{\mathrmbfit{R}}\mathrmbf{B}$ be a passage opspan
with $\mathrmbf{A}\xrightarrow{\mathrmbfit{L}}\mathrmbf{C}$ cocontinuous.
%
\footnote{A passage $\mathrmbf{C}\xrightarrow{\mathrmbfit{F}}\mathrmbf{D}$ is \emph{cocontinuous} 
when it preserves all small colimits that exist in $\mathrmbf{C}$. 
A passage $\mathrmbf{C}\xrightarrow{\mathrmbfit{F}}\mathrmbf{D}$ is cocontinuous \underline{iff} 
the opposite passage $\mathrmbf{C}^{\mathrm{op}}\xrightarrow{\mathrmbfit{F}^{\mathrm{op}}}\mathrmbf{D}^{\mathrm{op}}$
between opposite contexts is a continuous passage.}
%
If $\mathrmbf{A}$ and $\mathrmbf{B}$ are cocomplete contexts,
then the comma context $\bigl(\mathrmbfit{L}{\,\downarrow\,}\mathrmbfit{R}\bigr)$ is cocomplete and 
the projection passages 
$\mathrmbf{A}\leftarrow\bigl(\mathrmbfit{L}{\,\downarrow\,}\mathrmbfit{R}\bigr)\rightarrow\mathrmbf{B}$ 
are cocontinuous.
\end{fact}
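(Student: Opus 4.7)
The plan is to run the construction dual to Fact~\ref{fact:comma:lim}: build colimits in $\bigl(\mathrmbfit{L}{\,\downarrow\,}\mathrmbfit{R}\bigr)$ componentwise in $\mathrmbf{A}$ and $\mathrmbf{B}$, and then use the cocontinuity of $\mathrmbfit{L}$ (together with the universal property of the colimit in $\mathrmbf{A}$) to obtain the connecting $\mathrmbf{C}$-morphism. Note that the usual (dual) hypotheses would require $\mathrmbfit{R}$ continuous and $\mathrmbfit{L}$ cocontinuous; since $(\mathrmbfit{L}{\,\downarrow\,}\mathrmbfit{R})^{\mathrm{op}}\cong(\mathrmbfit{R}^{\mathrm{op}}{\,\downarrow\,}\mathrmbfit{L}^{\mathrm{op}})$, only the cocontinuity of $\mathrmbfit{L}$ is really used to construct colimits on the nose, with no condition on $\mathrmbfit{R}$ beyond being a passage.

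First I would unpack the data: a small diagram $\mathrmbf{I}\xrightarrow{\mathrmbfit{D}}\bigl(\mathrmbfit{L}{\,\downarrow\,}\mathrmbfit{R}\bigr)$ consists of component diagrams $\mathrmbf{I}\xrightarrow{\mathrmbfit{D}_{\mathrmbf{A}}}\mathrmbf{A}$, $\mathrmbf{I}\xrightarrow{\mathrmbfit{D}_{\mathrmbf{B}}}\mathrmbf{B}$, and a bridge $\mathrmbfit{L}\circ\mathrmbfit{D}_{\mathrmbf{A}}\xRightarrow{\varphi}\mathrmbfit{R}\circ\mathrmbfit{D}_{\mathrmbf{B}}$. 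Since $\mathrmbf{A}$ and $\mathrmbf{B}$ are cocomplete, form colimiting cocones $\mathrmbfit{D}_{\mathrmbf{A}}\xRightarrow{\alpha}\Delta A$ with apex $A=\mathrm{colim}\,\mathrmbfit{D}_{\mathrmbf{A}}$ and $\mathrmbfit{D}_{\mathrmbf{B}}\xRightarrow{\beta}\Delta B$ with apex $B=\mathrm{colim}\,\mathrmbfit{D}_{\mathrmbf{B}}$. The composite bridge $\varphi\bullet(\mathrmbfit{R}\circ\beta)$ is a cocone $\mathrmbfit{L}\circ\mathrmbfit{D}_{\mathrmbf{A}}\Rightarrow\Delta\mathrmbfit{R}(B)$.

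Next, I would invoke cocontinuity of $\mathrmbfit{L}$: the cocone $\mathrmbfit{L}\circ\alpha$ is colimiting with apex $\mathrmbfit{L}(A)$, so the preceding cocone factors uniquely through $\mathrmbfit{L}(A)$, yielding a unique $\mathrmbf{C}$-morphism $\mathrmbfit{L}(A)\xrightarrow{f}\mathrmbfit{R}(B)$ satisfying $(\mathrmbfit{L}\circ\alpha)\bullet\Delta f=\varphi\bullet(\mathrmbfit{R}\circ\beta)$. The triple $(A,B,f)$ is then the candidate colimit, with colimit cocone $(\alpha,\beta):\mathrmbfit{D}\Rightarrow\Delta(A,B,f)$. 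Verifying that $(\alpha,\beta)$ is a valid cocone in $\bigl(\mathrmbfit{L}{\,\downarrow\,}\mathrmbfit{R}\bigr)$ amounts precisely to the identity by which $f$ was defined.

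Finally, I would verify universality: given any cocone $(\alpha',\beta'):\mathrmbfit{D}\Rightarrow\Delta(A',B',f')$, the universal properties of $A$ and $B$ supply unique mediators $A\xrightarrow{a}A'$ and $B\xrightarrow{b}B'$; commutativity $\mathrmbfit{L}(a)\cdot f'=f\cdot\mathrmbfit{R}(b)$ then follows by another appeal to the uniqueness in the colimit factorization through $\mathrmbfit{L}(A)$, which is the main step to check carefully. Thus $(a,b)$ is the required unique mediator in the comma context, proving cocompleteness. Cocontinuity of the projection passages $\mathrmbf{A}\leftarrow\bigl(\mathrmbfit{L}{\,\downarrow\,}\mathrmbfit{R}\bigr)\rightarrow\mathrmbf{B}$ is then immediate from the construction, since the projections of the colimit cocone $(\alpha,\beta)$ are precisely the colimiting cocones $\alpha$ and $\beta$. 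The main (mild) obstacle is the careful bookkeeping of the bridge $\varphi$ and the fact that the factorization step genuinely requires $\mathrmbfit{L}$ cocontinuous — not merely that the relevant colimit exist — in order to identify $\mathrmbfit{L}(A)$ with $\mathrm{colim}(\mathrmbfit{L}\circ\mathrmbfit{D}_{\mathrmbf{A}})$.
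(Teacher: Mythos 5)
Your proposal is correct and takes essentially the same route as the paper's own (compressed) proof: both form the colimiting cocones of the projected diagrams in $\mathrmbf{A}$ and $\mathrmbf{B}$, use the cocontinuity of $\mathrmbfit{L}$ to identify $\mathrmbfit{L}$ applied to the $\mathrmbf{A}$-colimit with the colimit of $\mathrmbfit{L}\circ\mathrmbfit{D}_{\mathrmbf{A}}$ so as to extract the unique connecting morphism $f:\mathrmbfit{L}(A)\rightarrow\mathrmbfit{R}(B)$, and then verify universality through the componentwise mediators, with cocontinuity of the projections immediate from the construction. Your remark that no hypothesis on $\mathrmbfit{R}$ is needed likewise agrees with the statement and with the paper's argument.
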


\comment{
COMPRESS-PROOF-COMPRESS-PROOF-COMPRESS-PROOF-COMPRESS-PROOF-COMPRESS-PROOF-COMPRESS-PROOF-COMPRESS-PROOF-COMPRESS-PROOF-COMPRESS-PROOF
COMPRESS-PROOF-COMPRESS-PROOF-COMPRESS-PROOF-COMPRESS-PROOF-COMPRESS-PROOF-COMPRESS-PROOF-COMPRESS-PROOF-COMPRESS-PROOF-COMPRESS-PROOF
COMPRESS-PROOF-COMPRESS-PROOF-COMPRESS-PROOF-COMPRESS-PROOF-COMPRESS-PROOF-COMPRESS-PROOF-COMPRESS-PROOF-COMPRESS-PROOF-COMPRESS-PROOF
\begin{proof}
\begin{flushleft}
\begin{tabular}[t]{p{270pt}@{\hspace{20pt}}c}
The comma context comes equipped with projection passage span
$\mathrmbf{A} 
\xleftarrow{\mathrmbfit{pr}_{\mathrmbf{A}}} \bigl(\mathrmbfit{L}{\,\downarrow\,}\mathrmbfit{R}\bigr) 
\xrightarrow{\mathrmbfit{pr}_{\mathrmbf{B}}} \mathrmbf{B}$ 
and natural transformation
$\mathrmbfit{pr}_{\mathrmbf{A}} \circ \mathrmbfit{L} \stackrel{\gamma}{\Rightarrow} 
\mathrmbfit{pr}_{\mathrmbf{A}} \circ \mathrmbfit{L}$.
Consider a (finite) diagram 
$\mathrmbfit{D} : \mathrmbf{I} \rightarrow \bigl(\mathrmbfit{L}{\,\downarrow\,}\mathrmbfit{R}\bigr)$
in the comma context
with projection passage span (diagrams) 
$\mathrmbf{A} 
\xleftarrow{\mathrmbfit{D}_{\mathrmbf{A}}} \mathrmbf{I} 
\xrightarrow{\mathrmbfit{D}_{\mathrmbf{B}}} \mathrmbf{B}$, 
where
$\mathrmbfit{D}^\mathrmbf{A} = \mathrmbfit{D} \circ \mathrmbfit{pr}^\mathrmbf{A}$ and
$\mathrmbfit{D}^\mathrmbf{B} = \mathrmbfit{D} \circ \mathrmbfit{pr}^\mathrmbf{B}$.
The object at index $n \in \mathrmbf{I}$ is a triple
&
\begin{tabular}[t]{c}
\setlength{\unitlength}{0.5pt}
\begin{picture}(120,0)(0,145)
%
\put(60,180){\makebox(0,0){\footnotesize{$\mathrmbfit{I}$}}}
\put(60,170){\vector(0,-1){40}}
\put(55,150){\makebox(0,0)[r]{\scriptsize{$\mathrmbfit{D}$}}}
\put(-30,150){\makebox(0,0){\footnotesize{$\mathrmbfit{1}$}}}
\put(51,177){\vector(-3,-1){72}}
\put(-27,141){\vector(1,-3){24}}
\put(-20,110){\makebox(0,0)[r]{\scriptsize{$a$}}}
\put(10,140){\makebox(0,0){\large{$\overset{\alpha}{\Leftarrow}$}}}
\put(150,150){\makebox(0,0){\footnotesize{$\mathrmbfit{1}$}}}
\put(69,177){\vector(3,-1){72}}
\put(147,141){\vector(-1,-3){24}}
\put(140,110){\makebox(0,0)[l]{\scriptsize{$b$}}}
\put(110,140){\makebox(0,0){\large{$\overset{\beta}{\Rightarrow}$}}}
\put(60,120){\makebox(0,0){\footnotesize{$\bigl(\mathrmbfit{L}{\,\downarrow\,}\mathrmbfit{R}\bigr)$}}}
\put(0,60){\makebox(0,0){\footnotesize{$\mathrmbf{A}$}}}
\put(120,60){\makebox(0,0){\footnotesize{$\mathrmbf{B}$}}}
\put(60,0){\makebox(0,0){\footnotesize{$\mathrmbf{C}$}}}
\put(30,95){\makebox(0,0)[r]{\scriptsize{$\mathrmbfit{pr}_{\mathrmbf{A}}$}}}
\put(90,95){\makebox(0,0)[l]{\scriptsize{$\mathrmbfit{pr}_{\mathrmbf{B}}$}}}
\put(25,25){\makebox(0,0)[r]{\scriptsize{$\mathrmbfit{L}$}}}
\put(95,25){\makebox(0,0)[l]{\scriptsize{$\mathrmbfit{R}$}}}
\put(60,60){\makebox(0,0){\large{$\overset{\gamma}{\Rightarrow}$}}}
\put(50,110){\vector(-1,-1){40}}
\put(70,110){\vector(1,-1){40}}
\put(10,50){\vector(1,-1){40}}
\put(110,50){\vector(-1,-1){40}}
\qbezier(60,87)(55,92)(50,97)
\qbezier(60,87)(65,92)(70,97)
\end{picture}
\end{tabular}
\end{tabular}
\end{flushleft}
\noindent
{\footnotesize\[
\mathrmbfit{D}_n = (a_n,\mathrmbfit{L}(a_n) \xrightarrow{f_n} \mathrmbf{R}(b_n),b_n),
\]\normalsize}
\hspace{-9pt}
with 
$\mathrmbf{A}$-object $a_n = \mathrmbfit{D}^\mathrmbf{A}(n)$, 
$\mathrmbf{B}$-object $b_n = \mathrmbfit{D}^\mathrmbf{B}(n)$, and
$\mathrmbf{C}$-morphism $f_n = \mathrmbfit{L}(a_n) \xrightarrow{\gamma_{\mathrmbfit{D}(n)}} \mathrmbf{R}(b_n)$.
Let $\mathrmbfit{D}^\mathrmbf{A} \stackrel{\alpha}{\Rightarrow} \Delta(a)$
with components $\{ a_n \xrightarrow{\alpha_n} a \mid n \in \mathrmbf{I} \}$ 
be the colimiting cocone of diagram $\mathrmbfit{D}^\mathrmbf{A}$ and 
let $\mathrmbfit{D}^\mathrmbf{B} \stackrel{\beta}{\Rightarrow} \Delta(b)$
with components $\{ b_n \xrightarrow{\beta_n} b \mid n \in \mathrmbf{I} \}$ 
be the colimiting cocone of diagram $\mathrmbfit{D}^\mathrmbf{B}$. 
Then 
$\mathrmbfit{D}^\mathrmbf{A} \circ \mathrmbfit{L} 
\stackrel{\mathrmbfit{D}\gamma \bullet \beta\mathrmbfit{R}}{\Longrightarrow} \Delta(\mathrmbfit{R}(b))$
with components
$\{ \mathrmbfit{L}(a_n) \xrightarrow{f_n \cdot \mathrmbfit{R}(\beta_n)} \mathrmbfit{R}(b) \mid n \in \mathrmbf{I} \}$ 
is a cocone of diagram $\mathrmbfit{D}^\mathrmbf{A} \circ \mathrmbfit{L} : \mathrmbf{I} \rightarrow \mathrmbf{C}$. 
By the cocontinuity of $\mathrmbfit{L}$, 
the colimiting cocone of this diagram is
$\mathrmbfit{D}^\mathrmbf{A} \circ \mathrmbfit{L} \stackrel{\alpha\mathrmbfit{L}}{\Longrightarrow} \Delta(\mathrmbfit{L}(a))$
with components
$\{ \mathrmbfit{L}(a_n) \xrightarrow{\mathrmbfit{L}(\alpha_n)} \mathrmbfit{L}(a) \mid n \in \mathrmbf{I} \}$. 
Thus, 
there is a unique mediating $\mathrmbf{C}$-morphism $f : \mathrmbfit{L}(a) \rightarrow \mathrmbfit{R}(b)$ 
such that 
$\alpha\mathrmbfit{L} \bullet \Delta(f) = \mathrmbfit{D}\gamma \bullet \beta\mathrmbfit{R}$;
that is,
for all indices $n \in \mathrmbf{I}$ the following commutes:
\begin{center}
\begin{tabular}{c}
\setlength{\unitlength}{0.5pt}
\begin{picture}(100,80)(-50,20)
\put(0,80){\makebox(0,0){\footnotesize{$\mathrmbfit{L}(a_n)$}}}
\put(0,0){\makebox(0,0){\footnotesize{$\mathrmbfit{L}(a)$}}}
\put(100,80){\makebox(0,0){\footnotesize{$\mathrmbfit{R}(b_n)$}}}
\put(100,0){\makebox(0,0){\footnotesize{$\mathrmbfit{R}(b)$}}}
\put(-5,40){\makebox(0,0)[r]{\scriptsize{$\mathrmbfit{L}(\alpha_n)$}}}
\put(105,40){\makebox(0,0)[l]{\scriptsize{$\mathrmbfit{R}(\beta_n)$}}}
\put(50,95){\makebox(0,0){\scriptsize{$f_n$}}}
\put(50,-15){\makebox(0,0){\scriptsize{$f$}}}
\put(0,64){\vector(0,-1){48}}
\put(100,64){\vector(0,-1){48}}
\put(25,80){\vector(1,0){50}}
\put(25,0){\vector(1,0){50}}
\put(-60,80){\makebox(0,0)[r]{\footnotesize{$\mathrmbfit{D}_n = (a_n,f_n,b_n)$}}}
\put(-100,0){\makebox(0,0){\footnotesize{$(a,f,b)$}}}
\put(-105,40){\makebox(0,0)[r]{\scriptsize{$(\alpha_n, \beta_n)$}}}
\put(-100,64){\vector(0,-1){48}}
\end{picture}
 \\ \\
\end{tabular}
\end{center}
Define the cocone
$\mathrmbfit{D} \stackrel{\delta}{\Rightarrow} \Delta(a,f,b)$
with components
$\{ (a_n,f_n,b_n) \xrightarrow{(\alpha_n, \beta_n)} (a,f,b) \mid n \in \mathrmbf{I} \}$.
One can check that this is natural, and
with projections 
$\delta\mathrmbfit{pr}_\mathrmbf{A} = \alpha$ and
$\delta\mathrmbfit{pr}_\mathrmbf{B} = \beta$.
It is the colimiting cocone of the diagram $\mathrmbfit{D}$.

For suppose that
$\mathrmbfit{D} \stackrel{\delta'}{\Rightarrow} \Delta(a',f',b')$
is any other cocone
with components $\{ (a_n,f_n,b_n) \xrightarrow{(\alpha'_n, \beta'_n)} (a',f',b') \mid n \in \mathrmbf{I} \}$.
The projection
$\alpha' = \delta'\mathrmbfit{pr}^\mathrmbf{A} : \mathrmbfit{D}^\mathrmbf{A} \Rightarrow \Delta(a')$
is a cocone with components
$\{ a_n \xrightarrow{\alpha'_n} a' \mid n \in \mathrmbf{I} \}$, and
the projection
$\beta' = \delta'\mathrmbfit{pr}^\mathrmbf{B} : \mathrmbfit{D}^\mathrmbf{B} \Rightarrow \Delta(b')$
is a cocone with components
$\{ b_n \xrightarrow{\beta'_n} b' \mid n \in \mathrmbf{I} \}$.
Hence,
there is a unique mediating $\mathrmbf{A}$-morphism $g : a \rightarrow a'$ 
such that $\alpha_n \cdot g = \alpha'_n$ for all $n \in \mathrmbf{I} \}$,
     and a unique mediating $\mathrmbf{B}$-morphism $h : b \rightarrow b'$ 
such that  $\beta_n \cdot h = \beta'_n$ for all $n \in \mathrmbf{I} \}$;
Or together
$\delta_n \cdot (g,h) = (\alpha_n,\beta_n) \cdot (g,h) = (\alpha'_n,\beta'_n) = \delta'_n$ for all $n \in \mathrmbf{I} \}$;
that is,
$\delta \bullet \Delta(g,h) = \delta'_n$.
\mbox{}\hfill\rule{5pt}{5pt}
\end{proof} 
COMPRESS-PROOF-COMPRESS-PROOF-COMPRESS-PROOF-COMPRESS-PROOF-COMPRESS-PROOF-COMPRESS-PROOF-COMPRESS-PROOF-COMPRESS-PROOF-COMPRESS-PROOF
COMPRESS-PROOF-COMPRESS-PROOF-COMPRESS-PROOF-COMPRESS-PROOF-COMPRESS-PROOF-COMPRESS-PROOF-COMPRESS-PROOF-COMPRESS-PROOF-COMPRESS-PROOF
COMPRESS-PROOF-COMPRESS-PROOF-COMPRESS-PROOF-COMPRESS-PROOF-COMPRESS-PROOF-COMPRESS-PROOF-COMPRESS-PROOF-COMPRESS-PROOF-COMPRESS-PROOF
}

\subsection{The Grothendieck Construction}\label{sub:sec:ind:fbr}

%
\begin{figure}
\begin{center}
{{\begin{tabular}{c@{\hspace{40pt}}c}
{{\begin{tabular}{c}
\setlength{\unitlength}{0.44pt}
\begin{picture}(260,300)(33,0)
\put(140,200){\begin{picture}(0,0)(0,0)
\put(0,30){\oval(180,60)}
\put(0,75){\makebox(0,0){\footnotesize{$\int{\!\acute{\mathrmbfit{C}}}$}}}
\put(0,30){\makebox(0,0){\scriptsize{${\langle{i,A}\rangle}\xrightarrow{{\langle{a,\grave{f}}\rangle}}{\langle{i',A'}\rangle}$}}}
\end{picture}}
\put(140,130){\begin{picture}(0,0)(0,0)
\put(0,35){\makebox(0,0){\footnotesize{$\mathrmbf{I}$}}}
\put(0,3){\makebox(0,0){\scriptsize{$i\xrightarrow{\;\;a\;\;}i'$}}}
\put(0,0){\oval(100,40)}
\end{picture}}
\put(60,30){\begin{picture}(0,0)(0,0)
\put(160,0){\oval(134,48)}
\put(160,40){\makebox(0,0){\footnotesize{$\mathrmbfit{C}_{i'}$}}}
\put(160,0){\makebox(0,0){\scriptsize{$\acute{\mathrmbfit{C}}_{a}(A)\xrightarrow{\grave{f}}A'$}}}
\put(80,48){\makebox(0,0){\footnotesize{$\xrightarrow{\;\;\;\acute{\mathrmbfit{C}}_{a}\;\;\;\;}$}}}
\put(0,0){\oval(48,48)}
\put(0,40){\makebox(0,0){\footnotesize{$\mathrmbfit{C}_{i}$}}}
\put(0,0){\makebox(0,0){\scriptsize{$A$}}}
\end{picture}}
\end{picture}
\end{tabular}}}
&
{{\begin{tabular}{c}
\setlength{\unitlength}{0.44pt}
\begin{picture}(260,300)(-5,0)
\put(140,200){\begin{picture}(0,0)(0,0)
\put(0,30){\oval(180,60)}
\put(0,75){\makebox(0,0){\footnotesize{$\int{\!\grave{\mathrmbfit{C}}}$}}}
\put(0,30){\makebox(0,0){\scriptsize{${\langle{i,A}\rangle}\xrightarrow{{\langle{a,\acute{f}}\rangle}}{\langle{i',A'}\rangle}$}}}
\end{picture}}
\put(140,130){\begin{picture}(0,0)(0,0)
\put(0,35){\makebox(0,0){\footnotesize{$\mathrmbf{I}$}}}
\put(0,3){\makebox(0,0){\scriptsize{$i\xrightarrow{\;\;a\;\;}i'$}}}
\put(0,0){\oval(100,40)}
\end{picture}}
\put(60,30){\begin{picture}(0,0)(0,0)
\put(0,0){\oval(134,48)}
\put(0,40){\makebox(0,0){\footnotesize{$\mathrmbfit{C}_{i}$}}}
\put(0,0){\makebox(0,0){\scriptsize{$A\xrightarrow{\acute{f}}\grave{\mathrmbfit{C}}_{a}(A')$}}}
\put(80,48){\makebox(0,0){\footnotesize{$\xleftarrow{\;\;\;\grave{\mathrmbfit{C}}_{a}\;\;\;\;}$}}}
\put(160,0){\oval(48,48)}
\put(160,40){\makebox(0,0){\footnotesize{$\mathrmbfit{C}_{i'}$}}}
\put(160,0){\makebox(0,0){\scriptsize{$A'$}}}
\end{picture}}
\end{picture}
\end{tabular}}}
\\&\\
{\small\bfseries{opfibration}} & {\small\bfseries{fibration}}
\\&\\
\multicolumn{2}{c}{\small{$\overset{\underbrace{\rule{220pt}{0pt}}}{\text{\small\bfseries{bifibration}}}$}}
\\&\\
\multicolumn{2}{c}{\scriptsize{\begin{tabular}{p{300pt}}
The adjunction
$\mathrmbfit{C}_{i}\xrightarrow{{\langle{\acute{\mathrmbfit{C}}_{a}{\;\dashv\;}\grave{\mathrmbfit{C}}_{a}}\rangle}}\mathrmbfit{C}_{i'}$
has unit $\mathrmbfit{1}_{\mathrmbfit{C}_{i}}\xRightarrow{\eta_{a}}\acute{\mathrmbfit{C}}_{a}{\;\circ\;}\grave{\mathrmbfit{C}}_{a}$
with the $\mathrmbfit{C}_{i}$-morphism $A\xrightarrow{\eta_{a}(A)}\grave{\mathrmbfit{C}}_{a}(\acute{\mathrmbfit{C}}_{a}(A))$
as its $A^{\mathrm{th}}$ component, and
has counit $\grave{\mathrmbfit{C}}_{a}{\;\circ\;}\acute{\mathrmbfit{C}}_{a}\xRightarrow{\varepsilon_{a}}\mathrmbfit{1}_{\mathrmbfit{C}_{i'}}$
with the $\mathrmbfit{C}_{i'}$-morphism $\acute{\mathrmbfit{C}}_{a}(\grave{\mathrmbfit{C}}_{a}(A'))\xrightarrow{\varepsilon_{a}(A')}A'$
as its $A'^{\mathrm{th}}$ component.
\begin{center}
{\scriptsize\setlength{\extrarowheight}{3pt}$\begin{array}{r@{\hspace{5pt}=\hspace{5pt}}l}
A\xrightarrow{\acute{f}}\grave{\mathrmbfit{C}}_{a}(A')
&
A\xrightarrow{\eta_{a}(A)}\grave{\mathrmbfit{C}}_{a}(\acute{\mathrmbfit{C}}_{a}(A))
\xrightarrow{\grave{\mathrmbfit{C}}_{a}(\grave{f})}\grave{\mathrmbfit{C}}_{a}(A')
\\
\acute{\mathrmbfit{C}}_{a}(A)\xrightarrow{\grave{f}}A'
&
\acute{\mathrmbfit{C}}_{a}(A)\xrightarrow{\acute{\mathrmbfit{C}}_{a}(\acute{f})}
\acute{\mathrmbfit{C}}_{a}(\grave{\mathrmbfit{C}}_{a}(A'))\xrightarrow{\varepsilon_{a}(A')}A'
\end{array}$}
\end{center}
\end{tabular}}}
\end{tabular}}}
\end{center}
\caption{Bifibration}
\label{fig:bi:fbr}
\end{figure}
\begin{description}
\item[fibration:] 
A fibration (fibered context) $\int{\!\grave{\mathrmbfit{C}}}$
is the Grothendieck construction of a contravariant pseudo-passage (indexed context)
$\mathrmbf{I}^{\mathrm{op}}\xrightarrow{\grave{\mathrmbfit{C}}}\mathrmbf{Cxt}$,
where the action on any indexing object $i$ in $\mathrmbf{I}$ is the fiber context
$\grave{\mathrmbfit{C}}_{i}$
and
the action on any indexing morphism $i\xrightarrow{a}i'$ is the fiber passage
$\mathrmbfit{C}_{i}\xleftarrow{\grave{\mathrmbfit{C}}_{a}}\mathrmbfit{C}_{i'}$. 
An object in $\int{\!\grave{\mathrmbfit{C}}}$ is a pair ${\langle{i,A}\rangle}$,
where $i$ is an indexing object in $\mathrmbf{I}$ and $A$ is an object in the fiber context $\grave{\mathrmbfit{C}}_{i}$.
A morphism in $\int{\!\grave{\mathrmbfit{C}}}$ is a pair 
${\langle{i,A}\rangle}\xrightarrow{{\langle{a,\acute{f}}\rangle}}{\langle{i',A'}\rangle}$,
where $i\xrightarrow{a}i'$ is an indexing morphism in $\mathrmbf{I}$ 
and $A\xrightarrow{\acute{f}}\grave{\mathrmbfit{C}}_{a}(A')$ is a fiber morphism in $\grave{\mathrmbfit{C}}_{i}$. 
There is a projection passage
$\int{\!\grave{\mathrmbfit{C}}}\rightarrow\mathrmbf{I}$.
\newline
\item[opfibration:] 
An opfibration $\int{\!\acute{\mathrmbfit{C}}}$
is the Grothendieck construction of a covariant pseudo-passage (indexed context)
$\mathrmbf{I}\xrightarrow{\acute{\mathrmbfit{C}}}\mathrmbf{Cxt}$,
where the action on any indexing object $i$ in $\mathrmbf{I}$ is the fiber context
$\acute{\mathrmbfit{C}}_{i}$
and
the action on any indexing morphism $i\xrightarrow{a}i'$ is the fiber passage
$\mathrmbfit{C}_{i}\xrightarrow{\acute{\mathrmbfit{C}}_{a}}\mathrmbfit{C}_{i'}$. 
An object in $\int{\!\acute{\mathrmbfit{C}}}$ is a pair ${\langle{i,A}\rangle}$,
where $i$ is an indexing object in $\mathrmbf{I}$ and $A$ is an object in the fiber context $\acute{\mathrmbfit{C}}_{i}$.
A morphism in $\int{\!\acute{\mathrmbfit{C}}}$ is a pair 
${\langle{i,A}\rangle}\xrightarrow{{\langle{a,\grave{f}}\rangle}}{\langle{i',A'}\rangle}$,
where $i\xrightarrow{a}i'$ is an indexing morphism in $\mathrmbf{I}$ 
and $\acute{\mathrmbfit{C}}_{a}(A)\xrightarrow{\grave{f}}A'$ is a fiber morphism in $\acute{\mathrmbfit{C}}_{i'}$. 
There is a projection passage
$\int{\!\acute{\mathrmbfit{C}}}\rightarrow\mathrmbf{I}$.
\newline
\item[bifibration:] 
A bifibration $\int{\!{\mathrmbfit{C}}}$
(Fig.\ref{fig:bi:fbr})
is the Grothendieck construction of an indexed adjunction
$\mathrmbf{I}\xrightarrow{{\mathrmbfit{C}}}\mathrmbf{Adj}$
consisting of 
a left adjoint covariant pseudo-passage
$\mathrmbf{I}\xrightarrow{\acute{\mathrmbfit{C}}}\mathrmbf{Cxt}$
and a right adjoint contravariant pseudo-passage
$\mathrmbf{I}^{\mathrm{op}}\xrightarrow{\grave{\mathrmbfit{C}}}\mathrmbf{Cxt}$. 
The action on any indexing object $i$ in $\mathrmbf{I}$ is the fiber context
$\mathrmbfit{C}_{i} = \grave{\mathrmbfit{C}}_{i} = \acute{\mathrmbfit{C}}_{i}$
and
the action on any indexing morphism $i\xrightarrow{a}i'$ is the fiber adjunction
$\bigl(\mathrmbfit{C}_{i}\xrightarrow{\mathrmbfit{C}_{a}}\mathrmbfit{C}_{i'}\bigr) = 
\bigl(\mathrmbfit{C}_{i}\xrightarrow{{\langle{\acute{\mathrmbfit{C}}_{a}{\;\dashv\;}\grave{\mathrmbfit{C}}_{a}}\rangle}}\mathrmbfit{C}_{i'}\bigr)$. 
The Grothendieck constructions of component fibration and component opfibration are isomorphic
$\int{\!\grave{\mathrmbfit{C}}}{\;\cong\;}\int{\!\acute{\mathrmbfit{C}}}$
\[\mbox{\footnotesize{$
\bigl({\langle{i,A}\rangle}\xrightarrow{{\langle{a,\acute{f}}\rangle}}{\langle{i',A'}\rangle}\bigr)
{\;\;\;\overset{\cong}{\rightleftarrows}\;\;\;}
\bigl({\langle{i,A}\rangle}\xrightarrow{{\langle{a,\grave{f}}\rangle}}{\langle{i',A'}\rangle}\bigr)
$}\normalsize}\]
via (Fig.~\ref{fig:bi:fbr}) the adjoint pair
$A\xrightarrow{\acute{f}}\grave{\mathrmbfit{C}}_{a}(A'){\;\cong\;}
\acute{\mathrmbfit{C}}_{a}(A)\xrightarrow{\grave{f}}A'$.
Define the Grothendieck construction of the bifibration to be the Grothendieck construction of component fibration
$\int{\!\mathrmbfit{C}}{\;\doteq}\int{\!\grave{\mathrmbfit{C}}}$
with projection 
$\int{\!\mathrmbfit{C}}\rightarrow\mathrmbf{I}$.
\end{description}
%


%
\begin{fact}\label{fact:groth:lim}
If\, $\mathrmbf{I}^{\mathrm{op}}\!\xrightarrow{\mathrmbfit{C}}\mathrmbf{Cxt}$ 
is a contravariant pseudo-passage (indexed context)
s.t.
\begin{enumerate}
\item 
the indexing context $\mathrmbf{I}$ is complete,
\item 
the fiber context $\mathrmbf{C}_{i}$ is complete for each $i\in\mathrmbf{I}$, and
\item 
the fiber passage $\mathrmbf{C}_{i}\xleftarrow{\mathrmbfit{C}_{a}}\mathrmbf{C}_{j}$ is continuous for each $i\xrightarrow{a}j$ in $\mathrmbf{I}$,
\end{enumerate}
then the fibered context (Grothendieck construction) $\int\mathrmbf{C}$ is complete 
and the projection $\int\mathrmbf{C}\xrightarrow{\mathrmbfit{P}}\mathrmbf{I}$ is continuous.
\end{fact}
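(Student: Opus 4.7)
The plan is to construct the limit of an arbitrary small diagram $\mathrmbfit{D}:\mathrmbf{J}\rightarrow\int\mathrmbf{C}$ in three stages, mirroring the pattern of Prop.~\ref{prop:construct}: first compute the limit downstairs in the indexing context, then transport the fiber data up to a single central fiber by substitution, and finally compute the limit there. Concretely, write $\mathrmbfit{D}(j) = \langle i_{j}, A_{j}\rangle$ and $\mathrmbfit{D}(a) = \langle f_{a}, \acute{\phi}_{a}\rangle$ for each $\mathrmbf{J}$-morphism $a:j\rightarrow j'$, so that $\acute{\phi}_{a}:A_{j}\rightarrow\mathrmbfit{C}_{f_{a}}(A_{j'})$ is a morphism in the fiber $\mathrmbf{C}_{i_{j}}$.

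First I will apply the projection $\mathrmbfit{P}$ to obtain a base diagram $\mathrmbfit{D}_{\mathrmbf{I}} = \mathrmbfit{D}\circ\mathrmbfit{P}:\mathrmbf{J}\rightarrow\mathrmbf{I}$. By hypothesis~(1), $\mathrmbfit{D}_{\mathrmbf{I}}$ has a limit cone $\pi:\Delta i\Rightarrow\mathrmbfit{D}_{\mathrmbf{I}}$ with components $\pi_{j}:i\rightarrow i_{j}$ satisfying the cone condition $\pi_{j}\cdot f_{a} = \pi_{j'}$. Next I substitute each $A_{j}$ to the central fiber $\mathrmbf{C}_{i}$ by setting $\bar{A}_{j} = \mathrmbfit{C}_{\pi_{j}}(A_{j})$. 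Contravariant pseudofunctoriality of $\mathrmbfit{C}$ combined with $\pi_{j}\cdot f_{a} = \pi_{j'}$ supplies a coherent isomorphism $\mathrmbfit{C}_{\pi_{j}}(\mathrmbfit{C}_{f_{a}}(A_{j'})) \cong \mathrmbfit{C}_{\pi_{j'}}(A_{j'}) = \bar{A}_{j'}$, and composing this with $\mathrmbfit{C}_{\pi_{j}}(\acute{\phi}_{a})$ defines a morphism $\bar{A}_{j}\rightarrow\bar{A}_{j'}$ in $\mathrmbf{C}_{i}$. These assemble into a diagram $\bar{\mathrmbfit{D}}:\mathrmbf{J}\rightarrow\mathrmbf{C}_{i}$, and by hypothesis~(2) it admits a limit cone $\rho:\Delta A\Rightarrow\bar{\mathrmbfit{D}}$. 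The candidate limit in $\int\mathrmbf{C}$ is then $\widehat{\pi}:\Delta\langle i,A\rangle\Rightarrow\mathrmbfit{D}$ with components $\widehat{\pi}_{j} = \langle\pi_{j},\rho_{j}\rangle$; continuity of $\mathrmbfit{P}$ is immediate because this limit projects to $(i,\pi)$.

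Finally I will verify the universal property. Any competing cone $\delta:\Delta\langle i',B\rangle\Rightarrow\mathrmbfit{D}$ with components $\delta_{j} = \langle u_{j},\acute{\beta}_{j}\rangle$ projects via $\mathrmbfit{P}$ to a cone in $\mathrmbf{I}$, yielding a unique mediator $u:i'\rightarrow i$ satisfying $u\cdot\pi_{j} = u_{j}$. Here hypothesis~(3) plays the crucial role: continuity of $\mathrmbf{C}_{i'}\xleftarrow{\mathrmbfit{C}_{u}}\mathrmbf{C}_{i}$ implies that $\mathrmbfit{C}_{u}(A)$ is the limit of $\mathrmbfit{C}_{u}\circ\bar{\mathrmbfit{D}}$ in $\mathrmbf{C}_{i'}$. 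The transported components $\acute{\beta}_{j}:B\rightarrow\mathrmbfit{C}_{u_{j}}(A_{j}) \cong \mathrmbfit{C}_{u}(\bar{A}_{j})$ form a cone over $\mathrmbfit{C}_{u}\circ\bar{\mathrmbfit{D}}$ in $\mathrmbf{C}_{i'}$, producing a unique fiber mediator $v:B\rightarrow\mathrmbfit{C}_{u}(A)$ and hence the required morphism $\langle u,v\rangle:\langle i',B\rangle\rightarrow\langle i,A\rangle$ in $\int\mathrmbf{C}$.

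The main obstacle will be the coherence verification in the middle stage: one must check that the transported morphisms $\bar{A}_{j}\rightarrow\bar{A}_{j'}$ compose strictly, so that $\bar{\mathrmbfit{D}}$ is a genuine passage rather than only an oplax diagram. This is where the distinction between strict and pseudo indexed contexts matters; the remedy is to absorb the pseudofunctoriality coherence isomorphisms into a chosen cleavage (or equivalently, to read the construction up to canonical isomorphism and invoke uniqueness of limits). A secondary bookkeeping obstacle is tracking the same coherence isomorphisms in the identification $\mathrmbfit{C}_{u_{j}}(A_{j}) \cong \mathrmbfit{C}_{u}(\bar{A}_{j})$ used when verifying uniqueness; everything else reduces to the universal properties supplied by hypotheses~(1)--(3).
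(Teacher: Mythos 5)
Your proof is correct. The paper itself offers no argument for this fact beyond the citation to Tarlecki, Burstall and Goguen, and your three-stage construction — limit of the projected diagram in the base, reindexing of the fiber data to the central fiber along the limit projections, limit there, with hypothesis (3) supplying the unique fiber mediator against $\mathrmbfit{C}_{u}(A)$ — is precisely the standard argument from that reference, and it is the same pattern the paper carries out concretely in Prop.~\ref{prop:construct} for $\mathrmbf{Tbl}(\mathcal{A})$. Your identification of the pseudofunctoriality coherence as the main bookkeeping burden is apt, and absorbing it into a chosen cleavage (working up to canonical isomorphism) is the accepted remedy.
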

\begin{proof}
Tarlecki, Burstall and Goguen~\cite{tarlecki:burstall:goguen:91}.
\end{proof}
%

\comment{
\comment{
We give two proofs: n-cats and ind-cats.
\begin{description}
\item[n-cats:] 
Since $\int\mathrmbf{C}\xrightarrow{\mathrmbfit{P}}\mathrmbf{I}$ is a fibration, 
limits in $\int\mathrmbf{C}$ can be constructed out of limits in $\mathrmbf{I}$ 
and in the fiber categories $\{\mathrmbf{C}_{i} \mid i \in \mathrmbf{I}\}$. 

Let $\mathrmbf{A}\xrightarrow{\mathrmbfit{F}}\int\mathrmbf{C}$ be a diagram 
with 
$\{ \mathrmbfit{F}(a) \in \mathrmbf{C}_{i} \mid i = \mathrmbfit{P}(\mathrmbfit{F}(a)), a \in \mathrmbf{A} \}$.
Let $\hat{i}$ be the limit of $\mathrmbfit{F}{\,\circ\,}\mathrmbfit{P} : \mathrmbf{A} \to \mathrmbf{I}$, 
with projections $\{ \hat{i} \xrightarrow{\pi_{a}} i \mid i = \mathrmbfit{P}(\mathrmbfit{F}(a)), a \in \mathrmbf{A} \}$. 
These have fiber passages
$\{ \mathrmbf{C}_{\hat{i}} \xleftarrow{\mathrmbf{C}_{\pi_{a}}} \mathrmbf{C}_{i} \mid i = \mathrmbfit{P}(\mathrmbfit{F}(a)), a \in \mathrmbf{A} \}$. 
The collection
$\{ \hat{\mathrmbfit{F}}(a) = \mathrmbf{C}_{\pi_{a}}(\mathrmbfit{F}(a)) \in \mathrmbf{C}_{\hat{i}} \mid a \in \mathrmbf{A} \}$ 
forms a diagram $\mathrmbf{A}\xrightarrow{\hat{\mathrmbfit{F}}}\mathrmbf{C}_{\hat{i}}$ 
whose limit is the limit of $\mathrmbfit{F}$. 
\item[ind-cats:] 
}
It suffices to prove that $\int\mathrmbf{C}$ has all products and equalisers.
\begin{description}
%
\item[products:] 
A discrete diagram $N\xrightarrow{\mathrmbfit{D}}\int\mathrmbf{C}$
consists of a collection of $\int\mathrmbf{C}$-objects
\begin{equation}\label{eqn:prod:dgm}
\mathrmbfit{D} = \{ (i_{n},a_{n}) \mid i_{n}{\in\,}\mathrmbf{I}, a_{n}{\in\,}\mathrmbf{C}_{i_{n}}, n{\,\in\,}N \}.
\end{equation}
This has the projection diagram of $\mathrmbf{I}$-objects
$\mathrmbfit{D}{\,\circ\,}\mathrmbfit{P} = \{ i_{n} \in \mathrmbf{I} \mid n{\,\in\,}N \}$.
Let $\{ i \xrightarrow{\pi_{n}} i_{n} \mid n{\,\in\,}N \}$ 
be a product cone in $\mathrmbf{I}$ over this projection diagram.
Let $\{ a \xrightarrow{p_{n}} \mathrmbf{C}_{\pi_{n}}(a_{n}) \mid n{\,\in\,}N \}$ be a product cone in $\mathrmbf{C}_{i}$
over the ``flow'' diagram of $\mathrmbf{C}_{i}$-objects
$\{ \mathrmbf{C}_{\pi_{n}}(a_{n}) \mid n{\,\in\,}N \}$.
\newline
\emph{Claim:} 
$\{ (i,a) \xrightarrow{(\pi_{n},p_{n})} (i_{n},a_{n}) \mid n{\,\in\,}N \}$ is the product cone in $\int\mathrmbf{C}$ over diagram $\mathrmbfit{D}$.
\begin{itemize}
\item 
Any cone $\{ (j,b)\xrightarrow{(\sigma_{n},q_{n})}(i_{n},a_{n}) \mid n{\,\in\,}N \}$ in $\int\mathrmbf{C}$
over the diagram $\mathrmbfit{D}$
has the projection cone $\{ j\xrightarrow{\sigma_{n}}i_{n} \mid n{\,\in\,}N \}$ in $\mathrmbf{I}$
and the ``flow'' cone $\{ b\xrightarrow{q_{n}}\mathrmbf{C}_{\sigma_{n}}(a_{n}) \mid n{\,\in\,}N \}$ in $\mathrmbf{C}_{j}$. 
For the projection cone
there exists a unique mediating $\mathrmbf{I}$-morphism $j\xrightarrow{\sigma}i$ 
such that $\sigma{\,\cdot\,}\pi_{n} = \sigma_{n}$ for all $n{\,\in\,}N$.
\item 
Continuity of $\mathrmbf{C}_{j} \xleftarrow{\mathrmbf{C}_{\sigma}} \mathrmbf{C}_{i}$
guarantees that 
$\{ \mathrmbf{C}_{\sigma}(a)\xrightarrow{\mathrmbf{C}_{\sigma}(p_{n})}\mathrmbf{C}_{\sigma_{n}}(a_{n}) \mid n{\,\in\,}N \}$ 
is a product cone in $\mathrmbf{C}_{j}$ of the discrete diagram 
$\{ \mathrmbf{C}_{\sigma}(\mathrmbf{C}_{\pi_{n}}(a_{n})) \cong \mathrmbf{C}_{\sigma_{n}}(a_{n}) \mid n{\,\in\,}N \}$.
Hence, 
there exists a unique mediating morphism $b\xrightarrow{q}\mathrmbf{C}_{\sigma}(a)$ 
in $\mathrmbf{C}_{j}$
such that $q{\,\cdot\,}\mathrmbf{C}_{\sigma}(p_{n}) = q_{n}$ for each $n{\,\in\,}N$.
\item 
Then $(j,b)\xrightarrow{(\sigma,q)}(i, a)$ is a unique morphism in $\int\mathrmbf{C}$ 
such that $(\sigma,q){\,\cdot\,}(\pi_{n},p_{n}) = (\sigma_{n},q_{n})$ for each $n{\,\in\,}N$.
\end{itemize}
\begin{center}
{{\begin{tabular}{c}
\begin{picture}(280,160)(0,0)
\put(20,100){
\setlength{\unitlength}{0.55pt}
\begin{picture}(120,80)(0,0)
\put(0,80){\makebox(0,0){\footnotesize{$j$}}}
\put(0,0){\makebox(0,0){\footnotesize{$i$}}}
\put(120,40){\makebox(0,0){\footnotesize{$i_{n}$}}}
\put(-6,40){\makebox(0,0)[r]{\scriptsize{$\sigma$}}}
\put(65,80){\makebox(0,0){\scriptsize{$\sigma_{n}$}}}
\put(65,0){\makebox(0,0){\scriptsize{$\pi_{n}$}}}
\put(20,80){\vector(3,-1){85}}
\put(20,0){\vector(3,1){85}}
\put(0,65){\vector(0,-1){50}}
\put(45,40){\makebox(0,0){\footnotesize{\textit{in} $\mathrmbf{I}$}}}
\end{picture}}
\put(180,100){
\setlength{\unitlength}{0.55pt}
\begin{picture}(120,80)(0,0)
\put(0,80){\makebox(0,0){\footnotesize{$\mathrmbf{C}_{j}$}}}
\put(0,0){\makebox(0,0){\footnotesize{$\mathrmbf{C}_{i}$}}}
\put(120,40){\makebox(0,0){\footnotesize{$\mathrmbf{C}_{i_{n}}$}}}
\put(-6,40){\makebox(0,0)[r]{\scriptsize{$\mathrmbf{C}_{\sigma}$}}}
\put(65,80){\makebox(0,0){\scriptsize{$\mathrmbf{C}_{\sigma_{n}}$}}}
\put(65,0){\makebox(0,0){\scriptsize{$\mathrmbf{C}_{\pi_{n}}$}}}
\put(105,52){\vector(-3,1){85}}
\put(105,28){\vector(-3,-1){85}}
\put(0,15){\vector(0,1){50}}
\put(40,48){\makebox(0,0){\footnotesize{$\cong$}}}
\put(45,32){\makebox(0,0){\footnotesize{\textit{in} $\mathrmbf{Cxt}$}}}
\end{picture}}
\put(20,20){
\setlength{\unitlength}{0.55pt}
\begin{picture}(120,80)(0,0)
\put(0,80){\makebox(0,0){\footnotesize{$b$}}}
\put(0,0){\makebox(0,0){\footnotesize{$\mathrmbf{C}_{\sigma}(a)$}}}
\put(135,40){\makebox(0,0){\footnotesize{$
\underset{\textstyle{\mathrmbf{C}_{\sigma}(\mathrmbf{C}_{\pi_{n}}(a_{n}))}}
{\mathrmbf{C}_{\sigma_{n}}(a_{n})\;\cong}$}}}
\put(-6,40){\makebox(0,0)[r]{\scriptsize{$q$}}}
\put(65,80){\makebox(0,0){\scriptsize{$q_{n}$}}}
\put(65,0){\makebox(0,0){\scriptsize{$\mathrmbf{C}_{\sigma}(p_{n})$}}}
\put(19,77){\vector(3,-1){63}}
\put(25,5){\vector(3,1){58}}
\put(0,65){\vector(0,-1){50}}
\put(45,40){\makebox(0,0){\footnotesize{\textit{in} $\mathrmbf{C}_{j}$}}}
\end{picture}}
\put(180,20){
\setlength{\unitlength}{0.55pt}
\begin{picture}(120,80)(0,0)
\put(0,80){\makebox(0,0){\footnotesize{$(j,b)$}}}
\put(0,0){\makebox(0,0){\footnotesize{$(i,a)$}}}
\put(120,40){\makebox(0,0){\footnotesize{$(i_{n},a_{n})$}}}
\put(-6,40){\makebox(0,0)[r]{\scriptsize{$(\sigma,q)$}}}
\put(65,80){\makebox(0,0){\scriptsize{$(\sigma_{n},q_{n})$}}}
\put(65,0){\makebox(0,0){\scriptsize{$(\pi_{n},p_{n})$}}}
\put(20,80){\vector(3,-1){85}}
\put(20,0){\vector(3,1){85}}
\put(0,65){\vector(0,-1){50}}
\put(45,40){\makebox(0,0){\footnotesize{\textit{in} $\int\mathrmbf{C}$}}}
\end{picture}}
\end{picture}
\end{tabular}}}
\end{center}
%
\item[equalizers:] 
%
A parallel pair of morphisms in $\int\mathrmbf{C}$
\begin{equation}\label{eqn:par:pr}
{\langle{(\sigma_{1},f_{2}),(\sigma_{2},f_{2})}\rangle} : (i,a) \rightarrow (j,b)
\end{equation}
consists of
a parallel pair of indexing morphisms
${\langle{\sigma_{1},\sigma_{2}}\rangle} : i \rightarrow j$ 
in $\mathrmbf{I}$
and
a span of fiber morphisms
$\mathrmbf{C}_{\sigma_{1}}(b) \xleftarrow{f_{1}} a \xrightarrow{f_{2}} \mathrmbf{C}_{\sigma_{2}}(b)$
in $\mathrmbf{C}_{i}$. 
\begin{itemize}
\item 
Let $\sigma : k \rightarrow i$ be an equaliser of the parallel pair 
${\langle{\sigma_{1},\sigma_{2}}\rangle} : i \rightarrow j$ 
in $\mathrmbf{I}$.
\item 
Let $f : c \rightarrow \mathrmbf{C}_{\sigma}(a)$ be an equaliser in $\mathrmbf{C}_{k}$ of the parallel pair
%
\footnote{Notice that 
$\mathrmbf{C}_{\sigma}(\mathrmbf{C}_{\sigma_{1}}(b)) 
= \mathrmbf{C}_{\sigma{\,\cdot\,}\sigma_{1}}(b)
= \mathrmbf{C}_{\sigma{\,\cdot\,}\sigma_{2}}(b)
= \mathrmbf{C}_{\sigma}(\mathrmbf{C}_{\sigma_{2}}(b))$.}
%
\newline\mbox{}\hfill
${\langle{\mathrmbf{C}_{\sigma}(f_{1}),\mathrmbf{C}_{\sigma}(f_{2})}\rangle}
: \mathrmbf{C}_{\sigma}(a) \rightarrow \mathrmbf{C}_{\sigma}(\mathrmbf{C}_{\sigma_{1}}(b))$.
\hfill\mbox{}
\end{itemize}
\emph{Claim:} $(\sigma,f) : (k,c) \rightarrow (i,a)$ 
is an equaliser of the parallel pair in Eqn.~\ref{eqn:par:pr}.
\begin{itemize}
\item 
By construction, we have
\newline
$(\sigma,f){\,\cdot\,}(\sigma_{1},f_{1})
=
(\sigma{\,\cdot\,}\sigma_{1},f{\,\cdot\,}\mathrmbf{C}_{\sigma}(f_{1}))
=
(\sigma{\,\cdot\,}\sigma_{2},f{\,\cdot\,}\mathrmbf{C}_{\sigma}(f_{2}))
=
(\sigma,f){\,\cdot\,}(\sigma_{2},f_{2})$.
\item 
Assume $(\rho,g) : (m,d) \rightarrow (i,a)$ satisfies
\newline\mbox{}\hfill
$(\rho,g){\,\cdot\,}(\sigma_{1},f_{1}) 
= (\rho,g){\,\cdot\,}(\sigma_{2},f_{2})$
\hfill\mbox{}\newline
in $\int\mathrmbf{C}$;
that is,
$\rho{\,\cdot\,}\sigma_{1} = \rho{\,\cdot\,}\sigma_{2}$ in $\mathrmbf{I}$ and 
$g{\,\cdot\,}\mathrmbf{C}_{\rho}(f_{1}) = g{\,\cdot\,}\mathrmbf{C}_{\rho}(f_{2})$ in $\mathrmbf{C}_{m}$. 
\item 
By universality,
there exists a unique index morphism $\theta : m \rightarrow k$ such that $\theta{\,\cdot\,}\sigma = \rho$ in $\mathrmbf{I}$.
Moreover,
since $\mathrmbf{C}_{\theta}$ is continuous, 
\newline
$\mathrmbf{C}_{\theta}(f) : \mathrmbf{C}_{\theta}(c) \rightarrow \mathrmbf{C}_{\theta}(\mathrmbf{C}_{\sigma}(a))$ 
is an equaliser in $\mathrmbf{C}_{m}$ of the parallel pair 
\newline\mbox{}\hfill
$\underset{{\langle{\mathrmbf{C}_{\rho}(f_{1}),\mathrmbf{C}_{\rho}(f_{2})}\rangle}}
{\underbrace{{\langle{\mathrmbf{C}_{\theta}(\mathrmbf{C}_{\sigma}(f_{1})),\mathrmbf{C}_{\theta}(\mathrmbf{C}_{\sigma}(f_{2}))}\rangle}}} 
:
\underset{\mathrmbf{C}_{\rho}(a)}
{\underbrace{\mathrmbf{C}_{\theta{\,\cdot\,}\sigma}(a)=\mathrmbf{C}_{\theta}(\mathrmbf{C}_{\sigma}(a))}} 
\rightarrow 
\underset{\mathrmbf{C}_{\theta{\,\cdot\,}\sigma{\,\cdot\,}\sigma_{1}}(b)}
{\underbrace{\mathrmbf{C}_{\theta}(\mathrmbf{C}_{\sigma}(\mathrmbf{C}_{\sigma_{1}}(b)))}}$.
\hfill\mbox{}
\item 
By universality,
there is a unique morphism 
$h : d \rightarrow \mathrmbf{C}_{\theta}(f)$ such that $h{\,\cdot\,}\mathrmbf{C}_{\theta}(f) = g$ in $\mathrmbf{C}_{m}$. 
\item 
Therefore,
$(\theta,h) : (m,d) \rightarrow (k,c)$ is the unique morphism in $\int\mathrmbf{C}$ such that
$(\theta,h){\,\cdot\,}(\sigma,f) = (\rho,g)$.
\mbox{}\hfill\rule{5pt}{5pt}
\end{itemize}
\end{description}
}

%
\begin{fact}\label{fact:groth:colim}
If\, $\mathrmbf{I}\!\xrightarrow{\mathrmbfit{C}}\mathrmbf{Cxt}$ 
is a covariant pseudo-passage (indexed context)
s.t.
\begin{enumerate}
\item 
the indexing context $\mathrmbf{I}$ is cocomplete,
\item 
the fiber context $\mathrmbf{C}_{i}$ is cocomplete for each $i\in\mathrmbf{I}$, and
\item 
the fiber passage $\mathrmbf{C}_{i}\xrightarrow{\mathrmbfit{C}_{a}}\mathrmbf{C}_{j}$ is cocontinuous for each $i\xrightarrow{a}j$ in $\mathrmbf{I}$,
\end{enumerate}
then the fibered context (Grothendieck construction) $\int\mathrmbf{C}$ is cocomplete 
and the projection $\int\mathrmbf{C}\xrightarrow{\mathrmbfit{P}}\mathrmbf{I}$ is cocontinuous.
\end{fact}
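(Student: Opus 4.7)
The plan is to derive this by formal duality from Fact~\ref{fact:groth:lim}, which has already been established. Given a covariant pseudo-passage $\mathrmbf{I}\xrightarrow{\mathrmbfit{C}}\mathrmbf{Cxt}$ satisfying the cocompleteness/cocontinuity hypotheses, I would construct an auxiliary contravariant pseudo-passage $\grave{\mathrmbfit{D}}:\mathrmbf{I}^{\mathrm{op}}\to\mathrmbf{Cxt}$ by setting $\grave{\mathrmbfit{D}}_{i}=\mathrmbfit{C}_{i}^{\mathrm{op}}$ on objects and $\grave{\mathrmbfit{D}}_{a}=\mathrmbfit{C}_{a}^{\mathrm{op}}$ on morphisms, with the comparison 2-cells of the pseudo-passage obtained by reversing those of $\mathrmbfit{C}$. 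The three hypotheses of Fact~\ref{fact:groth:lim} transfer immediately: $\mathrmbf{I}$ cocomplete gives $\mathrmbf{I}^{\mathrm{op}}$ complete as an indexing context; each $\mathrmbfit{C}_{i}$ cocomplete gives each $\mathrmbfit{C}_{i}^{\mathrm{op}}$ complete as a fiber; and each transition $\mathrmbfit{C}_{a}$ cocontinuous gives $\mathrmbfit{C}_{a}^{\mathrm{op}}$ continuous. Fact~\ref{fact:groth:lim} then yields that the fibration $\int\!\grave{\mathrmbfit{D}}\to\mathrmbf{I}^{\mathrm{op}}$ is complete and continuous.

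The second step is to identify this fibration with the opposite of the target opfibration, producing an isomorphism
\[\mbox{\footnotesize{$\textstyle\int\grave{\mathrmbfit{D}}\;\cong\;\bigl(\int\mathrmbfit{C}\bigr)^{\mathrm{op}}$}}\]
that commutes with the two projections down to $\mathrmbf{I}^{\mathrm{op}}$. Objects agree on both sides: pairs ${\langle{i,A}\rangle}$ with $A\in\mathrmbfit{C}_{i}$. For morphisms, a contravariant $\grave{\mathrmbfit{D}}$-datum ${\langle{i,A}\rangle}\xrightarrow{{\langle{a,\acute{f}}\rangle}}{\langle{i',A'}\rangle}$, which by definition comprises $i\xrightarrow{a}i'$ in $\mathrmbf{I}$ together with $A\xrightarrow{\acute{f}}\grave{\mathrmbfit{D}}_{a}(A')=\mathrmbfit{C}_{a}(A')$ in $\mathrmbfit{C}_{i}^{\mathrm{op}}$, unpacks to a $\mathrmbfit{C}_{i}$-morphism $\mathrmbfit{C}_{a}(A')\to A$, which is precisely the datum (read backwards) of a morphism ${\langle{i,A}\rangle}\to{\langle{i',A'}\rangle}$ in $\int\mathrmbfit{C}$. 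Composition and identities dualise coherently because the comparison 2-cells of $\grave{\mathrmbfit{D}}$ were chosen as the opposites of those of $\mathrmbfit{C}$. Transporting completeness of $\int\grave{\mathrmbfit{D}}$ and continuity of its projection across this isomorphism then gives the desired cocompleteness of $\int\mathrmbfit{C}$ and cocontinuity of $\int\mathrmbfit{C}\xrightarrow{\mathrmbfit{P}}\mathrmbf{I}$.

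If a more hands-on proof is preferred, one can mirror the standard Tarlecki--Burstall--Goguen argument directly by constructing coproducts and coequalisers in $\int\mathrmbfit{C}$. For a discrete diagram $\{{\langle{i_{n},A_{n}}\rangle}\}_{n\in N}$ with coproduct cocone $\{i_{n}\xrightarrow{\iota_{n}}i\}$ in $\mathrmbf{I}$, transport each $A_{n}$ along the fiber passage $\mathrmbfit{C}_{\iota_{n}}$ into the single fiber $\mathrmbfit{C}_{i}$, take the coproduct $A$ there, and form ${\langle{i,A}\rangle}$ with the obvious injections; universality uses that $\mathrmbfit{C}_{\sigma}$ is cocontinuous for the mediating $\mathrmbf{I}$-morphism $\sigma$. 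Coequalisers are constructed by the dual pattern: first coequalise in $\mathrmbf{I}$ the parallel index morphisms, then coequalise in the resulting fiber the transported parallel pair.

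The main obstacle is not the underlying mathematics but the careful tracking of directions under the duality, specifically checking that the opposite-of-a-pseudo-passage is a pseudo-passage (so that the associator and unitor 2-cells reverse consistently) and that the identification $\int\grave{\mathrmbfit{D}}\cong(\int\mathrmbfit{C})^{\mathrm{op}}$ is functorial in both directions. Once this bookkeeping is settled, the result is essentially formal.
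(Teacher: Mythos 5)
Your proposal is correct and takes essentially the same route as the paper, whose entire proof of this fact reads ``Dual to the above,'' i.e.\ formal duality applied to Fact~\ref{fact:groth:lim}. You simply spell out the bookkeeping (the opposite pseudo-passage and the identification $\int\grave{\mathrmbfit{D}}\cong(\int\mathrmbfit{C})^{\mathrm{op}}$ over $\mathrmbf{I}^{\mathrm{op}}$) that the paper leaves implicit.
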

\begin{proof}
Dual to the above.
\mbox{}\hfill\rule{5pt}{5pt}
\end{proof}
%


%
\begin{fact}\label{fact:groth:adj:lim:colim}
If\, $\mathrmbf{I}\!\xrightarrow{\mathrmbfit{C}}\mathrmbf{Adj}$ 
is an indexed adjunction consisting of 
a contravariant pseudo-passage 
$\mathrmbf{I}^{\mathrm{op}}\xrightarrow{\grave{\mathrmbfit{C}}}\mathrmbf{Cxt}$
and
a covariant pseudo-passage
$\mathrmbf{I}\!\xrightarrow{\acute{\mathrmbfit{C}}}\mathrmbf{Cxt}$
that are locally adjunctive
$\bigl(
\mathrmbfit{C}_{i}\xrightarrow{{\langle{\acute{\mathrmbfit{C}}_{a}{\;\dashv\;}\grave{\mathrmbfit{C}}_{a}}\rangle}}\mathrmbfit{C}_{i'}
\bigr)$
for each $i\xrightarrow{a}j$ in $\mathrmbf{I}$,
s.t.
\begin{enumerate}
\item 
the indexing context $\mathrmbf{I}$ is complete and cocomplete,
\item 
the fiber context $\mathrmbf{C}_{i}$ is complete and cocomplete for each $i\in\mathrmbf{I}$,
\end{enumerate}
then the fibered context (Grothendieck construction) $\int\mathrmbf{C}\rightarrow\mathrmbf{I}$ is complete and cocomplete
and the projection
$\int\mathrmbf{C}\rightarrow\mathrmbf{I}$
is continuous and cocontinuous.
\end{fact}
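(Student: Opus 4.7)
The plan is to reduce Fact~\ref{fact:groth:adj:lim:colim} to the previous two facts by exploiting the defining feature of an indexed adjunction: that each fiber transition passage $\mathrmbf{C}_{i}\xrightarrow{{\langle{\acute{\mathrmbfit{C}}_{a}{\;\dashv\;}\grave{\mathrmbfit{C}}_{a}}\rangle}}\mathrmbf{C}_{j}$ is an adjoint pair. Since every right adjoint preserves limits and every left adjoint preserves colimits, the hypotheses of Facts~\ref{fact:groth:lim} and~\ref{fact:groth:colim} are automatically satisfied --- one gets continuity of $\grave{\mathrmbfit{C}}_{a}$ and cocontinuity of $\acute{\mathrmbfit{C}}_{a}$ for free, without any additional assumption on $\mathrmbfit{C}$.

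First, I would observe that the contravariant pseudo-passage $\mathrmbf{I}^{\mathrm{op}}\xrightarrow{\grave{\mathrmbfit{C}}}\mathrmbf{Cxt}$ satisfies all hypotheses of Fact~\ref{fact:groth:lim}: the indexing context $\mathrmbf{I}$ is complete by assumption; each fiber $\mathrmbf{C}_{i}$ is complete by assumption; and for each $i\xrightarrow{a}j$ in $\mathrmbf{I}$, the transition passage $\grave{\mathrmbfit{C}}_{a}$ is a right adjoint, hence continuous. Fact~\ref{fact:groth:lim} then yields that $\int\grave{\mathrmbfit{C}}$ is complete and the projection $\int\grave{\mathrmbfit{C}}\to\mathrmbf{I}$ is continuous.

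Next, dually, the covariant pseudo-passage $\mathrmbf{I}\xrightarrow{\acute{\mathrmbfit{C}}}\mathrmbf{Cxt}$ satisfies all hypotheses of Fact~\ref{fact:groth:colim}: $\mathrmbf{I}$ is cocomplete, each $\mathrmbf{C}_{i}$ is cocomplete, and each $\acute{\mathrmbfit{C}}_{a}$ is cocontinuous because it is a left adjoint. Fact~\ref{fact:groth:colim} then yields that $\int\acute{\mathrmbfit{C}}$ is cocomplete with cocontinuous projection to $\mathrmbf{I}$. Finally, I would invoke the isomorphism $\int\grave{\mathrmbfit{C}}\cong\int\acute{\mathrmbfit{C}}$ described in the bifibration paragraph (mediated pointwise by the adjoint bijection $A\xrightarrow{\acute{f}}\grave{\mathrmbfit{C}}_{a}(A')\;\cong\;\acute{\mathrmbfit{C}}_{a}(A)\xrightarrow{\grave{f}}A'$) to transfer cocompleteness to $\int\mathrmbfit{C}\doteq\int\grave{\mathrmbfit{C}}$, and observe that the projection passage commutes with this isomorphism, so it is simultaneously continuous and cocontinuous.

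The only nontrivial point --- and the one I expect to require the most care --- is the verification that the isomorphism $\int\grave{\mathrmbfit{C}}\cong\int\acute{\mathrmbfit{C}}$ is in fact an isomorphism of mathematical contexts over $\mathrmbf{I}$, so that limits/colimits computed on either side genuinely transport. This amounts to checking that the adjoint transpose bijection is functorial in composition (which follows from the triangle identities for $\eta_{a},\varepsilon_{a}$ and the pseudo-functoriality coherence of $\acute{\mathrmbfit{C}}$ and $\grave{\mathrmbfit{C}}$), and that it is the identity on the projection to $\mathrmbf{I}$. Once this is in place, everything else is a matter of citation.
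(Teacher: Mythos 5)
Your proposal is correct and follows essentially the same route as the paper, which likewise deduces continuity of $\grave{\mathrmbfit{C}}_{a}$ and cocontinuity of $\acute{\mathrmbfit{C}}_{a}$ from the adjunctions and then cites Facts~\ref{fact:groth:lim} and~\ref{fact:groth:colim}. Your extra care about the isomorphism $\int\grave{\mathrmbfit{C}}\cong\int\acute{\mathrmbfit{C}}$ over $\mathrmbf{I}$ only makes explicit what the paper leaves implicit in its definition of the bifibration.
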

\begin{proof}
Use Facts.~\ref{fact:groth:lim}~\&~\ref{fact:groth:colim},
since the fiber passage $\mathrmbf{C}_{i}\xleftarrow{\grave{\mathrmbfit{C}}_{a}}\mathrmbf{C}_{i'}$ is continuous (being right adjoint)
and the fiber passage $\mathrmbf{C}_{i}\xrightarrow{\acute{\mathrmbfit{C}}_{a}}\mathrmbf{C}_{i'}$ is cocontinuous (being left adjoint)
for each $i\xrightarrow{a}j$ in $\mathrmbf{I}$.
\mbox{}\hfill\rule{5pt}{5pt}
\end{proof}
%



\end{document}